\documentclass[11pt, letterpaper,dvipsnames, table]{article}

\title{Dynamic Edge Coloring of Forests}

\author{
    Haim {Kaplan}%
    \thanks{The Blavatnik School of Computer Science and AI, Tel Aviv University, Tel Aviv, Israel.
    Emails: \texttt{\{haimk, dnaori, yanivsadeh\}@tau.ac.il}. This research was supported by ISF grant number 1156/23 and the Blavatnik research Foundation.} 
    \and
    David {Naori}\footnotemark[1]
    \and
    Yaniv {Sadeh}\footnotemark[1]
}

\date{}

\usepackage[margin=1in]{geometry}
\usepackage{amsfonts} 
\usepackage{amsmath}
\usepackage{amsthm}
\usepackage{amssymb}
\usepackage{thmtools}
\theoremstyle{plain}
\newtheorem{theorem}{Theorem}[section]
\newtheorem{lemma}[theorem]{Lemma}

\newtheorem{proposition}[theorem]{Proposition}

\theoremstyle{definition}
\newtheorem{definition}[theorem]{Definition}

\theoremstyle{remark}
\newtheorem{remark}[theorem]{Remark}
\newtheorem*{remarkunnumbered}{Remark} 

\newenvironment{remark*}
  {\begin{remarkunnumbered}}
  {\end{remarkunnumbered}}

\newtheorem{observation}[theorem]{Observation}

\usepackage{svg} 
\usepackage[
    colorlinks=true,
    linkcolor=RoyalBlue,
    citecolor=ForestGreen,
    urlcolor=Maroon,
    bookmarks=true
]{hyperref} 
\usepackage{changepage} 
\usepackage{enumerate}
\usepackage[nocompress]{cite} 

\usepackage[lined,ruled,boxed, vlined, linesnumbered]{algorithm2e} 

\usepackage{subcaption}

\usepackage{pgfplots} 
\pgfplotsset{compat=1.6}

\usepackage{tabularx}
\usepackage{makecell}
\newcolumntype{L}{>{\raggedright\arraybackslash}X}
\usepackage{array}
\usepackage{hhline}
\usepackage{colortbl}

\usepackage{multirow}
\usepackage{graphicx}

\usepackage[math]{cellspace}
\usepackage[framemethod=TikZ]{mdframed}
\newenvironment{wrapper}[1][]%
{%
  \begin{mdframed}[
    hidealllines=true,
    backgroundcolor=gray!20,
    innerleftmargin=0.4cm,
    innerrightmargin=0.4cm,
    innertopmargin=0.4cm,
    innerbottommargin=0.4cm,
    roundcorner=8pt,
    #1
  ]%
}%
{%
  \end{mdframed}%
}

\newcommand{\ceil}[1]{\lceil #1 \rceil}
\newcommand{\floor}[1]{\lfloor #1 \rfloor}
\newcommand{\expect}[1]{\mathbb{E}[#1]}

\newcommand{\funcColor}[1]{\textcolor{olive}{#1}}
\SetKwFunction{funcUpdateForestColored}{\funcColor{UpdateForest}}
\SetKwFunction{funcFixForbiddenColored}{\funcColor{FixForbidden}}
\SetKwFunction{funcRootToChildColored}{\funcColor{RootToChild}}
\SetKwFunction{funcChildToRootColored}{\funcColor{ChildToRoot}}
\SetKwFunction{funcUpdateForestColored}{\funcColor{UpdateForest}}
\SetKwFunction{funcRecursiveStepColored}{\funcColor{RecursiveStep}}
\SetKwFunction{funcChooseColorColored}{\funcColor{ChooseColor}}

\SetKwFunction{funcUpdateGraph}{UpdateGraph}
\SetKwFunction{funcFixForbidden}{FixForbidden}
\SetKwFunction{funcRootToChild}{RootToChild}
\SetKwFunction{funcChildToRoot}{ChildToRoot}
\SetKwFunction{funcUpdateGraph}{UpdateGraph}
\SetKwFunction{funcRecursiveStep}{RecursiveStep}
\SetKwFunction{funcChooseColor}{ChooseColor}

\newcommand{\greedy}{{\textsc{Greedy}}}
\newcommand{\greedyshift}{{\textsc{GreedyShift}}}
\newcommand{\greedypath}{{\textsc{GreedyPath}}}
\newcommand{\colorfulpath}{{\textsc{ColorfulPath}}}
\newcommand{\distalg}{{\textsc{DistMaint}}}

\newcommand{\colorf}{\mathsf{color}}
\newcommand{\parent}{\mathsf{parent}}
\newcommand{\nrchild}{\ell}
\newcommand{\grandparent}{\mathsf{grandparent}}
\newcommand{\size}{\mathsf{size}}
\newcommand{\ch}{\mathsf{children}}

\newcommand{\pal}{\kappa}
\newcommand{\extra}{c}

\newcommand{\heavy}{\mathsf{heavy}}
\newcommand{\light}{\mathsf{light}}

\newcommand{\invExtraPlusSqrt}{\frac{1}{\extra + \sqrt{\Delta}}}
\newcommand{\halfPalette}{\frac{\Delta+\extra}{2}}

\newcommand{\poly}{\mathsf{poly}}

\newcommand{\col}[1]{%
  \ifcase#1\relax
  \or\ensuremath{\alpha}
  \or\ensuremath{\beta}
  \or\ensuremath{\gamma}
  \or\ensuremath{\delta}
  \else\textbf{?}\fi
}

\newcommand{\freecol}{\alpha}
\newcommand{\colset}[1]{%
  \ifcase#1\relax
  \or\ensuremath{X}
  \or\ensuremath{Y}
  \or\ensuremath{Z}
  \or\ensuremath{W}
  \else\textbf{?}\fi
}

\newif\iffullversion
\fullversionfalse 

\begin{document}

\maketitle

\begin{abstract}
In the \emph{dynamic edge coloring} problem, one has to maintain a graph of maximum degree $\Delta$ with at most $\Delta+\extra$ colors, under edge updates. A prominent objective is to minimize the \emph{recourse}, namely the number of edges that are recolored. We study this problem on forests, arguably the simplest graph class that already captures much of the complexity of the problem.  We consider both the \emph{incremental} model, where edges are only inserted and the \emph{fully dynamic} model where edges may also be deleted.

In the deterministic setting, we focus on the natural greedy algorithm. We show that it achieves $O(\invExtraPlusSqrt)$ amortized recourse in the incremental model, and that this is tight up to tie-breaking. In contrast, in a fully dynamic forest, greedy can be forced to have $\Omega(\log_\Delta n)$ amortized recourse. To partially overcome this limitation of greedy within the deterministic setting, we give an optimal non-greedy algorithm with $O(1)$ amortized recourse
for \emph{rooted} fully dynamic forests and $\extra=\Delta-2$. In the randomized setting, we give a natural distribution-maintaining algorithm. In the incremental model, it achieves $\Theta(\frac{1}{\Delta})$ expected amortized recourse, and we show that this is optimal for every constant $\extra$. In the fully dynamic model, the same algorithm achieves $\Theta(\min \{ \frac{\Delta}{\extra}, \log_{\Delta} n \})$ expected recourse for $\extra > 0$, and $\Theta(\log_{\Delta} n)$ for $\extra = 0$. We show that this is optimal for $\extra = 0$, and prove an $\Omega(1)$ lower bound for every constant $\extra$. 
\end{abstract}

\newpage

\setcounter{tocdepth}{3}
\tableofcontents
\pagebreak

\section{Introduction}
\label{section_introduction}

In the classical edge coloring problem, the goal is to color the edges of a graph so that no two incident edges receive the same color, while using as few colors as possible. Any graph of maximum degree $\Delta$ requires at least $\Delta$ colors, since all edges incident to a maximum-degree vertex must receive distinct colors. Vizing's theorem~\cite{Vizing1964} shows that $\Delta+1$ colors always suffice, while K\"onig's theorem~\cite{konig1916graphen} shows that $\Delta$ colors suffice for bipartite graphs, and hence for forests.

In the online version of the problem, the edges are revealed one by one, and each edge must be colored irrevocably upon arrival using a palette of $\pal$ colors. The goal is again to keep $\pal$ as small as possible. The classical online greedy algorithm achieves this with $\pal = 2 \Delta-1$ colors, and the lower bound of Bar-Noy, Motwani, and Naor~\cite{BarNoyOnline1992} demonstrates that, in general, no online algorithm can use fewer colors.

\looseness=-1
Dynamic edge-coloring differs from the online setting by allowing \emph{recourse}, that is, the recoloring of previously colored edges. 
More precisely, we need to maintain a proper $\pal$-edge-coloring under a sequence of updates. Updates consist of edge insertions and deletions, while the maximum degree remains at most $\Delta$ throughout. The case where edges can only be inserted (as in the online setting) is called the \emph{incremental model}, while the case allowing both insertions and deletions is called the \emph{fully dynamic model}. 
As before, we want to keep $\pal$ as small as possible, and, additionally, we want to minimize the number of recolored edges per update.

It is convenient to write the palette size $\pal$ as $\pal= \Delta + \extra$, where $\extra \geq 0$. Thus, with $\extra \geq \Delta-1$, the online greedy algorithm maintains a proper coloring with zero recourse. On the other hand, by the online lower bound above, with $\extra < \Delta - 1$, recourse is necessary, even without deletions. Therefore, the goal is to design dynamic algorithms that maintain a proper $(\Delta+\extra)$-edge-coloring for $\extra \leq \Delta - 2$ while minimizing the recourse per update. Ultimately, a desirable objective is to obtain per-update recourse bounds that are independent of the graph size $n$. This leads to the following guiding question:

\begin{wrapper}
    Can we maintain a proper $(\Delta+\extra)$-edge-coloring with per-update recourse independent of $n$ for $\extra \leq \Delta -2$? How does the answer depend on $\extra$?
\end{wrapper}

Without additional assumptions, the general picture remains largely unsettled. The known positive results on this question require $\Delta$ to be sufficiently large as a function of $n$. Interestingly, a large $\Delta$ assumption also makes it possible to beat the online $2\Delta - 1$ barrier and use fewer colors with no recourse at all~\cite{blikstad2024onlineedgecoloringnearly}. Thus, such results may not capture the power of recourse itself.
On the other hand, existing lower bounds \cite{LowerBoundRecourse2018,sadeh2026vizingchainsimprovedrecourse} do not rule out a positive answer without such assumptions. Existing lower bounds are state-based and worst-case: they show that certain colored states require high recourse to extend, but not that every algorithm can be forced into such a state, nor that this can happen often enough to imply large amortized recourse. 

We study this question on forests. Forests are arguably the simplest graph class, yet the known landscape remains essentially the same as in the general setting. The online lower bound already holds on forests, so even the incremental setting remains nontrivial (requiring recourse). Additionally, since forests are $\Delta$-colorable, we can ask the strongest form of our guiding question, namely, what can be achieved with no extra colors at all ($\extra=0$)? 
Moreover, offline edge coloring of forests is straightforward, and thus restricting our focus to this class cleanly isolates the algorithmic difficulties that arise from the dynamic setting. From another angle, a recent line of work has studied dynamic edge-coloring in low-arboricity graphs~\cite{ArboricitySWAT2024_sayan,ArboricitySWAT2024_Christiansen,sadeh2026vizingchainsimprovedrecourse}. This naturally raises the question of what can be achieved in the most basic case of arboricity $1$, namely forests.

\looseness=-1
We study both deterministic and randomized algorithms for this problem, focusing on two of the simplest and most natural algorithmic approaches. On the deterministic side, we focus on the greedy algorithm, henceforth \greedy{}, which restores a proper coloring after each update by recoloring as few edges as possible. On the randomized side, we focus on a \emph{distribution-maintenance} algorithm, henceforth \distalg{}, which  maintains a highly random proper edge-coloring, with the goal of reducing the probability of reaching hard-to-extend states in hopes of achieving small expected recourse. As our results show, these approaches admit a rich range of recourse bounds across the dynamic models we study.

\medskip
\noindent
\textbf{\large Our results.}
On the deterministic side, we show a separation between the incremental and fully dynamic models for all $0 \leq \extra \leq \Delta-2$. In the incremental model, we show that \greedy{} achieves sub-constant amortized recourse $O(\invExtraPlusSqrt)$, and that this is tight for \greedy{} under some tie-breaking rule. In contrast, we show that in the fully dynamic model, any deterministic algorithm must incur $\Omega(1)$ amortized recourse even with $\extra = \Delta-2$. We also demonstrate that \greedy{} performs poorly in the fully dynamic model, and can be forced to incur $\Omega(\log_{\Delta}{n})$ amortized recourse, even in the more restricted rooted-forest setting. To contrast the limitations of \greedy{}, we give a non-greedy deterministic algorithm that achieves $O(1)$ amortized recourse with $\extra=\Delta-2$ in the rooted-forest setting, where the $\Omega(1)$ lower bound still applies, and is therefore asymptotically optimal.

On the randomized side, we obtain stronger upper bounds that are optimal in several regimes. Here too, we show a separation between the incremental and fully dynamic models for every constant $\extra$.
In the incremental setting, we show that \distalg{} achieves $\Theta(1/\Delta)$ expected amortized recourse, and prove a matching lower bound for every constant $\extra$. In the fully dynamic model, we show that \distalg{} achieves $\Theta(\min\{\Delta/\extra, \log_{\pal-1}{n}\})$ expected recourse per-update, which, for $\extra=0$, becomes $\Theta(\log_{\Delta-1}n)$. We prove a matching lower bound for $\extra = 0$, and an $\Omega(1)$ lower bound for every constant $\extra$. 

Together, our results answer the first part of the guiding question on forests. In the incremental model, recourse independent of $n$ is achievable even with no extra colors. In the fully dynamic model, this is impossible for $\extra = 0$, and becomes achievable once extra colors are available. The dependence on $\extra$, however, remains only partially understood. Table~\ref{table_recourse_results_forests} summarizes our main results.

\begin{table}[h]
    \small
    \centering
    \begin{tabularx}{\textwidth}{|Sc|c|c|c|c|Sc|X|}
        \hline
        & Model & Thm. & Algorithm & Extra Colors & Recourse $R$ & Notes \\
        \hline
        \hline

        \multirow{5}{*}{\rotatebox[origin=c]{90}{Deterministic\ \ }}
        & 
        &  \ref{theorem_amortized_insertion_only_improved_UB}
        &  \greedy{}
        &  $\extra \geq 0$
        &  $O(\invExtraPlusSqrt)$
        &  Any tie-breaking. \\
        \cline{3-7}

        &
         \multirow{-2}{*}{Inc.}
        &  \ref{theorem_insert_only_LB}
        &  \greedy{}
        &  $\extra \geq 0$
        &  $\Omega(\invExtraPlusSqrt)$
        &  Some tie-breaking. \\
        \cline{2-7}

        &
        \multirow{3}{*}{Full}
        & \ref{theorem_greedy_fails_LB}
        & \greedy{}
        & $\extra \geq 0$
        & $\Omega(\log_{\Delta} n)$
        & Any tie-breaking. \\
        \cline{3-7}

        &
        & \ref{theorem_amortized_DplusC_LB}
        & Any deterministic
        & $\extra \geq 0$
        & $\Omega(1)$
        & \\ 
        \cline{3-7}

        &
        & \ref{theorem_rooted_trees_recourse_constant_UB}
        & \colorfulpath{}
        & $\extra=\Delta-2$
        & $O(1)$
        & Rooted forests. \\
        \hline
        \hline

        \multirow{5}{*}{\rotatebox[origin=c]{90}{Randomized\ \ }}
        & 
        &  \ref{theorem_unrooted_randomized_LB_UB_short_version}
        &  \distalg{}
        &  $\extra \geq 0$
        &  $\Theta(1/\Delta)$
        &   \\
        \cline{3-7}

        &
         \multirow{-2}{*}{Inc.}
        &  \ref{thm:incremental_randomized_lb}
        &  Any
        &  $\extra = O(1)$
        &  $\Omega(1/\Delta)$
        &   \\
        \cline{2-7}

        &
        \multirow{3}{*}{Full}
        & \ref{theorem_recourse_randomized_short_version}
        & \distalg{}
        & $\extra \geq 0$
        & $\Theta\!\left(\min\left\{\frac{\Delta}{\extra},\log_{\pal-1} n\right\}\right)$ 
        & For $\extra\! = 0$: $\Theta(\log_\Delta n)$. \\
        \cline{3-7}

        &
        & \ref{theorem_trees_delta_colors_LB}
        & Any
        & $\extra=0$
        & $\Omega(\log_{\Delta-1}n)$
        &  \\
        \cline{3-7}

        &
        & \ref{thm:fully_randomized_lb}
        & Any 
        & $\extra = O(1)$
        & $\Omega(1)$
        & \\ 
        \hline
    \end{tabularx}
    \caption{\small{A summary of amortized recourse bounds for $\Delta + \extra$ colors where $\Delta \ge 3$ and $\extra \in [0,\Delta-2]$. ``Inc.'' and ``Full'' stand for the incremental and fully dynamic models, respectively.
    All randomized bounds are in expectation against an oblivious adversary. }}\label{table_recourse_results_forests}
\end{table}

This paper focuses on recourse minimization. 
Recourse minimization is naturally aligned with settings in which changing past decisions requires a fundamentally different treatment than making new decisions. At one extreme is the online setting, where past decisions cannot be changed at all. More generally, changing a past decision may require aborting and restarting an existing commitment, paying a penalty for violating a previous allocation, or performing an expensive external operation, such as updating a database or physically reconfiguring a link~\cite{KMatchings2022,sadeh2024CachingInMatchingsICALP,bMatchingOpticalSwitches2025}. In such settings, recourse captures the dominant cost of processing an update, rather than the computation needed to decide which changes to perform.

However, the same questions can also be asked with update-time minimization as the objective. Under the update-time objective, recoloring an edge is treated like any other computational operation, and the goal is to process each update as quickly as possible. To the best of our knowledge, dynamic edge coloring of forests has not previously been studied separately from either perspective.

Update time and recourse are different but related objectives. If an update incurs recourse $R$, then executing it requires $\Omega(1+R)$ time: we must at least process the update and perform the $R$ recolorings. Thus,  lower bounds on recourse also imply lower bounds on update-time. On the algorithmic side, although we do not directly optimize running time, our strongest upper bounds come from the randomized distribution-maintenance algorithm, which can be implemented in the fully dynamic model in $\Theta(1+R)$ time per update, where $R$ is the recourse incurred by the update. Therefore, even under the update-time objective, recourse is the bottleneck: any asymptotically faster update-time bound would require an algorithm with a stronger recourse bound. 

Naturally, since we restrict attention to forests, the update-time guarantees obtained here are stronger than the known upper bounds for dynamic edge coloring in more general graph classes. We also note that all algorithms studied in this paper have polynomial update time. To avoid conflating the two objectives, we give efficient implementations of our algorithms and discuss update-time bounds in Appendix~\ref{appendix_running_times_analysis}.

\medskip
\noindent
\textbf{\large Paper organization.}
\looseness=-1
The remainder of this section discusses related work (Section~\ref{section_related_work_short}) and introduces notations (Section~\ref{section_preliminaries}).
In Section~\ref{section_technical_overview_det_and_rand}, we provide a technical overview of our deterministic and randomized approaches. Section~\ref{section_deterministic_mostly_greedy_extended} then provides a complete formal account for the deterministic setting, and Section~\ref{section_randomized_analysis} provides the same for the randomized setting. We conclude in Section~\ref{section_conclusions} with open questions.
Appendix~\ref{appendix_running_times_analysis} is dedicated to running time analyses of our algorithms, and Appendix~\ref{section_technical_analysis_appendix} contains deferred technical details from Section~\ref{section_deterministic_mostly_greedy_extended}.

\subsection{Related Work}
\label{section_related_work_short}

\looseness=-1
The study of online edge coloring was initiated by Bar-Noy, Motwani, and Naor~\cite{BarNoyOnline1992}. Their lower bound was shown on an incremental forest, but to the best of our knowledge, no previous works on online edge-coloring study forests directly. See~\cite{blikstad2024onlineedgecoloringnearly, FOCS2025ONlineEdgeColoring} for recent results and further references.

The line of work on dynamic edge coloring is more recent, and in contrast to the online setting, where the graph is incremental by definition, the dynamic edge-coloring works that we are aware of consider only the fully dynamic setting. The works~\cite{DynamicEdgeColoring2019DuanEtal,Bernshteyn2022SmallRecoloring,MultistepVizing2023,2025LinearEdgeColoring} study multi-Vizing chains, which give worst-case recourse bounds, which are  poly-logarithmic in $n$ and polynomial in $\Delta$.  Other techniques include the Nibbling Method~\cite{NibblingCycles2024,NibblingMethod2021} and the Shift Tree~\cite{sadeh2026vizingchainsimprovedrecourse}. 
We cover a few recent results: The algorithm of \cite{2025LinearEdgeColoring} achieves $O(\epsilon^{-4} \cdot \log n)$ worst-case recourse per update for $(1+\epsilon)\Delta$-edge-coloring, in $O(m \epsilon^{-4} \log (1/\epsilon))$ time where $m$ is the number of edges.\footnote{The paper focuses on offline coloring, and this running time suffices to color a whole graph from scratch. There is no explicit refined time per update.} If poly-logarithmic time is sought for, \cite{MultistepVizing2023} gives $(1+\epsilon)\Delta$-edge-coloring with $O(\Delta^6 \log^2 n)$  worst-case recourse in $O(\poly(\epsilon,\log n))$ time per update. For expected recourse, \cite{NibblingCycles2024} maintains $(1+\epsilon)\Delta$-edge-coloring with high probability, with $O(\epsilon^{-4})$ recourse in expectation, in $O(\epsilon^{-9} \log^4 (1/\epsilon))$ time. This result is restricted to $\Delta = \Omega(\poly(\epsilon) \cdot \log n)$. See Table 1 of \cite{sadeh2026vizingchainsimprovedrecourse} for a summary of these and related results. Our focus on forests allows us to design and analyze simpler algorithmic approaches, and to obtain stronger recourse and runtime bounds without the large-$\Delta$ assumption that appear in some previous works.

Edge coloring has also been extensively studied in the distributed model, where vertices locally compute consistent colors for their incident edges. Here the objective is typically to minimize the number of communication rounds (which is related to the recourse in a dynamic settings). See~\cite{LowerBoundRecourse2018} for a summary of this rich literature, and~\cite{SuVuTruncatedVizing2019,Bernshteyn2022SmallRecoloring,MultistepVizing2023,Davies2023DistributedEdgeColoring} for recent advances. 

Beyond the models discussed above, edge coloring has been widely studied both as a fundamental combinatorial question and within other computational frameworks~\cite{GraphEdgeColoring2012Book, BookGraphColoring2024Book}. Two additional settings are adjacent to our context. The first is the \emph{offline centralized model}, the original focus of Vizing's foundational work~\cite{Vizing1964,Vizing1965_10up,Vizing1965_8up}, which has seen significant recent progress~\cite{STOC2025OfflineColoring, 2025LinearEdgeColoring}. The second is the \emph{streaming model}, an online framework where the primary objective is memory efficiency~\cite{StreamingChechikEtal2024ICALP,GhoshEtalWStreamingEdgeColoringICALP2024,StreamingEdgeColoringICALP2025}. Across all these models, a substantial body of work also focuses on specific graph classes, such as bipartite, planar, and generalizations of planar graphs.

\subsection{Preliminaries}
\label{section_preliminaries}

We study a dynamic forest  $F=(V,E)$ where $|V|=n$ is a fixed set of vertices. Its set of edges $E$ is initially empty and changes dynamically, by insertions, and/or deletions. The maximum degree of any vertex, at any time, is bounded by $\Delta$ which is known in advance. 
 We assume $\Delta \ge 3$. The simple case of $\Delta=2$ (trees are paths) is addressed in Section~\ref{section_Delta_2}.

In the \emph{incremental model}, only insertions are allowed, thus the sequence has at most $n-1$ insertions regardless of $\Delta$, unlike general graphs, which could accommodate $\Theta(n \cdot \Delta)$ insertions. In the \emph{fully dynamic model}, edges may also be deleted, so we are still constrained to at most $n-1$ existing edges at any given time, but deletions remove the hard bound on the number of updates and give flexibility to change the forest's topology.\footnote{The \emph{decremental model} is irrelevant to edge coloring, which remains proper under deletions.}

\looseness=-1
Moving on from the topology to the coloring, we are given in advance a palette of size $\pal$ colors to color the edges, where $\pal = \Delta + \extra$ for $0 \leq \extra \leq \Delta-2$. We say that a color $\alpha$ is \emph{available} or \emph{free} at a vertex $v$ if no edge incident to $v$ is colored  $\alpha$. Similarly, $\alpha$ is \emph{available} for an edge $e=(u,v)$ if it is available both at $u$ and $v$.
Formally, we denote the set (sub-palette) of colors available at a vertex $v$ by $A(v)$, and for an edge $e=(u,v)$ we have $A(e) = A(u) \cap A(v)$. We refer to the recoloring of a previously colored edge as \emph{recourse}. In particular, coloring a new edge is not considered recourse.

We measure recourse with respect to a single update. For deterministic algorithms, the \emph{worst-case} recourse is the maximum recourse due to a single update, while \emph{amortized recourse} is the average recourse over the entire update sequence. For randomized algorithms, the recourse of an update is a random variable. We distinguish between \emph{expected recourse per-update}, which bounds the expected recourse of each fixed update in the sequence, and \emph{expected amortized recourse}, which is the expected average recourse over the whole sequence. Throughout the paper, update sequences start from the empty forest.





\section{Technical Overview}

We now give a technical overview of our results, state the main theorems, and sketch selected proofs. We begin with the deterministic results, where we discuss the performance of \greedy{} in both the incremental and fully dynamic models, and the deterministic lower bound. We then turn to the randomized results, where we describe \distalg{}, outline its analysis in both models, and present the randomized lower bounds.

\label{section_technical_overview_det_and_rand}
\iffullversion
\section{Deterministic Algorithms: A Formal Account}
\label{section_deterministic_mostly_greedy_extended}
\else
\subsection{Deterministic Algorithms}
\label{section_deterministic_mostly_greedy_short}
\fi

In the deterministic setting we study the natural greedy algorithm, \greedy{}, that minimizes the recourse of each update.\footnote{Technically \greedy{} is a family of algorithms, that differ by their tie-breaking rules.}
\iffullversion
    For completeness, Algorithm~\ref{algorithm_greedy} is a formal description of \greedy{}. In section~\ref{section_appendix_greedy_variants_differences} we also study a related variant named \greedyshift{}. Table~\ref{table_recourse_results_forests_extras} details additional results (of the complete analysis) that were omitted from Table~\ref{table_recourse_results_forests}.

    \begin{table}[h]
        \small
        \centering
        \begin{tabularx}{\textwidth}{|Sc|c|c|c|c|Sc|X|}
            \hline
            & Model & Thm. & Algorithm & Extra Colors & Recourse $R$ & Notes \\
            \hline
            \hline
    
            \multirow{7}{*}[-2.4ex]{\rotatebox[origin=c]{90}{Deterministic\ \ }}
            &
            & \ref{theorem_Delta_is_2}
            &  \greedy{}
            &  $\Delta = 2$, $\extra = 0$
            &  $O(\log n)$
            &  Optimal. \\
            \cline{3-7}
            &
            
            &  \ref{theorem_insertion_only_Czero_high_LB}
            &  Any shift-based
            &  $\extra = 0$
            &  $\Omega(\frac{1}{\Delta} \cdot \log \frac{n}{\Delta^3})$
            &  \greedyshift{} as an example. \\
            \cline{3-7}
    
            &
             \multirow{-3}{*}{Inc.}
            &  \ref{theorem_insertion_only_Czero_high_shift_UB}
            &  \greedyshift{}
            &  $\extra = 0$
            &  $O(\frac{1}{\Delta} \log n)$
            &  \\
            \cline{2-7}
    
            &
            \multirow{4}{*}[-2.5ex]{Full}
            & \ref{theorem_trees_delta_colors_sublinear_WC_UB}
            & Specific Alg.
            & $\extra = 0$
            & $O \Big ( (\frac{\sqrt{\lg n}}{\Delta} + 1) \cdot 2^{\sqrt{2 \lg n}} \Big )$
            & Worst-case; Used to prove Theorem~\ref{theorem_amortized_insertion_only_improved_UB} for $\extra = 0$. \\
            \cline{3-7}
    
            &
            & \ref{theorem_Delta_is_2}
            & \greedy{}
            & $\Delta = 2$, $\extra = 0$
            & $O(n)$
            & Optimal. \\
            \cline{3-7}
    
            &
            & \multirow{2}{*}[-0.8ex]{\ref{theorem_greedy_fails_LB_shift_based}}
            & \multirow{2}{*}[-0.8ex]{\greedyshift{}}
            & $\extra \ge 1$
            & $\Omega(\log_{\extra+1} {\frac{n}{\Delta - \extra}})$
            & \multirow{2}{\linewidth}{Compare to \cite{sadeh2026vizingchainsimprovedrecourse}, Theorem~B.11:   $O(\log_{\extra+1} {\frac{n}{\Delta+\extra^2}})$ worst-case for $\extra \ge 1$.} 
            \rule{0pt}{0ex} \\ 
            \cline{5-6}
            &
            & 
            & 
            & $\extra = 0$
            & $\Omega(n/\Delta)$
            & \rule{0pt}{5.0ex} \\ 
            \hline
        \end{tabularx}
        \caption{\looseness=-1 \small{A summary of additional results that appear in the full analysis (Section~\ref{section_deterministic_mostly_greedy_extended}), for $\Delta \ge 3$ and $\extra \in [0,\Delta-2]$, unless explicitly stated for $\Delta = 2$. The model can be incremental or fully dynamic. \greedy{} minimizes the recourse on each update, and \greedyshift{} is a greedy variant that only shifts colors (see Definition~\ref{definition_greedy_algorithms}). The recourse is amortized unless stated worst-case. Recall that $\lg$ is the logarithm in base $2$.}}
        \label{table_recourse_results_forests_extras}
    \end{table}

\else
    In section~\ref{section_appendix_greedy_variants_differences} we also study a related variant named \greedyshift{}.
\fi
\iffullversion
    \begin{algorithm}[h]
        
        \DontPrintSemicolon
        \KwIn{
             A sequence of edge updates over a forest $F$ with a maximum degree $\Delta$.
        }
    
        \KwOut{
            Maintains a proper coloring of $F$ after each update.
        }
        
        \SetKwProg{Fn}{Function}{:}{}
        
        \Fn{\funcUpdateForestColored{Forest $F$, edge $e=(u,v)$, insert/delete}}{
            \uIf{insert $e$}{
                Insert $e$ to $F$, and determine $H$: a smallest subset of edges of $F$ that can be recolored to let $e$ have an available color. Recolor $H$ and $e$ accordingly.
            }\Else{
                Delete $e$ from $F$, keep the coloring.
            }
        }
    
        \caption{\greedy{}: a greedy dynamic edge coloring.}
        \label{algorithm_greedy}
    \end{algorithm}
\else
\fi
\greedy{} can be computed easily on forests, as detailed in Appendix~\ref{appendix_running_times_analysis}, and thus it is a natural algorithm to consider. We also show that \greedy{} is asymptotically optimal in the degenerate case of $\Delta=2$ where trees are paths. This further motivates the study of \greedy{}, for $\Delta \ge 3$.

We study the recourse in two models. First, in the incremental model, where edges are only inserted, we show that a greedy approach achieves a very low amortized recourse.  Second, in the fully dynamic model, we demonstrate that this greedy strategy performs poorly. To address this, we introduce a non-greedy algorithm, \colorfulpath{}, that is optimal up to a constant factor, within a restricted setting.

\iffullversion
\subsection{Incremental Forests (Insertions Only)}
\else
\subsubsection{Incremental Forests (Insertions Only)}
\fi
A forest has at most $n-1$ edges regardless of $\Delta$, unlike in general graphs that could accommodate $\Theta(n \cdot \Delta)$ edges. Therefore, the average degree of a vertex in a forest is typically smaller which intuitively makes edge coloring easier. Nevertheless, it is known that the worst-case recourse can be large even in trees~\cite{sadeh2026vizingchainsimprovedrecourse}. Interestingly, Theorem~\ref{theorem_amortized_insertion_only_improved_UB} shows that the amortized recourse is sub-constant.

\iffullversion
    \theoremAmortizedInsertionOnlyImprovedUB*
\else
    \begin{restatable}[]{theorem}{theoremAmortizedInsertionOnlyImprovedUB}
    \label{theorem_amortized_insertion_only_improved_UB}
    Let $\Delta \geq 3$ and $0 \le c \le \Delta-2$. Assume only insertions to a forest with maximum degree $\Delta$. \greedy{} has amortized recourse $O\!\left(\invExtraPlusSqrt \right)$.
    \end{restatable}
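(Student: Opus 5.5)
The plan is a charging argument. Since the forest has at most $n-1$ edges, it suffices to show that the total recourse over the whole insertion sequence is $O(n/(\extra+\sqrt{\Delta}))$. Recourse is needed only when an inserted edge $e=(u,v)$ has $A(u)\cap A(v)=\emptyset$. In that case $|A(u)|+|A(v)|\le \pal$. Before the insertion, $\deg(u)+\deg(v)\le 2\Delta-2$, so $|A(u)|+|A(v)|\ge 2\extra+2$. A conflict therefore forces $\deg(u)+\deg(v)\ge \Delta+\extra$: at least one endpoint already has degree at least $(\Delta+\extra)/2$ (note the macro \halfPalette). Call such vertices \emph{heavy}.

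Each heavy vertex owns at least $(\Delta+\extra)/2$ edges. Because the forest has at most $n-1$ edges, the number of heavy vertices is $O(n/(\Delta+\extra))$. So it is enough to bound the total recourse charged to insertions at heavy vertices.

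\textbf{First key step: bound the recourse of a single conflicting insertion.} \greedy{} minimizes recourse, so any valid fix gives an upper bound. Pick a color $\alpha\in A(u)$ and a color $\beta\in A(v)$ and consider the $\alpha/\beta$ alternating path from $v$. Since $F$ is a forest, this path cannot return to $u$, so swapping it frees $\alpha$ at $v$ and costs the path's length. To keep this cheap, I would choose among the many $(\alpha,\beta)$ pairs, of which there are about $|A(u)|\cdot|A(v)|$. Alternatively, one can recolor one edge at $u$ or $v$ to an already-available color when the far endpoint has a free color. The bound I aim for is that a conflicting insertion whose endpoints have $a=|A(u)|$ and $b=|A(v)|$ free colors costs $O(1)$ plus a term controlled by how few short paths exist.

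\textbf{Second key step: amortize with a potential.} I would use $\Phi=\sum_v \phi(\deg v)$, where $\phi$ is zero below roughly $\Delta-\sqrt{\Delta}-\extra$ and grows as the degree approaches $\Delta$.

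The idea is a degree-based split at the threshold $\extra+\sqrt{\Delta}$ free colors:
\begin{itemize}
\item If both endpoints have at least that many free colors, the pigeonhole count $a+b>\pal$ fails, so no conflict occurs.
\item A conflict therefore needs an endpoint with at most about $\sqrt{\Delta}$ free colors, meaning a very high-degree vertex. There are only $O(n/\Delta)$ such vertices.
\item Each such vertex undergoes at most about $\sqrt{\Delta}+\extra$ insertions while in this regime.
\item The expensive path-swaps should be paid per such vertex by $O(\sqrt{\Delta}/(\extra+\sqrt{\Delta}))$, or more precisely $O(\Delta/(\extra+\sqrt{\Delta}))$, total. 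This gives $O(n/\Delta)\cdot O(\Delta/(\extra+\sqrt\Delta))=O(n/(\extra+\sqrt\Delta))$.
\end{itemize}

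To get the per-insertion cost down, I would show the following. Among the $\Theta(a\cdot b)$ candidate alternating paths from $v$, their union is a subforest in which distinct pairs share few edges. Hence the shortest one has length $O(\text{size}/(ab))$, where "size" is the number of edges near $v$. That count in turn charges to vertices whose degree grew.

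\textbf{Main obstacle.} The hardest step is making the alternating-path length bound global rather than local. The paths can be long, and the edges they traverse must not be charged repeatedly across many insertions. I would first try charging each recolored edge to the insertion that most recently made one of its endpoints heavy, arguing via the forest structure that each edge is charged $O(1/(\extra+\sqrt{\Delta}))$ times on average. If that fails, I would fall back on the worst-case result (Theorem~\ref{theorem_trees_delta_colors_sublinear_WC_UB}) for the $\extra=0$ regime and handle large $\extra$ separately using the easier pigeonhole bound.
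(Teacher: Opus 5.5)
Your proposal has a genuine gap.

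\textbf{First gap: the regime reduction is false.} A conflict at $e=(u,v)$ only needs $|A(u)|+|A(v)|\le\kappa$, i.e.\ $\deg(u)+\deg(v)\ge\Delta+c$. It does not force an endpoint with at most about $c+\sqrt{\Delta}$ free colors.

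For example, take $c=0$ and $\Delta$ even, and let $u$ and $v$ each have degree $\Delta/2$ with complementary color sets. Both have $\Delta/2$ free colors and none in common. Your step ``$a,b\ge c+\sqrt{\Delta}\Rightarrow a+b>\kappa$'' would need $c+2\sqrt{\Delta}>\Delta$. So conflicts between mid-degree vertices, whose cheapest fix may be a long path, are not covered.

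Even within your high-degree regime, ``$O(n/\Delta)$ vertices times $\sqrt{\Delta}$ insertions'' only bounds the number of conflicts by $O(n/\sqrt{\Delta})$. That is too weak when $c\gg\sqrt{\Delta}$.

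In the paper, the $c+\sqrt{\Delta}$ comes from a different count:
\begin{itemize}
\item In a conflicting insertion where $v$'s tree is small, that tree has at least $\Delta+c-\deg(u)$ edges.
\item If $u$ ends with degree $d$ and is blamed $\ell$ times, the swallowed trees supply at least $\ell(\Delta+c-d)+\ell(\ell+1)/2$ edges, and each is swallowed only once.
\item Spreading the $\ell$ units over these edges and $u$'s $d$ edges, and minimizing at $d=\Delta$, $\ell\approx\sqrt{2\Delta}$, gives at most $1/(c+\sqrt{2\Delta}+\tfrac12)$ per edge.
\end{itemize}

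\textbf{Second gap: you never resolve what you call the main obstacle, paying for long recolorings.}
\begin{itemize}
\item Your shortest-bicolored-path bound $O(N/(ab)+1)$ is linear in the component size; it is $\Theta(N)$ when $c=0$ and $a=b=1$.
\item A potential $\sum_v\phi(\deg v)$ changes only at the two endpoints of each insertion. It therefore cannot absorb a recoloring whose length grows with the component.
\item Applying a worst-case bound per insertion without a charging scheme likewise leaves an $n$-dependent total.
\end{itemize}

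The paper instead splits insertions by \emph{component size}. An insertion is heavy if both trees have at least $(\Delta+c)/2$ edges, and light otherwise.
\begin{itemize}
\item A light insertion costs at most $1$: recolor one edge of the small tree, which has a spare color. It is handled by the count above.
\item A heavy insertion is charged uniformly to the edges of the smaller tree. This uses worst-case bounds sublinear in its size $N$: $O(\log_{c+1}N)$ for $c\ge1$, and $O\bigl((\frac{\sqrt{\lg N}}{\Delta-2}+1)\,2^{\sqrt{2\lg N}}\bigr)$ for $c=0$.
\item An edge's component at least doubles between consecutive charges and starts with at least $(\Delta+c)/2$ edges. So its total charge is a sum dominated by the first term: $O(\log_{c+1}\Delta/\Delta)$ for $c\ge1$ and $o(1/\sqrt{\Delta})$ for $c=0$, both $O(1/(c+\sqrt{\Delta}))$.
\end{itemize}
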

\fi

\iffullversion
    To prove the Theorem~\ref{theorem_amortized_insertion_only_improved_UB}, we require the following additional facts. Theorem~\ref{theorem_logCp1_recourse_shiftable_UB} and Theorem~\ref{theorem_trees_delta_colors_sublinear_WC_UB} give us worst-case upper bounds that we use to argue about the recourse per update that we amortize on different edges, for the cases $\extra \ge 1$ and $\extra=0$, respectively. Edges are charged multiple times, and Lemma~\ref{lemma_calculus_summation} is used to bound the total (accumulated) charge per edge. We defer the proofs of Theorem~\ref{theorem_trees_delta_colors_sublinear_WC_UB} and Lemma~\ref{lemma_calculus_summation} to Appendix~\ref{section_technical_analysis_appendix}.

    \begin{theorem}[Theorem B.11 of \cite{sadeh2026vizingchainsimprovedrecourse}]
    \label{theorem_logCp1_recourse_shiftable_UB}
    Consider a dynamic forest $F$ of maximum degree $\Delta$ that is edge colored by $\Delta+\extra$ colors for $\extra \ge 1$. When a new edge $e$ is inserted, the coloring can be extended to $e$ with recourse of at most $\log_{\extra+1} \frac{N}{\Delta+\extra^2} + 2$, that recolors a path, where $N$ is the number of edges in the smaller tree that $e$ links. The update time is $O(N)$.
    \end{theorem}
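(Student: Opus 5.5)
The plan is to recolor only edges of the smaller tree. Let $e=(u,v)$, and let $T_u$ be the smaller of the two trees that $e$ links, with $N$ edges. Recoloring edges of $T_u$ never creates a conflict with edges of $T_v$, because the two trees are vertex-disjoint; the only interaction is at $u$ through the new edge $e$.

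Before the insertion $\deg(u),\deg(v)\le\Delta-1$, so $|A(u)|,|A(v)|\ge \extra+1$. If $A(u)\cap A(v)\ne\emptyset$ we color $e$ with no recourse. Otherwise we fix a target color $\beta\in A(v)$ and aim to free $\beta$ at $u$ by recoloring edges of $T_u$ along a single path starting at $u$.

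\textbf{Step 1 (shift paths).} Since $\beta\notin A(u)$, there is an edge $(u,x_1)$ colored $\beta$. Recolor it to some $\alpha_1\in A(u)$, of which there are at least $\extra+1$ choices.
\begin{itemize}
\item If $\alpha_1\in A(x_1)$, we stop after one recoloring.
\item Otherwise $x_1$ has an edge $(x_1,x_2)$ colored $\alpha_1$. This edge must be recolored to a color free at $x_1$ after the change, which is at least $\extra+1$ options, excluding $\alpha_1$ and including $\beta$.
\end{itemize}
Iterating gives a path $u=x_0,x_1,x_2,\dots$ in $T_u$. At each step we recolor the current edge with a color free at its near endpoint. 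The process ends as soon as the chosen color is also free at the far endpoint. Because $T_u$ is a tree, the path is simple and each edge is recolored once, so the coloring remains proper along it. The recourse equals the path length.

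\textbf{Step 2 (branching tree of attempts).} Consider all such attempts simultaneously, organized as a search tree. The node for a failing choice at $x_i$ branches into at least $\extra+1$ children, one for each admissible new color, and each child leads to a distinct neighbor $x_{i+1}$: the color forces which edge at $x_i$ conflicts, and distinct colors give distinct edges. Thus if every attempt of depth at most $k$ fails, the explored region contains at least roughly $(\extra+1)^{k}$ distinct edges of $T_u$. Distinctness of these edges across branches follows from acyclicity, since two branches diverge at a vertex through different edges and can never meet again.

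\textbf{Step 3 (the additive $\Delta+\extra^2$ and counting).} Failure at a vertex means that the color we want is already used there, so every failing vertex on the frontier has large degree. A vertex $x$ at which all $\extra+1$ candidate colors are blocked has at least $\extra+1$ additional incident edges. Counting these blocking edges at the last two levels, together with the $\ge\Delta-\extra-1$ edges around $u$, should yield
\[
N\ \ge\ (\Delta+\extra^2)(\extra+1)^{k-2}.
\]
Solving for $k$ shows that some attempt succeeds within $\log_{\extra+1}\frac{N}{\Delta+\extra^2}+2$ recolorings.

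\textbf{Running time.} Explore the search tree by BFS and stop at the first success. This touches each edge of $T_u$ at most $O(1)$ times, giving $O(N)$ time.

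\textbf{Main obstacle.} The delicate part is Step 3: extracting exactly the additive $\Delta+\extra^2$ term and the ``$+2$''. This needs careful accounting of the first two levels, where the branching uses full degrees rather than merely $\extra+1$ free colors. It also needs a check that the color freed at each step, for example $\beta$ itself becoming usable deeper in the path, does not break the disjointness argument. I would first handle the count crudely, obtaining $N\ge(\extra+1)^{k-O(1)}$, and then refine the bottom levels.
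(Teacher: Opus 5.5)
You have the right core argument, but the exact bound is not established; the gap is in Step 3. For context, the paper does not prove this statement itself. It imports it from \cite{sadeh2026vizingchainsimprovedrecourse} and only uses the weaker form $O(\log_{c+1}N)$, justified by the remark that an uncolored edge always has at least $c+1$ shift options leading into disjoint subtrees. Your Steps 1--2 are exactly that argument, and they are correct. The choice set of an uncolored edge $(x_{i-1},x_i)$ is determined by its position in $T_u$: it is $A(x_{i-1})$ plus the old color of the parent edge. So the reachable edges form a subtree in which every failing edge has at least $c+1$ reachable children. This gives $N\ge (c+1)^{k-1}$ when the shortest successful path has length $k$, which is all the paper needs.

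Step 3 is only asserted, and for the procedure you specify it is false when $c\ge 2$. Fixing a single target $\beta\in A(v)$ throws away the $(c+1)$-fold branching at $u$, and the ``$+2$'' depends on that branching. Here is a counterexample:
\begin{itemize}
\item Take $c=2$, $\Delta=10$. Color $u$'s edges $1,\dots,9$, let $A(v)=\{1,2,3\}$, and fix $\beta=1$.
\item Let the $1$-edge lead to $x_1$, whose child edges are colored $10,11,12,2,\dots,7$.
\item Let each child reached via $10,11,12$ carry three leaf edges colored $8,9,1$.
\item Make every other edge of $T_u$ a leaf edge, and make $T_v$ larger.
\end{itemize}
Every shift path from $\beta=1$ then needs $3$ recolorings. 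Yet $N=27$ and $\log_3\frac{27}{14}+2<2.6$, while choosing $\beta=2$ would cost only $1$.

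The fix has two parts.
\begin{itemize}
\item \emph{Branch over the target color.} Branch over all $\beta\in A(v)$ as well. These give at least $c+1$ distinct edges at $u$, since $A(u)\cap A(v)=\emptyset$.
\item \emph{Count two-level blocks.} Suppose the minimal recourse is $k\ge 2$. Then there are at least $(c+1)^{k-2}$ disjoint blocks. Each is rooted at a vertex $y$ of degree $d$, entered by an edge whose choice set has size $s\ge c+1$; for $k=2$, take $y=u$ entered by $e$. The vertex $y$ has $d-1\ge s$ child edges. Each of its $s$ reachable children carries at least $\Delta+c+1-d$ blocking edges.
\end{itemize}
So a block has at least $(d-1)+s(\Delta+c+1-d)\ge \Delta-1+(c+1)^2$ edges, with the minimum at $s=c+1$, $d=\Delta$. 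This yields $N\ge(\Delta+c^2+2c)(c+1)^{k-2}$, which is the inequality you were after.

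Two further points:
\begin{itemize}
\item What drives the count is the trade-off between $d$ and $\Delta+c+1-d$. It is not the ``$\ge\Delta-c-1$ edges at $u$'' you invoke; only $\deg(u)\ge c+1$ is guaranteed.
\item The count needs $k\ge 2$. As literally stated, the bound drops below $1$ when $N<(\Delta+c^2)/(c+1)$, even though a recoloring may be forced. For example, take $c=1$, $\Delta=10$, $\deg(v)=9$, and $T_u$ a $2$-edge star colored with exactly $A(v)$.
\end{itemize}
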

    
    For our purposes, it suffices to weaken Theorem~\ref{theorem_logCp1_recourse_shiftable_UB} and assume  $O(\log_{\extra+1} N)$ recourse, which is a cleaner expression. The intuition for the weaker expression is that it bounds the length of the cheapest recoloring  path (see also \greedypath{} in Definition~\ref{definition_greedy_algorithms}), because at every vertex the algorithm has (at least) $\extra+1$ recoloring choices that lead the path to a disjoint subtree. This path can always choose to proceed to the smallest subtree among its available options, and it stops at a leaf at the latest.
        
    \begin{restatable}[]{theorem}{theoremSubLinearRecourseTreesDeltaColors}
    \label{theorem_trees_delta_colors_sublinear_WC_UB}
    Consider a dynamic forest $F$ of maximum degree $\Delta \ge 3$ that is edge colored by $\Delta$ colors. When a new edge $e$ is inserted, the coloring of $F$ can be extended to $e$ with recourse of at most $O \Big ( (\frac{\sqrt{\lg N}}{\Delta-2} + 1) \cdot 2^{\sqrt{2 \lg N}} \Big )$ where $N$ is the number of edges in the smaller tree that $e$ links. The update time is $O(N + \Delta \cdot 2^{\sqrt{2 \lg N}})$. 
    \end{restatable}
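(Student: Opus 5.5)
Let $e=(u,v)$ link trees $T_u$ and $T_v$, with $|T_u|\le|T_v|$ and $N=|T_u|$ edges. I would recolor only inside $T_u$, leave $T_v$ fixed, and take any color $\alpha$ missing at $v$; such a color exists because $\deg(v)\le\Delta-1$ before insertion. The goal is to make $\alpha$ missing at $u$ as well, by modifying the coloring of $T_u$. Call a vertex $w$ of $T_u$ with $\deg(w)<\Delta$ \emph{deficient}. If $\alpha\notin A(u)$, follow the $\alpha$-edge from $u$ into $T_u$. Let $\beta\in A(u)$ and swap the $\alpha/\beta$ alternating path starting at $u$. Since $T_u$ is a tree this path is simple and ends, so the swap frees $\alpha$ at $u$. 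Its cost, however, is the path length, which can be $\Theta(N)$, so a smarter choice is needed.

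\textbf{Recursive scheme.} Define $f(u,\alpha)$ as the cheapest recoloring of the subtree hanging below $u$ that frees $\alpha$ at $u$. Let $x$ be the child reached along the $\alpha$-edge. It suffices to recolor $(u,x)$ to some $\beta\in A(u)$ and then free $\beta$ at $x$ recursively inside the subtree of $x$, which excludes $u$. At $u$ there are $|A(u)|\ge 1$ choices of $\beta$. Two facts give the branching that makes $\sqrt{\lg N}$ appear:
\begin{itemize}
\item Deeper in the tree, for a vertex $x$ whose edge to its parent is being recolored, freeing $\beta$ at $x$ can be done either by a single recoloring of the $\beta$-edge at $x$ (when that edge admits some color) or by recursing into the $\beta$-child.
\item Alternatively one recurses two levels: recolor the $\beta$-edge $(x,y)$ to a color $\gamma$ missing at $x$ after the change. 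Any color of the $\Delta-1$ other colors that is missing at $y$ (not yet used at $y$) gives a cheap move, since $y$ has at most $\Delta-1$ child edges.
\end{itemize}
The key counting step is then as follows. A path-recoloring that fails at depth $d$ forces the subtree to be large. Trying many branching alternatives at a vertex, each leading into a disjoint subtree, lets us pick the smallest subtree. If we allow $k$ "branch" operations, each costing up to $k$ extra recolors from a set of $\Delta-2$ fan-like alternatives, the subtree size shrinks by a factor of about $2^{k}$ per stage. With $k$ stages of cost $k$ each, the total cost is about $k^2$ wait—this should instead be tuned so that the depth reduction is $2^{\sqrt{2\lg N}}$. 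I would set it up as a recurrence $R(N)\le R(N/2^{k})+O(2^{k}\cdot(\tfrac{k}{\Delta-2}+1))$ and optimize by choosing $k\approx\sqrt{2\lg N}$. The alternative stated form suggests the actual recurrence has cost $2^{k}$ per stage and $\lg N/k$ stages. This yields $2^{\sqrt{2\lg N}}$ times a polynomial factor, with the $\sqrt{\lg N}/(\Delta-2)$ term coming from how many alternatives must be explored when only $\Delta-2$ spare colors are available at each interior vertex.

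\textbf{Main obstacle.} The hard step is the structural lemma: at a vertex where a cheap local fix fails, one can find at least two (more precisely, about $\Delta-2$) disjoint subtrees into each of which a recursive call achieves the goal. Its proof would use the fact that in a $\Delta$-colored tree, a vertex of full degree sees all colors, so every swap choice corresponds to a distinct child. Given the lemma, summing the size recurrence and choosing the stage length as above gives the bound. The update-time claim follows by a DFS over $T_u$ to compute subtree sizes in $O(N)$, plus exploring $O(\Delta)$ alternatives at each of the $O(2^{\sqrt{2\lg N}})$ recolored positions.
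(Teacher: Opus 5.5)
Your proposal has a genuine gap at exactly the ``structural lemma'' you flag, and it cannot be closed inside your framework. (The setup itself is fine: recoloring only inside the smaller tree, and a DFS for subtree sizes, match the paper.) The problem is that every move you allow is a shift: an edge is recolored to a color free at one endpoint ($\beta\in A(u)$, then a color missing at $x$, and so on). With only $\Delta$ colors, once the parent edge of a full-degree vertex $x$ is recolored from $\alpha$ to $\beta$, the unique color missing at $x$ is $\alpha$. So the $\beta$-edge $(x,y)$ is forced to become $\alpha$, the same happens at $y$, and so on. Your recursion therefore collapses to swapping a single $(\alpha,\beta)$-bicolored path, i.e.\ the $\Theta(N)$ baseline you started from. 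Your second bullet fails for the same reason: a full-degree $y$ has no missing color. At a full-degree vertex there is exactly one continuation, not about $\Delta-2$ disjoint ones. Indeed, the paper recalls (Section~\ref{section_subsection_sunlinear_wc_recourse_Czero}) an $\Omega(n/\Delta)$ worst-case lower bound for \emph{every} shift-based algorithm when $c=0$. Consequently your recurrence $R(N)\le R(N/2^{k})+\dots$ has no mechanism behind it; it is fitted to the target expression rather than derived.

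The missing idea is to deliberately create \emph{two} conflicts. Swap the $(\alpha,\beta)$-path for only some $i\le d$ steps, stop at an edge $(v_1,v_2)$, and color it with a third color $\gamma$ that is used at \emph{both} endpoints. The two new conflicts are then fixed by a $(\gamma,\beta)$-path from $v_1$ and a $(\gamma,\alpha)$-path from $v_2$. These enter disjoint subtrees hanging off the original path. By Lemma~\ref{lemma_overlap_at_most_two}, each such subtree is reached by at most two of the $d(\Delta-2)$ choices of $(i,\gamma)$. Hence some choice sends both conflicts into subtrees of total size at most a $\frac{2}{d(\Delta-2)}$ fraction of the current one. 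Iterating, level $i$ has at most $2^{i-1}$ open conflicts, each paying at most $d_i$, and the residual subtrees have size $S_\ell=\frac{N2^\ell}{(\Delta-2)^\ell d_1\cdots d_\ell}$. Taking $2^{i-1}d_i\approx D$ uniform with $\ell=\lceil\sqrt{2\lg N}\rceil$ balances $\ell D$ against $2^\ell$ and gives the stated bound. The factor $2^{\sqrt{2\lg N}}$ comes from this doubling of subproblems, not from a single path with a per-stage cost.
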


    \begin{restatable}[]{lemma}{lemmaSummations}
    \label{lemma_calculus_summation}
    Let $\{ x_i \}_{i \ge 1}$ be a sequence of integers such that $x_1 \ge \frac{\Delta}{2} \ge 1$ and $\forall i: x_{i+1} \ge 2 \cdot x_i$. Define the functions: $f(x) \equiv \frac{\lg x}{x}$
    and $g(x) \equiv \frac{\sqrt{2 \lg x} \cdot 2^{\sqrt{2 \lg x}}}{x}$.
    Then $\sum_{i=1}^{\infty} {f(x_i)} = O(f(\Delta))$, and $\sum_{i=1}^{\infty} {g(x_i)} = O(g(\Delta))$.
    \end{restatable}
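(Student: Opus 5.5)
The plan is to treat both sums with one template: a doubling sequence makes the sum geometric once the summand shrinks by a constant factor under doubling its argument. Writing $y = \lg x$, we have $f(x) = y\,2^{-y}$ and $g(x) = \sqrt{2y}\,2^{\sqrt{2y}}\,2^{-y}$. Both have the form $\phi(y)2^{-y}$, where $\phi$ grows subexponentially in $y$. For each $\psi \in \{f,g\}$ I will fix an absolute threshold $X_0$ with three properties: $\psi$ is nonincreasing on $[X_0,\infty)$; $\psi(2x) \le \tfrac34\,\psi(x)$ for all $x \ge X_0$; and $\psi(x/2) \le C\,\psi(x)$ for all $x \ge 2X_0$, with $C$ an absolute constant.

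\textbf{Verifying the threshold properties.} For $f$:
\[
\frac{f(2x)}{f(x)} = \frac{y+1}{2y} \le \frac34 \quad\text{for } y \ge 2,
\]
and $\frac{f(x/2)}{f(x)} = \frac{2(y-1)}{y} \le 2$. Monotonicity holds for $x \ge e$, since $\frac{\ln x}{x}$ is decreasing there.

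For $g$:
\[
\frac{g(2x)}{g(x)} = \frac12\sqrt{\frac{y+1}{y}}\;2^{\sqrt{2y+2}-\sqrt{2y}} .
\]
Here $\sqrt{2y+2}-\sqrt{2y} = \frac{2}{\sqrt{2y+2}+\sqrt{2y}} \le \frac{1}{\sqrt{2y}}$, so the ratio tends to $\tfrac12$ as $y \to \infty$. Hence it is at most $\tfrac34$ beyond some absolute $y_0$. The reverse ratio $g(x/2)/g(x)$ tends to $2$, so it is bounded by an absolute $C$ for large $y$. Monotonicity follows by differentiating the exponent $-y + \sqrt{2y} + \tfrac12\lg(2y)$ in $y$: the derivative is $-1 + (2y)^{-1/2} + O(1/y)$, which is negative for $y \ge y_0$ after enlarging $y_0$. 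Pinning down an explicit, correct $X_0$ for $g$ (where the small-$y$ behaviour is genuinely non-monotone) is the fiddliest step. I only need existence, though, so a crude choice such as $y_0 = 16$ suffices and can be checked numerically.

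\textbf{Splitting the sum.} I split the indices into small terms ($x_i < X_0$) and large terms ($x_i \ge X_0$).

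For the small terms: since $x_{i+1} \ge 2x_i$ and $x_1 \ge 1$, there are at most $\lg X_0 + 1 = O(1)$ of them, and each is bounded by an absolute constant, namely $\max_{1 \le x \le X_0}\psi(x)$. If any small term exists, then $\Delta \le 2x_1 < 2X_0$, so $\Delta$ lies in a bounded range $[3, 2X_0)$. On that range $\psi(\Delta)$ is bounded below by a positive absolute constant. Hence the small terms contribute $O(\psi(\Delta))$.

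For the large terms: let $i_0$ be the first index with $x_{i_0} \ge X_0$. Then $x_{i_0+j} \ge 2^j x_{i_0}$, and monotonicity together with the doubling ratio gives $\psi(x_{i_0+j}) \le \psi(2^j x_{i_0}) \le (\tfrac34)^j\,\psi(x_{i_0})$. The large terms therefore sum to at most $4\,\psi(x_{i_0})$.

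It remains to compare $\psi(x_{i_0})$ with $\psi(\Delta)$, and there are two cases.
\begin{itemize}
\item If $\Delta/2 \ge X_0$, then $x_{i_0} \ge x_1 \ge \Delta/2 \ge X_0$. Monotonicity gives $\psi(x_{i_0}) \le \psi(\Delta/2)$, and the reverse-ratio bound gives $\psi(\Delta/2) \le C\,\psi(\Delta)$.
\item Otherwise $\Delta < 2X_0$ is bounded, so $\psi(x_{i_0}) \le \psi(X_0) = O(1) = O(\psi(\Delta))$ by the same lower bound on $\psi$ over the bounded range.
\end{itemize}

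Combining the small and large contributions yields $\sum_i \psi(x_i) = O(\psi(\Delta))$ for both $\psi = f$ and $\psi = g$, which is the statement of Lemma~\ref{lemma_calculus_summation}.
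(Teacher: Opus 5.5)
Your proof is correct and follows the same skeleton as the paper's: a threshold beyond which $\psi$ is decreasing, $O(1)$ early terms that are harmless because they force $\Delta$ to be bounded, domination of the tail by $\psi(2^k x_j)$, and a comparison of $\psi(\Delta/2)$ with $\psi(\Delta)$ split on whether $\Delta$ is bounded. The only difference is cosmetic. You obtain geometric decay from the uniform ratio bound $\psi(2x)\le\tfrac34\psi(x)$. The paper instead sums $f(2^k x_j)$ explicitly, and for $g$ factors $g(2^k x_j)\le g(x_j)\cdot\sqrt{2k}\,2^{\sqrt{2k}}/2^k$ using $\sqrt{a+b}\le\sqrt{a}\sqrt{b}$ and $\sqrt{a+b}\le\sqrt{a}+\sqrt{b}$.
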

    
    \begin{proof}[Proof of Theorem~\ref{theorem_amortized_insertion_only_improved_UB}]
    We analyze and charge the amortized recourse of an edge $e$ by considering two types of insertions, referred to as \emph{heavy} and \emph{light}. An insertion is \emph{heavy} if it links trees both having at least $\halfPalette$ edges, and \emph{light} otherwise.    
    
    When an insertion of edge $e=(u,v)$ links two trees $T_u$ and $T_v$, if the insertion is heavy we can charge the recourse immediately, equally to every edge in the smaller tree, without loss of generality, say, $T_v$. When the insertion is light, we cannot charge the recourse immediately because $T_v$ is too small. Therefore we charge it to $u$, which only by the end of the sequence re-distributes its charge among its own edges and the edges of $T_v$ and other trees that linked to $u$ in a light insertion when $u$ was on the large side. The goal is to show that each edge receives a total charge $O(\invExtraPlusSqrt)$ over the whole sequence.
    
    We begin with a trivial but important observation: In order to incur recourse, $e$ must have at least $\Delta+\extra$ neighbours (edges).  Note that the endpoint with the higher degree has at least $\ceil{\halfPalette}$ edges before inserting $e$.

    \noindent
    \textbf{Heavy insertions.} Denote the size of the connected component of $e$ at the $i$th time we charge it by $S_e(i)$. Observe that $S_e(1) \ge \halfPalette$ and that $S_e(i+1) \ge 2 \cdot S_e(i)$, so $\{S_e(i) \}_{i \ge 1}$ satisfy the conditions of Lemma~\ref{lemma_calculus_summation}. If $\extra \ge 1$, we can loosen Theorem~\ref{theorem_logCp1_recourse_shiftable_UB} (applied for $N = S_e(i)$) and deduce a recourse of $O(\log_{(\extra+1)} S_e(i))$, then $e$ is charged its equal share: $O(\frac{\log_{(\extra+1)} S_e(i)}{S_e(i)})$. Now by Lemma~\ref{lemma_calculus_summation} for $f(x)$, the total charge of $e$ is:
    \begin{align*}
    \heavy(e) \le O\!\left(\frac{1}{\lg (\extra+1)} \cdot \frac{\lg \Delta}{\Delta}\right) = O\!\left(\frac{\log_{(\extra+1)} \Delta}{\Delta}\right).    
    \end{align*}
    If $\extra=0$, we analyze similarly using Theorem~\ref{theorem_trees_delta_colors_sublinear_WC_UB}. Assuming $N = S_e(i) \ge 2$ or there is no recourse, $(\frac{\sqrt{\lg N}}{\Delta - 2} + 1) = O(\sqrt{2\lg N})$ and thus the $i$th charge of $e$ is $\frac{1}{S_e(i)}$-fraction out of $O(\sqrt{2 \lg S_e(i)} \cdot 2^{\sqrt{2\lg S_e(i)}})$, and by Lemma~\ref{lemma_calculus_summation} for $g(x)$ the total charge is now bounded by $\heavy(e) = O(\frac{\sqrt{2 \lg \Delta} \cdot 2^{\sqrt{2\lg \Delta}}}{\Delta}) = o(\frac{1}{\sqrt{\Delta}})$.
    
    We emphasize that even though the worst case recourse of a heavy insertion could be large, for example for $\extra \ge 1$, $O(\log_{(\extra+1)} S_e(i))$, overall this recourse is amortized over multiple edges so the amortized recourse is small as we claim.
    
    \smallskip
    \noindent
    \textbf{Light insertions.} By the definition of light, $\deg(v) \le \size(v) < \halfPalette$. In this case we say that $e$ \emph{blames} $u$ for the recourse, and that $u$ \emph{swallowed} $v$. Observe that the recourse is at most $1$ because we can color $e$ by some color used by $v$ on an edge $e'$, uncolor $e'$ and recolor it with an available color because $|A(e')| \ge 1$ ($\size(v) < \Delta+\extra-1$, so every edge in the component of $v$ prior to inserting $e$ has an available color).
    
    When an insertion of $e=(u,v)$ blames $u$ for recourse, the existence of recourse implies $\deg(v) \ge \Delta + \extra - \deg(u)$. Let $\ell \ge 1$ be the number of edges that blame $u$. We charge the recourse blamed on $u$ to the edges of $u$ (all of them) and to the edges of the neighbours it swallowed, at the time of these swallows. The total number of charged edges is at least:
    \begin{align*}
    \deg(u) + \sum_{i=1}^{\ell}{(\Delta+\extra-(\deg(u)-i))}
    =  \deg(u) +
    (\Delta+\extra - \deg(u)) \cdot \ell + \frac{\ell(\ell+1)}{2}.    
    \end{align*}
    
    To clarify, this edge count is minimized when the last $\ell$ edges of $u$ are those that swallow neighbours and blame $u$ for recourse (being the latest edges of $u$ means that the swallowed neighbours can have less edges while still incurring recourse). We charge the recourse blamed on $u$ equally, to get $\frac{1}{\Delta+\extra - \deg(u) + \frac{\deg(u)}{\ell} + \frac{\ell+1}{2}}$ charge per edge. To upper bound this expression, it suffices to lower bound the denominator, the function $f(d,\ell) \equiv \Delta+\extra-d + \frac{d}{\ell} + \frac{\ell+1}{2}$. We assumed some blame of $u$, then $\ell \ge 1$, then $f$ is non-increasing in $d$ ($\frac{\partial f}{\partial d} = \frac{1}{\ell} - 1$), so choosing $d = \Delta$ minimizes $q$ with respect to $d$. Furthermore, given a fixed $d$, $f$ attains minimum at $\ell = \sqrt{2d}$ ($\frac{\partial f}{\partial \ell} = -\frac{d}{\ell^2} + \frac{1}{2}$; $\frac{\partial^2 f}{{\partial \ell}^2} = \frac{2d}{\ell^3} > 0$). Overall, we get that
    $f(d,\ell)
    \ge 
    f(\Delta,\sqrt{2\Delta})
    \ge 
    \extra + \frac{1}{2} + \sqrt{2\Delta}
    $.
    
    Then the light charge due to blames of $u$ is upper bounded by $\frac{1}{1+ (\extra-\frac{1}{2}) + \sqrt{2\Delta}} = \Theta(\invExtraPlusSqrt)$. Observe that every edge can be charged with recourse up to thrice: once per endpoint (in the role of $u$), and once for belonging to a small tree that is being swallowed, so overall, the charge of every edge satisfies $\light(e) = O(\invExtraPlusSqrt)$.
    
    To conclude, we showed for each edge $e$ that $\light(e)+\heavy(e) = O(\invExtraPlusSqrt)$.
    \end{proof}    
    
\else
    The proof relies on a somewhat delicate charging argument that splits into cases according to the size of the trees being linked, small or large, and it is deferred to Section~\ref{section_deterministic_mostly_greedy_extended}.
\fi

Theorem~\ref{theorem_amortized_insertion_only_improved_UB} shows that greedy coloring is  efficient in terms of recourse when edges are only inserted: The recourse ranges from $O(\frac{1}{\sqrt{\Delta}})$ for $\extra = O(\sqrt{\Delta})$ down to $O(\frac{1}{\Delta})$ for $\extra = \Omega(\Delta)$. Theorem~\ref{theorem_insert_only_LB} below shows that this analysis is tight for some tie-breaking rule. 

\iffullversion
    \theoremInsertOnlyLB*
\else
    \begin{restatable}[]{theorem}{theoremInsertOnlyLB}
    \label{theorem_insert_only_LB}
    Let $\Delta \geq 3$ and $0 \le c \le \Delta-2$. There exists an insertion sequence on a forest of maximum degree $\Delta$, and a tie-breaking rule, such that \greedy{} has amortized recourse $\Omega(\invExtraPlusSqrt)$.
    \end{restatable}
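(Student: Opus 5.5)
The plan is to realize, with an adversarial tie-breaking rule, the extremal configuration that the light-insertion charging in the proof of Theorem~\ref{theorem_amortized_insertion_only_improved_UB} identified as worst: a \emph{hub} vertex $u$ whose last $\ell$ edges each swallow a small star while forcing recourse. We then repeat this gadget on disjoint vertex sets. Since every update is an insertion, the amortized recourse is the total recourse divided by the number of edges. It therefore suffices to build a gadget with $O(\Delta+\ell\extra+\ell^2)$ edges that forces \greedy{} to recolor at least $\ell$ times, and then to choose $\ell$ appropriately.

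\textbf{The gadget.} Fix $1\le \ell \le \Delta-1-\extra$ (this is possible since $\extra\le\Delta-2$).
\begin{enumerate}
\item Attach $\Delta-\ell$ fresh leaves to a new vertex $u$. Every such insertion has a free color, so \greedy{} incurs no recourse.
\item For $i=1,\dots,\ell$, do the following. At this point $\deg(u)=\Delta-\ell+i-1$, so $|A(u)|=k_i:=\extra+\ell-i+1$. Build a fresh star centered at a new vertex $v_i$ with $k_i$ leaves. Each star edge has zero-recourse options, and the tie-breaking rule colors the star edges bijectively with exactly the colors of $A(u)$ as they are at that moment. This is a legal zero-recourse choice, since $v_i$'s star is disjoint from $u$.
\item Insert $(u,v_i)$. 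Now $A(u)\cap A(v_i)=\emptyset$, so no proper extension exists without recoloring, and \greedy{} incurs recourse at least $1$.
\end{enumerate}
Whatever \greedy{} recolors only changes which colors are free at $u$ later. That is harmless, because step $i+1$ reads $A(u)$ afresh. Degrees stay within bounds: $u$ ends with degree $\Delta$, and $v_i$ ends with degree $k_i+1\le \extra+\ell+1\le\Delta$ by the choice of $\ell$. The gadget is a single tree.

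\textbf{Counting.} The gadget has $(\Delta-\ell)+\ell+\sum_i k_i\le \Delta+\ell(\extra+\ell)$ edges and forces recourse at least $\ell$. Taking $\Theta(n/\Delta^{O(1)})$ disjoint copies, or simply as many as fit in $n$ vertices, gives amortized recourse at least
\[
\frac{\ell}{\Delta+\ell\extra+\ell^2}.
\]
Now split into two cases.
\begin{itemize}
\item If $\extra\le \Delta-1-\lfloor\sqrt\Delta\rfloor$, take $\ell=\lfloor\sqrt\Delta\rfloor$. The ratio is $\Theta\!\left(\frac{\sqrt\Delta}{\Delta+\sqrt\Delta\,\extra}\right)=\Theta(\invExtraPlusSqrt)$.
\item Otherwise $\extra=\Theta(\Delta)$. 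Take $\ell=1$. The ratio is $\frac{1}{\Delta+\extra+1}=\Theta(1/\Delta)=\Theta(\invExtraPlusSqrt)$.
\end{itemize}

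\textbf{Main obstacle.} The delicate step is justifying the tie-breaking claims. First, when a star edge $(v_i,\cdot)$ is inserted, \greedy{} faces a tie among several zero-recourse colors, and we must argue that it may select a specific color of the current $A(u)$. This holds because all free colors at a fresh star leaf edge give recourse $0$, so any such choice is a minimizer. Second, we must confirm that the forced recourse is genuine even though \greedy{} may recolor edges far away: any proper coloring of the new forest must differ from the old one on some previously colored edge, since $e=(u,v_i)$ has no available color. Note also that the bound needs only recourse \emph{at least} one per hub insertion, so an arbitrary (possibly larger) greedy recoloring does not hurt the lower bound.
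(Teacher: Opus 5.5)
Your proof is correct and uses the same construction as the paper. The paper also builds a hub $u$ with $\Delta-\ell$ leaves, then links $\ell$ stars, each with $\extra+\ell+1-i$ leaves colored by exactly the current $A(u)$, for $\Delta+\ell\extra+\ell(\ell+1)/2$ insertions in total; it takes $\ell\approx\sqrt{2\Delta}$ and falls back to $\ell=1$ when $\extra=\Theta(\Delta)$. Your only departure is a mild robustness gain: because you read $A(u)$ afresh at each step, you fix ties only on the zero-recourse star insertions, whereas the paper also prescribes the colors of $u$'s leaves and how \greedy{} resolves each recourse-incurring insertion (coloring $(u,v_i)$ by $i$), so that it can name the star colors $i,\dots,\ell+\extra$ in advance and get recourse exactly $\ell$.
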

\fi

\iffullversion
    \begin{proof}
    Let $1 \le \ell \le \Delta-1-\extra$ be a parameter that we choose later so as to optimize the resulting bound. Fix a vertex $u$, and insert $\Delta-\ell$ leaf edges incident to $u$. Break ties so that these edges receive the highest colors: $\ell+\extra+1,\ell+\extra+2,\dots,\Delta+\extra$. Thus, initially, the available colors at $u$ are $A(u)=\{1,2,\dots,\ell+\extra\}$.
    
    Now, for each $i=1,2,\dots,\ell$, create a fresh star with center $v_i$ and $\ell+\extra+1-i$ leaves, and break ties so that the edges incident to $v_i$ receive exactly the colors $i,i+1,\dots,\ell+\extra$. Then insert the edge $(u,v_i)$. We show by induction on $i$ that there is a tie-breaking rule, such that when the edge $(u,v_i)$ is inserted, there is no free color for it (i.e., $A(u)\cap A(v_i)=\emptyset$), \greedy{} colors $(u,v_i)$ by the color $i$, and recolors one existing leaf-edge of $v_i$.
    
    The base case ($i=1$) is immediate since $u$ and $v_1$ together use all the colors, so we choose a tie-breaking rule that colors $(u,v_1)$ with the color $1$, and recolors the unique leaf-edge of $v_1$ whose color is $1$ to some other free color. Next, fix $i \geq 2$. By the induction hypothesis, just before inserting $(u,v_i)$, the previous $i-1$ such insertions have used the colors $1,2,\dots,i-1$ at $u$, while the original leaf edges at $u$ still use the colors $\ell+\extra+1,\dots,\Delta+\extra$. Hence, $A(u)=\{i,i+1,\dots,\ell+\extra\}$. On the other hand, the colors used at $v_i$ are exactly $\{i,i+1,\dots,\ell+\extra\}$. Therefore, $A(u)\cap A(v_i)=\emptyset$, so, recourse $0$ is impossible. As before, we choose a tie-breaking rule that colors $(u,v_i)$ by the color $i$, and recolors the unique leaf-edge of $v_i$ whose color is $i$ to some other available color. This concludes the induction.
    
    Thus the algorithm pays recourse of exactly $1$ for each of the $\ell$ insertions $(u,v_i)$. The total recourse is therefore exactly $\ell$. We insert $\Delta$ edges for $u$ and $\ell+\extra+1-i$ edges for $v_i$, so the total number of insertions is $\Delta+\sum_{i=1}^{\ell}(\ell+\extra+1-i)
    =
    \Delta+\frac{\ell^2+(2\extra+1)\ell}{2}$.
    Hence, the amortized recourse is $f(\ell) \equiv \frac{\ell}{\Delta + (\ell^2 + (2\extra+1)\ell)/2} = \frac{2}{\frac{2\Delta}{\ell} + \ell + (2\extra+1)}$. $f(\ell)$ is maximized when the denominator is minimized, which is when $\ell = \sqrt{2\Delta}$. Since $\ell$ should be an integer, we get that the amortized recourse is maximized for either $\ell^* = \ceil{\sqrt{2\Delta}}$ or $\ell^* = \floor{\sqrt{2\Delta}}$, and it is $\Omega(\invExtraPlusSqrt)$. The choice $\ell = \ell^*$ might not be feasible, because we are constrained to $\ell \le \Delta-1-\extra$. If this condition is violated, then $\Delta-1-\extra < \ell^* < \sqrt{2\Delta} + 1$, therefore $\extra > \Delta-2 - \sqrt{2\Delta} \Rightarrow \extra = \Theta(\Delta)$. Then choosing $\ell = 1$ instead of $\ell = \ell^*$ gives $f(1) = \frac{1}{\Delta+\extra+1}$ which is also $\Omega(\invExtraPlusSqrt)$ since $\extra = \Theta(\Delta)$.
    \end{proof}
\else
\fi

\iffullversion
\subsection{Fully Dynamic Forests (Insertions and Deletions)}
\else
\subsubsection{Fully Dynamic Forests (Insertions and Deletions)}
\fi

When the graph is fully dynamic, the problem is harder. In general, $o(1)$ amortized recourse is no longer possible as Theorem~\ref{theorem_amortized_DplusC_LB} below shows. Concretely,
for a sufficiently large $n$, the amortized recourse is $\Omega(1)$. Details of what is ``sufficiently large'' appear in Section~\ref{section_dependence_of_n}. 

\iffullversion
    \theoremDynamicLB*

    \begin{remark*}
    The value of $n_0$ is studied in detail in Section~\ref{section_dependence_of_n}.
    \end{remark*}
\else
    \begin{restatable}[]{theorem}{theoremDynamicLB}
    \label{theorem_amortized_DplusC_LB}
    Let $\Delta \geq 3$ and $0 \le c \le \Delta-2$. There exists $n_0=n_0(\Delta,\extra)$ such that for every deterministic algorithm $ALG$, every $n \ge n_0$, and every $\epsilon > 0$, there is an update sequence on an $n$-vertex forest, such that the amortized recourse of $ALG$ on this sequence is greater than $1/2 - \epsilon$. The amortized recourse is already $\Omega(1)$ after $O(n_0)$ updates.
    \end{restatable}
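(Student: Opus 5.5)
We will build an adaptive adversary. A deterministic algorithm's coloring is fully determined by the update sequence, so the adversary can simulate it. The plan has three steps: build a gadget, find a bad configuration inside it, and repeat that configuration cheaply.

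\textbf{Step 1: the gadget.} With $n_0$ vertices, build a fixed "gadget" forest: a collection of many disjoint stars whose centers have degree $\Delta-1$. Each star uses $\Delta-1$ of the $\Delta+\extra$ colors. Since there are only $\binom{\Delta+\extra}{\Delta-1}$ possible color sets, pigeonhole guarantees that many star centers share the same color set. However, an edge $(u,v)$ between two star centers needs a color missing at both, and we have $|A(u)|=\extra+1$. The cleaner approach is to make the two endpoints jointly saturate the palette. So take a center $u$ of degree $\Delta-1$ and a center $v$ of degree $\extra+1\le\Delta-1$.

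\textbf{Step 2: finding a conflicting pair.} We need $A(u)\cap A(v)=\emptyset$. Since $|A(u)|=\extra+1$ and $v$ uses exactly $\extra+1$ colors, this holds exactly when the colors used at $v$ equal $A(u)$. This is the step I expect to be the main obstacle: the algorithm may try to avoid it by coloring small stars in a way that never matches any $A(u)$.

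To force a match, we grow small stars one edge at a time. Whenever the algorithm colors $v$'s edges with a set that differs from every current $A(u)$, we use pigeonhole over many big stars together with deletions. Concretely, keep $K=\binom{\Delta+\extra}{\extra+1}$ copies of big stars realizing every possible free set. If the algorithm's colorings of big stars do not cover all free sets, delete and reinsert edges to steer them. Any recoloring the algorithm performs while we steer is itself recourse, which only helps the bound. This is what determines $n_0$.

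A simpler alternative for this step is to note that if the free set of $u$ is $S$, we can grow $v$'s star so that its colors cover $S$: repeatedly insert a leaf at $v$, and if its color lies outside $S$, delete it and insert another. I would need to check that deleting and re-inserting a leaf costs the algorithm either recourse or a color choice it cannot dodge forever. Since a deterministic algorithm is deterministic, re-inserting the same leaf reproduces the same choice. The fix is to first delete edges at $u$ so that the complement is large, then argue by pigeonhole.

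\textbf{Step 3: the repeated cycle.} Once $u$ and $v$ together use all $\Delta+\extra$ colors with disjoint sets, inserting $(u,v)$ forces recourse of at least $1$. Then delete $(u,v)$ again. The point is to arrange that each forcing cycle costs only $O(1)$ non-recourse updates, for example an insert and a delete, plus amortized setup. For example, after the forced recoloring, the adversary deletes $(u,v)$ and deletes and re-inserts one leaf so that the conflicting configuration is re-created, again via a pigeonhole-over-states argument.

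If each forced unit of recourse costs at most $2+o(1)$ updates, the ratio approaches $1/2$. The initial setup of $O(n_0)$ updates is amortized away by repeating enough times, which gives amortized recourse $>1/2-\epsilon$ and $\Omega(1)$ after $O(n_0)$ updates. The crux to verify is that the state space is finite (bounded by a function of $\Delta,\extra$). This finiteness implies that the algorithm cannot avoid a recurring conflicting state without paying recourse in between.
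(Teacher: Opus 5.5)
Your accounting target, one forced recoloring per insertion--deletion pair with the setup amortized away, is the right one. However, there is a genuine gap at exactly the two places you flag, and neither can be repaired within your ``one conflicting pair'' framework.

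In Step~2, leaf edges impose no constraint, so the algorithm can simply refuse to cooperate. A center $v$ with at most $\extra$ colored edges always has a free color outside any $(\extra+1)$-set $S$, because $|[\pal]\setminus S|=\Delta-1>\extra$. Your insert/delete/re-insert loop at $v$ can therefore run forever with zero recourse. ``Steering'' the big stars is equally impossible, since the algorithm picks their colors too. For example, it can give every big star the leaf colors $\{1,\dots,\Delta-1\}$ and every small star the colors $\{1,\dots,\extra+1\}$, and then no pair ever conflicts.

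Step~3 has the same flaw. After the algorithm pays once for $(u,v)$ and you delete that edge, its color is free at both endpoints, and every re-inserted leaf can avoid it. Finiteness of the state space does not force the algorithm back into a conflicting state, because the algorithm, not the adversary, chooses the transitions. So nothing in your argument yields one forced recoloring per $O(1)$ updates.

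The missing idea is to make the conflict structural rather than hoping the algorithm creates it. The second endpoint should be a vertex whose neighbors are star centers all carrying the same leaf palette $P$, with $|P|=\Delta-1$. Each such edge must get a color from the $(\extra+1)$-set $[\pal]\setminus P$. Hence this vertex can host at most $\extra+1$ stars whose palette is still $P$, whatever the algorithm does.

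The paper's proof proceeds as follows:
\begin{itemize}
\item It fixes one owner $o_P$ per $(\Delta-1)$-subset $P$ and builds $N=(\extra+1)|\mathcal P|+1$ stars of degree $\Delta-1$.
\item It repeats a step: attach an unattached star to the owner of its current palette, then detach every attached star whose palette has changed.
\item Attached stars never exceed $(\extra+1)|\mathcal P|$. So after $s$ steps there were at least $s-(\extra+1)|\mathcal P|$ detachments, each witnessing a distinct recoloring of that star's own leaf edges.
\item This is against at most $2s+n_0$ updates, so the ratio tends to $1/2$ and is $\Omega(1)$ already at $s=2n_0$.
\end{itemize}
The adversary never needs to re-create a specific conflict. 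A star whose palette changed is simply reused later at a different owner, and this is what makes your $2$-updates-per-recoloring accounting actually go through.
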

\fi

\begin{proof}
    Let $\mathcal{P}$ be the family of sub-palettes of $[\pal]$ of size $\Delta-1$. For each such sub-palette $P \in \mathcal{P}$, we associate a unique \emph{owner} vertex $o_P$. From the remaining vertices, we, who play the role of the adversary, create $N \equiv (\extra+1)\cdot |\mathcal{P}| + 1$ disjoint stars of degree $\Delta-1$. $n_0$ is the number of vertices needed for this construction.
    
    For a star $S$, we refer to the set of colors $P \in \mathcal{P}$ used on its leaf-edges as the \emph{palette of $S$}. We now repeat the following \emph{step}: take an isolated star and connect its center to the owner of its palette. Then, while there is a connected star whose palette has changed, we disconnect it from its owner. Observe that after each of our steps, at most $\extra+1 < \Delta$ stars can be connected to each owner. This is because all edges incident to $o_P$ must use distinct colors outside $P$, and there are only $\extra+1$ such colors. Additionally, since $N > (\extra+1)\cdot|\mathcal{P}|$, we can repeat this step indefinitely.
    
    At each step, we connect one star and may disconnect several others. Since there are at most $\extra+1$ stars connected to each owner, after $s$ steps we must have disconnected at least $s - (\extra+1)\cdot|\mathcal{P}|$ stars. 
    Since we  disconnect a star only if its palette has changed, it follows that at least $s - (\extra+1)\cdot|\mathcal{P}|$ edges have been recolored by the algorithm. The total number of updates we make in these $s$ steps is at most $2s$, as each edge we delete was inserted in a previous step. In addition, we use at most $n_0$ updates for the initial creation of the stars. Hence, the amortized recourse is at least $\frac{s - (\extra+1)\cdot|\mathcal{P}|}{2s+n_0}$, which is $\Omega(1)$ already for $s = 2 \cdot n_0$ and approaches $1/2$ as $s \rightarrow \infty$.
\end{proof}

The $\Omega(1)$ lower bound of Theorem~\ref{theorem_amortized_DplusC_LB} may appear small (a mere constant),
but we should compare it to the subconstant incremental amortized recourse of \greedy{} (Theorem~\ref{theorem_amortized_insertion_only_improved_UB}), which is $O(\Delta^{-\epsilon})$ for $\epsilon \in [\frac{1}{2},1]$, depending on $\extra$. Perhaps surprisingly, the same greedy algorithm that performs so well when there are only insertions, performs poorly in the fully dynamic model, incurring $\Omega(\log_\Delta {n})$  amortized recourse.

\iffullversion
    \theoremGreedyRecourse*

    \begin{remark*}
    The value of $n_1$ is studied in detail in Section~\ref{section_dependence_of_n}.
    \end{remark*}
\else
    \begin{restatable}[Greedy recourse]{theorem}{theoremGreedyRecourse}
    \label{theorem_greedy_fails_LB}
    Let $\Delta \geq 3$ and $0 \le c \le \Delta-2$. There exists a forest and coloring such that from this state onward, the amortized  recourse of \greedy{} is $\Omega(\log_\Delta n)$. There exists $n_1 = n_1(\Delta,\extra)$ such that for every $n \ge n_1$, this bad state can be reached from the empty graph.
    \end{restatable}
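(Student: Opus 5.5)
The plan is to build a ``bad state'': a large rooted tree $T$ of depth $h=\Theta(\log_{\Delta} n)$, together with a cheap cycle of updates (one deletion and one insertion, or $O(1)$ of them) such that every insertion in the cycle forces \greedy{}, under \emph{any} tie-breaking, to recolor $\Omega(h)$ edges. Moreover, after the cycle the coloring returns to a state isomorphic to the bad one. Repeating the cycle indefinitely then makes the amortized recourse from that state onward $\Omega(\log_\Delta n)$.

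\textbf{The tree.} Take a complete tree in which every internal vertex has degree exactly $\Delta$. Its coloring should satisfy a ``rigidity'' invariant: for a designated root edge slot at the root $r$, and for every color $\alpha$ that the new edge could receive, the only way to free $\alpha$ at $r$ is an alternating (Kempe-like) path of length $\Omega(h)$ going down to the leaves. Concretely, we insert a new edge $e=(r,w)$, where $w$ is the center of a star whose $\Delta-1$ leaf-edges use exactly the colors that are free at $r$, together with enough colors so that $A(r)\cap A(w)=\emptyset$. This is the same trick as in the proof of Theorem~\ref{theorem_insert_only_LB}, now with all $\extra$ extra colors blocked.

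\textbf{The lower bound on \greedy{}'s cost.} Take any recoloring set $H$ that frees a color for $e$. It must either recolor an edge at $r$ or an edge at $w$. Recoloring a leaf edge at $w$ is blocked by making $w$'s leaves non-leaves: they are themselves roots of deep full subtrees. So every choice starts a recoloring chain at some full-degree vertex. Since all vertices have degree $\Delta$ and the $\extra$ extra colors are saturated, each recolored edge $(x,y)$ has no free color at the next level. The chain therefore has to continue until it reaches a leaf, which takes length $\ge h$.

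\textbf{Restoring the bad state.} After \greedy{} recolors, we delete $e$ and possibly also the star or subtree whose coloring changed. We then reattach a fresh or recycled component whose coloring has the blocking pattern again, paying $O(1)$ updates per cycle; the argument mirrors the star/owner recycling in the proof of Theorem~\ref{theorem_amortized_DplusC_LB}. Because trees built by earlier insertions keep their colors under deletions, we can park pre-colored blocking gadgets and reuse them. Reachability from the empty graph (the constant $n_1$) would follow by building the deep trees leaf-up and then steering colors with gadget attachments, using pigeonhole over the finitely many palettes as in Theorem~\ref{theorem_amortized_DplusC_LB}. The number of vertices used is about $\Delta^{h}$ times a factor depending on $\Delta,\extra$, which gives $h=\Theta(\log_\Delta n)$.

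\textbf{Main obstacle.} The hard step is to guarantee that the long recoloring is forced at \emph{every} repetition, for any tie-breaking. \greedy{} may shorten future paths by how it recolors. My first attempt would be to show that after a chain recoloring the tree stays ``saturated'' (every internal vertex has degree $\Delta$ with extra colors blocked at each level), so the $\Omega(h)$ argument applies again. If not, I would alternate the attachment point among $\Theta(1)$ disjoint deep trees and re-saturate with $O(1)$ deletions per round.
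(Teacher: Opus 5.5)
Your single-insertion argument is in the right spirit: two rigid deep trees joined at their roots force a recoloring down to a leaf. You correctly flag repetition as the main obstacle, but neither of your fallbacks resolves it. This is a genuine gap, and it is exactly the step that carries the amortized bound.

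\textbf{Why your restoration step fails.} The $h$ recolored edges lie on a root-to-leaf path inside the deep tree $T$, not in an $O(1)$-size gadget you could delete and recycle. After the shift, $T$ is no longer rigid at $r$.
\begin{itemize}
\item Say \greedy{} gave $e=(r,w)$ the color $\gamma_1\in A(w)$ taken from $(r,v_1)$, which now has color $\gamma_2$. After deleting $e$, $\gamma_1$ is free at both $r$ and $w$, so reinserting $e$ costs nothing.
\item A fresh partner $w'$ with $A(w')\cap A(r)=\emptyset$ also fails. If $\gamma_2\in A(w')$, \greedy{} colors $(r,w')$ by $\gamma_2$ and moves $(r,v_1)$ back to $\gamma_1$. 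That color is free at $v_1$ unless the earlier chain was bicolored there, which is forced only when $\extra=0$. The cost is recourse $1$.
\item More generally, the chain can break, at every vertex of its path, the parent/child covering property that made $T$ rigid with respect to $r$. This creates shortcuts.
\item All root-to-leaf paths of a complete tree tie, so an arbitrary tie-breaking rule, not you, decides where this damage lands.
\item Deletions never trigger recoloring by \greedy{}, so ``re-saturating with $O(1)$ deletions'' has no mechanism behind it.
\end{itemize}

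\textbf{The missing idea: rigidity migrates rather than disappears.} Suppose a $P$-layered and a $\overline{P}$-layered tree are linked at their roots, and \greedy{} recolors a path ending at a leaf $u$ with parent $p$. Then the combined tree minus $(p,u)$ is again layered, now rooted at $p$. The paper exploits this as follows.
\begin{itemize}
\item Each round deletes one child edge below $u_1$ and below $u_2$, at distances $d_1=\lfloor d/3\rfloor$ and $d_2=d_1-1$ from the roots. The deficient vertices $u_1,u_2$ then act as unique nearest ``leaves'', so there are no ties.
\item It then inserts $(r_1,r_2)$, inserts $(u_2,r_3)$, deletes $(r_1,r_2)$, and inserts $(u_1,r_4)$.
\item \greedy{} is thereby forced to retrace the same paths in opposite directions. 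The forest returns to the exact initial state with $\Theta(d)$ recourse per $6$ updates.
\end{itemize}
Nothing in your plan plays this role.

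\textbf{A secondary gap.} Your cost bound only rules out short shift chains, but \greedy{} minimizes over arbitrary recoloring sets $H$. You need a bottom-up forcing argument: if no leaf edge is recolored, every internal vertex keeps its child-color set. That argument requires a precise invariant, for instance that the child-color sets of every parent/child pair cover $[\pal]$. ``The extra colors are saturated'' is not such an invariant. The paper gets this invariant directly from complementary sub-palettes of sizes $\extra+1$ and $\Delta-1$ on alternating layers, so degrees alternate between $\extra+2$ and $\Delta$ rather than all equaling $\Delta$. The same structure makes the bottom-up construction from the empty graph forced, which your gadget ``steering'' leaves unspecified.
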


\fi
\iffullversion
    We postpone the proof of Theorem~\ref{theorem_greedy_fails_LB} to Section~\ref{section_greedy_high_recourse}.
\else 
    \begin{proof}[Proof Sketch of Theorem~\ref{theorem_greedy_fails_LB}]
    We partition the palette into two complementary sub-palettes $P$ and $\overline{P}$, such that $P$ has size $\Delta-1$. The proof relies on layered trees whose coloring alternate between these two sub-palettes, layer by layer. Vertices at even depth have $|P|$ child-edges colored from $P$, and vertices at odd depth have $|\overline{P}|$ child-edges colored from $\overline{P}$.
    We show that when an edge is inserted between the roots of a $P$-layered and a $\overline{P}$-layered trees, \greedy{} recolors exactly a  path from a root to a leaf $v$, and that the resulting combined tree, when we delete the edge of $v$, is itself a $P$-layered or a $\overline{P}$-layered (depending on parity) rooted at the former parent of $v$.

    \looseness=-1
    We start with two perfect layered trees, and perform a repeating cycle of $3$ deletions and $3$ insertions such that the path that \greedy{} recolors goes back-and-forth on the same edges, effectively undoing the changes so that by the end of the cycle we get back to the initial state. The topology, coloring and the cycle of updates are illustrated in Figure~\ref{figure_adversarial_cycle_of_requests} and explained in its caption. The depths of the trees are $\Omega(\log_\Delta n)$ and therefore this expression lower bounds the amortized recourse. The construction of this initial state from an empty forest is deferred to Section~\ref{section_greedy_high_recourse}.
    \end{proof}

    \begin{figure*}[t]
        \centering
        \includegraphics[width=0.82\textwidth]{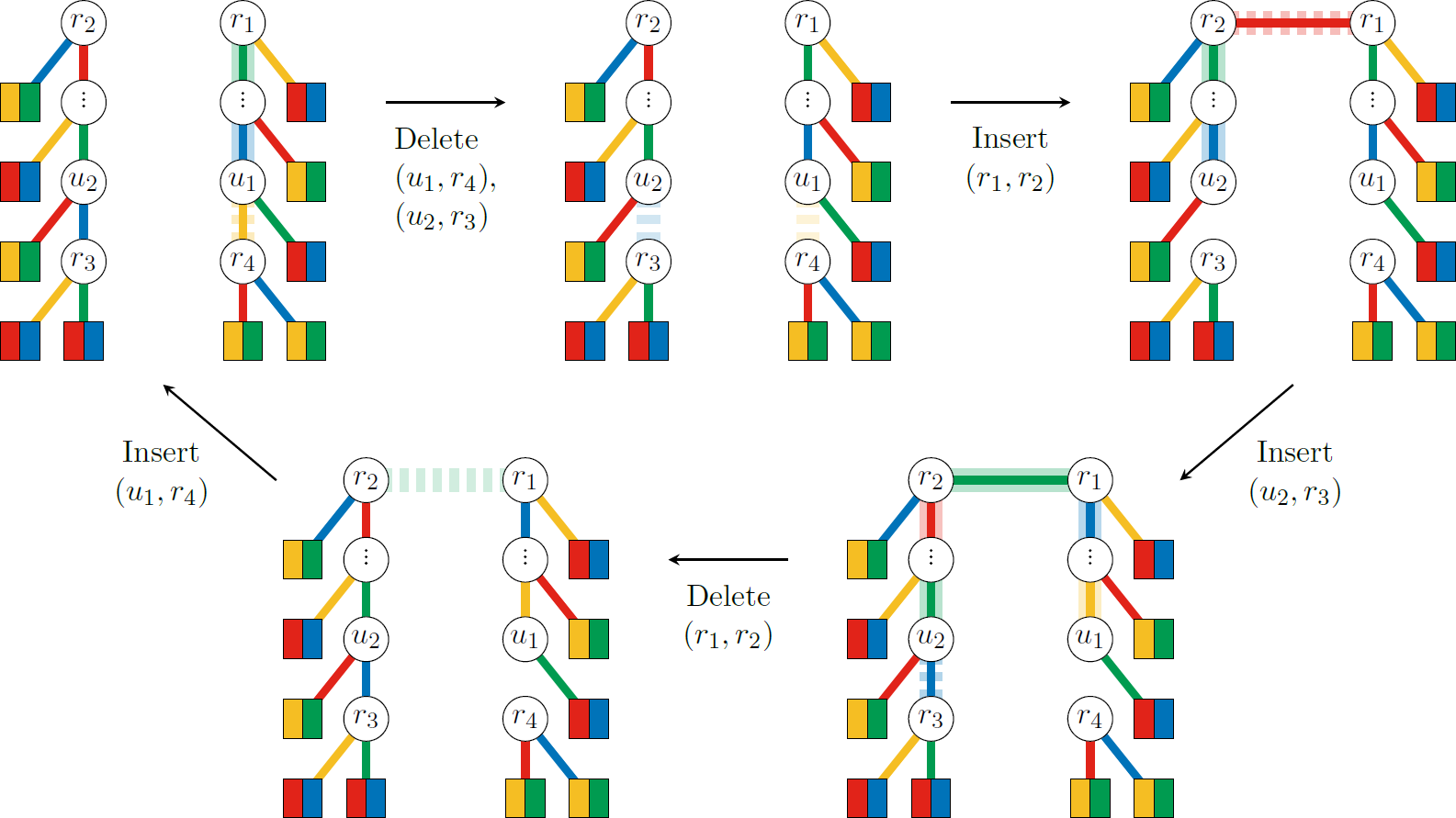}
        \vspace{0.2cm}
        
    	\caption{\small{Visualization for the proof of Theorem~\ref{theorem_greedy_fails_LB}, with sub-palettes $P=\{\text{\color{Dandelion} yellow}, \text{ \color{OliveGreen} green}\}$, and $\overline{P} = \{\text{\color{red} red}, \text{ \color{blue} blue}\}$. Bicolored squares abstract layered subtrees where the two colors are the colors used in the next layer. Note that the depth of these subtrees is \emph{not} illustrated in this figure. The actual depths are as follows: In the initial state, top left, we have two perfect layered trees where each leaf is of depth $d$ for $d = \Theta(\log_{\Delta} n)$. The distance from $r_i$ to $u_i$ in edges is $\floor{\frac{d}{3}} - (i-1)$ for $i \in \{ 1,2 \}$ (chosen differently to prevent ties). Updated edges are highlighted with a dashed background, and recolored edges are highlighted with a solid background. The cycle of updates is as follows: Starting from the initial state (top left), delete $(u_1,r_4)$ and $(u_2,r_3)$ to get $4$ trees, then insert $(r_1,r_2)$, insert $(u_2,r_3)$, delete $(r_1,r_2)$, and finally insert $(u_1,r_4)$. This costs $\Theta(\log_\Delta n)$ recourse and returns to the same initial state.}}
        \vspace{-0.5cm}
    	\label{figure_adversarial_cycle_of_requests}
    \end{figure*}
\fi

\smallskip

If the lower bound of Theorem~\ref{theorem_amortized_DplusC_LB} is tight, we should aim for a deterministic algorithm with $O(1)$ amortized recourse.  Theorem~\ref{theorem_greedy_fails_LB} rules out \greedy{}. Interestingly, if the forest is \emph{rooted} (see Definition~\ref{definition_rooted_forest} below), and if the palette has $2\Delta-2$ colors, then a simple non-greedy deterministic algorithm, which we call \colorfulpath{}, achieves $O(1)$ recourse. The exact claim is Theorem~\ref{theorem_rooted_trees_recourse_constant_UB} below. Note that the $\Omega(1)$ lower bound of Theorem~\ref{theorem_amortized_DplusC_LB}, and the lower bound on the recourse of \greedy{} from Theorem~\ref{theorem_greedy_fails_LB} also apply to rooted forests. 
\iffullversion
\else We defer the definition and analysis of the \colorfulpath{} to Section~\ref{section_deterministic_mostly_greedy_extended}.
\fi

\iffullversion
    \definitionRootedForest*
\else
    \begin{restatable}[]{definition}{definitionRootedForest}
    \label{definition_rooted_forest}
    A dynamic forest is \emph{rooted} if it is a collection of trees, each with a fixed assigned root. An insertion must be of the form $(p,r)$ where $r$ is a root of its tree, that becomes a child of $p$. (If both are roots, the parent must be specified.) A deletion of $(u,v)$ creates a new tree, such that if (without loss of generality) $u$ remains reachable from the root of the tree that contained $(u,v)$, then $v$ becomes the root of the new tree.
    \end{restatable}
\fi

\iffullversion
    \theoremRootedTreesDynamicTwoDeltaDeterministicConstant*
\else
    \begin{restatable}[]{theorem}{theoremRootedTreesDynamicTwoDeltaDeterministicConstant}
    \label{theorem_rooted_trees_recourse_constant_UB}
    Let $\Delta \ge 3$, and let $F$ be a dynamic \emph{rooted} forest, initially empty. \colorfulpath{} maintains a proper $(2\Delta-2)$-edge-coloring on $F$ with $O(1)$ amortized recourse.
    \end{restatable}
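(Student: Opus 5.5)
The approach is to define \colorfulpath{} explicitly and analyze it with a potential function.

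\textbf{Setup.} In a rooted forest every non-root vertex has one parent edge and at most $\Delta-1$ child edges. With $\pal=2\Delta-2$ colors, consider an insertion $(p,r)$, where $r$ is a root. Then $p$ has at most $\Delta-1$ existing edges, so $|A(p)|\ge \Delta-1$. The root $r$ has at most $\Delta-1$ child edges, since its degree after the insertion is at most $\Delta$. Hence $A(p)\cap A(r)=\emptyset$ is possible only when both sets have exactly $\Delta-1$ colors and $A(p)$ equals the set of colors on $r$'s child edges. Call an edge $(u,v)$, with $u$ the parent, \emph{tight} if $v$ has exactly $\Delta-1$ children and the colors on $v$'s child edges are exactly the colors missing at $u$ other than the color of $(u,v)$ itself. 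Equivalently, the closed neighbourhood of $(u,v)$ uses all $2\Delta-2$ colors.

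\textbf{The algorithm.} Deletions keep the coloring. On inserting $(p,r)$, if $A(p)\cap A(r)\neq\emptyset$, color $(p,r)$ with a common free color and stop. Otherwise, give $(p,r)$ some color $\alpha_1\in A(p)$. This color is used on exactly one child edge $(r,c_1)$, so we recolor that edge. Its free colors at $r$ are now exactly the $\Delta-1$ colors used at $p$, so it can be recolored with zero further conflict unless $c_1$ is in turn tight with respect to this set.

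In that case we pick the new color $\alpha_2$ of $(r,c_1)$ so that the child $c_2$ of $c_1$ carrying $\alpha_2$ is the best next step, and continue down. The path alternates between two complementary sub-palettes, which is where the name comes from. It stops at the first vertex $c_k$ that is not tight, where a free color exists. Since we have $\Delta-1$ choices of color at each level, the rule is: prefer a color whose conflicting child is non-tight. The walk continues only when \emph{all} $\Delta-1$ children are tight, i.e., when the whole next layer is saturated.

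\textbf{Amortization.} Let $\Phi$ be the number of tight edges. I would show three facts.
\begin{enumerate}
\item A deletion does not increase $\Phi$: removing an edge only frees colors or removes children, and a new root has no parent edge.
\item An insertion whose path has length $k$ has real cost $k$ and changes $\Phi$ by at most $-(k-1)+O(1)$. The reason is that each intermediate edge $(c_{i-1},c_i)$ on the path was tight before the update. After its color is swapped from one sub-palette to the other, its endpoints' color sets no longer complement each other, so it stops being tight. Only the $O(1)$ edges at the two ends of the path, together with the new edge, can become tight.
\item Since $\Phi\ge 0$ and updates start from the empty forest, the amortized cost per update is $O(1)$.
\end{enumerate}

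\textbf{Main obstacle.} The hardest step is the second claim. Recoloring $(c_{i-1},c_i)$ also changes the color sets at $c_{i-1}$ and $c_i$. This can affect the tightness of \emph{sibling} edges at these vertices, which are off the path. We must show it cannot create many new tight edges. My first attempt is to check that, at each path vertex, the multiset of colors on the child edges is merely permuted by one swap, so siblings' tightness status is unchanged. If that fails, I would refine $\Phi$, for instance by weighting tight edges by depth or counting only tight vertices with all children tight, so that the swap is charged locally to a destroyed tight pair.
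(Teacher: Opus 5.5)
Your main line has a genuine gap: claim 2 is false for your algorithm, and the edge-count potential cannot give $O(1)$ even after that is repaired.

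\textbf{Claim 2 fails.} First, the correct static notion of ``the walk cannot stop here'' is different from yours. A walk that has just entered $(u,v)$, after the parent edge of $u$ took the color of $(u,v)$, fails to stop iff $u$ and $v$ both have $\Delta-1$ children and their child-color sets partition the palette. Your definition, and its ``closed neighbourhood'' reformulation, do not match this. Now take an intermediate edge $(c_{i-1},c_i)$ with old color $a_{i-1}$ and new color $a_i$. Let $a_{i+1}$ be the new color of $(c_i,c_{i+1})$, and let $X$ be the old child-color set of $c_{i-1}$. After the update the two child-color sets are $X\setminus\{a_{i-1}\}\cup\{a_i\}$ and $\overline{X}\setminus\{a_i\}\cup\{a_{i+1}\}$. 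These are complementary exactly when $a_{i+1}=a_{i-1}$, i.e.\ when the walk flipped a locally bicolored path. Your rule chooses arbitrarily once all children are tight, so this can happen at every step; the path edges then stay tight and $\Phi$ need not drop. This is the missing idea that \textsc{ChooseColor} supplies in the paper. The new color of an edge must differ from the current color of its grandparent edge, which is $a_{i-1}$, and that is what makes $(c_{i-1},c_i)$ ``ready''.

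\textbf{The edge count can jump by $\Delta-1$.} Let $p$ be a root with $\Delta-2$ children colored $X\setminus\{\pi\}$, each having child-color set $\overline{X}$. Inserting $(p,r)$, where $r$ is a root with child-color set $\overline{X}$, forces color $\pi$ with zero recourse. It also makes all $\Delta-1$ child edges of $p$ tight at once, and the final recoloring at the stopping vertex can do the same. So counting tight edges yields only $O(\Delta)$. The paper avoids this by counting \emph{unready} edges. Readiness is a certificate created only by the walk, and at most one child edge of any vertex is ready, so each update destroys $O(1)$ of them.

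Your sibling worry is misdiagnosed but harmless. At a path vertex a color of $X$ is replaced by one of $\overline{X}$, so the child-color set is not permuted. However, every child edge of a traversed vertex was already tight, or the walk would have stopped there, so siblings can only lose tightness.

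\textbf{Your second fallback closes both gaps.} Call $v$ \emph{blocked} if it has $\Delta-1$ children and all its child edges are tight. The walk continues past a vertex exactly when that vertex is blocked, so a walk with recourse $j+1$ passes $j$ blocked vertices. Each such $v$ has its child-color set $X$ turned into $X\setminus\{a\}\cup\{b\}$ with $b\in\overline{X}$. Since $\Delta\ge 3$, $v$ has another child $y$ whose child-color set $\overline{X}$ is unchanged. So $(v,y)$ is no longer tight and $v$ becomes unblocked, whatever color was chosen. Blockedness of $v$ depends only on the child-color sets of $v$ and its children. Hence an insertion can newly block only $p$, the parent of $p$, and the vertex where the walk stops. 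A deletion can newly block only its parent endpoint, and only if that endpoint was a root with $\Delta$ children. The number of blocked vertices therefore gives amortized recourse at most $4$, without the paper's anti-bicolor rule. As submitted, though, your proof commits to the edge-count potential and its claim 2, and that argument does not hold.
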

\fi

\iffullversion
    \begin{algorithm}[t]
        
        \DontPrintSemicolon
        \KwIn{
             A sequence of edge updates over a forest $F$ with a maximum degree $\Delta$.
        }
    
        \KwOut{
            Maintains a proper coloring of $F$ after each update.
        }
        
        \SetKwProg{Fn}{Function}{:}{}
        
        \Fn{\funcUpdateForestColored{Forest $F$, edge $e=(u,v)$, insert/delete}}{
            If insert $e$: Add $e$ to $F$, uncolored (yet) and call \funcRecursiveStepColored{$F$, $e$, $u$}; Otherwise delete $e$ from $F$ and keep the coloring.
        }
    
        \Fn{\funcRecursiveStepColored{Forest $F$, uncolored edge $e=(u,v)$, entry vertex $u$}}{
            \textbf{if} $A(u) \cap A(v) \ne \emptyset$ \textbf{then} Color $e$ by any $\col{1} \in A(u) \cap A(v)$. \Return.
    
            \uIf {$\exists w \in \ch(v)$ such that $\colorf(v,w) \in A(u)$ and $A(w) \cap A(v) \ne \emptyset$}{
                    Color $e$ by $\colorf(v,w)$, uncolor $(v,w)$, and call \funcRecursiveStepColored{$F$, $(v,w)$, $v$}.
            }\Else{
                Let $\col{1} = \funcChooseColorColored{F,e,u}$ and denote the edge of $v$ colored by $\col{1}$ by $(v,w)$.
                Color $e$ by $\col{1}$, uncolor $(v,w)$, and call \funcRecursiveStepColored{$F$, $(v,w)$, $v$}.
            }
                
        }
    
        \Fn{\funcChooseColorColored{Forest $F$, uncolored edge $e=(u,v)$, entry vertex $u$}}{
            Let $\col{2} = \colorf(\grandparent(u),\parent(u))$ if such edge exists, otherwise $\col{2} = null$. Let $w$ be an arbitrary child of $v$ subject to $\col{1} \equiv \colorf(v,w) \ne \col{2}$. \Return $\col{1}$.
        }
        
        \caption{\colorfulpath{}: Rooted-forest dynamic edge coloring, using $2\Delta-2$ colors.}
        \label{algorithm_rooted_forest}
    \end{algorithm}
\else
    
\fi

\iffullversion
    We defer the proof of Theorem~\ref{theorem_rooted_trees_recourse_constant_UB} to Section~\ref{subsection_rooted_TwoDelta}. The complete pseudo-code description of \colorfulpath{} is in Algorithm~\ref{algorithm_rooted_forest}, it is exemplified in Figure~\ref{figure_shifted_rooted_e_ready}, and its high level idea is as follows.

    When we insert an edge $e=(p,r)$ that connects $p$ as the parent of a root $r$, we shift colors upwards in the tree, along a downward path from $r$, as follows. The edge $e = (p,r)$ is initially uncolored. If it has an available color (i.e.\ $A(p)\cap A(r) \not= \emptyset$) we just use it and finish. Otherwise, if $r$ has a child $v$ such that $(r,v)$ has an available color $\col{1} \in A(r)\cap A(v)$ then it must be that the current color $\col{2}$ of $(r,v)$ is in $A(p)$ (this follows since we have $2\Delta - 2$ colors and no free color for $(p,r)$). So we shift $\beta$ from $(r,v)$ to $(p,r)$, recurse on $(r,v)$ and finish in the next step (by coloring it $\col{1}$). If no edge $(r,v)$ has a free color, we pick an edge $(r,v)$ whose color is different from the color of the edge $e'' \equiv (\grandparent(p),\parent(p))$ (if such $e''$ exists), shift its color to $(p,r)$ (which again is possible since we have $2\Delta - 2$ colors and no free color for $(p,r)$), and recurse on $(r,v)$ which is now uncolored.

    \looseness=-1
    The choice $\colorf{}(e) \ne \colorf{}(e'')$ essentially guarantees that we have an available color for the edge between them, $e'$, when we reach it next time: Suppose that we pick the color $\col{1}$ for an edge $e = (v,w)$ whose parent is $e' = (u,v)$ and its grandparent is $e'' = (x,u)$, in a recoloring that is not last or next to last (as part of the whole recoloring due to the recent insertion). Then by coloring $e$ with a color $\col{1} \ne \col{2}$ where $\col{2} = \colorf{}(e'')$, we guarantee that $\col{2} \in A(v)$ (it was surely in $A(v)$ when we shifted $\col{2}$ from $e'$ to $e''$). Since trees are rooted, next time we reach  $e'$ in a preceding step we will shift $\col{2}$ from $e''$ upwards, vacating it so that $\col{2} \in A(u)$ and therefore $\col{2} \in A(e')$. Thus the update would terminate before reaching $e$ again, assuming that no additional updates occur in the neighbourhood of $e'$. In the proof we formalize this using a potential argument showing that if the recourse is high, the algorithm guarantees that all but $O(1)$ of the recolored edges will have an available color in a future re-visit.
    
    \vspace{0.3cm}
    \begin{figure}[h!]
        \centering
        \begin{subfigure}{0.35\textwidth}
        \includegraphics[width=1\linewidth]{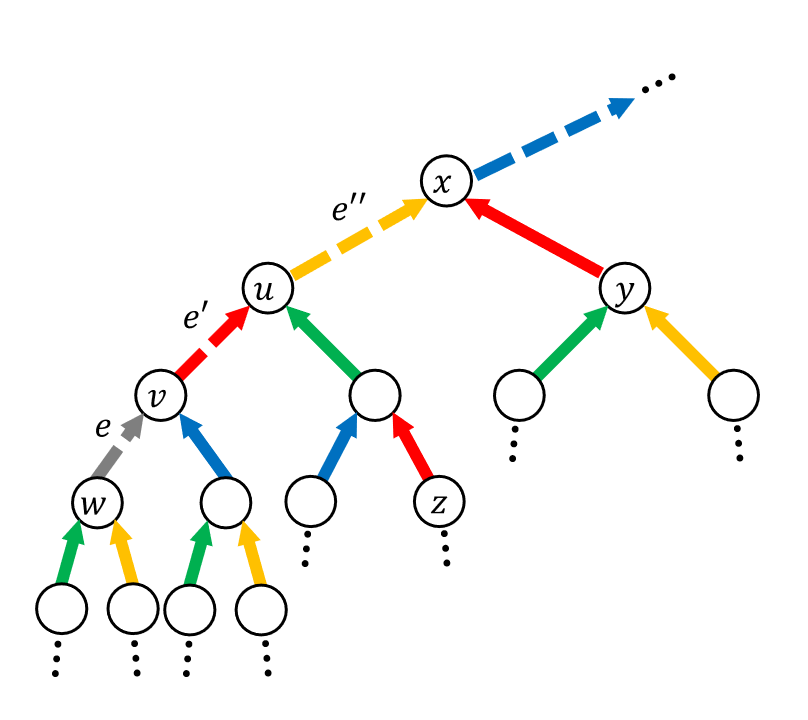}
        \subcaption{}\label{fig:first}
        \end{subfigure}
        \hspace{1cm}
        \begin{subfigure}{0.35\textwidth}
        \includegraphics[width=1\linewidth]{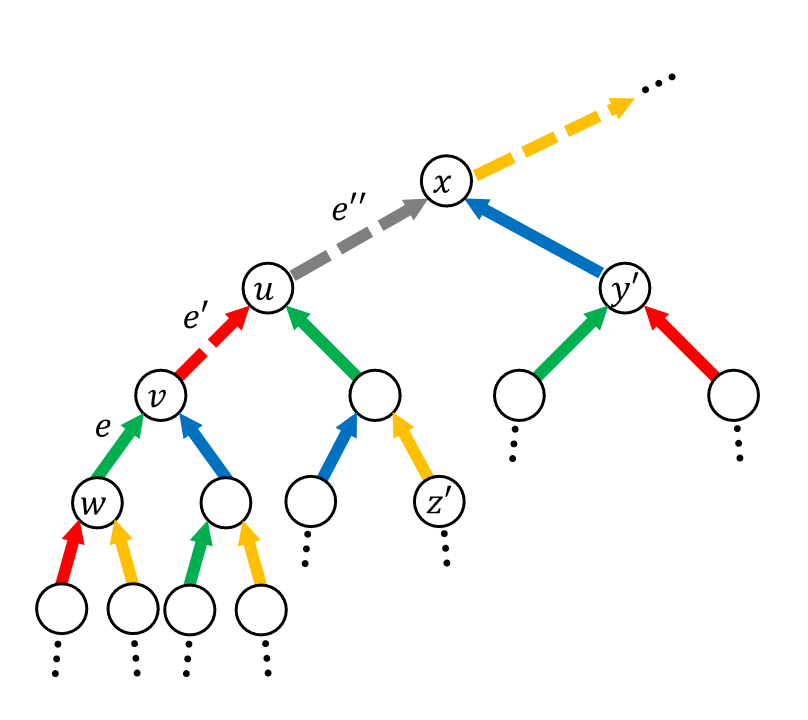}
        \subcaption{}\label{fig:second}
        \end{subfigure}\\[0.5cm]
        \begin{subfigure}{0.35\textwidth}
        \includegraphics[width=1\linewidth]{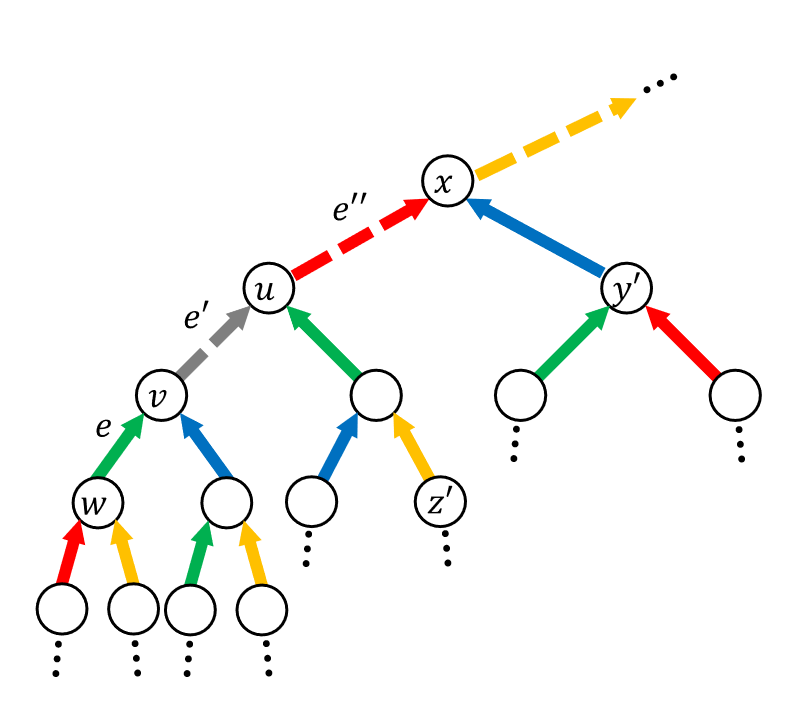}
        \subcaption{}\label{fig:third}
        \end{subfigure}
        \caption{Illustration of the recoloring process of \colorfulpath{}. $\Delta=3$, and arrows indicate the rooting of the tree. Each sub-figure is mid-way of applying \colorfulpath{} and dashed colors are colors that were already shifted upwards. The gray edge is the currently (unique) uncolored edge, and the palette is $\{\text{\color{blue} blue}, \text{ \color{red} red},  \text{ \color{Dandelion} yellow}, \text{ \color{OliveGreen} green}\}$. In (a) a recoloring path reached $e$. Assuming that we cannot stop on $e$ (as in the figure) or on any of its children, it takes not-yellow from its children because $\colorf{}(e'') = \text{yellow}$. In this example $e$ takes green, and in (b) we see it in solid green. The update in (b) happens sometime in the future, after $(x,y)$ was deleted, $(x,y')$ was inserted, $z'$ was attached instead of $z$, and some other updates possibly happened but not around $e'$. Now, when the recoloring path arrives at $e''$ it cannot stop, but it does have a single look-ahead towards $e'$. So $e''$ takes the color of $e'$, and since $\text{yellow} \in A(e')$, in (c) the recoloring can stop on $e'$. We ensure $\text{yellow} \in A(v)$ by not shifting yellow to $e$ in the application of \colorfulpath{} in (a), and yellow got into $A(u)$ by the following application that reaches $e'$ in (c). Overall the effect was that in the second application of \colorfulpath{} we had $\text{yellow} \in A(e')$ so we could stop.} \label{figure_shifted_rooted_e_ready}\vspace{-0.5cm}
    \end{figure}
    
\else
\fi

\iffullversion
\subsubsection{Greedy Logarithmic Amortized Recourse}
\label{section_greedy_high_recourse}

In this sub-section we prove that \greedy{} has amortized recourse that is logarithmic in $n$. Before we can prove Theorem~\ref{theorem_greedy_fails_LB}, we need to build up towards it with multiple lemmas that study the behavior of the \emph{$P$-layered trees} gadget.

\begin{restatable}[$P$-layered tree, $P$-star]{definition}{definitionLayeredTree}
    \label{definition_layered_tree}
    We say that a $\pal$-colored rooted tree $T$ ($\Delta \ge 3$) is \emph{$P$-layered} for a sub-palette $P \subseteq [\pal]$ if: (1) $T$ is full in the following sense: every non-leaf vertex at an even depth has $|P|$ children, and every non-leaf vertex at an odd depth has $|\overline{P}|$ children where $\overline{P}$ is the complementary sub-palette;\footnote{In particular, this means that $|P|+|\overline{P}| = \pal$ and that $|P|,|\overline{P}| \le \Delta-1$; So also, $|P|,|\overline{P}| \ge \extra+1$.} (2) The edge coloring of $T$ is such that the edges from a vertex to its children are colored from $P$ if its depth is even, and from $\overline{P}$ if its depth is odd. See Figure~\ref{figure_layered_tree} for an example. If the tree is of depth $1$ (i.e., a star graph) we refer to it as a \emph{$P$-star}.
\end{restatable}

\begin{figure*}[t]
    \centering
    \includegraphics[width=0.3\textwidth]{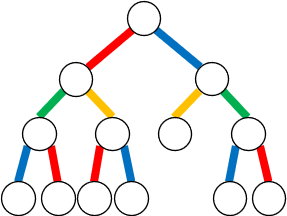}
    \caption{\small{An example for Definition~\ref{definition_layered_tree} of a \emph{layered tree}, for $\Delta=3$, and $\pal = 4$ with sub-palettes $P=\{\text{\color{blue} blue}, \text{ \color{red} red}\}$, and $\overline{P} = \{\text{\color{Dandelion} yellow}, \text{ \color{OliveGreen} green}\}$. In \emph{layered} coloring, the palette used by a vertex on edges towards its children uniquely determine the palette used by its children towards their children (the complement) and vice versa (unless it is a leaf).}}
    \label{figure_layered_tree}
\end{figure*}

We first show that when \greedy{} merges two complementing layered trees, it recolors a path to a leaf $v$, and that the resulting tree is itself a larger layered tree with a different root if we delete the edge of $v$.

\begin{lemma}
\label{lemma_must_recolor_path}
Let $T_1$ be an $P$-layered tree and let $T_2$ be a $\overline{P}$-layered tree. If we insert a new edge between their roots, then a proper coloring must recolor at least a path to some leaf. Recoloring only a path is sufficient.
\end{lemma}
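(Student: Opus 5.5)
The plan has two halves: a lower bound, proved by following the forced recolorings down from the roots, and an upper bound, proved by exhibiting a path recoloring. Let $r_1$ be the root of $T_1$ and $r_2$ the root of $T_2$, and insert $e=(r_1,r_2)$.

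\textbf{Lower bound.} At $r_1$ (even depth, non-leaf) the child edges use all of $P$, so $A(r_1)=\overline{P}$; symmetrically $A(r_2)=P$. Hence $A(e)=\emptyset$, and some edge incident to $r_1$ or $r_2$ must be recolored. I would prove by induction a structural invariant: in any proper coloring that extends to $e$, there is a sequence of edges $e_0=e, e_1, e_2,\dots$ forming a downward path, in one of the two trees, whose edges all change color (with $e_0$ newly colored), and which ends at a leaf.

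The inductive step is as follows. Suppose the new coloring differs from the old on a downward path ending at edge $e_i=(x,y)$, with $y$ the lower endpoint and $y$ not a leaf. Say $e_i$ originally had color $\gamma$, drawn from the palette $Q\in\{P,\overline{P}\}$ of $x$'s layer. By fullness, $y$ has $|\overline{Q}|$ children, whose edges use exactly $\overline{Q}$. Together with $e_i$, the edges at $y$ therefore use $\{\gamma\}\cup\overline{Q}$. The edges at $x$ other than $e_i$ also constrain $e_i$, and here I would argue that they keep their old colors.

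A cleaner route is a counting argument. Consider the set $R$ of recolored edges together with $e$. Take the vertex of $R$ farthest from the roots whose child edges are not recolored. Its incident edges then use $\deg=\pal$ colors if it is saturated, that is, if it has degree $\pal$ counting the parent edge: $|Q|+|\overline{Q}|=\pal$ at every internal vertex. An internal vertex has all $\pal$ colors present. So if its parent edge changes color while its child edges keep theirs, the new color must collide with a child edge, a contradiction. Consequently, at every internal vertex incident to a changed edge from above, some child edge must also change. Starting from $e$ at $r_1$ or $r_2$ (both internal), this yields a chain of changed edges descending until a leaf is reached. This saturation argument is the key step; the only care needed is that $r_1$ and $r_2$ themselves are saturated after adding $e$, which holds since their degree becomes $|P|+1\le\pal$ with every color in use.

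\textbf{Sufficiency.} I would give an explicit path recoloring. Choose any root-to-leaf path $r_1=x_0,x_1,\dots,x_k$ in $T_1$, and color $e$ with $\colorf(x_0,x_1)\in P$. This is valid at $r_2$ because $A(r_2)=P$. Then give $(x_0,x_1)$ a new color: the only free color at $x_1$ is the vacated one from its layer, and we need one free at $x_0$ as well. Instead, shift colors upward along the path: each edge $(x_{j-1},x_j)$ takes the color of $(x_j,x_{j+1})$. This keeps each vertex's color set intact except at the leaf end. The final edge $(x_{k-1},x_k)$ then needs a color missing at $x_{k-1}$, and $x_k$ is a leaf. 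After the shift, $x_{k-1}$ lacks exactly the color that $(x_{k-2},x_{k-1})$ originally had, and that color is free at the leaf $x_k$, so it can be used. Parity of the layers (alternating $P$ and $\overline{P}$) ensures the colors shifted into each vertex are not already present there, since each shifted color comes from the complementary layer.

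The main obstacle is verifying this shift carefully. At vertex $x_j$ the old incident colors are the parent color plus its children's palette; after the shift, the parent edge takes a child-palette color $c$, and the child edge that previously had $c$ takes a grandchild-layer color. I would check this by an explicit local computation and present it as a small claim.
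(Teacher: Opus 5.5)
Your sufficiency half is correct and matches the paper: shifting colors up a root-to-leaf path of $T_1$ keeps every vertex proper. (Minor slips: at the end $x_{k-1}$ misses all of $\overline{Q_{k-1}}$, not exactly one color; for $k=1$ there is no edge $(x_{k-2},x_{k-1})$. Any color of $\overline{Q_{k-1}}$ works.)

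The necessity half, however, rests on a false saturation claim.
\begin{itemize}
\item A non-root internal vertex whose child edges use palette $Q$ has degree $1+|Q|$, not $|Q|+|\overline{Q}|=\pal$. It has a single parent edge, not $|\overline{Q}|$ edges from the complementary palette. So it misses $|\overline{Q}|-1$ colors, which is positive unless $|\overline{Q}|=1$.
\item Likewise, after inserting $e$ the roots have degrees $|P|+1$ and $|\overline{P}|+1$, which cannot both equal $\pal$.
\item Hence a changed parent edge need not force a changed child edge: it can move to another color of $\overline{Q}$.
\end{itemize}
Concretely, take $\Delta=3$, $\pal=4$, $P=\{1,2\}$, $\overline{P}=\{3,4\}$, and depth-$2$ trees. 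Let $a,b$ be the children of $r_1$ with edge colors $1,2$. Color $e$ by $1$, recolor $(r_1,a)$ to $2$ and $(r_1,b)$ to $3$, and recolor the leaf edge of $b$ that had color $3$ to $1$. This is a valid extension, yet $a$ has a changed parent edge and no changed child edge. So your chain of merely \emph{changed} edges can dead-end, and the ``deepest vertex whose child edges are not recolored'' yields no contradiction.

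The missing idea is to track which palette the new color lies in, not just that the edge changed. Call a non-leaf vertex $v$ with child palette $Q$ \emph{deficient} if its parent edge now carries a color of $Q$; here $e$ serves as the parent edge of both roots.
\begin{itemize}
\item The $|Q|$ child edges of a deficient $v$ have only $|Q|-1$ colors of $Q$ left. By pigeonhole, some child edge $(v,w)$ now has a color of $\overline{Q}$.
\item That edge originally had a color of $Q$, so it was recolored, and $w$ is a leaf or deficient.
\item Since the color of $e$ lies in $P$ or in $\overline{P}$, one of $r_1,r_2$ is deficient, and the chain of recolored edges descends to a leaf.
\end{itemize}
The paper runs the same pigeonhole bottom-up. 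If no root-to-leaf path is fully recolored, then inductively from the leaves, $|Q|$ sibling edges confined to $Q$ force their parent's edge into $\overline{Q}$. So every vertex's child edges still use exactly its layer palette, and $e$ has no available color.
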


\begin{proof}
First, recoloring a path to a leaf is sufficient. Indeed, color the newly inserted edge by a free color of the root of $T_2$, and uncolor the conflicting edge. As long as there is an uncolored edge, pick its color based on the vertex that belongs to the connected component of colored edges on the side of $T_2$ (which is growing) and uncolor a conflicting edge if such exists. This will push the uncolored edge towards a leaf of $T_1$, where the process will end with a proper coloring.

On the other hand, a path is necessary. Assume by contradiction that we can recolor a subset of edges that do not reach a leaf. Then we can recover the palette-coloring of the tree bottom-up, because given $|P|$ siblings that are all using the palette $P$ towards their parent, the shared parent must use a color from $\overline{P}$ towards its parent, and similarly $|\overline{P}|$ siblings that use $\overline{P}$ towards a shared parent $p$ enforce $p$ to use a color from $P$ towards its parent. For example, revisit Figure~\ref{figure_layered_tree} and consider what the four left most leaves imply for their parents.
Then the color reconstruction gives us a layered coloring of $T_1$ and $T_2$, that results in a contradiction since we cannot color the newly inserted edge.
\end{proof}

The condition of layered-trees in Lemma~\ref{lemma_must_recolor_path} is necessary to force recoloring to reach a leaf. See Figure~\ref{figure_non_layered_recolor} for an example where we can stop earlier than a leaf.

\begin{figure*}[t]
	\centering
    \includegraphics[width=0.7\textwidth]{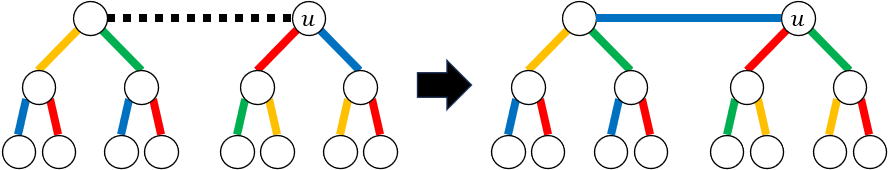}
	\caption{\small{When linking two layered trees of complementing palettes, recoloring must reach a leaf. In this figure, the tree on the right is \emph{not} layered, thus recoloring can be done without reaching a leaf.}}
	\label{figure_non_layered_recolor}
\end{figure*}

\begin{lemma}
\label{lemma_layered_backwards}
Let $T_1$ be a $P$-layered tree and let $T_2$ be a $\overline{P}$-layered tree. Assume that we insert an edge between their roots, and as a result \greedy{} recolored a path towards a leaf $u$, whose parent is $p$. If we delete the edge $(u,p)$, then the new (combined) tree, rooted at $p$, is either $P$-layered or $\overline{P}$-layered depending on the parity of distance to the edge connecting the two original roots.
\end{lemma}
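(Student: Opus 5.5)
We will use the structure from Lemma~\ref{lemma_must_recolor_path} and then verify the layering directly. The plan has three steps.

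\textbf{Step 1: the recolored set is a path.} Let $e=(r_1,r_2)$ be the inserted edge, where $r_1$ is the root of $T_1$ and $r_2$ is the root of $T_2$. By Lemma~\ref{lemma_must_recolor_path}, every proper extension recolors at least a path to a leaf, and recoloring such a path suffices. Since \greedy{} recolors a minimum set, it recolors exactly one path. This path starts at the inserted edge and goes down into one of the trees to the leaf $u$. By the symmetry $P\leftrightarrow\overline{P}$, assume it goes into $T_1$. Write the path as $r_1=x_0,x_1,\dots,x_k=u$, so that $p=x_{k-1}$, and let $b_i$ denote the \emph{old} color of $(x_{i-1},x_i)$.

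\textbf{Step 2 (main obstacle): the recoloring is forced to be an upward shift.} We claim that $e$ receives $b_1$ and that each $(x_{i-1},x_i)$ with $i<k$ receives $b_{i+1}$. The difficulty is that greedy minimality alone does not obviously determine the new colors; we use the fact that all off-path edges keep their colors and argue by induction from the top.
\begin{itemize}
\item[(i)] At $r_2$, all $|\overline{P}|$ child edges are untouched and use $\overline{P}$, so the new color of $e$ lies in $P$. At $r_1$, the untouched child edges use $P\setminus\{b_1\}$. Hence the new color of $e$ is exactly $b_1$.
\item[(ii)] At $x_1$ (odd depth), the new color of $(x_0,x_1)$ must avoid $b_1$, now used by $e$, and also $P\setminus\{b_1\}$. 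So it lies in $\overline{P}$.
\item[(iii)] The untouched child edges of $x_1$ use $\overline{P}\setminus\{b_2\}$, so the new color of $(x_0,x_1)$ is forced to be $b_2$.
\item[(iv)] The same argument repeats with $P$ and $\overline{P}$ alternating down the path. At each internal $x_i$, the incoming new color is excluded from one palette by the parent side and confined to the single missing color $b_{i+1}$ of the child palette.
\end{itemize}
The last edge $(p,u)$ receives some color, but it is deleted anyway, so its color does not matter.

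\textbf{Step 3: checking the rerooted tree.} After deleting $(p,u)$, root the combined tree at $p$. For $1\le i\le k-1$, $x_{i+1}$ becomes the parent of $x_i$. The children of $x_i$ are now its old off-path children together with $x_{i-1}$. By Step~2, the edge to $x_{i-1}$ carries $b_{i+1}$, the color freed by the old child edge $(x_i,x_{i+1})$. So $x_i$ uses exactly its old child palette, with the same number of children. In particular $p=x_{k-1}$ keeps $|{\rm palette}|$ children, because $u$ is replaced by $x_{k-2}$ (or by $r_2$ when $k=1$). The vertex $r_1=x_0$ gets the new child $r_2$ via $e$, colored $b_1$, so its child palette is still $P$. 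Off-path subtrees, including all of $T_2$ below $r_2$, are unchanged and therefore layered.

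It remains to check that the palettes alternate with the new depths. Along the path the old child palettes alternate $P,\overline{P},\dots$, and the new depths along the path also alternate. The subtree at $r_2$ has child palette $\overline{P}$, which is the complement of $r_1$'s palette $P$. Every off-path child of a path vertex has the complementary palette of its parent, as before. Therefore the tree is $P$-layered or $\overline{P}$-layered according to the child palette of $p$, and that palette is determined by the parity of the distance from $p$ to the inserted edge. Fullness holds because every non-leaf keeps its required number of children, and the only vertex that lost its last edge, $u$, is removed from the tree.
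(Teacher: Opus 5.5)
Your proof is correct and follows essentially the paper's approach: treat the greedy recoloring as an upward shift along the path and track how the child palettes $P,\overline{P}$ alternate. The paper phrases this as an inductive invariant (the uncolored edge always joins roots of complementary layered trees), whereas you check the rerooted tree directly, and your Step~2, which shows the new colors are forced to be exactly the upward shift, makes explicit a point the paper leaves implicit. There is one harmless index slip: in Step~3 the parent claim should be for $1\le i\le k-2$, since $x_{k-1}=p$ becomes the root.
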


\begin{proof}
\looseness=-1
Without loss of generality, assume that $u$ is a leaf of $T_1$, and denote the nodes on this path as $v_0 = root(T_2), v_1 = root(T_1),v_2\ldots,v_k = u$. See Figure~\ref{figure_merging_layered_tree} for a visualization of the proof. The argument is inductive, we  consider the recoloring of the path as single steps of coloring the uncolored edge and uncoloring its successor. Note that the last step just colors the edge $(u,p)$ to get a proper coloring, so if we end the argument just before coloring $(u,p)$, we indeed argue about the combined tree without the edge $(u,p)$ (as if right after $(u,p)$ is deleted).

To prove the claim notice the invariant that the vertices of the uncolored edge are always roots of complementing-palettes layered trees. This is true initially, for $v_0$ ($\overline{P}$) and $v_1$ ($P$). By definition of the layering, every child of $v_1$ is a $\overline{P}$-layered tree, just like $v_0$. Therefore when we shift a color, from any edge of $v_1$ to the uncolored edge $(v_0,v_1)$, the tree rooted at $v_1$ remains $P$-layered, and the subtree we cut away from it, now rooted at $v_2$, is $\overline{P}$-layered. We can rename the vertices and switch the roles of $P$ and $\overline{P}$, and continue by induction until we reach the end of the path, where eventually $p$ is the root of a $P$-layered or a $\overline{P}$-layered tree that combines all the vertices of $T_1$ and $T_2$ except for $u$.
\end{proof}

\begin{figure*}[t]
	\centering
    \includegraphics[width=0.5\textwidth]{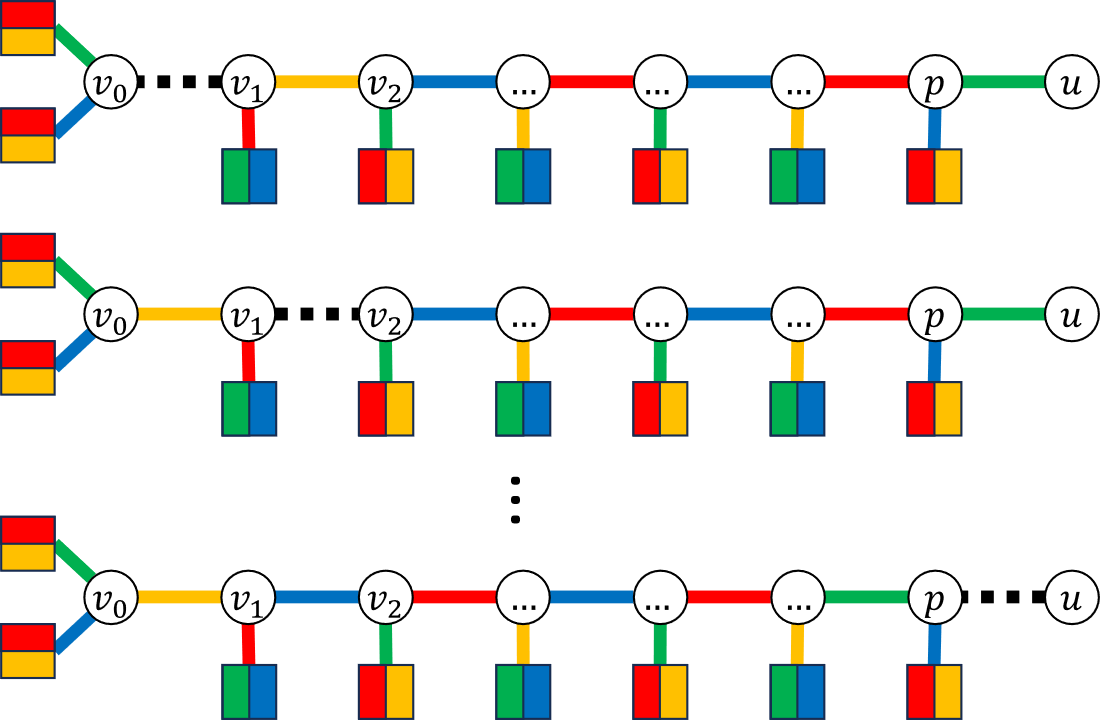}
	\caption{\small{Visualization for Lemma~\ref{lemma_layered_backwards}, for $\Delta=3$, and $\pal = 4$ with sub-palettes $P=\{\text{\color{red} red}, \text{\color{Dandelion} yellow}\}$, and $\overline{P} = \{\text{\color{blue} blue}, \text{ \color{OliveGreen} green}\}$. A square colored by palette $Q \in \{ P,\overline{P}\}$ is an abstract representation of a $Q$-layered subtree. We begin by merging two layered trees rooted in $v_0$ and $v_1$ (top). In the end of recoloring the path from $v_1$ to $u$ (bottom), ignoring the edge $(p,u)$, the combined tree rooted at $p$ is $Q$-layered, where $Q \in \{ P,\overline{P}\}$ depends on the parity of the recourse.}}
	\label{figure_merging_layered_tree}
\end{figure*}

Lemma~\ref{lemma_construct_initial_layered_tree} below constructs layered trees out of the empty graph.

\begin{lemma}
\label{lemma_construct_initial_layered_tree}
Let $\Delta \geq 3$ and $0 \le c \le \Delta-2$. Let $ALG$ be a deterministic algorithm that does not recolor edges unless it is necessary. There exists $n_1 = n_1(\Delta,\extra)$ such that for every 
$n \ge n_1(\Delta,\extra)$ the adversary can construct multiple $P$-layered and $\overline{P}$-layered trees, for $|P| = \extra+1$ (then $|\overline{P}| = \Delta-1$), with  $\Omega(n/\Delta)$ vertices. ($n_1$ is studied in details in Section~\ref{section_dependence_of_n}.)
\end{lemma}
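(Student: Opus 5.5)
The plan is to build layered trees bottom-up by induction on depth. The adversary first selects the sub-palette $P$, and then keeps selecting subtrees whose colorings already have the shape it wants. Because $ALG$ is deterministic and recolors only when forced, each insertion that joins a fresh root to a center receives a color the adversary cannot control, but at least the existing edges keep their colors. This suggests a pigeonhole and selection strategy similar to the proof of Theorem~\ref{theorem_amortized_DplusC_LB}.

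\textbf{Base case (stars).} Build many disjoint stars of degree $\Delta-1$ and also of degree $\extra+1$, by inserting leaf edges at a center. Edges between a center and fresh leaves never force recourse, since the center always has a free color. The color set on a star's edges is some sub-palette of size $\Delta-1$ (respectively $\extra+1$), and there are at most $\binom{\pal}{\Delta-1}$ (respectively $\binom{\pal}{\extra+1}$) possibilities. By pigeonhole, a $1/\binom{\pal}{\extra+1}$ fraction of the stars share one palette; call it $P$ if $|P|=\extra+1$. The stars of degree $\Delta-1$ that use exactly $\overline{P}$ are also needed. I would handle this by making the degree-$(\Delta-1)$ stars and keeping those with palette $\overline{P}$, or by fixing $P$ from the degree-$(\Delta-1)$ pigeonhole first. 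A constant fraction (depending on $\Delta,\extra$) survives, and the rest of the stars are deleted or kept as unused debris.

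\textbf{Inductive step.} Suppose we hold many $Q$-layered trees of depth $d$ with $Q\in\{P,\overline{P}\}$, and also many $\overline{Q}$-layered trees of depth $d$. To build a depth-$(d+1)$ tree, take a fresh vertex $x$ and attach it to the roots of $|\cdot|$ trees of the appropriate type, one at a time. The root of a $\overline{Q}$-layered tree uses the colors of $\overline{Q}$ on its child edges. So if the depth-$(d+1)$ tree is to be $Q$-layered with $x$ as root, then the root edges must be colored from $Q$, and $Q$ is exactly the set of free colors at a child root (the other $|Q|$ colors). As long as $x$ still has free colors in $Q$, the intersection of $A(x)$ and $A(\text{child root})$ is nonempty. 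Hence $ALG$ performs no recoloring and must pick a color from $Q$, because $\overline{Q}$ is blocked at the child. After $|Q|$ attachments, the edges at $x$ are exactly $Q$, which gives a $Q$-layered tree of depth $d+1$. The degree of $x$ is $|Q|\le \Delta-1$, so the degree bound holds. No recourse occurs anywhere, so no pigeonhole is needed beyond the base case.

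\textbf{Size accounting.} Each level multiplies the size by a factor between $\extra+1$ and $\Delta-1$. With $n$ vertices we continue until the trees have $\Omega(n/\Delta)$ vertices. To keep enough trees of both parities available at every level, the budget should be split geometrically. One way is to produce, at each level, two trees of each type from the previous level's stock. The base-case pigeonhole loses only a factor of $\binom{\pal}{\extra+1}$, which is why $n_1$ depends on $(\Delta,\extra)$.

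\textbf{Main obstacle.} I expect the main difficulty to be the base case: guaranteeing that both a $P$-star family and a $\overline{P}$-star family exist for the same complementary pair. The plan is to form $\overline{P}$-stars directly. Take a degree-$(\Delta-1)$ star whose palette is $\overline{P}$, found by pigeonhole. For the $P$-stars, build each degree-$(\extra+1)$ center by attaching it to roots of $\overline{P}$-stars as in the inductive step. This forces the colors of those edges into $P$ without any choice, and it removes the need to match palettes twice.
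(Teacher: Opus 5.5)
Your forcing step is correct and is the same mechanism the paper uses. The root of a $\overline{Q}$-layered tree has free colors exactly $Q$, so a lazy $ALG$ attaching it to a fresh vertex must color the new edge from $Q$ without recoloring. Pigeonholing on $(\Delta-1)$-stars to fix $\overline{P}$ is also a valid mirror image of the paper's bootstrap on $(\extra+1)$-stars.

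The gap is in the size accounting. In your construction the pigeonhole is the \emph{only} source of correctly colored bottom-level stars, and every tree you build has at most a constant factor more vertices than the pigeonholed stars beneath it. The pigeonhole guarantees only a $1/\binom{\Delta+\extra}{\extra+1}$ fraction of stars sharing a palette. A deterministic lazy $ALG$ may balance palettes when it colors leaf edges, so that this is all you get; deleting and re-creating the discarded stars does not help against such an $ALG$. So your trees have $O(n/\binom{\Delta+\extra}{\extra+1})$ vertices. That is $O(n/\Delta^2)$ already for $\extra=1$ and a $2^{-\Theta(\Delta)}$ fraction of $n$ for $\extra=\Delta-2$, not $\Omega(n/\Delta)$. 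Your remark that this loss is ``why $n_1$ depends on $(\Delta,\extra)$'' conflates a multiplicative loss with an additive one. A threshold $n_1$ can absorb a bounded number of wasted vertices, never a fixed fraction of all $n$.

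The missing idea is \emph{replication}, which relies on deletions. Pigeonhole only a bounded seed of $|P|\cdot|\overline{P}|$ $P$-stars; this seed is what $n_1$ pays for. Then proceed as follows.
\begin{enumerate}
\item Attach the seed stars in groups of $|\overline{P}|$ to fresh vertices $w_1,\dots,w_{|P|}$; these edges are forced into $\overline{P}$.
\item Attach all $w_j$ to a fresh vertex $v$; these edges are forced into $P$.
\item Delete the edges between the $w_j$'s and the seed stars.
\end{enumerate}
Since $ALG$ never recolors, the seed returns intact and $v$ is the center of a new $P$-star on $|P|+1$ fresh vertices. Attaching $|\overline{P}|$ such stars to a fresh vertex and deleting their leaf edges yields $\overline{P}$-stars in the same way. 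Iterating places a constant fraction of the remaining $n-O(n_1)$ vertices into correctly colored stars. Only the final rounding to perfect trees costs the $1/(\Delta-1)$ factor behind $\Omega(n/\Delta)$.

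A related smaller point: your ``$P$-stars'' built by attaching a center to $\overline{P}$-star roots are really depth-$2$ trees. You need to delete the $\overline{P}$ leaf edges to obtain genuine $P$-stars. Otherwise your chain yields $P$-layered trees only at even depths and $\overline{P}$-layered trees only at odd depths, never both at the common depth used in the proof of Theorem~\ref{theorem_greedy_fails_LB}.
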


\begin{proof}
We play the role of the adversary. Assume that we have $|P| \cdot |\overline{P}|$ $P$-stars, then we can use them to \emph{replicate} additional $P$-stars as follows: Connect sets of $|\overline{P}|$ stars to a new vertex. No recoloring is necessary, so by assumption none is made, and we get $|P|$ $\overline{P}$-layered trees of depth $2$. Then connect all of these trees to another new vertex $v$, to get a $P$-layered tree of depth $3$. Remove the edges between the original stars and their parents, and we end up with the original $P$-stars, plus a $P$-star centered at $v$. We can also replicate a $\overline{P}$-star out of the existing $P$-stars, by replicating $|\overline{P}|$ ``disposable'' $P$-stars, forming a $\overline{P}$-layered tree of depth $2$, and then deleting its leaf-edges.

Observe that for a sufficiently large $n$, we can bootstrap the replication process by forming sufficiently many stars such that by the pigeonhole principle we get enough stars with some same sub-palette. Once we have $|P| \cdot |\overline{P}|$ $P$-stars we can replicate more $P$-stars and then construct, bottom-up, large \emph{balanced} $P$-layered and $\overline{P}$-layered trees using an $\Omega(\frac{1}{\Delta})$ fraction of them: To increase the depth of a tree by $1$, we connect $|\overline{X}|$ $X$-layered trees to a single shared parent for $X \in \{P,\overline{P} \}$, to form a $\overline{X}$-layered tree. Increasing the depth requires additional vertices, but naively we can dedicate an extra vertex to each $P$-star, and only lose a constant factor because a $P$-star has $|P|+1 = \extra+2$ vertices. We can utilize at least a $\frac{1}{\Delta-1}$ fraction of the stars to construct the desired trees (otherwise, we can increase the depth of each tree by $1$). Accordingly, the construction of our trees utilizes $\Omega(n/\Delta)$ vertices.
\end{proof}

We are ready to prove Theorem~\ref{theorem_greedy_fails_LB}. We restate it for convenience.

\theoremGreedyRecourse*

\begin{proof}
Let the initial state be as follows: Two perfect layered trees $T^*_i$ of depth $d = \Theta(\log_\Delta n)$ (every leaf has depth $d$) with roots $r_i$, for $i \in [2]$. $T^*_1$ is $P$-layered, while $T^*_2$ is $\overline{P}$-layered, for $|P|=\extra+1$ and $|\overline{P}| = \Delta-1$. Assuming that the algorithm's tie-breaking is deterministic (per the setting of this section), we can construct this initial state by Lemma~\ref{lemma_construct_initial_layered_tree}.

Now the cycle of updates begins. The initial two updates delete one edge from each tree. Denote the edges $(u_1,r_4)$ in $T^*_1$ and $(u_2,r_3)$ in $T^*_2$, where in both edges $u$ is the parent of $r$. The distance between $r_i$ and $u_i$ for $i \in [2]$ is $d_i = \floor{\frac{d}{3}} - (i-1)$, this exact choice will be clarified later. Now the forest contains $4$ trees $T_i$ each rooted at $r_i$ for $i \in [4]$.

The next update inserts $(r_1,r_2)$. By Lemma~\ref{lemma_must_recolor_path} \greedy{} recolors a path. Since $d_2 < d_1$, the smallest recourse would recolor over the path from $r_2$ to $u_2$. Then, because $u_2$ has one less child, the algorithm can finish the recoloring, with recourse $d_2$. To see clearly why recoloring can stop at $u_2$, imagine that $u_2$ had another leaf child $u'$ (instead of the former child $r_3$), for $T_2$ to be a proper layered tree. In this case, the recolored path to a leaf would be to $u'$ and include $(u_2,u')$, since any other leaf of $T_2$ is deeper. The color of $(u_2,u')$ would have been exactly the color of the edge $(u_2,r_3)$ that was deleted, and this color is used to color $(\parent(u_2),u_2)$ when the recoloring is applied. Then by Lemma~\ref{lemma_layered_backwards} the new combined tree of $T_1 \cup T_2$ can be thought of as a tree $T'_2$ rooted at $u_2$, such that $T'_2$ is $\overline{Q}$-layered compared to $T_3$ which is $Q$-layered, where $Q \in \{ P , \overline{P} \}$ depends on the parity of $d_i$ (in Figure~\ref{figure_adversarial_cycle_of_requests}, $Q=P$).

Next, insert the edge $(u_2,r_3)$, which again connects two layered trees with complementing palettes. Then again \greedy{} will recolor a path. This time, we want the path to go back through the edge $(r_1,r_2)$ and towards $u_1$. This is satisfies if $d_1+d_2 < (d-d_2)$ (otherwise the shortest recourse goes towards a leaf of $T_3$), which holds by how we defined $d_1$ and $d_2$. By Lemma~\ref{lemma_layered_backwards}, $T_1 \cup T_2 \cup T_3$ is now a $Q'$-layered tree rooted at $u_1$ where $Q' \in \{ P , \overline{P} \}$, and by a similar argument as before, we have that $T_4$ is $\overline{Q'}$-layered (in Figure~\ref{figure_adversarial_cycle_of_requests}, $Q'=P$).

The next insertion would be of $(u_1,r_4)$, but first it is important to delete $(r_1,r_2)$. Note that this deletion recovers $T^*_2$ of the initial state. Now we insert $(u_1,r_4)$, which has the same effect as with the previous insertion: the edge connects $T_1$ that is $Q'$-layered when viewed as rooted at $u_1$, and $T_4$ that is $\overline{Q'}$-layered. To ensure that smallest recourse would be to recolor on a path from $u_1$ to $r_1$, because any other leaf in $T_1 \cup T_4$ is farther from $u_1$, we ensured that $d_1 < (d-d_1)$. (Recall that $r_1$ is not a leaf, but has one less child than its siblings, which is why the recoloring can stop at $r_1$.) By similar arguments as before the recoloring stops such that $T_1 \cup T_4$ becomes a $P$-layered tree when rooted at $r_1$, which is exactly $T^*_1$ of the initial state.

In total, the recourse paid during the cycle was $d_2 + (d_2 + 1 + d_1) + d_1$ subject to the constraints that $d_2 < d_1$, $d_1+1+d_2 < (d-d_2)$ and $d_1 < (d-d_1)$. These constraints are why we chose $d_1 = \floor{\frac{d}{3}}$ and $d_2 = d_1 - 1$. Overall, we get an amortized recourse of $\Omega(d)$. Given $n$ vertices, we can have $d = \Omega(\log_\Delta n)$. To clarify $d$, note that the expansion rate of the layered trees is $(\Delta-1) \cdot (\extra+1)$ per two layers, which is at least $\Delta - 1$ and at most $(\Delta - 1)^2$, per two layers. The adversary can force \greedy{} to reach this state from the empty graph by Lemma~\ref{lemma_construct_initial_layered_tree}, subject to $n \ge n_1(\Delta,\extra)$.
\end{proof}

\begin{remark}[Laziness]
\label{remark_greedy_worst_case}
The proof of Theorem~\ref{theorem_greedy_fails_LB} also implies that any \emph{lazy} algorithm that does not incur recourse when it does not have to, has a worst-case recourse of $\Omega(\log_\Delta n)$ because it allows the adversary to construct the bad configuration. Note that unlike \greedy{}, an arbitrary lazy algorithm might not be forced to maintain this state once it must pay some recourse, therefore the lower bound applies only in worst-case and not amortized. A non-lazy algorithm can theoretically amortize its recourse by distributing it gradually over multiple updates to achieve an overall reduced worst-case recourse.
\end{remark}

\fi

\iffullversion
\subsubsection{\texorpdfstring{$O(1)$}{O(1)} Recourse with \texorpdfstring{$2\Delta-2$}{2Delta-2} Colors on \emph{Rooted} Forests}
\label{subsection_rooted_TwoDelta}

In this section we prove Theorem~\ref{theorem_rooted_trees_recourse_constant_UB}. Note that in this context, we only consider $\extra = \Delta-2$, and only consider a \emph{rooted} dynamic forest (Definition~\ref{definition_rooted_forest}). In this setting we show that \colorfulpath{} (Algorithm~\ref{algorithm_rooted_forest}) has amortized recourse of $O(1)$, which is independent of $n$, ``breaking'' the $\Omega(\log_{\Delta} n)$ lower bounds of \cite{sadeh2026vizingchainsimprovedrecourse} thus showing the importance of amortization,\footnote{Theorem~2.9 in \cite{sadeh2026vizingchainsimprovedrecourse}, when we substitute arboricity $\alpha=1$. This is lower bound that applies for any $\extra \in [0,\Delta-2]$, but it applies in worst-case, for a given graph with a given edge coloring.} and the lower bound of Theorem~\ref{theorem_trees_delta_colors_LB} thus showing the importance of extra colors.

Intuitively, extra colors break ``long-distance correlations''.\footnote{For example, think of a $2$ edge colored path where the colors of every pair of edges are correlated, but with $3$ colors we cannot say as much on the correlation of the colors of non-adjacent edges.} This also shows that an averaged-analysis helps, whether as expected recourse or amortization (or both).

Intuitively speaking, \colorfulpath{} shifts colors over a path originating at the root of one of the trees that the new edge links, downwards until it can stop. The path is made ``as farthest as possible'' from being bicolored in the following sense: In a bicolored path an edge and its grandparent-edge have the same color, and the algorithm explicitly refrain from this. This can be leveraged to show a low amortized recourse.

\begin{definition}[Step]
A \emph{step} is a single coloring of an uncolored edge, possibly uncoloring one of its neighbours. For example, the function \emph{RecursiveStep} in Algorithm~\ref{algorithm_rooted_forest} implements a step. We say that a step \emph{stops} on an edge if it does not recursively call another step, otherwise it \emph{moves} to the next edge we call it on. A step on edge $(u,v)$ exits from $v$ if it moves to another edge $(v,x)$ ($x \ne u$). Then we also say that it enters from $u$, even if $(u,v)$ was just inserted.
\end{definition}

A step either colors a new edge or recolors an edge (that was previously colored). Recall that we defined the recourse as the number of edges that we \emph{recolor}. So in this section we bound the number of steps that recolor edges. The number of steps that color new edges is at most $T$ where $T$ is the number of updates.

\begin{observation}
\label{observation_topology_and_color_independent}
Assume $2\Delta-2$ colors. Consider a coloring step on $(u,v)$ that enters from $u$. Denote the set of colors used in $v$ by $\overline{A}(v)$. Then the step does not stop if and only if $\deg(u)=\deg(v)=\Delta$ and $A(u) = \overline{A}(v)$.

A neat implication of having $\Delta+\extra = 2\Delta-2$ colors is that the whole palette is divided to two disjoint sets of size $\Delta-1$, one for the even and one for the odd numbered steps.
\end{observation}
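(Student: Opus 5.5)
The plan is to reduce ``the step does not stop'' to the emptiness of $A(u)\cap A(v)$ and then settle that emptiness by counting colors. By the first line of \funcRecursiveStep{}, a step on $(u,v)$ stops exactly when $A(u)\cap A(v)\neq\emptyset$. In both branches where this intersection is empty, it moves on to some edge $(v,w)$. So the claim is equivalent to the following: $A(u)\cap A(v)=\emptyset$ if and only if $\deg(u)=\deg(v)=\Delta$ and $A(u)=\overline{A}(v)$. Here degrees count the currently uncolored edge $(u,v)$, and $\overline{A}(v)=[\pal]\setminus A(v)$ is the set of colors on the colored edges at $v$.

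\textbf{Counting.} At $u$ there are $\deg(u)-1$ colored edges, so
\[
|A(u)| = (2\Delta-2)-(\deg(u)-1) = 2\Delta-1-\deg(u) \ge \Delta-1 .
\]
At $v$ we have $|\overline{A}(v)| = \deg(v)-1 \le \Delta-1$.

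\textbf{Forward direction.} Suppose $A(u)\cap A(v)=\emptyset$. Every color free at $u$ is then used at $v$, so $A(u)\subseteq \overline{A}(v)$. Combining this inclusion with the two bounds gives
\[
\Delta-1 \le 2\Delta-1-\deg(u) = |A(u)| \le |\overline{A}(v)| = \deg(v)-1 \le \Delta-1 .
\]
All these inequalities must therefore be equalities. Hence $\deg(u)=\Delta$ and $\deg(v)=\Delta$. Moreover $|A(u)|=|\overline{A}(v)|$, which together with the inclusion gives $A(u)=\overline{A}(v)$.

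\textbf{Converse.} If $A(u)=\overline{A}(v)$, then every color in $A(u)$ is used at $v$, so $A(u)\cap A(v)=\emptyset$. This direction needs no degree hypothesis.

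\textbf{The palette split.} When the step does not stop, the palette splits into two halves of size $\Delta-1$: $\overline{A}(u)$, the colors on $u$'s other edges, and $A(u)=\overline{A}(v)$, the colors on $v$'s other edges. The step then gives $(u,v)$ a color $\alpha\in\overline{A}(v)$ and uncolors the edge $(v,w)$ that held $\alpha$. The step on $(v,w)$ enters from $v$. At $v$ the colored edges now carry exactly $\overline{A}(v)$ again, with $\alpha$ moved from $(v,w)$ to $(u,v)$, so $A(v)=\overline{A}(u)$ is the other half.

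If this step also fails to stop, applying the equivalence to $(v,w)$ gives $\overline{A}(w)=A(v)$. So the colors on $w$'s other edges form the complementary half, and by induction the non-stopping steps alternate between the two halves.

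The only care needed is the bookkeeping of which edges count toward $\deg$ versus $\overline{A}$ while an edge is uncolored. I expect no real obstacle beyond that.
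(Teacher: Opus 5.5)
Your proof is correct, and it is the counting argument the paper leaves implicit: the paper states this as an observation without proof, and its overview justifies the key consequence only by noting that there are $2\Delta-2$ colors and no free color. With $2\Delta-2$ colors, $|A(u)|\ge\Delta-1\ge|\overline{A}(v)|$, so $A(u)\cap A(v)=\emptyset$ forces $A(u)\subseteq\overline{A}(v)$ to be an equality with $\deg(u)=\deg(v)=\Delta$. Your inductive alternation argument for the palette split is also sound; just note that the ``other half'' $\overline{A}(u)$ means the colors at $u$ before the step, since afterwards $u$ also uses $\alpha$.
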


We restate Theorem~\ref{theorem_rooted_trees_recourse_constant_UB} for convenience, and then prove it.
\theoremRootedTreesDynamicTwoDeltaDeterministicConstant*

\begin{proof}
In simple words, Algorithm~\ref{algorithm_rooted_forest} recolors in steps that lead on a path towards (eventually) a leaf, so it must stop successfully. Each step can be out of three types:
\begin{enumerate}
    \item \label{type1} If a step on edge $e$ can choose an available color for it, do it and stop (first \emph{if} in Algorithm~\ref{algorithm_rooted_forest}).
    \item \label{type2} Otherwise, try to step towards a neighbour $e'$ that has an available color (second \emph{if} in Algorithm~\ref{algorithm_rooted_forest}). By Observation~\ref{observation_topology_and_color_independent}, if $e$ has no available color, we are guaranteed to be able to step to such $e'$ if there is one (it is not guaranteed if $\extra < \Delta-2$). \item \label{type3} Otherwise (and finally), if no such $e'$ exists, just step to $e''$ such that the color we pick for $e$ does not match the color we picked $2$ steps earlier, to ensure that we are not shifting colors over a bicolored path (\emph{else} in Algorithm~\ref{algorithm_rooted_forest}). We can always pick a color for $e$ that is different than $2$ steps before, because we have $\extra+1 \ge 2$ options and at most one color is not allowed. We proceed to analyze the amortized recourse.
\end{enumerate}

Let $v_0,v_1,v_2,v_3,v_4$ such that $v_i = \parent(v_{i+1})$ and denote $e_i = (v_i,v_{i+1})$ for $i \in \{0,1,2,3\}$. We also assume that the edges $e_1,e_2,e_3$ have an initial color $\col{1}_i = \colorf(e_i)$ ($e_0$ may be a newly inserted edge). Consider an update that causes steps sequentially on $e_0$ through $e_3$. If the step on $e_2$ is of type~(\ref{type3}), we say that $e_1$ becomes \emph{ready}. Intuitively, it is ready because the next step on $e_1$ should be able to stop. To see why, notice that the step on $e_0$ freed $\col{1}_1$ from $e_1$ so now $\col{1}_1 \in A(v_2)$. The step on $e_1$ did not change $A(v_2)$, and the step on $e_2$ ensured (by definition of type~(\ref{type3})) that $\col{1}_1 \in A(v_2)$. If another step arrives at $e_1$ from $e_0$ then it means that $\col{1}_1$ is no longer used at $v_1$ (was either deleted from $e_0$ or shifted upward in the tree), therefore $\col{1}_1 \in A(v_1) \cap A(v_2)$ so we can stop the step on $e_1$ by coloring it with $\col{1}_1$. Note that we relied on the fact that the forest is rooted, for the consistency of the recoloring path arriving to $e_1$ from the parent $e_0$. A ready edge does not remain ready forever. We say that an edge becomes \emph{unready} if one of the following occurs:
\begin{enumerate}
    \item An edge $e$ becomes unready when a step is applied on a sibling-edge $e'$ of $e$. To clarify why, note that $e_1$ above became ready because the next shift of colors frees $\col{1}_1$ in $v_1$, but a step on a sibling $e' = (v_1,v'_2)$ of $e_1$ might color $e'$ by $\col{1}_1$, ruining the readiness of $e_1$.
    \item When a step stops on $e$, it makes $e$, the neighbour edges of $e$, and the grandparent edge of $e$ (if it exists) unready. To clarify why a grandparent edge is also affected, notice that above, in order for $e_1$ to become ready we required the step on $e_2$ to be of type~(\ref{type3}). However if the step on $e_3$ stops it might be because the step on $e_2$ stepped towards $e_3$ on purpose, type~(\ref{type2}). So we are being conservative by assuming that the grandparent of $e$ becomes unready when a step stops on $e$.
    
    \item A ready edge may be deleted, and newly inserted edges are always unready. When an edge is inserted, also all of its neighbour-edges, and its grandparent-edge (if it exists), become unready.
\end{enumerate}
Observe that by definition, there can be at most one ready edge between a vertex $u$ to its children. To see why, consider a path of steps that arrives at $u$, such that $v \in \ch(u)$ and $(u,v)$ is ready. Either the step on $(\parent(u),u)$ stops thereby making all the edges of $u$ unready, or the algorithm proceeds by a type~(\ref{type2}) step towards $(u,v)$ or to another edge of $u$ over which the next step stops, again making all the edges of $u$ unready. 

Let $\Phi \ge 0$ be a potential that counts the number of unready edges. $\Phi(t=0) = 0$ because the graph is initially empty. We claim that the amortized recourse is $O(1)$. For deletion of $e$, there is no recourse, and $\Phi$ also does not increase, so the amortized recourse $am \le \Phi_{t+1} - \Phi_{t} + 0 \le 0$. For insertion, we increase the potential by at most $O(1)$ due to the newly inserted edge, and by another $O(1)$ due to the vicinity of the edge we stop on. Each $O(1)$ is at most $4$ 
because there is at most one ready edge among the child-edges of $e$, at most one among sibling-edges of $e$, and the remaining possible $2$ are due to its parent-edge and grandparent-edge. Finally, say that the recourse is $\ell$, on a path over the vertices $u,v_0,v_1,v_2,\ldots,v_\ell$ where $(u,v_0)$ is the newly inserted edge. Then every edge on this path becomes ready except for the newly inserted edge and the last three (the stopping edge, its parent, and its grandparent). On the other hand, for every $i \in [0,\ell-2]$ and $y \ne v_{i+1}$ the edge $e = (v_i,y)$ (if it exists) was not ready before shifting colors, so we did not change ready edges to unready. Indeed, the algorithm did not stop on $(v_i,v_{i+1})$, which means that $v_{i+1}$ was chosen over $y$ by a type~(\ref{type3}) step. However, if $e$ was ready then the algorithm would have picked $y$ instead of $v_{i+1}$ and stop on $e$ (a type~(\ref{type2}) step). If some $(v_{\ell-1},y)$ was ready and becomes unready, this was already charged to the stopping step on $(v_{\ell-1},v_\ell)$. In conclusion, the potential decreases by $\ell - O(1)$, and therefore the amortized recourse of insertion is also $O(1)$. Explicitly, $am \le \Phi_{t+1} - \Phi_{t} + \ell \le (O(1) - \ell) + \ell = O(1)$.
\end{proof}

Theorem~\ref{theorem_rooted_trees_recourse_constant_UB} fails for unrooted trees because when a step enters $(u,v)$ from $u$ and exits through $v$, we are not guaranteed that the next step on $(u,v)$ also enters $(u,v)$ from $u$. This issue can be slightly mitigated if the algorithm directs edges and attempts in some clever way to pass through them only according to their orientation.\footnote{As an example, say that we flip an unbiased coin to decide the orientation of an edge when it is inserted, and that when a step exits via a vertex $v$ we only choose available colors based on outgoing edges of $v$, unless there is at most $1$ so we pick among all of them (to maintain a random choice). Then the probability that we exit $v$ on an edge directed towards $v$ is exponentially small in $\Delta$: Only if $\deg(v)$ is large (at least $\extra+1$, or else there should be an available color to stop), and if at least $\deg(v)-2$ edges of $v$ are oriented towards $v$.} This still does not truly help because our argument actually requires that we enter $(u,v)$ consistently from the same edge of $u$ and not just from any edge of $u$. Furthermore, if we restrict the options of where the next step may proceed from one edge to the next, we introduce a difficulty as in the case $\extra < \Delta - 2$, and then we cannot ensure that a step can proceed to a neighbour edge with a free color. To demonstrate why orientation might be crucial for designing an algorithm with low recourse, recall Theorem~\ref{theorem_greedy_fails_LB} and Section~\ref{section_greedy_high_recourse} where the adversary causes the greedy algorithm to recolor the exact same paths repeatedly in opposite directions.

\fi

\iffullversion
\subsection{Greedy Algorithms: Additional Variants and Discussion}
\label{section_appendix_greedy_main}

So far most of our discussion on greedy algorithms regarded only \greedy{}. This is not the only greedy variant one can think of. It is the most natural definition when considering recourse minimization per update, but in the realm of edge coloring many algorithms restrict their generality and only focus on shifting colors along paths, or along chains of edges. Indeed, any \emph{(multi-) Vizing chain} which is the base of many edge recoloring algorithms~\cite{Vizing1964,DynamicEdgeColoring2019DuanEtal,Bernshteyn2022SmallRecoloring,MultistepVizing2023,2025LinearEdgeColoring}, is such algorithm. Other algorithms also rely on paths and chains~\cite{ArboricitySWAT2024_sayan,ArboricitySWAT2024_Christiansen,sadeh2026vizingchainsimprovedrecourse}. With this perspective in mind, we can define two additional greedy variants. Then in the remainder of this section we explain that each variant is simple (polynomial) on forests, and detail how the results we derived differ for the new variants.

\begin{definition}
\label{definition_greedy_algorithms}
Given a forest $F$ with a proper edge coloring except for one edge $e \in F$, we define three greedy recoloring algorithms (variants) as follows. 
\begin{enumerate}
    \item \greedy{} (re-stated for completeness): This algorithm determines and recolors a smallest set of edges $H \subseteq F \setminus \{ e \}$ needed to be recolored to extend the coloring to $e$. Note that $H \cup \{ e \}$ is a connected component (single, by minimality).
    
    \item \greedyshift{}: This algorithm picks the minimal $H$ from a special subset: iteratively, let $e' = (u,v)$ be the current uncolored edge (initially $e'=e$). Pick a color for $e'$ that is available at either $u$ or $v$ or both. If both, then we are done. Otherwise, uncolor the conflicting edge $e''$, and re-iterate on $e''$.
    
    \item We can define a sub-type of \greedyshift{} which is \greedypath{}, such that the chain of edges over which we shift colors must form a path (that is, we do not allow \emph{fans}).
\end{enumerate}
\end{definition}

Observe that the greedy algorithms in Definition~\ref{definition_greedy_algorithms} are listed in non-decreasing order of computation simplicity and recourse for a given (same) state. It is simpler to compute more restricted options, and more options can only reduce the recourse. Furthermore, each algorithm is in fact a class of algorithms, because we did not define the exact behavior in case of tie breaks. Such tie breaks are always necessary, even when the minimal recourse is unique, because there may be multiple recoloring options. The greedy hierarchy does not immediately imply a hierarchy in terms of amortized recourse, not even in worst-case when starting from the empty graph: Hypothetically, perhaps some clever invariant holds for one version but not for another.

\subsubsection{Greedy Optimality for \texorpdfstring{$\Delta=2$}{Delta=2} }
\label{section_Delta_2}

The case of $\Delta=2$ behaves quite differently than $\Delta \ge 3$. Intuitively, such trees are paths, and their depth is linear in their size rather than logarithmic. Moreover, any connected component is a path, so \greedy{}, \greedyshift{} and \greedypath{} are the same. We briefly state optimal bounds in this mostly trivial case.

\begin{theorem}
\label{theorem_Delta_is_2}
Let $F$ by a dynamic forest with maximum degree $\Delta = 2$, and assume $\Delta$ edge coloring. \greedy{} (Algorithm~\ref{algorithm_greedy}) has $O(\log n)$ amortized recourse in the incremental model, and $O(n)$ amortized recourse in the fully dynamic model, which are both optimal up to a constant factor (in particular, randomization does not help).
\end{theorem}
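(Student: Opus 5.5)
With $\Delta=2$ and $\extra=0$, every component is a path, and a proper $2$-edge-coloring of a path is one of its two alternating colorings. Inserting an edge between endpoints $u$ of path $P_1$ and $v$ of path $P_2$ creates a longer path. Its coloring is forced up to a global swap, so the only freedom is which of $P_1,P_2$ gets flipped. The unique minimal fix is therefore: if the last colors at $u$ and $v$ differ, color the new edge with no recourse; otherwise flip the entire shorter path. The new edge cannot be colored while both sides keep their colors, and recoloring only part of a side violates properness at some internal vertex. So \greedy{} pays exactly $\min(|P_1|,|P_2|)$ or $0$. I will make this explicit first, since both upper and lower bounds rely on it.

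\textbf{Upper bounds.} For the incremental model I will use the standard small-to-large charging argument. Charge the recourse of a merge to the edges of the smaller path, one unit per edge. Each time an edge is charged, the size of its component at least doubles, and sizes are at most $n$, so each edge is charged $O(\log n)$ times. At most $n-1$ edges are ever inserted, which gives total recourse $O(n\log n)$. Taking the same normalization the paper uses elsewhere (amortize over the edges, i.e.\ over the $\Theta(n)$ insertions of a sequence building the forest), this is $O(\log n)$ amortized. For the fully dynamic model the bound is immediate: each update recolors at most $n-1$ edges.

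\textbf{Lower bounds.} These must hold for any algorithm, randomized ones included. In the incremental model I will build a balanced merging tree. Create $n/2^k$ paths of length $2^k$ level by level, each time joining two paths of equal length end to end. Any algorithm must end with an alternating coloring of each merged path, and the two halves were alternating before the join. If the two endpoint colors agree, some half must be entirely recolored, costing $2^{k}$.

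The adversary is oblivious, so to force collisions against randomization I will instead connect endpoints chosen uniformly at random from the two paths' ends. Each half has two endpoints, and on an odd-length path the two end colors coincide. I will choose lengths so this works: for example, maintain odd lengths by inserting the bridging edge so that the merged path has length $2^{k+1}+1$. Then each end of a half has the same color $a$ or $b$, and with probability at least $1/2$ over the algorithm's state the two halves' end colors collide. This yields expected cost $\Omega(2^k)$ per merge, hence $\Omega(n)$ per level, $\Omega(n\log n)$ in total over $n$ insertions. For the fully dynamic model I will take two paths of length about $n/2$ and repeatedly insert and delete the connecting edge at alternating ends (end of one path to the start of the other, then the opposite pairing). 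The parity shifts between the two configurations, so every other insertion forces a collision and $\Omega(n)$ recourse with constant probability.

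The main obstacle is the parity bookkeeping in these lower bounds: choosing path lengths and attachment points so that a forced color clash occurs with constant probability regardless of the algorithm's (possibly randomized) choices. I would first resolve this by tracking the colors at both ends of each odd-length path, and fall back on averaging over two attachment choices if needed.
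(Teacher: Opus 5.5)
Your upper bounds are correct and match the paper's: small-to-large charging gives each inserted edge at most $\lg n$ charges, and the fully dynamic bound follows from at most $n-1$ recolorings per update. The genuine gap is in the incremental lower bound, and it starts with an inverted condition. With two colors and endpoints of degree one, the new edge $(u,v)$ can be colored without recourse if and only if the end-edges at $u$ and $v$ have the \emph{same} color, since they then share the same free color; you state the opposite.

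With the correct condition, your choice of odd lengths defeats the construction. On an odd path both end-edges have the same color, so choosing a random endpoint randomizes nothing. Moreover, bridging two odd paths whose end-edges are all colored $a$ by an edge colored $b$ gives again an odd path whose two end-edges are colored $a$. Hence the deterministic algorithm that colors every new edge $a$ when possible and $b$ otherwise pays zero recourse on your sequence (odd lengths, merges by a single bridging edge). The ``probability at least $1/2$ over the algorithm's state'' cannot be enforced by an oblivious adversary, because the algorithm controls its own state.

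The missing idea is the opposite parity. In every merge, at least one of the two paths must have \emph{even} length. Then its two endpoints have different free colors, and attaching a uniformly random one of them to a fixed end of the other path causes a conflict with probability exactly $1/2$, whatever the current coloring. A conflict forces any algorithm to flip one whole side, which costs at least the shorter length. The paper maintains this with the recursion $d_{i+1}=2d_i+1+(d_i \bmod 2)$: when $d_i$ is odd, it first extends one path of each pair by a single edge.

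Your fully dynamic lower bound has the same unresolved parity issue. Let $P_i$ have ends $s_i,t_i$. Alternating between the pairings $(t_1,s_2)$ and $(s_1,t_2)$ imposes contradictory requirements only when $|P_1|$ and $|P_2|$ have different parities. For two paths of equal length, both pairings impose the same condition and nothing is ever forced. Fix this by choosing lengths of different parity, or by alternately attaching one fixed end of $P_1$ to the two ends of an even-length $P_2$. Then every switch forces flipping an entire path with certainty, not merely with constant probability.

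The paper avoids parity altogether by reusing the construction of Theorem~\ref{theorem_trees_delta_colors_LB}. It alternates between linking the two path-ends $r_1,r_2$ directly, which needs equal free colors, and linking both to a common isolated vertex $u$, which needs distinct free colors.
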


\begin{proof}
\looseness=-1
Consider the fully dynamic model first. Since the forest is a collection of paths ($\Delta=2$), the worst-case recourse, and in particular the greedy recourse, is $O(n)$. Theorem~\ref{theorem_trees_delta_colors_LB} proves amortized recourse of $\Omega(h)$ where $h = \Theta(\log_{\Delta-1} n)$ for $\Delta \ge 3$, by constructing $(\Delta-1)$-trees of depth $h$. Repeating the proof verbatim for $\Delta=2$, the only difference is that now $h = \Theta(n)$ because the trees are unary. This concludes the analysis in the fully dynamic model.

Next consider the incremental model. When an edge is inserted, it links two paths of lengths $d \le d'$. If recourse is required, \greedy{} pays $d$ which we charge equally to the shorter path ($1$ per edge). For a fixed edge $e$, notice that the path that contains it at least doubles in size when we charge $e$, therefore the total charge of $e$ is bounded by $\lg n$. Then the amortized recourse is $O(\log n)$.

To show a matching lower bound against any algorithm, including a randomized algorithm, observe that when a path has an even length, its endpoints have different free colors. Therefore, given an even path with endpoints $u$ and $v$, when we link it to another path with endpoint to $w$, regardless of the available color at $w$, if we insert the edge $(u,w)$ or the edge $(v,w)$ with equal probability, there is probability of exactly $\frac{1}{2}$ that recourse will be incurred. The recourse must recolor one of the paths, so it is best for the adversary to link paths of comparable lengths. If the algorithm is deterministic, we can guarantee the recourse by choosing $(u,w)$ or $(v,w)$, but henceforth assume a randomized algorithm (harder to fail).

Therefore, the adversary acts as follows. Initially, it inserts edges to form paths of length $1$. Then, it links paths of length $1$ by first extending one of them by an edge so that its length is even ($2$), and then adds another edge, so we get paths of length $4$. Then, pairs of paths of length $4$ are linked to form paths of length $9$. The process continues, such that whenever each path in the collection has even length $d$ we merge pairs to get paths of length $2d+1$ and an expected recourse of (at least) $d$, and whenever each path in the collection has odd length $d$ we extend one path in each pair to length $d+1$, and then merge the pair to form a path of length $2d+2$ with expected recourse of (at least) $d$. The lengths satisfy the recursive relation: $d_1 = 1$ and $d_{i+1} = 2 \cdot d_i + 1 + (d_i \mod 2)$, whose closed form is $d_{i} = \frac{4 \cdot 2^{i} - 4 - (i \mod 2)}{3}$.\footnote{The recursion flips the parity with each element, and one can verify that the odd indices satisfy $d_{2i+1} = 4 d_{2i-1} + 5$ and that the even indices satisfy $d_{2i+1} = 4 d_{2i-1} + 4$, to solve explicitly and separately and get $d_{2i-1} = \frac{4 \cdot 2^{2i-1} - 5}{3}$ and $d_{2i} = \frac{4 \cdot 2^{2i} - 4}{3}$.} In each level $i$ the number of paths is halved, until it is a single long path in the last level $\ell$ where $d_\ell = \Theta(n)$ (by maximality of $\ell$), which implies (by the closed form) that $2^\ell = \Theta(n)$. Then the total expected recourse is at least: $\sum_{i=1}^{\ell}{2^{\ell - i} \cdot d_i}
\ge 
\sum_{i=1}^{\ell}{2^{\ell - i} \cdot 2^{i-1}}
=
\ell \cdot 2^{\ell - 1} = \Theta(n \log n)$. Therefore, the amortized expected recourse of any algorithm is $\Omega(\log n)$.
\end{proof}

\subsubsection{What Changes for the new Greedy Variants}
\label{section_appendix_greedy_variants_differences}
Interestingly, most of the results that we derived for \greedy{} apply to \greedyshift{} and its sub-variant \greedypath{}. We list them briefly, with pointers to what change or doesn't change:
\begin{enumerate}
    \item Theorem~\ref{theorem_amortized_insertion_only_improved_UB}: This theorem applies verbatim for $\extra \ge 1$. To see why, note that the proof relies in this case on Theorem~\ref{theorem_logCp1_recourse_shiftable_UB} which applies to recoloring a path (which is a special case of a chain). The case $\extra=0$ is completely different, as shown in Theorem~\ref{theorem_insertion_only_Czero_high_LB} below.
    
    \item Theorem~\ref{theorem_insert_only_LB} is unchanged. Each recourse in the proof is at most $1$, which is always a path.

    \item Theorem~\ref{theorem_greedy_fails_LB} is unchanged. The recourse in the proof is already due to recoloring a path (by Lemma~\ref{lemma_must_recolor_path}). However, we can strengthen the lower bound against these variants even further from $\Omega(\log_{\Delta} n)$ to $\Omega(\log_{\extra+1} \frac{n}{\Delta-\extra})$ (and even $\Omega(n/\Delta)$ if $\extra = 0$), as shown in Theorem~\ref{theorem_greedy_fails_LB_shift_based} below. This improved lower bound is tight for $\extra \ge 1$ since it matches the upper bound in Theorem~\ref{theorem_logCp1_recourse_shiftable_UB} (up to a constant factor).
\end{enumerate}

\begin{theorem}
\label{theorem_insertion_only_Czero_high_LB}
Let $\Delta \geq 3$ and $\extra=0$. Any \emph{shift-based} deterministic algorithm has amortized recourse of $\Omega(\frac{1}{\Delta} \log(\frac{n}{\Delta^3}))$ even in the incremental model.
\end{theorem}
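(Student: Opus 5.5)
The plan is to adapt the layered-tree gadget of Lemma~\ref{lemma_must_recolor_path} to $\extra=0$ and to exploit the rigidity of shift-based recoloring. With $\pal=\Delta$, every edge incident to a full-degree vertex is forced, so a shift chain cannot branch into a cheap fan the way \greedy{} can. I would first establish the analogue of Lemma~\ref{lemma_must_recolor_path} for shift-based algorithms. When an edge joins the roots of a $P$-layered and a $\overline{P}$-layered tree, with $|P|=1$ and $|\overline{P}|=\Delta-1$ (so the trees alternate a unary layer and a $(\Delta-1)$-ary layer), every shift chain must run all the way to a leaf. The reason is that at each internal vertex of degree $\Delta$ the uncolored edge has no color free on both sides, so the shift continues.

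Next comes the key accounting. Since $|P|=1$, every other layer has a single child, so a root-to-leaf path of depth $d$ contains about $d/2$ forced vertices, and the recourse is $\Omega(d)$. A tree with $m$ vertices has depth roughly $2\log_{\Delta-1} m$ if it is balanced. That alone only gives $\log_\Delta$, which is too weak. So I would instead build \emph{unbalanced} caterpillar-like layered trees: long spines in which each spine vertex has $\Delta-1$ children, all but one of them leaves or shallow stars. Merging such a tree of depth $d$ costs a chain of length $\Theta(d)$ whenever the algorithm shifts along the spine. But a shift-based algorithm may choose its direction into the \emph{other} tree, or branch toward a shallow leaf. I would therefore arrange, by the $\extra=0$ rigidity, that every side-branch is itself a full layered subtree whose leaves are deep.

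A cleaner approach is a doubling scheme like the $\Delta=2$ proof of Theorem~\ref{theorem_Delta_is_2}. At level $i$, the adversary holds a collection of layered trees of size $s_i$ that all have the same root palette type. It pairs up complementary ones and links their roots. By the lemma, each link forces a chain to a leaf of length at least the minimum leaf depth, which I would keep at $\Omega(\log (s_i/\Delta^{3}))$ by building the trees with all leaves at comparable depth. Deleting the final chain edge restores a layered tree by Lemma~\ref{lemma_layered_backwards}. Each level halves the number of trees and costs $\Omega(i)$ per merge. Summing over the levels gives total recourse $\sum_i (n/2^i\Delta)\cdot\Omega(i)$, which is not what we want. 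So instead I would charge each merge only against the $O(\Delta)$ insertions needed to create the new top layer. Concretely, I would form a new $(\Delta-1)$-ary layer by attaching $\Delta-1$ trees of depth $d$ under a fresh vertex, which costs $\Delta-1$ insertions, with one attachment forcing an $\Omega(d)$ chain. Here $d$ grows as $\log n$ over the construction, so the amortized cost is $\Omega(d/\Delta)$ at the upper levels. The $\Delta^3$ term accounts for the bootstrap of Lemma~\ref{lemma_construct_initial_layered_tree} and for the base stars.

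The main obstacle is forcing a recoloring at every attachment in the incremental model without deletions. Because the algorithm is deterministic, the adversary can simulate it and choose which of the $\Delta-1$ candidate roots, of appropriate palette type, to attach last so that its free color conflicts. With $\extra=0$, a root has exactly one free color, and by pigeonhole I would pick trees whose free color equals the parent's unique remaining free color after the others are placed. What I most need to verify is that, after the forced chain, the resulting tree still has all leaves at depth $\Theta(d)$ for the next level. I would try to handle this by keeping enough unmerged spare trees that the single leaf touched by the chain can be discarded with only a constant loss.
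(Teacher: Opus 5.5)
There is a genuine gap, and it is in the accounting rather than in the local forcing. In the incremental model, amortized recourse is total recourse divided by all $\Theta(n)$ insertions. So ``amortized $\Omega(d/\Delta)$ at the upper levels'' averages only over the few insertions made at those levels. With (near-)balanced $(\Delta-1)$-ary trees, only about $n/(\Delta-1)^j$ roots are created at level $j$, and a chain into an attached subtree of depth $j-1$ has length $O(j)$. Suppose even all $\Delta-1$ attachments at every such root forced a chain. The total recourse would still be $O\bigl(n\sum_{j\ge 2} j/(\Delta-1)^{j-1}\bigr)=O(n/\Delta)$, i.e.\ amortized $O(1/\Delta)$ with no logarithmic factor. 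Your doubling variant fails for the same reason: a merge at scale $s$ costs $O(\log s)$ but there are only $n/s$ of them, as you noticed. It also deletes the last chain edge via Lemma~\ref{lemma_layered_backwards}, which the incremental model does not allow.

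In fact no construction of this kind can work. Your forcing argument (any recoloring must reach a leaf of a full tree) is valid for \emph{every} algorithm, so whatever it proves would also hold for \greedy{}. But Theorem~\ref{theorem_amortized_insertion_only_improved_UB} gives \greedy{} amortized recourse $O(1/\sqrt{\Delta})$ for $\extra=0$, independent of $n$. The proof therefore has to exploit something that only shift-based algorithms suffer from.

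The missing idea is to make the forced chain \emph{linear} in the size of the smaller component, by freezing $\Delta-2$ colors.
\begin{enumerate}
\item Partition the vertices into $K=\Theta(n/\Delta)$ stars of degree $\Delta-2$.
\item Group the stars by their leaf sub-palette; there are $B=\binom{\Delta}{2}$ possibilities.
\item Link only centers within a group.
\end{enumerate}
Every color of the common sub-palette is then used at both ends of any center--center edge, so a shift-based algorithm can never choose it. Hence leaf edges are never recolored, which also settles your worry about chains escaping to shallow leaves. Recoloring leaf edges to change a star's sub-palette is exactly the escape that non-shift algorithms such as \greedy{} have.

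The centers then form paths $2$-colored by the two remaining colors; this is your caterpillar idea. The incremental $\Delta=2$ adversary from the proof of Theorem~\ref{theorem_Delta_is_2} forces recourse of at least one whole path length per merge, hence $\Omega(k_i\log k_i)$ in a group of $k_i$ centers. By convexity, $\sum_i k_i\log k_i\ge K\log(K/B)=\Omega\bigl(\frac{n}{\Delta}\log\frac{n}{\Delta^3}\bigr)$, spread over $\Theta(n)$ insertions. The $\Delta^3$ is $\Delta\cdot B$, not a bootstrap cost.
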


\begin{proof}
Let the adversary start by dividing the $n$ vertices to sets of $\Delta-1$ vertices, and forming a star of degree $\Delta-2$ out of each set.
The algorithm must color each star using $\Delta-2$ different colors. There are $B = \binom{\Delta}{2} = \Theta(\Delta^2)$ different sub-palettes of $\Delta-2$ colors, denote by $k_i$ for $i \in \{ 1,\ldots,B \}$ the number of stars that use sub-palette $i$. Since each star contains $\Delta-1$ vertices, we have $\sum_{i=1}^{B}{k_i} = \floor{\frac{n}{\Delta-1}} \equiv K = \Theta(\frac{n}{\Delta})$. Next, the adversary acts on each group of same-palette stars separately. It will only link the  centers of the stars, and since the algorithm only shifts colors, the problem is reduced to maintaining a proper $2$-coloring of these newly inserted edges.\footnote{Recall that a shift-based algorithm never colors an edge by a color that is used at both of its ends, because this would cause two edges to become uncolored. In each subset of stars, all the colors except for $2$ are used by both endpoints of the uncolored edge: Initially by definition of the insertion, and then by induction over the shifting of colors.} 

For $2$-coloring, we have shown in the proof of Theorem~\ref{theorem_Delta_is_2} that the total incremental recourse is $\Omega(n \log n)$ for $n$ vertices. Then for each reduced subproblem with $k_i$ vertices, the total recourse is $\Omega(k_i \log k_i)$. By applying the adversarial insertions for each group of stars separately, we get a total recourse of $\Omega(\sum_{i=1}^{B} {k_i \log k_i})$. By convexity of the function $f(x) = x \log x$ and Jensen's inequality, the sum is minimized when the parts are equal, so $\sum_{i=1}^{B} (k_i \log k_i) \ge B \cdot (K/B) \log(K/B) = K \log(K/B)$. Finally, substitute $K = \Theta(\frac{n}{\Delta})$ and $B = \Theta(\Delta^2)$ to get $\Omega(\frac{n}{\Delta} \log(\frac{n}{\Delta^3}))$ total recourse, and amortize it over $\Theta(n)$ insertions. 

Recall that this bound applies only if the algorithm only shifts colors. Other algorithms \emph{could} recolor the stars, to drastically reduce the recourse when merging long paths.
\end{proof}

Next we show that the bound from Theorem~\ref{theorem_insertion_only_Czero_high_LB} is almost tight for \greedyshift{} and \greedypath{}.

\begin{theorem}
\label{theorem_insertion_only_Czero_high_shift_UB}
Assume only insertions to a forest with maximum degree $\Delta$. Given $\Delta$ colors, \greedyshift{} and \greedypath{} each have amortized recourse $O(\frac{1}{\Delta} \log n)$.
\end{theorem}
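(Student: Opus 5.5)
We will mirror the heavy/light charging scheme from the proof of Theorem~\ref{theorem_amortized_insertion_only_improved_UB}, but we need a worst-case bound for shift-based recoloring with $\extra=0$ to replace Theorem~\ref{theorem_trees_delta_colors_sublinear_WC_UB}. First we claim: when an edge $e$ links trees $T_u,T_v$ and the smaller one, say $T_v$, has $N$ edges, \greedypath{} (hence also \greedyshift{}, which has more options) can extend the coloring with recourse at most $N$. The witness is the procedure from Lemma~\ref{lemma_must_recolor_path}. Color $e$ with a color free at $u$. Then repeatedly give the uncolored edge a color free at its endpoint on the growing side, and uncolor the conflicting edge. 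This shift path moves strictly away from $u$ into $T_v$, never revisits an edge, and ends by a leaf of $T_v$ at the latest. Thus it recolors at most $N$ edges of $T_v$, and it is a path, so it is admissible for \greedypath{}.

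The charging proceeds as follows. Call an insertion \emph{heavy} if both linked trees have at least $\Delta/2$ edges, and \emph{light} otherwise. For a heavy insertion we charge the recourse, at most $N$, equally to the $N$ edges of the smaller tree, i.e., at most $1$ per edge. Each time an edge is charged, its component at least doubles, and the first charge occurs at size at least $\Delta/2$. Hence an edge is charged heavily at most $\lg(2n/\Delta)+1=O(\log n)$ times, for total heavy charge $O(\log n)$ over all edges combined. That bound is too weak by a factor $\Delta$, so we refine it. Recourse requires the endpoints to jointly block all $\Delta$ colors, so $\deg(u)+\deg(v)\ge\Delta$ and the higher-degree endpoint has at least $\Delta/2$ incident edges. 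We therefore instead charge the recourse of a heavy insertion to the edges of the smaller tree only when that tree has size $S\ge\Delta/2$, and assign each of its edges a share of at most $1$. We then count the total: the sum over heavy insertions of $\min(|T_u|,|T_v|)$ is at most $O(n\log n)$ by the standard smaller-half argument. What we actually need is the count of \emph{recolored} edges, which is at most the smaller side, so the total is $O(n\log n)$ and the amortized cost per insertion is $O(\log n)$. To gain the $1/\Delta$ factor, observe that each level of doubling starts at size $\Delta/2$. Group the insertions so that an edge's charges sum as $\sum_i 1 = O(\log(n/\Delta))$, while only a $1/\Delta$ fraction of edges can be in the smaller side at each level. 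Equivalently, charge the recourse to \emph{vertices} of degree $\ge\Delta/2$ in the smaller tree: a tree of size $S$ whose recoloring path has length $L$ must have each internal path vertex blocked, hence of degree at least $\Delta/2$, which forces $S\ge L\Delta/2$. Charging $L$ to $S$ edges then gives $2/\Delta$ per edge per doubling. Summed over $O(\log n)$ doublings, this yields $O(\frac{1}{\Delta}\log n)$ per edge.

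Light insertions are handled exactly as in the proof of Theorem~\ref{theorem_amortized_insertion_only_improved_UB}. The small side has fewer than $\Delta/2$ edges, so every edge there has a free color and a shift of length $1$ suffices. The blame/swallow accounting with $\extra=0$ gives $O(1/\sqrt{\Delta})$, which is $O(\frac1\Delta\log n)$ whenever $\log n\ge\sqrt\Delta$. Otherwise we refine the accounting: a swallow of $v$ into $u$ needs $\deg(v)\ge\Delta-\deg(u)$, and charging to the swallowed edges gives $O(1/\Delta)$ once $\deg(u)\le\Delta/2$.

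The main obstacle is the key structural claim. We must show that along the minimal shift path, every internal vertex where the path cannot stop has degree $\Omega(\Delta)$ with $\Omega(\Delta)$ private edges in the smaller tree. This requires that \greedy{}-type minimality prefers a stop when any subtree offers one. We would prove it from the blocking condition $A(x)\cap A(y)=\emptyset$ with $\Delta$ colors, which forces $\deg(x)+\deg(y)\ge\Delta$ at each step, so consecutive pairs of path vertices contribute $\Delta$ edges. This gives $S\ge L\Delta/2-O(1)$.
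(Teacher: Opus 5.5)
The gap is in your light case, and it cannot be closed. If $T_v$ is the side with fewer than $\Delta/2$ edges, then $\deg(v)<\Delta/2$. Recourse requires $\deg(u)+\deg(v)\ge\Delta$, so every light insertion that costs anything has $\deg(u)>\Delta/2$. Your refinement for $\deg(u)\le\Delta/2$ therefore only covers insertions with zero recourse. The case you leave open genuinely costs $\Theta(1/\sqrt{\Delta})$. Take the instance from the proof of Theorem~\ref{theorem_insert_only_LB} with $\extra=0$ and $\ell\approx\sqrt{2\Delta}$. Each of its $\ell$ forced recolorings is a single shift along a path: give $(u,v_i)$ the color $i$ and move the leaf edge of $v_i$ to a free color. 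So it is a legal run of \greedyshift{} and \greedypath{} under a suitable tie-breaking, which the paper also notes. The instance has $O(\Delta)$ vertices and amortized recourse $\Omega(1/\sqrt{\Delta})$, and disjoint copies preserve this for every larger $n$. Since $\frac{\log n}{\Delta}=o(1/\sqrt{\Delta})$ whenever $\log n=o(\sqrt{\Delta})$, no accounting of light insertions can give the claimed bound in that regime. What your argument actually proves is $O\big(\frac{\log n}{\Delta}+\frac{1}{\sqrt{\Delta}}\big)$: your heavy part, plus the blame/swallow bound of Theorem~\ref{theorem_amortized_insertion_only_improved_UB}, which applies because light recourse is a single shift. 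This agrees with the statement only when $\log n=\Omega(\sqrt{\Delta})$.

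The paper's proof does not catch this because it never separates small merges. It charges every merge to the smaller tree and asserts $R(E(v))\le 2E(v)/(\Delta-1)$ for all sizes, which drops an additive constant. For $E(v)<\Delta$ its two-step recursion only gives $E(z)<1$, i.e., termination within $O(1)$ steps, and a one-edge smaller tree can cost recourse $1$. That constant is exactly the light-insertion cost, so the same hole is present there; you are not missing an idea the paper has.

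Your heavy part, by contrast, is sound once cleaned up, and it is a genuinely different route. Discard the per-vertex claim ``degree at least $\Delta/2$'' and the ``$1/\Delta$ fraction of edges'' sentence; neither is justified. Apply blocking to a shortest fan-free shift path confined to the smaller tree; its length upper-bounds both variants' recourse even if their own chains fan or enter the larger tree. Blocking gives $\deg(x_i)+\deg(x_{i+1})\ge\Delta+2$ for consecutive path vertices. In a tree, non-path edges at distinct path vertices are distinct, so $L\le 2S/\Delta+O(1)=O(S/\Delta)$ when $S\ge\Delta/2$. The paper instead follows the path that always descends into the smallest available subtree and shows that two steps discard at least $\Delta-1$ edges. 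Your counting uses minimality rather than subtree sizes, and both yield an $O(1/\Delta)$ charge per edge per doubling for heavy merges.
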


\begin{proof}
When an edge is inserted between two trees, we charge the recourse equally to each edge of the smaller tree. Each edge is charged $O(\log n)$ times because the size of its tree at least doubles each time we charge it. Thus, to complete the proof we show that the charge due to a single merge is $O(\frac{1}{\Delta})$ per edge.

Instead of analyzing the charge due to recourse of \greedyshift{} or \greedypath{}, it would be easier to use an alternative recourse that bounds them. We define a recolored path as follows: its trajectory is towards a leaf of the smaller tree, and in each single color shift, it moves further away from the root, in the direction of a smallest subtree option. This is clearly a path, and it is guaranteed to finish successfully at a leaf (possibly earlier), therefore the recourse is an upper bound to the recourse of both \greedyshift{} and \greedypath{}.

\looseness=-1
Consider an insertion, and say that $(u,v)$ is uncolored such that the recoloring path proceeds to the tree of $v$ whose size in edges is $E(v)$ (either $(u,v)$ is the new edge, or we are in the process of recoloring the path). The algorithm now chooses a color for $(u,v)$. Unless $A(u) \cap A(v) \neq \emptyset$, the algorithm chooses a color from $A(u)$ which, since $A(u) \cap A(v) = \emptyset$, belongs to some edge $(v,w)$. $|A(u)| = \Delta - (\deg(u) - 1) \ge 1$. Since we choose the smallest subtree option, we get a recursive function $R$ that upper bounds the recourse: $R(E(v)) \le 1 + R(E(w))$ where $E(w) \le \floor{\frac{E(v) - (\deg(v) - 1)}{A(u)}}$. Note that $R$ is monotone because it represents the worst possible recourse given a fixed number of edges. If $z$ is the next vertex after $w$, then:
$E(z) \le
\frac{E(w) - (\deg(w) - 1)}{A(v)}
< 
\frac{E(v) - (\deg(v) - 1)}{A(v)}
= 
\frac{E(v) - (\deg(v) - 1 + A(v)) + A(v)}{A(v)}
=
\frac{E(v) - \Delta}{A(v)} + 1
<
E(v) - \Delta + 1
$.
For every $2$ steps we remove at least $\Delta-1$ edges, thus $R(E(v)) \le 2 \cdot \frac{E(v)}{\Delta-1}$, or, $\frac{2}{\Delta-1}$ per edge.
\end{proof}

We now state a stronger version of Theorem~\ref{theorem_greedy_fails_LB}, against the new greedy variants.

\begin{restatable}[Shift-based Greedy recourse]{theorem}{theoremGreedyRecourseShiftsVersion}
\label{theorem_greedy_fails_LB_shift_based}
Let $\Delta \geq 3$ and $0 \le c \le \Delta-2$. There exist a forest and a coloring such that from this state onward, the amortized recourse of \greedyshift{} and \greedypath{} is $\Omega(\log_{\extra+1} \frac{n}{\Delta-\extra})$ for $\extra \ge 1$ and $\Omega(n/\Delta)$ for $\extra = 0$. There exists $n_2 = n_2(\Delta,\extra)$ such that for every $n \ge n_2$, this bad state can be reached from the empty graph.
\end{restatable}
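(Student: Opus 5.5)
I would mirror the proof of Theorem~\ref{theorem_greedy_fails_LB}, but replace the layered gadget with a sparser one. The reason is that shift-based algorithms are weaker: they cannot recolor a whole subtree, only shift along a chain. So the adversary can force long chains in trees whose depth is $\log_{\extra+1}$ of their size rather than $\log_\Delta$.

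The gadget is a ``thin layered'' tree built from two complementary sub-palettes $P$ and $\overline{P}$ with $|P|=\extra+1$ and $|\overline{P}|=\Delta-1$. On the branching spine, every internal vertex has exactly $\extra+1$ spine children, colored from a sub-palette $Q$ of size $\extra+1$. Its remaining $\Delta-\extra-2$ incident colors are blocked by pendant leaf-edges, so the vertex has full degree $\Delta$ and its used colors are exactly $Q$ plus the blocked colors. The colors are arranged so that, as in Lemma~\ref{lemma_must_recolor_path}, at every uncolored edge $(u,v)$ we have $A(u)\cap A(v)=\emptyset$ and $A(u)$ consists only of colors on spine edges of $v$, never on pendant leaves. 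Then a shift-based step entering $v$ must move to one of the $\extra+1$ spine children, and the chain cannot terminate until it reaches a spine leaf. Each level costs $\Theta(\Delta-\extra)$ vertices for the pendants times branching $\extra+1$. The spine depth is therefore $d=\Theta(\log_{\extra+1}\frac{n}{\Delta-\extra})$.

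For $\extra=0$ the spine is a single path of length $\Theta(n/\Delta)$, each vertex carrying $\Delta-2$ pendants. This reduces to the $2$-color path situation of Theorem~\ref{theorem_Delta_is_2}, which gives $\Omega(n/\Delta)$.

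The main steps, in order, are:
\begin{enumerate}
\item Prove the analogue of Lemma~\ref{lemma_must_recolor_path}: when two complementary gadgets are linked at their roots, any shift chain must run down a spine path to a spine leaf, and running it along any such path suffices.
\item Prove the analogue of Lemma~\ref{lemma_layered_backwards}: after the chain ends at leaf $u$ with parent $p$, deleting $(p,u)$ leaves a gadget re-rooted at $p$ of the appropriate parity. The inductive invariant is again that both endpoints of the uncolored edge are roots of complementary gadgets.
\item Run the same cycle of three deletions and three insertions as in Figure~\ref{figure_adversarial_cycle_of_requests}, choosing $d_1=\floor{d/3}$ and $d_2=d_1-1$ so that the greedy minimum chain is forced back and forth along the same paths. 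This gives recourse $\Theta(d)$ per $6$ updates. The amortized bound then holds from that state onward.
\item Show reachability from the empty graph for $n\ge n_2$, by adapting the replication argument of Lemma~\ref{lemma_construct_initial_layered_tree}. Stars with pendant leaves are built, pigeonholed by palette, and then replicated and assembled bottom-up. Since these assembly insertions require no recourse, a lazy greedy makes no recolorings during them.
\end{enumerate}

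I expect the main obstacle to be the first step. I must verify that pendant leaves never offer a shift that terminates early. I also have to check that fans are never shorter, since \greedyshift{} may branch at a vertex through color-available-at-one-end moves. The check amounts to showing that at every vertex the set $A(u)$ coincides exactly with the colors of the spine children. To guarantee this, I would choose the pendant colors to be the same fixed set at every vertex, disjoint from both $P$ and $\overline{P}$ on the spine. This requires trimming $P$ and $\overline{P}$ to size $\extra+1$ each, using $\Delta-\extra-2$ blocked colors. The counting works because $2(\extra+1)+(\Delta-\extra-2)=\Delta+\extra$.
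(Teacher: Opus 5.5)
Your final gadget (after trimming $P$ and $\overline{P}$ to size $\extra+1$ and putting the same $\Delta-\extra-2$ blocked pendant colors at every spine vertex) is exactly the paper's reduction. Every vertex becomes the center of a star with a common sub-palette, so a shift-based algorithm can never use or disturb those colors, and the instance becomes one of maximum degree $\Delta'=\extra+2$ with $2\Delta'-2$ colors. The paper then applies Theorem~\ref{theorem_greedy_fails_LB} (and Theorem~\ref{theorem_Delta_is_2} for $\extra=0$) as black boxes to the $n'=\floor{n/(\Delta-\extra-1)}$ star centers, whereas you re-prove the path-forcing, re-rooting and cycle steps directly in the gadget; the content is the same.
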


\begin{remark*}
The value of $n_2$ is studied in detail in Section~\ref{section_dependence_of_n}.
\end{remark*}

\begin{proof}
The proof reduces the case of maximum degree $\Delta$ and extra $\extra$ colors to a problem where the maximum degree is $\Delta' = \extra+2$ colors, such that \greedypath{} and \greedyshift{} behave like \greedy{} on the reduced problem.

To reduce the problem, we need to make some colors unusable. The reduction is similar to other cases where we make colors unusable: We form stars such that the actual dynamic updates are only on edges between centers. Assuming that all the stars use the same sub-palette $P$, the colors of $P$ can never be used by a shift-based algorithm. Each star is of degree $\Delta-(\extra+2)$, has $\Delta-\extra-1$ vertices, so we have $n' = \floor{\frac{n}{\Delta-\extra-1}}$ stars in total.

Provided that we are given these stars to begin with, or that $n \ge n_2(\Delta,\extra)$ so that we can bootstrap them from the empty graph, we now have $n'$ vertices (star centers), which can have a maximum degree $\Delta' = \extra+2$ in the reduced problem. The number of extra colors is $\extra = \Delta'-2$. Finally, if $\extra \ge 1$ then $\Delta' \ge 3$ and by Theorem~\ref{theorem_greedy_fails_LB} we conclude a lower bound of $\Omega(\log_{\Delta' - 1} {n'}) = \Omega(\log_{C+1} {n'})$. If $\extra = 0$ then $\Delta'=2$, and by Theorem~\ref{theorem_Delta_is_2} we conclude a lower bound of $\Omega(n')$.
\end{proof}

\fi

\iffullversion
\section{Randomized Algorithms: A Formal Account}
\label{section_randomized_analysis}
\else
\subsection{Randomized Algorithms}
\label{section_randomized_analysis_sketch}
\fi

We now turn to randomization. 
We consider a simple randomized approach that yields better bounds in both models (incremental and fully dynamic). The basic idea is to keep the edge-coloring highly random as the forest evolves over time.

Concretely, for a forest $F$, we define a simple distribution $\mathcal D (F)$ over the proper $\pal$-edge-coloring of $F$. The algorithm maintains the invariant that after every update, the coloring of the forest $F$ is distributed exactly according to $\mathcal D (F)$. Henceforth, we  use $\mathcal D$ for $\mathcal D (F)$ when $F$ would be clear from the context.

\iffullversion
\subsection{A Randomized Distribution-Maintaining Algorithm}
\else
\subsubsection{A Randomized Distribution-Maintaining Algorithm}
\fi

To define the distribution, and then the algorithm, it is clearest to begin with rooted forests.

\iffullversion
\subsubsection{Rooted Forests}
\else
\vspace{-0.2cm}
\paragraph{Rooted Forests}
\fi

In the rooted setting (Definition~\ref{definition_rooted_forest}), the distribution $\mathcal{D}$ has a simple top-down description. Starting from the root, each vertex assigns distinct colors to its child-edges uniformly at random, with one restriction:  the color of the edge connecting that vertex to its parent is forbidden (and thus excluded). Repeating this recursively down each tree defines the distribution. 
\iffullversion
The formal definition is given in Definition~\ref{def:DF} below.
\else
\fi

We describe how the algorithm maintains this distribution shortly. The point of maintaining this distribution is that it allows us to analyze the expected recourse of the algorithm's maintenance operations using the exact probability to recolor any particular edge.

\iffullversion

\subparagraph{A top--down coloring distribution.}\label{par_topdown}

Fix $\Delta\ge 2$ and $\pal\geq \Delta$.
We define $\mathcal D(F)$ using a single local sampling primitive.

\begin{definition}[Uniform coloring]\label{def:uda}
Let $v$ be a vertex with $\nrchild=|\ch(v)|$ children, and let $ \colset{1} \subseteq [\pal]$ be of size at least $\nrchild$.
A \emph{uniform coloring from $\colset{1}$ of the child edges of $v$} is obtained by choosing an $\nrchild$-subset
$\colset{2}\subseteq \colset{1}$ uniformly at random and then assigning the colors of $\colset{2}$ to the $\nrchild$ child edges of $v$ by a uniformly random permutation.\footnote{Equivalently, it is a uniformly random injective mapping from $\ch(v)$ to $\colset{1}$.}
\end{definition}

For a vertex $v$, let $T_v$ denote the subtree consisting of $v$ and all its descendants.

\begin{definition}[The distributions $\mathcal D$ and $\mathcal D_{\col{1}}$]\label{def:DF}
Let $T$ be a rooted tree with root $r$.

For a color $\col{1} \in [\pal]$, the distribution $\mathcal D_{\col{1}}(T)$ is defined recursively as follows.
Uniformly color the child edges of $r$  from $[\pal]\setminus\{\col{1}\}$.
Then, independently for each child $u\in\ch(r)$, generate the coloring of the subtree rooted at $u$, $T_u$, according to $\mathcal D_{\col{2}}(T_u)$, where $\col{2}=\colorf(r,u)$.

The distribution $\mathcal D(T)$ is defined in the same way, except that the child edges of the root are uniformly colored from $[\pal]$.

For a rooted forest $F$, the distribution $\mathcal D(F)$ is obtained by sampling each rooted tree component $T$ independently from $\mathcal D(T)$.
\end{definition}

We will repeatedly use the following two probabilities, which follow immediately from Definition~\ref{def:uda}.
Let $v$ be a non-root vertex with $\nrchild$ children and parent-edge color $\col{1}=\colorf(\parent(v),v)$.
Under $\mathcal D(F)$, the set of colors on child edges of $v$ is a uniformly random $\nrchild$-subset of $[\pal]\setminus\{\col{1}\}$.
Hence, for any fixed $\col{2} \neq \col{1}$,
\begin{equation}
\Pr[\col{2} \text{ appears on some child edge of }v] \;=\; \frac{\nrchild}{\pal-1}
\;\le\; \frac{\Delta-1}{\pal-1}. \label{eq:local-hit}
\end{equation}
Similarly, for a root $r$ with $\nrchild$ children and any fixed $\col{2}$,
\begin{equation}
\Pr[\col{2} \text{ appears on some child edge of }r] \;=\; \frac{\nrchild}{\pal}
\;\le\; \frac{\Delta}{\pal}. \label{eq:root-hit}
\end{equation}

\fi

\medskip
\noindent
{\textbf{The algorithm.}}
\iffullversion 
\else
The maintenance operations of the algorithm, henceforth \distalg{}, are based on one simple observation: an update can only add (in insertion) or remove (in deletion) a forbidden color at a single vertex. \distalg{} repairs this by possibly recoloring one child-edge, which pushes the same problem one level down the tree.

More concretely, if an edge $e=(p,r)$ is inserted (where $r$ is the root of its tree, and it becomes a child of $p$), the algorithm colors $e$ with a uniformly random color $\col{1}$ available at $p$. This makes $\col{1}$ forbidden at $r$. Thus, if there is a child-edge $(r,u)$ colored $\col{1}$, it must be recolored. The algorithm recolors $(r,u)$ to a uniformly random color $\col{2}$ that is available at $r$. This makes the coloring on child-edges of $r$ distributed correctly, as in $\mathcal{D}$, but now the forbidden color at $u$ changes from $\col{1}$ to $\col{2}$. If $u$ has a child-edge colored $\col{2}$, we apply a repair there as well, recoloring it from $\col{2}$ to $\col{1}$. This way, the repair propagates in a single path down the tree by swapping the colors $\col{1}$ and $\col{2}$ on the $(\col{1},\col{2})$-bicolored path until no forbidden color is violated, and the coloring of the new forest is distributed according to $\mathcal{D}$.

Deletion is similar. If an edge $(p,r)$ colored $\col{1}$ is deleted (so $r$ becomes a new root), then $\col{1}$ should no longer be forbidden at $r$. So, with the appropriate probability, we recolor a uniformly random child-edge $(r,w)$ by $\col{1}$. If this happens, and $(r,w)$ was previously colored $\col{2}$, then the forbidden color of $w$ changes from $\col{1}$ to $\col{2}$. This again creates a repair process down the tree until no forbidden color is violated, and the coloring of the new forest is distributed according to $\mathcal{D}$. Note that the coloring process was defined top-down, which is why we only need to fix the coloring at $r$ (with a possible cascade) but not at $p$ or anywhere else.
\fi
\iffullversion
Our algorithm, Algorithm~\ref{algorithm_randomized_main_rooted}, maintains the following invariant: at every time $t$, the current proper edge-coloring is distributed exactly as $\mathcal D(F_t)$ for the current rooted forest $F_t$.
Each update is handled by a local recoloring that may change the parent-edge color of one child. This triggers a recursive repair at that child, and the same mechanism propagates along a single downward path until no constraint is violated.

\begin{algorithm}[t]
\DontPrintSemicolon
    \KwIn{
         A sequence of edge updates over a forest $F$ with a maximum degree $\Delta$.
    }

    \KwOut{
        Maintains a proper coloring of $F$ after each update.
    }

    \SetKwProg{Fn}{Function}{:}{}

    \Fn{\funcUpdateForestColored{Forest $F$, edge $(p,r)$ where $p = parent(r)$, insert/delete}}{
        \uIf{insert}{
            Insert $(p,r)$ and choose $\colorf(p,r)$ uniformly from $A(p)$ (available colors at $p$).\;
            Call \funcRootToChildColored{$r,\colorf(p,r)$}.\;
          }
          \uIf{delete}{
            Let $\col{1} \leftarrow \colorf(p,r)$ and delete the edge $(p,r)$.\;
            Call \funcChildToRootColored{$r,\col{1}$}.\;
          }
    }
    
    \caption{Top--Down Random Edge Coloring Maintenance (calls Algorithm~\ref{algorithm_randomized_helpers})}\label{algorithm_randomized_main_rooted}
\end{algorithm}

At the top level, an insertion chooses a feasible color for the new edge at the parent endpoint and then repairs the former root that became a child using \funcRootToChild (Algorithm~\ref{algorithm_randomized_helpers}).
A deletion removes the parent edge of a child $r$, after which $r$ becomes a root and we repair it using \funcChildToRoot (Algorithm~\ref{algorithm_randomized_helpers}). We now describe these repair procedures.

\begin{algorithm}[t]
\DontPrintSemicolon
    
    \SetKwProg{Fn}{Function}{:}{}

    \Fn{\funcRootToChildColored{vertex $r$, color $\col{2}$}}{
        // Root $r$ becomes child with parent-edge color $\col{2}$.\;
        If no child edge $(r,w)$ has color $\col{2}$: \Return.\;
        Let $w$ be the unique child with $\colorf(r,w)=\col{2}$.\;
        Choose $\col{1} \in A(r)$ uniformly at random and recolor $(r,w)$ to $\col{1}$.\;
        Call \funcFixForbiddenColored{$w,\col{2},\col{1}$}.\;
    }

    \Fn{\funcChildToRootColored{vertex $r$, color $\col{1}$}}{
        // Vertex $r$ becomes root; parent-edge color $\col{1}$ removed.\;
        $\nrchild \leftarrow |\ch(r)|$.\;
        With probability $(\pal-\nrchild)/\pal$ \Return.\;
        Choose a uniformly random child $w\in\ch(r)$.\;
        $\col{2} \leftarrow \colorf(r,w)$.\;
        Recolor $(r,w)$ to $\col{1}$.\;
        Call \funcFixForbiddenColored{$w,\col{2},\col{1}$}.\;
    }

    \Fn{\funcFixForbiddenColored{non-root vertex $v$, color $\col{1}$, color $\col{2}$}}{
        // Parent-edge color changed $\col{1} \to \col{2}$.\;
        If no child edge $(v,w)$ has color $\col{2}$: \Return.\;
        Let $w$ be the unique child with $\colorf(v,w)=\col{2}$\;
        Recolor $(v,w)$ to $\col{1}$\;
        Call \funcFixForbiddenColored{$w,\col{2},\col{1}$}\;   
    }
    
    \caption{Helper functions for Algorithm~\ref{algorithm_randomized_main_rooted} and Algorithm~\ref{algorithm_randomized_main_unrooted}}
    \label{algorithm_randomized_helpers}
\end{algorithm}

When a non-root vertex $v$ changes its parent-edge color from $\col{1}$ to $\col{2}$, the only new constraint is that $\col{2}$ becomes forbidden on edges from $v$ to its children.
The procedure \funcFixForbidden (Algorithm~\ref{algorithm_randomized_helpers}) repairs this locally: if some child edge uses $\col{2}$, there is exactly one such edge, and it is recolored to $\col{1}$.
This fixes $v$, but it changes the parent-edge color of that child, so the same repair is applied recursively. Observe that since we only exchange the colors $\col{1}$ and $\col{2}$ down a path in the tree, the global effect of applying local repairs is simply to flip the colors on an $(\col{1},\col{2})$-bicolored path down the tree. The local presentation is simpler for the analysis.

When a root $r$ gains a parent edge of color $\col{2}$, the color $\col{2}$ becomes forbidden on child edges of $r$.
The procedure \funcRootToChild repairs $r$ so that the colors of its child-edges have the correct distribution conditioned on forbidding $\col{2}$:
if a child edge uses $\col{2}$, it is recolored to a uniformly random color $\col{1}$ not used on the other child edges of $r$.
This restores the correct distribution at $r$ and reduces to a forbidden-color change at the affected child, which is handled by \funcFixForbidden (The colors $\col{2}$ and $\col{1}$ define the bicolored path over which a flip will be applied down the tree).

When a non-root vertex $r$ becomes a root, its former parent-edge color $\col{1}$ stops being forbidden.
The procedure \funcChildToRoot restores the root distribution by inserting the color $\col{1}$ onto a uniformly random child edge with probability ${\nrchild}/{\pal}$, where $\nrchild=|\ch(r)|$ (see \eqref{eq:root-hit}). If the color $\col{1}$  was inserted, let $e = (r,x)$ be the edge that was chosen, and let $\col{2}$ be its former color. Now $\col{2}$ becomes forbidden for $x$, and this change of constraint is repaired by \funcFixForbidden. Again, from a global perspective, a bicolored path of colors $\col{1}$ and $\col{2}$ is now being flipped from $r$ downwards.
\fi

\iffullversion
\else
\begin{theorem}[Distribution invariant]
\label{theorem_distribution_invariant_high_level}
Against an oblivious adversary, after every update, the random coloring maintained by \distalg{} is distributed according to the top-down distribution $\mathcal{D}$ of the current forest.
\end{theorem}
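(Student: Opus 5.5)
Since the adversary is oblivious, the update sequence is fixed in advance, so I will argue by induction on the number of updates. The claim is that if the coloring before an update is distributed as $\mathcal D(F)$, then after the update it is distributed as $\mathcal D(F')$, where $F'$ is the updated forest. The base case is the empty forest, which is trivial. All randomness used in a step is fresh and independent of the past, so it suffices to show that each single operation maps $\mathcal D(F)$ to $\mathcal D(F')$. I work in the rooted setting, where $\mathcal D$ is the product over components of the top-down law. By the independence of components in $\mathcal D(F)$, I only need to track the components that are touched.

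\textbf{Key lemma: \funcFixForbidden{} is a measure-preserving bijection.} Fix a non-root vertex $v$, and suppose $T_v$ is distributed as $\mathcal D_{\col{1}}(T_v)$ with $\col{1}$ fixed. I will show that the map which swaps $\col{1}$ and $\col{2}$ along the maximal $(\col{1},\col{2})$-bicolored path starting below $v$ yields exactly $\mathcal D_{\col{2}}(T_v)$. The approach is to note that $\mathcal D_{\col{1}}(T_v)$ is uniform over its support. Indeed, its probability mass is $\prod_{u}\bigl(1/(\pal-1)_{\nrchild_u}\bigr)$, the product over vertices $u$ of the reciprocal of a falling factorial that depends only on the topology. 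So the mass does not depend on which colors are forbidden, and a bijection between the proper colorings of $T_v$ with forbidden color $\col{1}$ at $v$ and those with forbidden color $\col{2}$ at $v$ preserves the distribution. Swapping on the bicolored path is such a bijection, since it is an involution: applying the same swap again restores the original coloring. An equivalent route is induction on depth, using the transposition of $\col{1},\col{2}$ at the children of $v$.

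\textbf{Insertion of $(p,r)$.} Before the update, $T_r$ is distributed as $\mathcal D(T_r)$, independently of the tree containing $p$. The color $\col{2}$ of $(p,r)$ is uniform over $A(p)$, which is exactly the conditional law of a new child edge of $p$ given the other child edges and $p$'s parent edge. So the tree of $p$, extended by this edge, has the correct law. I then need the recolored $T_r$ to be distributed as $\mathcal D_{\col{2}}(T_r)$. At $r$: conditioned on $\col{2}$, the child colors of $r$ form a uniform injection into $[\pal]$. If $\col{2}$ is unused, the injection is uniform into $[\pal]\setminus\{\col{2}\}$. Otherwise, replacing $\col{2}$ by a uniform color from $A(r)$ again gives a uniform injection avoiding $\col{2}$; this I would verify by a direct counting check. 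Each child $u$ whose edge color is unchanged keeps its subtree law $\mathcal D_{\colorf(r,u)}$. The child $w$ whose edge changed from $\col{2}$ to $\col{1}$ gets its subtree law corrected by the key lemma.

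\textbf{Deletion.} The tree of $p$ loses a child edge, and marginalizing a uniform injection keeps the remaining edges uniform, so that tree is still distributed correctly. For $r$, conditional on the former parent color $\col{1}$, its child colors form a uniform injection into $[\pal]\setminus\{\col{1}\}$. With probability $\nrchild/\pal$, \funcChildToRoot{} places $\col{1}$ on a uniform child; this produces a uniform injection into $[\pal]$, by the same counting check (the uniform $\nrchild$-subset of $[\pal]$ contains $\col{1}$ with probability $\nrchild/\pal$). Then the key lemma again fixes the law of the affected subtree.

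The main obstacle I expect is the key lemma, and specifically the joint claim that the output has exactly the law $\mathcal D_{\col{2}}$ rather than just the right marginal at each vertex. The uniformity-over-support observation reduces this to the bijection argument. I would also need to handle conditioning carefully, since the new color $\col{1}$ at $r$ is random while the lemma is stated for a fixed $\col{1}$; I would handle this by conditioning on $\col{1}$ and on the colors at $r$. Finally, I would remark that the unrooted case is handled by fixing a rooting of each component, which is deferred to the unrooted algorithm.
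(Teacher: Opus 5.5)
Your proposal is correct. Its skeleton matches the paper's proof:
\begin{itemize}
\item induction over the fixed update sequence;
\item the argument at $p$ that a uniformly random unused color extends a uniform injection;
\item a counting check at $r$ for \texttt{RootToChild} and \texttt{ChildToRoot}; your branch-by-branch uniformity plus mixing is equivalent to the paper's direct computation of $\Pr[\text{output set}=X]$;
\item a lemma for the \texttt{FixForbidden} cascade, applied to the one child whose parent-edge color changed, after conditioning on the colors at $r$.
\end{itemize}

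Where you genuinely differ is the proof of that cascade lemma. The paper argues by induction on $|T_v|$. The local change at $v$ is a bijection between injections avoiding $\alpha$ and injections avoiding $\beta$. The child subtrees are conditionally independent given the colors at $v$, and the induction hypothesis converts $T_w$ from $\mathcal D_{\beta}$ to $\mathcal D_{\alpha}$.

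You instead observe that $\mathcal D_{\alpha}(T_v)$ gives mass $\prod_u 1/(\kappa-1)_{\ell_u}$ to every proper coloring of $T_v$ whose child edges at $v$ avoid $\alpha$. So it is uniform on that set, with a per-point mass that does not depend on $\alpha$. The entire cascade is then a single deterministic bijection onto the analogous set for $\beta$. The one point to verify is that \texttt{FixForbidden}$(v,\beta,\alpha)$ retraces the same path after the flip, so that the two maps are mutually inverse. This holds because off-path child edges at path vertices avoid $\{\alpha,\beta\}$, and the last path vertex has no child edge colored in $\{\alpha,\beta\}$ (by maximality and properness).

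What each route buys: yours obtains the joint law of the whole subtree in one stroke, with no nested induction. It is the same product computation the paper only uses later, in Lemma~\ref{lem:D-uniform}. The paper's local induction is structurally parallel to its other two lemmas and to the recoloring-probability computation in Lemma~\ref{lem:fixforbidden_edge_recolor_prob}.

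One caution on your closing sentence. For the unrooted version of the algorithm, ``fixing a rooting of each component'' does not suffice, because Algorithm~\ref{algorithm_randomized_main_unrooted} reroots a component before every insertion. What is needed is that $\mathcal D(T)$ does not depend on the choice of root. Your own mass formula gives this immediately: with the factor $1/(\kappa)_{\ell_r}$ at the root, the product equals $\frac{1}{\kappa}\prod_u \frac{(\kappa-\deg u)!}{(\kappa-1)!}$. That is exactly how the paper handles it in Lemma~\ref{lem:D-uniform}.
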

\fi

\iffullversion
\subparagraph{Distribution invariant.}

We analyze the algorithm against an \emph{oblivious adversary}: the update sequence is fixed in advance and is independent of the randomness of the algorithm.
Let $F_t$ be the rooted forest after $t$ updates, and let $C_t$ be the random coloring maintained by Algorithm~\ref{algorithm_randomized_main_rooted} at time $t$.
The goal of this subsection is to show that $C_t$ is distributed exactly as $\mathcal D(F_t)$ for every $t$.

The proof reduces to three local statements, one for each repair procedure.
Each statement is formulated for an entire subtree and therefore already accounts for the full downward cascade created by the recursive calls.

\begin{theorem}[Distribution invariant]
\label{theorem_distribution_invariant_rooted}
Let $C_t$ be the random coloring maintained by Algorithm~\ref{algorithm_randomized_main_rooted}. For every $t\ge 0$ we have $C_t\sim \mathcal D(F_t)$.
\end{theorem}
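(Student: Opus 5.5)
We prove $C_t\sim\mathcal D(F_t)$ by induction on $t$. The base case holds trivially since $F_0$ is empty. For the inductive step, we condition on $C_{t-1}\sim\mathcal D(F_{t-1})$. Since the adversary is oblivious, the update at time $t$ is fixed independently of $C_{t-1}$. Because $\mathcal D(F)$ is a product over tree components, and an update touches at most two components (insertion) or one (deletion), it suffices to verify the claim on the affected trees. The remaining components are untouched and independent of them.

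The core of the argument is three local lemmas, one per repair procedure, each stated for a whole subtree $T_v$ so that the recursive cascade is absorbed.
\begin{enumerate}
\item[(i)] \funcFixForbidden: if the coloring of $T_v$ is distributed as $\mathcal D_{\col{1}}(T_v)$, then after changing the parent color to $\col{2}$ and calling \funcFixForbidden$(v,\col{1},\col{2})$, the coloring of $T_v$ is distributed as $\mathcal D_{\col{2}}(T_v)$.
\item[(ii)] \funcRootToChild: from $\mathcal D(T_r)$ and a fixed $\col{2}$, the result is $\mathcal D_{\col{2}}(T_r)$.
\item[(iii)] \funcChildToRoot: from $\mathcal D_{\col{1}}(T_r)$, the result is $\mathcal D(T_r)$.
\end{enumerate}

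We prove each lemma by induction on the height of the subtree. For (i), the transposition $\sigma=(\col{1}\,\col{2})$ applied to the child colors of $v$ maps a uniform injection $\ch(v)\to[\pal]\setminus\{\col{1}\}$ bijectively onto a uniform injection into $[\pal]\setminus\{\col{2}\}$. The child whose edge color changes (if any) had parent color $\col{2}$ and now has $\col{1}$. Conditioned on the child colors, the subtrees are independent with laws $\mathcal D_{\colorf(v,u)}(T_u)$. By the inductive hypothesis, applied to the single modified child, that child's subtree becomes $\mathcal D_{\col{1}}(T_w)$, and the other children are unchanged. Hence the joint law matches $\mathcal D_{\col{2}}(T_v)$.

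For (ii), we show the child-color injection becomes uniform into $[\pal]\setminus\{\col{2}\}$. Condition on the set $S$ of colors used. If $\col{2}\notin S$, the injection is already uniform conditioned on avoiding $\col{2}$. Otherwise, replacing $\col{2}$ by a uniform color of $A(r)\setminus\{\col{2}\}$ (note that $A(r)$ here excludes the new parent edge's color) gives a uniform random $\nrchild$-subset avoiding $\col{2}$. This is a short counting check: each target subset $S'$ is hit with equal total probability, namely the probability of its $\nrchild$ preimages $S'\setminus\{x\}\cup\{\col{2}\}$ times $1/(\pal-\nrchild)$, plus the probability of $S'$ itself. The permutation stays uniform. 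Then (i) handles the modified child.

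For (iii), we run the reverse computation. Start from a uniform $\nrchild$-subset of $[\pal]\setminus\{\col{1}\}$. With probability $\nrchild/\pal$, replace a uniform element by $\col{1}$. The result is a uniform $\nrchild$-subset of $[\pal]$: subsets avoiding $\col{1}$ get probability $(1-\nrchild/\pal)/\binom{\pal-1}{\nrchild}=1/\binom{\pal}{\nrchild}$, and subsets containing $\col{1}$ get $\frac{\nrchild}{\pal}\cdot(\pal-\nrchild)\cdot\frac{1}{\nrchild}\cdot\frac{1}{\binom{\pal-1}{\nrchild}}=1/\binom{\pal}{\nrchild}$. Again (i) handles the child whose edge color changed.

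With the lemmas in hand, we assemble the step. For an insertion $(p,r)$, the tree of $p$ keeps its law: the new child edge is colored uniformly from $A(p)$, which extends a uniform injection on $\ch(p)$ to a uniform injection on $\ch(p)\cup\{r\}$ avoiding $p$'s parent color. Given that color $\col{2}$, the tree $T_r$ is, by independence and (ii), distributed as $\mathcal D_{\col{2}}(T_r)$. This is exactly the recursive definition of $\mathcal D$ on the merged tree. For a deletion, marginalizing out $r$'s edge from a uniform injection leaves the tree of $p$ with its correct law. Conditioned on $\colorf(p,r)=\col{1}$, $T_r$ is distributed as $\mathcal D_{\col{1}}(T_r)$ and independent of the rest, and (iii) yields $\mathcal D(T_r)$ independently.

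We expect the main obstacle to be the conditioning bookkeeping. The subtree statements must hold conditionally on the parent color, and we must use the independence of sibling subtrees given the child colors so that modifying one child's subtree preserves the product structure. The counting in (ii) is the only real computation.
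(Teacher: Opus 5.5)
Your proposal is correct and follows the paper's proof essentially step for step. It uses induction on $t$, the same three subtree-level lemmas for \textsc{FixForbidden}, \textsc{RootToChild}, and \textsc{ChildToRoot} (the first proved inductively via the $(\alpha\,\beta)$-swap bijection on child colors, the other two by the same subset counting followed by a reduction to the first), and the same assembly for insertions and deletions; the only cosmetic difference is that you do the \textsc{RootToChild} count unconditionally over both branches rather than conditioning on $\beta$ being a child color, which gives the same uniform probability $1/\binom{\kappa-1}{\ell}$.
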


\begin{lemma}[\textsc{FixForbidden}]\label{lem:fixforbidden-dist}
Let $v$ be a non-root vertex, and let $\col{1},\col{2}\in[\pal]$ with $\col{1}\neq \col{2}$.
Assume the current coloring restricted to the subtree $T_v$ is distributed as $\mathcal D_{\col{1}}(T_v)$.
After executing \funcFixForbidden{$v,\col{1},\col{2}$} (Algorithm~\ref{algorithm_randomized_helpers}), the resulting coloring restricted to $T_v$ is distributed as $\mathcal D_{\col{2}}(T_v)$.
\end{lemma}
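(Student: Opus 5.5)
We prove Lemma~\ref{lem:fixforbidden-dist} by strong induction on the height of $T_v$, following the recursive definition of $\mathcal D_{\col{1}}(T_v)$ in Definition~\ref{def:DF}. The key idea is that \funcFixForbidden{} acts at $v$ as a bijection between the supports of the child-edge distributions for the two forbidden colors, and this bijection preserves the uniform measure. The recursive call then falls under the induction hypothesis.

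\textbf{Base case.} If $v$ is a leaf, $T_v$ has no edges. Both distributions are trivial, and the procedure returns immediately.

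\textbf{Inductive step, local part at $v$.} Let $\nrchild=|\ch(v)|$. By Definition~\ref{def:uda}, the child-edge coloring of $v$ under $\mathcal D_{\col{1}}$ is a uniformly random injection $\phi:\ch(v)\to[\pal]\setminus\{\col{1}\}$. Define the map $\Psi$ on injections that replaces the value $\col{2}$ by $\col{1}$ wherever $\col{2}$ occurs (at most once) and leaves the injection unchanged otherwise. This is exactly what \funcFixForbidden{} does at $v$.

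$\Psi$ is a bijection from injections into $[\pal]\setminus\{\col{1}\}$ to injections into $[\pal]\setminus\{\col{2}\}$. Its inverse swaps $\col{1}$ back to $\col{2}$; equivalently, $\Psi$ is composition with the transposition $(\col{1}\,\col{2})$. Hence $\Psi(\phi)$ is a uniform injection into $[\pal]\setminus\{\col{2}\}$, which is the correct root-level law for $\mathcal D_{\col{2}}(T_v)$.

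\textbf{Inductive step, the subtrees.} Condition on $\phi$. By Definition~\ref{def:DF}, the subtrees $T_u$ for $u\in\ch(v)$ are independent, with $T_u\sim\mathcal D_{\phi(u)}(T_u)$.

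For children $u$ with $\phi(u)\ne\col{2}$, neither the edge nor the subtree is touched. Since $\Psi(\phi)(u)=\phi(u)$, each such subtree still has law $\mathcal D_{\Psi(\phi)(u)}(T_u)$, independently of the others.

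For the unique child $w$ with $\phi(w)=\col{2}$, if one exists, the procedure recolors $(v,w)$ to $\col{1}$ and calls \funcFixForbidden{$w,\col{2},\col{1}$}. Conditioned on $\phi$, the subtree $T_w$ has law $\mathcal D_{\col{2}}(T_w)$ and is independent of the other subtrees. Since $T_w$ has smaller height, the induction hypothesis (with the roles of the colors swapped) gives that after the call $T_w\sim\mathcal D_{\col{1}}(T_w)=\mathcal D_{\Psi(\phi)(w)}(T_w)$. This holds independently of the other subtrees, because the call only reads and writes $T_w$ and uses no outside randomness; the procedure is in fact deterministic.

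\textbf{Concluding.} The conditional law given $\Psi(\phi)$ now matches the recursive definition of $\mathcal D_{\col{2}}(T_v)$. Because $\Psi$ is a bijection, conditioning on $\phi$ is the same as conditioning on $\Psi(\phi)$, and $\Psi(\phi)$ is uniform on the right set of injections. Integrating over $\phi$ therefore yields $\mathcal D_{\col{2}}(T_v)$.

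The main care point is this last conditioning bookkeeping: the induction hypothesis must be applied to the conditional law of $T_w$ given $\phi$, and independence across the other subtrees must be preserved.
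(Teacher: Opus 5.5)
Your proof is correct and follows essentially the same route as the paper. Both argue by induction over the subtree (you on height, the paper on size), treat the local step at $v$ as a measure-preserving bijection between child-edge colorings avoiding $\alpha$ and those avoiding $\beta$ (your transposition $(\alpha\,\beta)$ makes this explicit), and apply the induction hypothesis, with the colors swapped, to $T_w$ while leaving the other subtrees untouched.
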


\begin{proof}
We prove the statement by induction on the size of $T_v$.
If $v$ has no children, the procedure makes no changes and $\mathcal D_{\col{1}}(T_v)=\mathcal D_{\col{2}}(T_v)$.

Assume $\nrchild=|\ch(v)|\ge 1$ and that the claim holds for all strict descendant subtrees.
Under $\mathcal D_{\col{1}}(T_v)$, the $\nrchild$ child edges of $v$ are uniformly colored from $[\pal]\setminus\{\col{1}\}$.
The procedure \funcFixForbidden changes these colors only as follows: if no child edge has color $\col{2}$ it does nothing, and otherwise it replaces the unique occurrence of $\col{2}$ by $\col{1}$.
Hence, it defines a bijection from the proper colorings of the child edges from $[\pal]\setminus\{\col{1}\}$ to the proper colorings of the child edges from $[\pal]\setminus\{\col{2}\}$.

Conditioned on the child-edge colors at $v$, Definition~\ref{def:DF} makes the child subtrees independent, and each subtree $T_u$ is distributed as $\mathcal D_{\colorf(v,u)}(T_u)$.
If no recoloring occurs at $v$, no parent-edge color changes and there is no recursive call, so the subtree distributions already match $\mathcal D_{\col{2}}(T_v)$.
Otherwise, let $w$ be the unique child with $\colorf(v,w)=\col{2}$ before recoloring; after recoloring, the parent-edge color of $w$ becomes $\col{1}$, while every other child keeps the same parent-edge color.
The recursive call \funcFixForbidden{$w,\col{2},\col{1}$} is then executed.
Before this call, $T_w$ is distributed as $\mathcal D_{\col{2}}(T_w)$; by the induction hypothesis applied to the strict subtree $T_w$, the call transforms it into $\mathcal D_{\col{1}}(T_w)$. 
All other child subtrees are unchanged.
Together with the previous paragraph, this is exactly the recursive definition of $\mathcal D_{\col{2}}(T_v)$.
\end{proof}

\begin{lemma}[\textsc{RootToChild}]\label{lem:roottochild-dist}
Let $r$ be a root vertex and fix $\col{2} \in[\pal]$.
Assume the current coloring restricted to the tree $T_r$ is distributed as $\mathcal D(T_r)$.
After executing \funcRootToChild{$r,\col{2}$} (Algorithm~\ref{algorithm_randomized_helpers}), the resulting coloring restricted to $T_r$ is distributed as $\mathcal D_{\col{2}}(T_r)$.
\end{lemma}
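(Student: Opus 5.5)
We follow the template of Lemma~\ref{lem:fixforbidden-dist}. The only new ingredient is the local step at the root $r$; everything below $r$ is handled by that lemma. By Definition~\ref{def:DF}, under $\mathcal D(T_r)$ the child edges of $r$ receive a uniformly random injective map $\phi:\ch(r)\to[\pal]$. Conditioned on $\phi$, each child subtree $T_u$ is independently distributed as $\mathcal D_{\phi(u)}(T_u)$. The target distribution $\mathcal D_{\col{2}}(T_r)$ has the same structure, except that $\phi$ must be uniform over injective maps into $[\pal]\setminus\{\col{2}\}$. So the proof has two parts: first show that the new map $\phi'$ produced by \funcRootToChild{} is uniform on injections avoiding $\col{2}$, and then show that, conditioned on $\phi'$, the subtrees have the right conditional laws.

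\textbf{Step 1 (root level).} Let $\nrchild=|\ch(r)|$. Since the root can use all $\pal$ colors, we have $\nrchild\le\Delta\le\pal-1$ whenever $r$ can receive a parent, so $A(r)$ is nonempty at the recoloring step. I would check uniformity by a counting argument. Fix a target injection $\psi$ into $[\pal]\setminus\{\col{2}\}$. The maps $\phi$ that lead to $\psi$ are of two kinds. The first is $\phi=\psi$ itself, which happens with probability $1/N$, where $N=\pal^{\underline{\nrchild}}$ is the number of injections into $[\pal]$. The second kind consists of the $\nrchild$ maps obtained from $\psi$ by choosing a child $w$ and replacing $\psi(w)$ with $\col{2}$. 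Each of these occurs with probability $1/N$, and is then sent to $\psi$ with probability $1/|A(r)|$. Here $|A(r)|=\pal-\nrchild+1$: this counts the colors not used by the other $\nrchild-1$ children and different from $\col{2}$, plus the old color, so it is $\pal-\nrchild$ free colors excluding $\col{2}$ together with $\psi(w)$. One needs to verify which set the algorithm samples from. The comment in the algorithm (and the paper's description) says "a uniformly random color not used on the other child edges", excluding $\col{2}$. Under that reading the total probability is $\frac{1}{N}\bigl(1+\frac{\nrchild}{\pal-\nrchild}\bigr)=\frac{\pal}{N(\pal-\nrchild)}$. This equals $1/(\pal-1)^{\underline{\nrchild}}$, as required. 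The main obstacle is pinning down exactly what $A(r)$ means at the moment of sampling: $\col{2}$ must be excluded (it is now the parent color), and the freed color $\col{2}$ of $(r,w)$ must not be counted twice. I would settle this first and make the count match.

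\textbf{Step 2 (subtrees).} Condition on the old map $\phi$ and on the random choice made at $r$; together these determine $\phi'$. Every child $u\ne w$ keeps its parent-edge color, so $T_u$ remains distributed as $\mathcal D_{\phi'(u)}(T_u)$, independently of the rest. For the recolored child $w$, the subtree $T_w$ was distributed as $\mathcal D_{\col{2}}(T_w)$. This law is independent of the root-level random choice of $\col{1}$, so Lemma~\ref{lem:fixforbidden-dist} applied to \funcFixForbidden{$w,\col{2},\col{1}$} turns it into $\mathcal D_{\col{1}}(T_w)=\mathcal D_{\phi'(w)}(T_w)$. The subtrees are untouched by each other's repairs, so they remain conditionally independent.

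\textbf{Conclusion.} Because these conditional laws depend only on $\phi'$, we can average over the $\phi$ and root choices consistent with a given $\phi'$ and get the same conditional law. Combined with Step 1, this reproduces exactly the recursive definition of $\mathcal D_{\col{2}}(T_r)$.
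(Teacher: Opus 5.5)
Your proposal is correct and follows the paper's proof: show that the child-edge coloring at $r$ becomes uniform over injections into $[\pal]\setminus\{\col{2}\}$, then apply Lemma~\ref{lem:fixforbidden-dist} to the recolored child's subtree; your single count over injections just merges the paper's two conditional cases ($\col{2}$ used or not used at $r$) into one computation. The ambiguity you flag resolves as you assumed, with two small slips to fix: at sampling time $r$ sees $\col{2}$ on both the new parent edge and $(r,w)$, plus the other $\nrchild-1$ child colors, so $|A(r)|=\pal-\nrchild$ and your intermediate count $\pal-\nrchild+1$ should be dropped; and $A(r)\neq\emptyset$ holds because $r$ gains a parent edge, so $\nrchild\le\Delta-1$, whereas your chain $\nrchild\le\Delta\le\pal-1$ fails when $\extra=0$.
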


\begin{proof}
Let $\nrchild=|\ch(r)|$.
Under $\mathcal D(T_r)$, the $\nrchild$ child edges of $r$ are uniformly colored from $[\pal]$.
Let $\colset{2}$ be the set of these $\nrchild$ colors.

If $\col{2}\notin \colset{2}$, the procedure does nothing at $r$.
Conditioned on $\col{2}\notin \colset{2}$, the set $\colset{2}$ is a uniformly random $\nrchild$-subset of $[\pal]\setminus\{\col{2}\}$, and conditioned on $\colset{2}$ the assignment of $\colset{2}$ to the child edges is a uniformly random permutation.

If $\col{2}\in \colset{2}$, write $\colset{2}=\colset{3}\cup\{\col{2}\}$ where $\colset{3}$ is an $(\nrchild-1)$-subset of $[\pal]\setminus\{\col{2}\}$.
Under a uniform coloring from $[\pal]$, the set $\colset{3}$ is uniformly random among $(\nrchild-1)$-subsets of $[\pal]\setminus\{\col{2}\}$, the unique edge colored $\col{2}$ is a uniformly random child edge, and the colors of $\colset{3}$ are assigned to the other $\nrchild-1$ child edges by a uniformly random permutation.
The procedure chooses $\col{1}$ uniformly from $[\pal]\setminus \colset{2}$, which equals $[\pal]\setminus\left( \colset{3}\cup\{\col{2}\} \right) $ and has size $\pal - \nrchild$, and replaces $\col{2}$ by $\col{1}$ on that uniformly random child edge.
Thus, the new set of child-edge colors is $\colset{3}\cup\{\col{1}\}$.
For any fixed $\nrchild$-subset $\colset{1}\subseteq [\pal]\setminus\{\col{2}\}$,
\begin{align*}
 \Pr[\colset{3}\cup\{\col{1}\}=\colset{1} \mid \col{2}\in \colset{2}]
 & =\sum_{\col{1} \in \colset{1}}\Pr[\colset{3}=\colset{1}\setminus\{\col{1}\}]\cdot \Pr[\col{1} \text{ is chosen}\mid \colset{3}] \\
 &= \nrchild \cdot \frac{1}{\binom{\pal - 1}{\nrchild - 1}}\cdot \frac{1}{\pal - \nrchild} =\frac{1}{\binom{\pal-1}{\nrchild}},   
\end{align*}
so $\colset{3}\cup\{\col{1}\}$ is uniform over $\nrchild$-subsets of $[\pal]\setminus\{\col{2}\}$.
Moreover, conditioned on $\colset{3}\cup\{\col{1}\}=\colset{1}$, the assignment of $\colset{1}$ to the $\nrchild$ child edges is a uniformly random permutation: the edge receiving the new color $\col{1}$ is uniformly random (it is the former $\col{2}$-colored edge $(r,w)$), and the remaining $\nrchild-1$ colors are uniformly permuted on the remaining edges.

Therefore, after the local step of \funcRootToChild at $r$, the child edges of $r$ are uniformly colored from $[\pal]\setminus\{\col{2}\}$, as required by $\mathcal D_{\col{2}}$.

Finally, if the procedure recolors $(r,w)$ from $\col{2}$ to $\col{1}$, then the parent-edge color of $w$ changes from $\col{2}$ to $\col{1}$ and the procedure calls \funcFixForbidden{$w,\col{2},\col{1}$}.
Conditioned on the colors on child edges of $r$, by Definition~\ref{def:DF} the child subtree colorings are independent and each $T_u$ is distributed as $\mathcal D_{\colorf(r,u)}(T_u)$.
Thus, before the call, $T_w$ is distributed as $\mathcal D_{\col{2}}(T_w)$, and by Lemma~\ref{lem:fixforbidden-dist} the call transforms it into $\mathcal D_{\col{1}}(T_w)$. All other child subtrees are unchanged.
This matches the recursive definition of $\mathcal D_{\col{2}}(T_r)$.
\end{proof}

\begin{lemma}[\textsc{ChildToRoot}]\label{lem:childtoroot-dist}
Let $r$ be a non-root vertex, and let $\col{1}=\colorf(\parent(r),r)$.
Assume the current coloring restricted to the subtree $T_r$ is distributed as $\mathcal D_{\col{1}}(T_r)$.
After executing \funcChildToRoot{$r,\col{1}$} (Algorithm~\ref{algorithm_randomized_helpers}), the resulting coloring restricted to $T_r$ is distributed as $\mathcal D(T_r)$.
\end{lemma}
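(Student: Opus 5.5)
The proof mirrors Lemma~\ref{lem:roottochild-dist} in reverse: first show that the local step at $r$ turns the child-edge colors into a uniform coloring from $[\pal]$, then handle the cascade with Lemma~\ref{lem:fixforbidden-dist}. Let $\nrchild=|\ch(r)|$ and let $\colset{2}$ be the set of child-edge colors of $r$. Under $\mathcal D_{\col{1}}(T_r)$, $\colset{2}$ is a uniformly random $\nrchild$-subset of $[\pal]\setminus\{\col{1}\}$, assigned to the child edges by a uniformly random permutation. The target is that $\colset{2}$ becomes a uniformly random $\nrchild$-subset of $[\pal]$, again with a uniform assignment.

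\textbf{The local step at $r$.} Under the target distribution, $\col{1}$ appears on some child edge with probability $\nrchild/\pal$, by \eqref{eq:root-hit}. This is exactly the probability with which \funcChildToRoot inserts $\col{1}$.
\begin{itemize}
\item[(a)] Conditioned on not inserting, the child colors are unchanged. They form a uniform $\nrchild$-subset of $[\pal]\setminus\{\col{1}\}$, which is exactly the target conditioned on $\col{1}\notin\colset{2}$.
\item[(b)] Conditioned on inserting, a uniformly random child $w$ has its color $\col{2}$ replaced by $\col{1}$. The new set is $(\colset{2}\setminus\{\col{2}\})\cup\{\col{1}\}$. Here $\colset{2}\setminus\{\col{2}\}$ is a uniform $(\nrchild-1)$-subset of $[\pal]\setminus\{\col{1}\}$: each such $\colset{3}$ arises from $\pal-\nrchild$ choices of $\col{2}$, each with probability $\frac{1}{\binom{\pal-1}{\nrchild}}\cdot\frac{1}{\nrchild}$. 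This total does not depend on $\colset{3}$.
\item[(c)] For the assignment, the edge receiving $\col{1}$ is uniform because $w$ is chosen uniformly and independently of the colors. The other $\nrchild-1$ colors remain uniformly permuted on the remaining edges. This matches the target conditioned on $\col{1}\in\colset{2}$.
\end{itemize}
Mixing (a) and (b) with weights $(\pal-\nrchild)/\pal$ and $\nrchild/\pal$ gives exactly the uniform coloring from $[\pal]$.

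\textbf{The subtrees.} By Definition~\ref{def:DF}, conditioned on the original child colors, each $T_u$ is independently distributed as $\mathcal D_{\colorf(r,u)}(T_u)$. The randomness of the procedure (the coin and the choice of $w$) is independent of the subtree colorings below $r$. Hence, after the local step, every child other than $w$ keeps its parent-edge color and its subtree distribution. For $w$, the subtree $T_w$ is distributed as $\mathcal D_{\col{2}}(T_w)$ while its parent-edge color is now $\col{1}$. The recursive call \funcFixForbidden{$w,\col{2},\col{1}$} then transforms $T_w$ into $\mathcal D_{\col{1}}(T_w)$ by Lemma~\ref{lem:fixforbidden-dist}, and it does not touch the other subtrees. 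Combined with the uniform root coloring, this is exactly the recursive definition of $\mathcal D(T_r)$.

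\textbf{Main obstacle.} The delicate point is the conditioning. Lemma~\ref{lem:fixforbidden-dist} must be applied conditionally on the new child-edge colors at $r$. So we need that, given those new colors and the identity of $w$, the subtree $T_w$ is still distributed as $\mathcal D_{\col{2}}(T_w)$ and independent of the other subtrees. This holds because $w$ and the coin depend only on the local randomness and the child colors, not on the subtree colorings. I would state this explicitly, as in the proof of Lemma~\ref{lem:roottochild-dist}.
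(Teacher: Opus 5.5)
Your proposal is correct and follows the paper's proof essentially step for step. The paper likewise verifies uniformity of the new child-edge coloring at $r$ and then handles $T_w$ via Lemma~\ref{lem:fixforbidden-dist}. The only difference is organization: the paper computes $\Pr[\text{output set}=X]$ separately for $\alpha\notin X$ and $\alpha\in X$, while you argue by mixing over the coin. The arithmetic $\frac{\ell}{\kappa}\cdot(\kappa-\ell)\cdot\binom{\kappa-1}{\ell}^{-1}\cdot\frac{1}{\ell}$ is the same as in your case~(b).

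One small precision for your ``main obstacle'' paragraph: condition also on the old color $\beta$ of $(r,w)$. Given only the new colors and $w$, the subtree $T_w$ is a mixture of $\mathcal D_{\beta}(T_w)$ over $\beta$, not a single $\mathcal D_{\beta}(T_w)$. The lemma then yields $\mathcal D_{\alpha}(T_w)$ for every $\beta$, so the dependence on $\beta$ disappears.
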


\begin{proof}
Let $\nrchild=|\ch(r)|$.
Under $\mathcal D_{\col{1}}(T_r)$, the $\nrchild$ child edges of $r$ are uniformly colored from $[\pal]\setminus\{\col{1}\}$.
Let $\colset{2}$ be the set of these $\nrchild$ colors.

We first show that after the local step at $r$, the set of colors on child edges of $r$ is a uniformly random $\nrchild$-subset of $[\pal]$, and conditioned on this set the assignment to the $\nrchild$ child edges is a uniformly random permutation.

Fix any $\nrchild$-subset $\colset{1}\subseteq [\pal]$. If $\col{1}\notin \colset{1}$, then \funcChildToRoot can output the set $\colset{1}$ only by taking the do-nothing branch (probability $(\pal-\nrchild)/{\pal}$) and having $\colset{2}=\colset{1}$ (probability $1/{\binom{\pal-1}{\nrchild}}$). Hence:
\[
\Pr[\text{output set}=\colset{1}]
=\frac{\pal-\nrchild}{\pal}\cdot \frac{1}{\binom{\pal-1}{\nrchild}}
=\frac{1}{\binom{\pal}{\nrchild}}.
\]

If $\col{1}\in \colset{1}$, write $\colset{1}=\colset{3}\cup\{\col{1}\}$ where $|\colset{3}|=\nrchild-1$.
The procedure outputs $\colset{1}$ only in the recoloring branch (probability ${\nrchild}/{\pal}$). This branch chooses a uniformly random child edge, which is equivalent to choosing a uniformly random element of $\colset{2}$ to remove (since the assignment of $\colset{2}$ to child edges is a uniformly random permutation).
Thus the output set equals $\colset{1}$ exactly when $\colset{2}=\colset{3}\cup\{y\}$ for some $y\in [\pal]\setminus(\colset{3}\cup\{\col{1}\})$ and the removed element is $y$.
There are $\pal-\nrchild$ choices for such $y$, and for each one, $ \Pr[\colset{2}=\colset{3}\cup\{y\}]={1}/{\binom{\pal-1}{\nrchild}}$ and $\Pr[\text{remove }y\mid \colset{2}]={1}/{\nrchild}$.
Therefore:
\[
\Pr[\text{output set}=\colset{1}]
=\frac{\nrchild}{\pal}\cdot (\pal-\nrchild)\cdot \frac{1}{\binom{\pal-1}{\nrchild}}\cdot \frac{1}{\nrchild}
=\frac{\pal-\nrchild}{\pal}\cdot \frac{1}{\binom{\pal-1}{\nrchild}}
=\frac{1}{\binom{\pal}{\nrchild}}.
\]
So in all cases the output set is uniform over $\nrchild$-subsets of $[\pal]$.
Moreover, conditioned on the output set, the assignment to child edges is a uniformly random permutation: in the do-nothing branch nothing changes, and in the recoloring branch the edge that receives $\col{1}$ is chosen uniformly among the $\nrchild$ child edges.

Finally, if a recoloring occurs, let $w$ be the chosen child and let $\col{2}$ be the old color of $(r,w)$.
The procedure calls \funcFixForbidden{$w,\col{2},\col{1}$}.
Conditioned on the colors on child edges of $r$, by Definition~\ref{def:DF} the child subtree colorings are independent and each $T_u$ is distributed as $\mathcal D_{\colorf(r,u)}(T_u)$.
Thus before the call, $T_w$ is distributed as $\mathcal D_{\col{2}}(T_w)$, and by Lemma~\ref{lem:fixforbidden-dist} the call transforms it into $\mathcal D_{\col{1}}(T_w)$. All other child subtrees are unchanged.
This matches the recursive definition of $\mathcal D(T_r)$.
\end{proof}

\begin{proof}[Proof of Theorem~\ref{theorem_distribution_invariant_rooted}]
We prove by induction on $t$ that $C_t\sim \mathcal D(F_t)$.

The base case $t=0$ is the empty forest with the empty coloring. For the inductive step, assume that $C_t\sim \mathcal D(F_t)$ and consider the update made at $t+1$.

\smallskip
\noindent
\emph{Insertion.}
Suppose the update attaches a root $r$ as a child of a vertex $p$ in a different component, and let $\col{2}=\colorf(p,r)$ be the color chosen for the new edge.
Under $\mathcal D(F_t)$ the colorings of different components are independent, so the current coloring of $T_r$ is independent of the choice of $\col{2}$.

Consider the endpoint $p$.
Under $\mathcal D(F_t)$, the child edges of $p$ are uniformly colored from the allowed palette at $p$ (either $[\pal]$ if $p$ is a root, or $[\pal]\setminus\{\colorf(\parent(p),p)\}$ otherwise).
Algorithm~\ref{algorithm_randomized_main_rooted} chooses $\col{2}$ uniformly from the colors in this palette that are not used on the existing child edges of $p$.
This extends the current uniform coloring on the old child edges by one additional child edge using a uniformly random unused color, which yields a uniform coloring on the new set of child edges of $p$ from the allowed colors.

At $r$, the update changes $r$ from a root to a non-root with parent-edge color $\col{2}$.
By Lemma~\ref{lem:roottochild-dist}, the call \funcRootToChild{$r,\col{2}$} transforms the coloring on $T_r$ from $\mathcal D(T_r)$ to $\mathcal D_{\col{2}}(T_r)$, as required.
All other vertices are unchanged.
Therefore the resulting coloring is distributed as $\mathcal D(F_{t+1})$.

\smallskip
\noindent
\emph{Deletion.}
Suppose the update deletes an edge $(p,r)$ where $p=\parent(r)$, and let $\col{1}=\colorf(p,r)$ be its color before deletion.
Removing this edge deletes one child edge of $p$.
Under $\mathcal D(F_t)$, the child edges of $p$ are uniformly colored from the same allowed palette as above.
Restricting a uniform coloring to a subset of the edges preserves uniformity, hence after deleting $(p,r)$ the remaining child edges of $p$ are still uniformly colored from that palette.

At $r$, the update changes $r$ from a non-root with parent-edge color $\col{1}$, to a root.
By Lemma~\ref{lem:childtoroot-dist}, the call \funcChildToRoot{$r,\col{1}$} transforms the coloring on $T_r$ from $\mathcal D_{\col{1}}(T_r)$ to $\mathcal D(T_r)$, as required.
All other vertices are unchanged.
Therefore the resulting coloring is distributed as $\mathcal D(F_{t+1})$.
\end{proof}

\fi 

\iffullversion
\subsubsection{General (unrooted) Forests via Rerooting}
\else
\vspace{-0.2cm}
\paragraph{General (unrooted) Forests via Rerooting}
\fi

\iffullversion
\else

To extend the algorithm to general (unrooted) forests, we show that this top-down distribution  $\mathcal{D}(F)$ is, in fact, the uniform distribution over all proper $\pal$-edge-coloring of each tree in $F$. Thus the top-down view from a root is only a convenient way to describe the uniform distribution. This means that we can reroot any tree at any vertex and still view the coloring as distributed according to the same top-down distribution from the new root (rerooting is conceptual, no actual work is done to reroot the tree). Therefore, by rerooting appropriately before each update, we can use the same maintenance algorithm. 

\fi

\iffullversion
We now extend the algorithm to the standard model for (unrooted) forests, where an insertion adds an edge between two arbitrary vertices in different trees (both may be non-roots), and a deletion removes an existing edge.
The key point is that the distribution $\mathcal D$ from Definition~\ref{def:DF} is not merely a convenient rooted distribution: On a tree it is exactly the \emph{uniform} distribution over all proper $\pal$-edge-colorings.
Consequently, the maintained distribution does not depend on which vertex is chosen as the root of a component, and we may reroot components freely (without recoloring) in order to interpret every update as a rooted update handled by Algorithm~\ref{algorithm_randomized_main_rooted}. The revised unrooted algorithm is given in Algorithm~\ref{algorithm_randomized_main_unrooted}, note that  it is almost identical to Algorithm~\ref{algorithm_randomized_main_rooted}.

\begin{lemma}[Uniformity of $\mathcal D$]\label{lem:D-uniform}
Fix $\pal \geq \Delta$, and let $T$ be a rooted tree. The distribution $\mathcal D(T)$ (Definition~\ref{def:DF}) is the uniform distribution over all proper edge-colorings of $T$ with colors $[\pal]$.
\end{lemma}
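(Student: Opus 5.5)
The plan is to show that every proper $\pal$-edge-coloring of $T$ receives the same probability under $\mathcal D(T)$. For a vertex $v$ with $\ell_v=|\ch(v)|$ children, count the number of ways its child edges can be colored. At the root $r$ the child edges are uniformly colored from $[\pal]$. By Definition~\ref{def:uda}, every injective assignment of colors to $\ch(r)$ then has probability $\frac{(\pal-\ell_r)!}{\pal!}$, which is one over the number of injections $\ch(r)\to[\pal]$. At a non-root vertex $v$ whose parent edge has color $\col{1}$, the child edges are uniformly colored from $[\pal]\setminus\{\col{1}\}$. So conditioned on $\col{1}$, each injection $\ch(v)\to[\pal]\setminus\{\col{1}\}$ has probability $\frac{(\pal-1-\ell_v)!}{(\pal-1)!}$. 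This value depends only on $\ell_v$ and not on $\col{1}$, and this is the key point. The assumption $\pal\ge\Delta$ guarantees $\ell_r\le\pal$ and $\ell_v\le\Delta-1\le\pal-1$, so all these quantities are well defined and positive.

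Next, fix any proper coloring $\phi$ of $T$. The recursive definition (Definition~\ref{def:DF}) makes the choices at different vertices conditionally independent given the colors of their parent edges. Therefore, by the chain rule applied top-down (formally, by induction on the depth of $T$ or on the size of subtrees, as in the proof of Lemma~\ref{lem:fixforbidden-dist}),
\[
\Pr_{\mathcal D(T)}[\phi] \;=\; \frac{(\pal-\ell_r)!}{\pal!}\prod_{v\neq r}\frac{(\pal-1-\ell_v)!}{(\pal-1)!}.
\]
Here I use that $\phi$ is proper. At the root, $\phi$ restricted to the child edges is an injection into $[\pal]$. At each non-root $v$, $\phi$ restricted to the child edges of $v$ is an injection avoiding the color of $v$'s parent edge. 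Conversely, every sample of $\mathcal D(T)$ is proper, since sibling edges get distinct colors and each child edge avoids the color of its parent edge. So the support of $\mathcal D(T)$ is exactly the set of proper colorings.

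The displayed product does not depend on $\phi$, so $\mathcal D(T)$ is uniform on its support, which is the set of all proper colorings. As a sanity check, the reciprocal of the product equals $\pal^{\underline{\ell_r}}\prod_{v\ne r}(\pal-1)^{\underline{\ell_v}}$, and this is exactly the count of proper colorings obtained by coloring greedily top-down. The only delicate step is justifying the product formula, which requires the conditional-independence structure of the recursion. I would do this by induction on subtrees: for the tree rooted at $v$ with parent color $\col{1}$, show that $\Pr_{\mathcal D_{\col{1}}(T_v)}[\phi|_{T_v}]$ equals the product over the vertices of $T_v$. The step factorizes over the children of $v$ by independence, and the claim for $\mathcal D(T)$ follows by the same argument at the root.
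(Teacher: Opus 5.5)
Your proposal is correct and follows essentially the same route as the paper. You fix a proper coloring and write its probability as the product of $\frac{(\pal-\ell_r)!}{\pal!}$ at the root and $\frac{(\pal-1-\ell_v)!}{(\pal-1)!}$ at every other vertex, then observe that this depends only on the shape of the tree and not on the coloring. Your explicit checks that the support is exactly the set of proper colorings and that $\pal\ge\Delta$ keeps every factor well defined are details the paper leaves implicit; conversely, the paper also rewrites the product as $\frac{1}{\pal}\prod_v\frac{(\pal-\deg(v))!}{(\pal-1)!}$ to show it does not depend on the choice of root.
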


\begin{proof}
Let $r$ be the root of $T$.
For a vertex $v$, denote $\nrchild(v)=|\ch(v)|$. 

Fix any proper edge-coloring $\sigma$ of $T$ with colors $[\pal]$.
We compute the probability of producing $\sigma$ under $\mathcal D(T)$.
In the top--down process, each vertex $v$ produces a uniform coloring of its child edges from an allowed palette:
$[\pal]$ at the root, and $[\pal]\setminus\{\col{1}\}$ at a non-root vertex whose parent-edge color is $\col{1}$.
Therefore the probability of realizing $\sigma$ is the product, over all vertices, of the probabilities that each vertex produces the child-edge coloring specified by $\sigma$.

If $v$ is a root, then $\mathcal D(T)$ assigns to its $\nrchild(v)$ child edges a uniform coloring from $[\pal]$.
The number of such assignments is $\frac{\pal!}{(\pal-\nrchild(v))!}$, hence the probability that $v$ produces the assignment specified by $\sigma$ is $\frac{(\pal-\nrchild(v))!}{\pal!}$.
If $v$ is not a root, then the parent-edge color is fixed to some $\col{1}$, and $\mathcal D(T)$ assigns to the $\nrchild(v)$ child edges a uniform coloring from $[\pal]\setminus\{\col{1}\}$.
The number of such colorings is $\frac{(\pal-1)!}{(\pal-1-\nrchild(v))!}$, hence the probability that $v$ produces the child-edge coloring specified by $\sigma$ is $\frac{(\pal-1-\nrchild(v))!}{(\pal-1)!}$.
Multiplying over all vertices, we get the probability
\[
\frac{(\pal-\nrchild(r))!}{\pal!}\cdot
\prod_{v\neq r}
\frac{(\pal-1-\nrchild(v))!}{(\pal-1)!} = \frac{1}{\pal}\cdot \prod_{v}
\frac{(\pal-\deg(v))!}{(\pal-1)!}, \]
where the last equality follows from the fact that $\deg(r) = \ell(r)$ and $\deg(v) = \ell(v)+1$ for all $v \neq r$. This expression depends only on the degrees and is, in particular, independent of $\sigma$.\footnote{Note that it is also invariant to the choice of root $r$.} Therefore, every proper edge-coloring has the same probability, i.e., $\mathcal D(T)$ is uniform.
\end{proof}

We maintain a uniform random proper $\pal$-edge-coloring of an unrooted forest.
The algorithm reroots only to apply the update in the rooted model. Rerooting does not change any edge colors and incurs zero recourse.

\begin{algorithm}[t]
\DontPrintSemicolon
    \KwIn{
         A sequence of edge updates over a forest $F$ with a maximum degree $\Delta$.
    }

    \KwOut{
        Maintains a proper coloring of $F$ after each update.
    }

    \SetKwProg{Fn}{Function}{:}{}

    \Fn{\funcUpdateForestColored{Forest $F$, edge $(u,v)$, insert/delete}}{
        \uIf{insert}{
            Let $T_u$, $T_v$ be the components containing $u$ and $v$, respectively.\;
            
            \uIf{$|E(T_v)| \le \Delta$ \textbf{and} $|E(T_u)| \leq \Delta$}{
            w.l.o.g. let $v$ and $u$ be such that $\deg(v) \le \deg(u)$.
            }
            \uElse{
            w.l.o.g. let $v$ and $u$ be such that $|E(T_v)| \le |E(T_u)|$.
            }
            
            Reroot $T_v$ at $v$, and let $u = \parent(v)$.\;
            Insert $(u,v)$ and choose $\colorf(u,v)$ uniformly among $A(u)$.\;
            Call \funcRootToChildColored{$v,\colorf(u,v)$}\;
          }
          \uIf{delete}{
            w.l.o.g. assume that $u = \parent(v)$ by the current rooting of their tree.\;

            Let $\col{1} \leftarrow \colorf(u,v)$ and delete the edge $(u,v)$.\;
            Call \funcChildToRootColored{$v,\col{1}$}.\;
          }
    }
    
    \caption{\distalg{}: Uniform Edge Coloring for Unrooted Forests (calls Algorithm~\ref{algorithm_randomized_helpers})}\label{algorithm_randomized_main_unrooted}
\end{algorithm}

\begin{theorem}[Uniformity invariant for unrooted forests]
\label{theorem_distribution_invariant_unrooted}
For every $t\ge 0$ the random coloring maintained by Algorithm~\ref{algorithm_randomized_main_unrooted}
is distributed uniformly over all proper edge-colorings of the current (unrooted) forest with colors $[\pal]$.
\end{theorem}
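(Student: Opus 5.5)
We prove the statement by induction on $t$, with the hypothesis that $C_t$ is uniform over all proper $\pal$-edge-colorings of $F_t$. By Lemma~\ref{lem:D-uniform}, this is the same as saying that for \emph{every} choice of roots for the components of $F_t$, $C_t \sim \mathcal D(F_t)$ with respect to that rooting. Two facts make this usable. First, the uniform distribution over proper colorings of a forest is the product of the uniform distributions over its trees, because properness is a per-component constraint. Second, by Lemma~\ref{lem:D-uniform}, $\mathcal D(T)$ does not depend on which vertex is chosen as the root. Consequently, the conceptual rerooting step of Algorithm~\ref{algorithm_randomized_main_unrooted} changes no colors and preserves the hypothesis: after rerooting, the coloring is distributed as $\mathcal D(F_t)$ under the new rooting. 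The base case $t=0$ is the empty forest with its unique empty coloring.

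\emph{Insertion of $(u,v)$.} Once the algorithm has fixed which endpoint plays the role of $v$ and rerooted $T_v$ at $v$, we also root $T_u$ at an arbitrary vertex. The size- and degree-based choice of orientation is made from the topology alone, which the oblivious adversary fixes in advance, so it is independent of the coloring. Under the chosen rooting, $C_t \sim \mathcal D(F_t)$, and the update is exactly a rooted insertion attaching root $v$ as a child of $u$. We then repeat the insertion case of the proof of Theorem~\ref{theorem_distribution_invariant_rooted}. At $u$, choosing $\colorf(u,v)$ uniformly from $A(u)$ extends a uniform injective coloring of $u$'s child edges to one more child edge. At $v$, Lemma~\ref{lem:roottochild-dist} turns $\mathcal D(T_v)$ into $\mathcal D_{\colorf(u,v)}(T_v)$. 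The same argument also covers the case where $u$ is the root of $T_u$, since the allowed palette at $u$ is then all of $[\pal]$. The result is $\mathcal D(F_{t+1})$ for the induced rooting, which is uniform by Lemma~\ref{lem:D-uniform}.

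\emph{Deletion of $(u,v)$.} We fix the current rooting and label the endpoints so that $u=\parent(v)$. The deletion case of Theorem~\ref{theorem_distribution_invariant_rooted} then applies verbatim: restricting a uniform injective coloring of $u$'s child edges to a subset of them keeps it uniform, and Lemma~\ref{lem:childtoroot-dist} turns $\mathcal D_{\col{1}}(T_v)$ into $\mathcal D(T_v)$. All other subtrees are untouched.

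The main subtlety is making sure that the rooting used in the analysis is not correlated with the random coloring. If the rooting depended on the coloring, the statement "$C_t \sim \mathcal D$ under this rooting" could fail. This concern disappears because (i) the invariant we carry is uniformity, which holds for every rooting simultaneously, and (ii) every choice of root or orientation the algorithm makes depends only on the topology and the update sequence. It is enough to apply the rooted lemmas conditioned on that deterministic rooting.
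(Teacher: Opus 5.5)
Your proposal is correct and follows essentially the same route as the paper. Both argue by induction on $t$: use Lemma~\ref{lem:D-uniform} to view the uniform coloring as $\mathcal D(F_t)$ under the algorithm's (re)rooting, apply the rooted insertion/deletion step (via Lemmas~\ref{lem:roottochild-dist} and~\ref{lem:childtoroot-dist}), and return to uniformity with Lemma~\ref{lem:D-uniform}; your explicit remark that the rooting and orientation choices depend only on the topology, and so are independent of the random coloring, is what justifies conditioning on the rooting, a point the paper leaves implicit.
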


\begin{proof}
By induction on $t$.
The base case is trivial.
Assume the claim holds at time $t$, so the current coloring is uniform over proper colorings of the current (unrooted) forest.

Algorithm~\ref{algorithm_randomized_main_unrooted} may reroot one component (upon insertion), changing only parent/child relations.
By Lemma~\ref{lem:D-uniform}, for any choice of roots, the same uniform distribution can be viewed as the rooted distribution $\mathcal D(F_t)$.
The algorithm then performs exactly the same recoloring steps as Algorithm~\ref{algorithm_randomized_main_rooted}.
By Theorem~\ref{theorem_distribution_invariant_rooted}, after these steps the resulting rooted coloring is distributed as $\mathcal D(F_{t+1})$ for the new rooted forest representation, which by Lemma~\ref{lem:D-uniform} is the uniform distribution over all proper $\pal$-edge-colorings of the updated forest.
\end{proof}

Now that we have established that the randomized algorithm maintains the uniform distribution over proper edge colorings, we can proceed to analyze its recourse. 
\fi

\iffullversion
\subsection{Recourse Analysis}
\else
\subsubsection{Recourse Analysis (Incremental and Fully Dynamic Forests)}
\fi

\iffullversion
\else
Maintaining the distribution $\mathcal{D}$ allows us to compute exact recoloring probabilities, a fundamental component of our analysis. Suppose a repair procedure starts at a vertex $r$, and let $e$ be an edge in the subtree of $r$ at depth $d$. For the repair to reach $e$, it must follow the unique path from $r$ to $e$. This occurs only if each of the $d$ child-edges along this path is colored with one specific color among the $\pal-1$ allowed colors, with one exception at the root, where we have $\pal$ allowed colors. Thus, $e$ is recolored with probability $1/(\pal(\pal-1)^{d-1})$. 
\fi

\iffullversion
We begin with two useful lemmas that quantify the exact probability that a fixed edge is recolored due to \funcFixForbidden and due to an insertion.

\begin{lemma}[Recoloring probability of an edge due to \texttt{FixForbidden}]
\label{lem:fixforbidden_edge_recolor_prob}
Let $v$ be a non-root vertex, let $\col{1},\col{2} \in [\pal]$ be two distinct colors, and assume that the current
coloring of $T_v$ is distributed as $\mathcal D_{\col{1}}(T_v)$.
Let $e \in E(T_v)$ be an edge at depth $d \geq 1$ from $v$.
Then $\Pr\!\left[\text{\funcFixForbidden{$v,\col{1},\col{2}$}} \text{ recolors } e\right]
= {1}/{(\pal-1)^d}$.
\end{lemma}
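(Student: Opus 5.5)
Let $v=v_0,v_1,\dots,v_d$ be the unique downward path in $T_v$, with $e=(v_{d-1},v_d)$. The plan is to show that \funcFixForbidden{$v,\col{1},\col{2}$} recolors $e$ if and only if the original colors along this path alternate $\col{2},\col{1},\col{2},\dots$. That is, for each $i\in[1,d]$ the edge $(v_{i-1},v_i)$ has color $\col{2}$ for odd $i$ and $\col{1}$ for even $i$. Then I compute the probability of this event under $\mathcal D_{\col{1}}(T_v)$ by chaining conditional probabilities down the path.

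\textbf{The event.} I argue by induction on the depth of the recursion. The call at $v_{i-1}$ carries the pair (old forbidden color, new forbidden color). At $v_0$ this pair is $(\col{1},\col{2})$, and each recursive call swaps it. So the call at $v_{i-1}$ looks for a child edge of color $\col{2}$ if $i$ is odd and $\col{1}$ if $i$ is even. Since the child-edge colors of $v_{i-1}$ are distinct, such a child is unique. The recursion proceeds from $v_{i-1}$ to $v_i$ exactly when $(v_{i-1},v_i)$ has the looked-for color. Edges below $v_{i-1}$ are untouched until the call reaches $v_{i-1}$, so the colors examined are the original ones. The recursion follows a single path, hence $e$ is recolored iff the call reaches $v_{d-1}$ and then selects $v_d$. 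By induction this is exactly the alternating-color event.

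\textbf{The probability.} Under $\mathcal D_{\col{1}}(T_v)$, the child edges of $v_0$ receive a uniform injective coloring from $[\pal]\setminus\{\col{1}\}$. A fixed child edge therefore gets any fixed allowed color with probability $1/(\pal-1)$, so $\Pr[\colorf(v_0,v_1)=\col{2}]=1/(\pal-1)$.

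Now condition on the colors of the first $i-1$ path edges following the alternating pattern. By Definition~\ref{def:DF}, the coloring of $T_{v_{i-1}}$ is then distributed as $\mathcal D_{\gamma}(T_{v_{i-1}})$, where $\gamma$ is the color of its parent edge. This distribution is independent of everything above, given $\gamma$. The target color for $(v_{i-1},v_i)$ is the other color of $\{\col{1},\col{2}\}$, which differs from $\gamma$ and so is allowed. Hence the conditional probability is again exactly $1/(\pal-1)$. Multiplying the $d$ factors gives $1/(\pal-1)^d$.

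\textbf{Main subtlety.} The delicate point is that the recursion must inspect the original colors rather than already-modified ones. This holds because every modification happens at or above the current vertex. The conditional independence across levels follows directly from the recursive definition of $\mathcal D_{\col{1}}$. Finally, the uniform injective map gives an exact per-edge marginal of $1/(\pal-1)$, even though the vertex has $\nrchild$ children.
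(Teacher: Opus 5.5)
Your proof is correct and is essentially the paper's argument. The paper inducts on $d$: the first edge $(v,w)$ is recolored with probability $1/(\pal-1)$, and conditioned on this, $T_w\sim\mathcal D_{\col{2}}(T_w)$, so the swapped call \textsc{FixForbidden}$(w,\col{2},\col{1})$ reaches $e$ with probability $1/(\pal-1)^{d-1}$ by the induction hypothesis. This is your chain-rule product written recursively, and your alternating-colors characterization just makes explicit the paper's remark that the procedure flips a $(\col{1},\col{2})$-bicolored path.
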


\begin{proof}
By induction on $d$. For $d=1$, $e$ is a child edge of $v$.
Conditioned on the parent-edge color of $v$ being $\col{1}$, by Definition~\ref{def:DF}, the child
edges of $v$ are uniformly colored from $[\pal] \setminus \{\col{1}\}$.
Therefore, $e$ has color $\col{2}$ with probability $1/(\pal-1)$.
Since \funcFixForbidden{$v,\col{1},\col{2}$} recolors exactly the unique child edge of color $\col{2}$, it follows that $e$ is recolored with probability $1/(\pal-1)$.

Now let $d \ge 2$, and let $e_1=(v,w)$ be the first edge on the unique $v$-to-$e$ path.
For \funcFixForbidden{$v,\col{1},\col{2}$} to recolor $e$, it must first recolor $e_1$.
By the base case, $\Pr[e_1 \text{ is recolored}] = 1/(\pal-1)$.
Now, condition on the event that $e_1$ is recolored. Before the recoloring, $e_1$ had color $\col{2}$, so by Definition~\ref{def:DF}, the subtree $T_w$ is distributed as $\mathcal D_{\col{2}}(T_w)$.
After recoloring $e_1$, the procedure makes the recursive call \funcFixForbidden{$w,\col{2},\col{1}$}.
The edge $e$ is now at depth $d-1$ from $w$, so by the induction hypothesis,
conditioned on recoloring $e_1$, the probability that the recursive call recolors $e$ is
$1/(\pal-1)^{d-1}$.
Hence,
\begin{align*}
& \Pr\!\left[\text{\funcFixForbidden{$v,\col{1},\col{2}$} recolors } e\right] = \\
& \quad = \Pr\!\left[\text{\funcFixForbidden{$v,\col{1},\col{2}$} recolors } e \mid e_1 \text{ is recolored}\right]  \cdot \Pr\!\left[e_1 \text{ is recolored}\right] \\
& \quad = \frac{1}{(\pal-1)^{d-1}}\cdot \frac{1}{\pal-1} = \frac{1}{(\pal-1)^d}.\qedhere
\end{align*} 
\end{proof}

\begin{lemma}[Recoloring probability of an edge due to insertion]
\label{lem:exact_edge_recoloring_prob}
Consider an insertion of an edge $(p,r)$. Let $T_r$ be the tree rooted at $r$ prior to this insertion, and let $e \in E(T_r)$ be an edge at depth $d \ge 1$ from $r$. Then, the probability that $e$ is recolored in this insertion is ${1}/{\left(\pal(\pal-1)^{d-1}\right)}$.
\end{lemma}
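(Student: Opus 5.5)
The plan is to mirror the proof of Lemma~\ref{lem:fixforbidden_edge_recolor_prob}: peel off the first edge of the $r$-to-$e$ path using the local step of \funcRootToChild{}, then hand the remaining depth $d-1$ to that lemma. By Theorem~\ref{theorem_distribution_invariant_rooted} (or Theorem~\ref{theorem_distribution_invariant_unrooted} together with Lemma~\ref{lem:D-uniform} after rerooting), the coloring of $T_r$ just before the insertion is distributed as $\mathcal D(T_r)$. Under an oblivious adversary, the components are independent, so this coloring is independent of the color $\col{2}$ chosen for $(p,r)$, which depends only on the component of $p$. I will therefore condition on an arbitrary fixed value of $\col{2}$ and show that the probability is the same for every such value.

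First, let $e_1=(r,w)$ be the first edge on the path from $r$ to $e$. The only edge that \funcRootToChild{$r,\col{2}$} recolors at $r$ is the unique child edge colored $\col{2}$, if it exists. Every later recoloring happens inside the subtree of that child. Hence $e$ can be recolored only if $e_1$ is recolored, which happens exactly when $\colorf(e_1)=\col{2}$. Under $\mathcal D(T_r)$, the child edges of $r$ receive a uniformly random injective coloring from $[\pal]$, so this event has probability $1/\pal$. If $d=1$, then $e=e_1$ and we are done.

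For $d\ge 2$, condition on $\colorf(e_1)=\col{2}$. By Definition~\ref{def:DF}, given the child-edge colors at $r$, the subtree $T_w$ is distributed as $\mathcal D_{\col{2}}(T_w)$, independently of the rest. This includes independence from the uniformly chosen replacement color $\col{1}\in A(r)$, which depends only on the colors at $r$. The procedure then calls \funcFixForbidden{$w,\col{2},\col{1}$}, and $e$ lies at depth $d-1$ from $w$. Conditioned further on any value of $\col{1}\neq\col{2}$, Lemma~\ref{lem:fixforbidden_edge_recolor_prob} gives recoloring probability $1/(\pal-1)^{d-1}$. Since this does not depend on $\col{1}$, it remains valid after averaging over $\col{1}$. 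Multiplying by the factor from the first step gives $1/(\pal(\pal-1)^{d-1})$.

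The main point needing care is the conditioning bookkeeping: the pre-insertion distribution of $T_r$ must not be biased by $\col{2}$, and $T_w$ must still have law $\mathcal D_{\col{2}}(T_w)$ after conditioning on the event at $r$ and on the random $\col{1}$. Both follow from the independent recursive structure of Definition~\ref{def:DF} and from component independence. Once these are stated, the argument is just two multiplications.
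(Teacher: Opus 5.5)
Your proposal is correct and follows essentially the same route as the paper: use the distribution invariant and component independence to get $\Pr[e_1 \text{ is recolored}]=1/\pal$, then condition on this event, note that $T_{v_1}$ is distributed as $\mathcal D_{\col{2}}(T_{v_1})$, and apply Lemma~\ref{lem:fixforbidden_edge_recolor_prob} at depth $d-1$. Your remark that the \funcFixForbidden{} probability does not depend on the replacement color, so averaging over it is harmless, just makes explicit a conditioning step the paper leaves implicit.
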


\begin{proof}
Let $r=v_0,v_1,\dots,v_d$ be the unique root-to-$e$ path in $T_r$, and let
$e_i=(v_{i-1},v_i)$ for $i\in[d]$, so $e_d=e$.
For $e$ to be recolored, the recoloring walk must follow this path.

Let $\col{2}$ be the color chosen for the new parent-edge $(p,r)$.
By Theorem~\ref{theorem_distribution_invariant_rooted}, before the insertion, the coloring of $T_r$ is distributed as $\mathcal D(T_r)$.
Also, the choice of $\col{2}$ is made in the other component (uniformly among the available colors at $p$), so it is independent of the coloring of $T_r$.
By Definition~\ref{def:DF}, the child edges of $r$ are uniformly colored from $[\pal]$.
Hence, for every fixed $\col{2} \in [\pal]$, the edge $e_1$ has color $\col{2}$ with probability $1/\pal$.
So $\Pr[e_1 \text{ is recolored}] = 1/\pal$.

If $d=1$, we are done.
For $d \geq 2$, condition on the event that $e_1$ is recolored.
Let $\col{2}'$ be the new color assigned to $e_1$ by \funcRootToChild.
Before the recoloring, $e_1$ had color $\col{2}$, so by the recursive definition of $\mathcal{D}(T_r)$ (Definition~\ref{def:DF}), the subtree $T_{v_1}$ is distributed as $\mathcal{D}_{\col{2}}(T_{v_1})$.
The procedure then calls \funcFixForbidden{$v_1,\col{2},\col{2}'$}. We have
\begin{align}
\begin{split}
\Pr\!\left[ e \text{ is recolored}\right]
&= \Pr\!\left[ e \text{ is recolored} \mid e_1 \text{ is recolored} \right] \cdot 
\Pr\!\left[e_1 \text{ is recolored}\right] \\
&= \Pr[\text{\funcFixForbidden{$v_1,\col{2},\col{2}'$} recolors } e] \cdot \frac{1}{\pal}.\label{eq:recolor_e}
\end{split}
\end{align}
Since $e$ is at depth $d-1$ from $v_1$, by Lemma~\ref{lem:fixforbidden_edge_recolor_prob} we have  $\Pr[\text{\funcFixForbidden{$v_1,\col{2},\col{2}'$} recolors } e]
= {1}/{(\pal-1)^{d-1}}$.
Substituting this in~\eqref{eq:recolor_e} concludes the proof.
\end{proof}
\fi

\iffullversion
\subsubsection{Incremental Forests (Insertions Only)}
\else
\vspace{-0.2cm}
\paragraph{Incremental Forests (Insertions Only)}
\fi

\iffullversion
\else
The exact recoloring probabilities computed above give a straightforward bound in the incremental rooted setting. 

\begin{theorem}
Assume only insertions in the rooted forest model (Definition~\ref{definition_rooted_forest}). For $\Delta \geq 3$ and $0 \le c \le \Delta-2$, \distalg{} has expected amortized recourse $\Theta({1}/{\Delta})$.
\end{theorem}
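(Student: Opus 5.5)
The upper bound comes from a charging argument built on Lemma~\ref{lem:exact_edge_recoloring_prob}, which gives the exact probability that a fixed edge is recolored. The lower bound comes from an explicit star gadget.

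\textbf{Upper bound.} By Theorem~\ref{theorem_distribution_invariant_rooted}, before every insertion the coloring is distributed as $\mathcal D(F_t)$. Consider the insertion of $(p,r)$ at time $t$. Only edges of $T_r$ can be recolored, so by linearity of expectation and Lemma~\ref{lem:exact_edge_recoloring_prob} its expected recourse is
\[
\sum_{e\in E(T_r)} \frac{1}{\pal(\pal-1)^{d_t(e)-1}},
\]
where $d_t(e)\ge 1$ is the depth of $e$ below $r$. Summing this depth by depth does not work. There can be about $\Delta(\Delta-1)^{d-1}$ edges at depth $d$, so each depth may contribute $\Theta(1)$. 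Instead we exchange the order of summation and charge each term to the edge $e$ itself.

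Fix an edge $e$ and let $\rho(e)$ denote its depth measured from the root of its current tree. When $e$ lies in $T_r$ for some insertion, $r$ is the root of $e$'s tree, so $d_t(e)=\rho(e)$. Afterwards $e$'s tree is rooted above $p$, so $\rho(e)$ increases by $\mathrm{depth}(p)+1\ge 1$. In the incremental rooted model nothing ever decreases $\rho(e)$. Hence the depths at which $e$ is charged are strictly increasing positive integers. The total expected charge to $e$ over the entire sequence is therefore at most
\[
\sum_{d\ge1}\frac{1}{\pal(\pal-1)^{d-1}}=\frac{1}{\pal}\cdot\frac{\pal-1}{\pal-2}=O(1/\Delta),
\]
using $\pal\ge\Delta\ge3$.

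Every edge is inserted exactly once, so the number of edges is at most the number of updates. The expected total recourse is thus $O(1/\Delta)$ times the number of updates, which gives expected amortized recourse $O(1/\Delta)$.

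\textbf{Lower bound.} We repeat the following gadget. Create a fresh root $r$ with $\Delta-1$ leaf children by inserting $\Delta-1$ edges; none of these insertions incurs recourse. Then insert $(p,r)$, where $p$ is a fresh isolated vertex or any vertex of degree less than $\Delta$, so the degree bound is respected.

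Let $\col{2}$ be the color chosen for $(p,r)$. It is independent of the coloring at $r$. Under $\mathcal D$, by \eqref{eq:root-hit}, some child edge of $r$ has color $\col{2}$ with probability
\[
\frac{\Delta-1}{\pal}\ge\frac{\Delta-1}{2\Delta-2}=\frac12 .
\]
In that case \funcRootToChild{} recolors at least one edge. So each gadget costs $\Delta$ insertions (plus $O(1)$ vertices) and has expected recourse at least $1/2$. Repeating it $\Theta(n/\Delta)$ times on disjoint vertices yields expected amortized recourse $\Omega(1/\Delta)$.

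\textbf{Main obstacle.} The main point to verify carefully is the monotonicity claim in the upper bound: the depth at which an edge is charged strictly increases. This relies on two facts. First, in the rooted incremental model an insertion always hangs an entire rooted tree below a vertex. Second, there are no reroots and no deletions, so $\rho(e)$ never decreases. Once this holds, the geometric series finishes the upper bound.
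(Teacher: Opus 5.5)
Your proposal is correct and follows the paper's argument. For the upper bound, you charge each edge the exact recoloring probability from Lemma~\ref{lem:exact_edge_recoloring_prob}, and you observe that its depth strictly increases across the insertions that can recolor it (the paper phrases this as distinct ancestors). This gives an $O(1/\Delta)$ geometric series per edge.

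Your lower-bound gadget attaches a $(\Delta-1)$-star under a new parent, with expected recourse at least $1/2$ over $\Delta$ insertions. The paper instead uses two edges, with expected recourse $1/\pal$ over two insertions, and mentions star-like variants in passing. Both give $\Omega(1/\Delta)$ in the same way.
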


\begin{proof}[Proof Sketch of the Upper Bound.]
Fix an edge $e$. It can be recolored only when the root of its current tree is linked to another tree. Each time this happens, the depth of $e$ in the new tree increases. Since the recoloring probability of $e$ decays exponentially with its depth, the expected number of times $e$ is recolored is bounded by a geometric series, which sums to $O(1/\Delta)$.  This implies an $O(1/\Delta)$ expected amortized recourse.
\end{proof}

\fi

\iffullversion
\else
The general (unrooted) setting is more subtle. Here, the depth of an edge does not necessarily increase every time its component is linked to another tree. Rerooting may bring the same edge close to the root repeatedly, which might increase the number of times it is recolored. We show that bad rerooting choices can lead to worse expected amortized recourse. Nevertheless, we give a rerooting strategy that maintains the tight $\Theta(1/\Delta)$ bound. 

\begin{theorem}
\label{theorem_unrooted_randomized_LB_UB_short_version}
Assume only insertions in the general (unrooted) forest model. For $\Delta \geq 3$ and $0 \le c \le \Delta-2$, \distalg{} has expected amortized recourse $\Theta({1}/{\Delta})$.
\end{theorem}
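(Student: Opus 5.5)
We prove the upper bound for the rerooting rule of Algorithm~\ref{algorithm_randomized_main_unrooted}, then a matching lower bound. By Theorem~\ref{theorem_distribution_invariant_unrooted}, before each insertion the coloring is uniform, so the component $T_v$ that is rerooted at $v$ is distributed as $\mathcal D(T_v)$, and Lemma~\ref{lem:exact_edge_recoloring_prob} applies. An edge at depth $d$ from $v$ is recolored with probability $1/(\pal(\pal-1)^{d-1})$. Therefore the expected recourse of one insertion is
\begin{align*}
\sum_{d\ge 1} \frac{n_d}{\pal(\pal-1)^{d-1}},
\end{align*}
where $n_d$ is the number of edges of $T_v$ at depth $d$. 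We always have $n_d\le \Delta(\Delta-1)^{d-1}$. We split insertions according to the two branches of the algorithm.

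\emph{Small merges}, where $|E(T_u)|,|E(T_v)|\le\Delta$ and $v$ is the endpoint of smaller degree. Only the $\deg(v)$ edges at depth $1$ contribute meaningfully, since $T_v$ has at most $\Delta$ edges. Deeper edges contribute at most $\sum_{d\ge2} n_d/(\pal(\pal-1))=O(\Delta/\Delta^2)$. So the expected recourse is $O(1/\Delta)+\deg(v)/\pal$. Recourse is possible only if $\deg(u)+\deg(v)\ge \pal$. Because $\deg(v)\le \deg(u)$, and since $T_u$ has at most $\Delta$ edges, each such event consumes about $\pal/2$ edges of $u$. I would charge the $\deg(v)/\pal\le 1$ cost to the $\ge \pal/2$ edges incident to $u$. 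Each edge can serve in this role $O(1)$ times, because afterwards its component exceeds $\Delta$ edges. This gives $O(1/\Delta)$ per edge.

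\emph{Large merges}, where $T_v$ is the smaller component. Charge each edge $e$ of $T_v$ its own recoloring probability $1/(\pal(\pal-1)^{d_e-1})$. The key step is to bound, for a fixed edge $e$, the total $\sum_j 1/(\pal(\pal-1)^{d_j-1})$ over the times $j$ at which $e$ lies in the smaller side. Those times number only $O(\log n)$, since the component doubles, but that alone is too weak. I would instead use that the $j$-th such component has size $S_j\ge 2^{j-1}\Delta$ and sum over edges rather than times. The total charge of one such merge is at most
\begin{align*}
\sum_d \frac{\min\{n_d,\Delta(\Delta-1)^{d-1}\}}{\pal(\pal-1)^{d-1}} \le \sum_{d} \frac{n_d}{\pal(\pal-1)^{d-1}}.
\end{align*}
Each level with $n_d>0$ contributes at most $1$, and there are at most depth-of-$T_v$ such levels. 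I would show that the expected charge of a merge is $O(\log_{\Delta} S)$, and then distribute it equally over the $S$ edges of the smaller side. This gives per-edge charges $O(\log_\Delta S_j/S_j)$, and summing via Lemma~\ref{lemma_calculus_summation} for $f$ yields $O(\log\Delta/(\Delta\log\Delta))=O(1/\Delta)$. To establish the $O(\log_\Delta S)$ bound, observe that levels with $n_d\ge(\pal-1)^{d-1}$ number at most $\log_{\pal-1}S+1$. Levels with smaller $n_d$ have geometrically decaying terms, so their total is at most a constant times the largest such term, which is $O(1)$.

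For the lower bound, it suffices to exhibit a sequence where $\Omega(n)$ insertions each incur $\Omega(1/\Delta)$ expected recourse. Take pairs of disjoint stars, each with $\lceil \pal/2\rceil$ leaves (possible since $\pal\le 2\Delta-2$), and link their centers. Uniformity makes the two leaf palettes independent uniform sets, so $A(u)\cap A(v)=\emptyset$ is not guaranteed. In the algorithm, the rerooted side's child edge equal to the chosen color is hit with probability $\deg(v)/\pal=\Omega(1)$, giving recourse $\Omega(1)$ per $O(\Delta)$ insertions, that is, $\Omega(1/\Delta)$ amortized.

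The main obstacle is the large-merge accounting, specifically the claim that the charge per merge is $O(\log_\Delta S)$ independently of the tree's shape. If that bound fails, my fallback is a potential function $\Phi=\sum_e (\pal-1)^{-\mathrm{depth}}$ measured with respect to a centroid.
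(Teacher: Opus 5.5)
\textbf{The small-merge step has a genuine gap.} Your premise that ``recourse is possible only if $\deg(u)+\deg(v)\ge\pal$'' holds for lazy algorithms such as \greedy{}, but it is false for \distalg{}. \distalg{} draws the color of $(u,v)$ uniformly from $A(u)$ without consulting $v$. Since the coloring of $T_v$ is independent and uniform, this color lands on a child edge of $v$ with probability exactly $\deg(v)/\pal$, whatever $\deg(u)$ is. This is your own depth-$1$ formula, and it is what you use in your lower bound. For example, linking two single edges at their endpoints costs $1/\pal$ in expectation, although $\deg(u)+\deg(v)=2<\pal$.

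Without the premise, two further claims fail: that each charged event finds $\ge\pal/2$ edges at $u$, and that it leaves a component with more than $\Delta$ edges. Small merges can repeat $\Theta(\Delta)$ times inside components of at most $\Delta$ edges. For instance, grow a star at $u$ by attaching endpoints of fresh single edges; each attachment costs $1/\pal$, so the edges of $u$ are charged many times, not $O(1)$ times.

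A telling symptom: with the premise, your charging would go through verbatim if the \emph{larger}-degree endpoint were rerooted (charge to that endpoint's own $\ge\pal/2$ edges). Yet Remark~\ref{remark:rerooting} shows that rule has $\Omega(1)$ amortized recourse. There $\deg(u)+\deg(v)=i-1<\pal$, and the expected cost is still $(i-1)/\pal$.

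\textbf{The missing idea} is a direct bound on $\sum_i \min\{\deg_{i-1}(u_i),\deg_{i-1}(v_i)\}$ over all insertions. The paper shows this sum is at most $m$ (Proposition~\ref{prop:sum_min_deg_potential}) using the potential $\Phi(F)=\sum_{T\in F}(|E(T)|-\delta(T))$, where $\delta(T)$ is the maximum degree in $T$. Merging $T_u$ and $T_v$ changes $\Phi$ by $\delta(T_u)+\delta(T_v)-\delta(T_e)+1\ge\min\{\delta(T_u),\delta(T_v)\}$, because one edge raises the maximum degree by at most one. Since $0\le\Phi\le m$, the depth-$1$ cost of all small merges totals at most $m/\pal$.

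\textbf{A smaller, fixable slip in the large-merge part.} The algorithm's else-branch also covers a small tree merging into a large one. There the smaller side may have $O(1)$ edges, so $S_1\ge\Delta$ fails, and spreading an $O(1+\log_\Delta S)$ charge over $S$ edges gives only $O(1)$ per edge. Treat these merges separately:
\begin{enumerate}
\item each edge on the small side is recolored with probability at most $1/\pal$ (Lemma~\ref{lem:exact_edge_recoloring_prob});
\item an edge can be on the small side of such a merge at most once, since afterwards its component is large.
\end{enumerate}
Your treatment of merges of two large trees is fine; it is a per-edge version of the paper's bucketing, via Lemma~\ref{lemma_calculus_summation}. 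Your lower bound is also fine.
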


\begin{proof}[Proof Sketch of the Upper Bound.] 
We choose which component to reroot according to the size of the merged trees. We call a tree \textit{small} if it has at most $\Delta$ edges, and \textit{large} otherwise.

\looseness=-1
When both trees are small, we reroot at the endpoint of smaller degree (and attach it as a child of the other endpoint). In this case, the dominant contribution to the expected recourse comes from edges of depth $1$, so this rerooting choice keeps it small. Concretely, by the edge recoloring probabilities discussed earlier,
each edge of depth at least $2$ contributes at most $\frac{1}{\pal(\pal-1)}$ expected recourse. There are at most $\Delta$ edges in a small tree, so their total contribution is $O(1/\Delta)$. Each edge of depth $1$ contributes $\frac{1}{\pal}$ in expectation, so together they contribute $\frac{1}{\pal}$-fraction of the smaller degree. We use a potential function argument to show that, over the entire sequence, the sum of smaller degrees is linear in the number of insertions. Hence, this case contributes $O(1/\Delta)$ expected amortized recourse. 

When both components are large, we reroot the smaller side by size, and charge the recourse to its edges. We then divide these insertions into buckets according to the size of the smaller side: bucket $j$ contains insertions in which the smaller component has between $2^j$ and $2^{j+1} -1$ edges.  Each edge can be charged at most once in each bucket because, after a charge, the tree containing it at least doubles in size. Therefore, bucket $j$ contains at most $m/2^j$ insertions, where $m$ is the total number of insertions.

To bound the expected recourse of an insertion in a fixed bucket $j$, consider the rerooted smaller tree. It has fewer than $2^{j+1}$ edges. Since the probability that an edge is recolored decays exponentially with its depth, the expected recourse is maximized when the edges are placed as close to the root as possible. So it suffices to consider a tree of depth $h= O(\log_{\Delta-1}(2^{j+1}))$. In such a tree, the recourse is at most $h$ (a root-to-leaf path). Multiplying this by the number of insertions in bucket $j$, and summing over all non-empty buckets $j \geq \lfloor \lg{\Delta}\rfloor$, gives a total contribution of $O(m/\Delta)$, and therefore $O(1/\Delta)$ expected amortized recourse.\footnote{We write $\lg{n}$ for $\log_2{n}$.}

The remaining case is when a small tree is merged into a large one. This is the simplest case. Here too, we reroot the smaller side, and charge the recourse to its edges. By the recoloring probabilities, the charge is at most $O(1/\Delta)$ per edge, and this is a one-time charge, as the resulting tree is large.
\end{proof}
\fi

\iffullversion

We now analyze the expected amortized recourse of the distribution maintenance algorithm in the incremental setting. We show a tight $\Theta(1/\Delta)$ bound, both for rooted and general forests.

We begin with the rooted forests case, which is simpler to analyze. In fact, for this case we prove a stronger guarantee. We show that for any fixed edge $e$, the expected number of times $e$ is recolored throughout the entire insertions sequence is at most $O(1/\Delta)$. 

\begin{theorem}
\label{theorem_rooted_insertion_only_randomized_tight}
Assume only insertions in the rooted-forest model. For $\Delta \geq 3$ and $0 \le \extra \le \Delta-2$, Algorithm~\ref{algorithm_randomized_main_rooted} has expected amortized recourse $\Theta(\frac{1}{\Delta})$.
\end{theorem}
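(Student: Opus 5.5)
The upper bound follows from a per-edge charging argument driven by Lemma~\ref{lem:exact_edge_recoloring_prob}; the lower bound needs a separate explicit sequence. In the rooted incremental model, an insertion $(p,r)$ can recolor only edges of $T_r$, the tree whose root $r$ becomes a child of $p$. Fix an edge $e$. Every insertion that can recolor $e$ attaches the root of $e$'s current tree below some vertex $p$. After such an insertion, the depth of $e$ (distance from the root of its tree to $e$) increases by at least one, and no insertion ever decreases it. Insertions elsewhere leave the depth of $e$ unchanged, and they cannot recolor $e$.

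If the $j$-th such insertion happens when $e$ is at depth $d_j$, then $d_1\ge 1$ and $d_{j+1}\ge d_j+1$. By Lemma~\ref{lem:exact_edge_recoloring_prob}, $e$ is recolored in that insertion with probability $1/(\pal(\pal-1)^{d_j-1})$. This probability is deterministic given the (oblivious) update sequence, since depths depend only on topology. Summing gives
\[
\mathbb{E}[\#\text{recolorings of } e]\le \sum_{k\ge 1}\frac{1}{\pal(\pal-1)^{k-1}} = \frac{1}{\pal}\cdot\frac{\pal-1}{\pal-2}=O\!\left(\frac{1}{\Delta}\right),
\]
using $\pal\ge\Delta\ge 3$. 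Total expected recourse is therefore $O(m/\Delta)$ over $m$ edges. This is also $O(1/\Delta)$ amortized over the $m$ insertions, since each insertion creates one edge.

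For the lower bound I need a concrete insertion sequence giving $\Omega(1/\Delta)$ expected amortized recourse. Build $k$ disjoint stars, each with $\Delta-1$ leaves and its root at the center; they are rooted trees with $\Delta-1$ child edges at the root. Then attach each star's center $r$ as a child of a fresh vertex $p$ (or chain the stars under a common structure, keeping degrees at most $\Delta$). At the attachment, $p$ has no other edges, so the new edge gets a uniform color $\col{2}$. By the distribution invariant, the child edges of $r$ form a uniform $(\Delta-1)$-subset of $[\pal]$, so the probability that some child edge has color $\col{2}$, forcing a recoloring, is $(\Delta-1)/\pal$ by \eqref{eq:root-hit}. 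Each star with its attachment uses $\Delta$ insertions and incurs expected recourse at least $(\Delta-1)/\pal\ge (\Delta-1)/(2\Delta-2)=1/2$. Amortized, this is $\Omega(1/\Delta)$. More simply, a single such star already suffices to make the expected amortized recourse $\Omega(1/\Delta)$, and repetition makes it hold for long sequences.

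The only delicate point is the upper bound's claim that each relevant insertion strictly increases the depth of $e$. This relies on the rooted model: in Definition~\ref{definition_rooted_forest}, insertions attach a root as a child, so the new root is the old root's ancestor, and no rerooting occurs. I will state that explicitly. The rest is the geometric sum above.
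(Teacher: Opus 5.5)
Your proof is correct and follows the paper's argument. The upper bound is the same per-edge charge via Lemma~\ref{lem:exact_edge_recoloring_prob}: in the rooted incremental model, each insertion that can recolor $e$ strictly increases its depth, which gives the geometric sum $\sum_{d\ge1}\frac{1}{\pal(\pal-1)^{d-1}}=O(1/\Delta)$. For the lower bound, the paper uses the one-leaf case of your gadget (insert $(v,w)$, then attach $v$ below $u$, so recourse occurs with probability $1/\pal$ over two insertions), whereas your $(\Delta-1)$-star attached to a fresh vertex gives expected recourse $(\Delta-1)/\pal\ge 1/2$ over $\Delta$ insertions; both give $\Omega(1/\Delta)$.
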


\begin{proof}
\textbf{We first prove the upper bound.} Fix the final rooted forest, and let $e$ be one of its edges.
We show that the expected number of times the color of $e$ changes is $O\!\left({1}/{\Delta}\right)$. Let $e=(p,u)$, where $p$ is the parent of $u$ in the final rooted forest.
The color of $e$ can change only when a rooted tree containing $e$ is attached at an ancestor of $p$. If that ancestor is at distance $d-1$ above $p$, then at that moment $e$ is at depth $d$ in the attached rooted tree. By Lemma~\ref{lem:exact_edge_recoloring_prob}, the probability that this insertion recolors $e$ is exactly $\frac{1}{\pal(\pal-1)^{d-1}}$. Summing over all possible ancestors above $e$, the expected number of recolorings of $e$ is at most

\[
\sum_{d\ge 1}\frac{1}{\pal \cdot (\pal-1)^{d-1}}
< \frac{\pal-1}{\pal \cdot (\pal-2)}
= O\!\left(\frac{1}{\pal} \right) = O\!\left( \frac{1}{\Delta} \right).
\]

\textbf{We now prove the lower bound.} Insert two edges, $(v,w)$ followed by $(u,v)$ where $v$ is a parent of $w$ and a child of $u$. With probability $\frac{1}{\pal}$, $(u,v)$ picks the color of $(v,w)$ and causes a recourse of $1$, thus we get $\Omega(\frac{1}{\Delta})$ expected recourse per insertion. A similar lower bound can be shown for longer paths, and for $d$-ary trees for $d = \Omega(\Delta)$ whose edges are inserted bottom-up.
\end{proof}

\fi

\iffullversion

We now turn to general forests. To use the same distribution maintenance operations as in the rooted case, the algorithm must reroot at least one component before each insertion. Although the algorithm maintains the same distribution, the rooted proof does not apply to the unrooted case. This is because in the rooted setting, a fixed edge $e$ can be recolored only when the root of its current tree is linked to another tree, 
and each time this happens, the depth of $e$ in the merged tree increases. This is not the case in the unrooted setting. For example, several edges incident to $e$ may be inserted, and if the component of $e$ is rerooted several times due to these insertions, $e$ becomes a child-edge of the root many times. This can increase the expected number of times $e$ is recolored. This also suggests that the choice of which component to reroot is important, as the following remark illustrates.

\begin{remark}[Rerooting carefully is essential]\label{remark:rerooting}
    When two small components are linked by an edge $(u,v)$, Algorithm~\ref{algorithm_randomized_main_unrooted} reroots the endpoint with a smaller degree. This choice is essential for the $o(1)$ bound. To demonstrate this, consider a sequence that builds a star of degree $\Delta$ centered at $v$. The smaller degree in every insertion is $0$ (new leaf). 
    
    Rerooting the leaf side causes no recourse. On the other hand, suppose that we reroot the star side upon every insertion, and make it a child of the new leaf. On the $i$-th insertion, $v$ has $i-1$ child-edges. The new parent-edge is colored uniformly at random from a palette of size $\pal$. Hence, with probability $\frac{i-1}{\pal}$, it collides with a color of a child-edge, and recourse is incurred. Thus, the expected total recourse is $\sum_{i=1}^{\Delta} \frac{i-1}{\pal} = \Omega(\frac{\Delta^2}{\pal}) = \Omega(\Delta)$, which implies $\Omega(1)$ expected amortized recourse.
\end{remark}

\begin{theorem}
\label{theorem_unrooted_insertion_only_randomized_tight}
Let $\Delta \geq 3$ and $0 \le c \le \Delta-2$. Assume only insertions to a forest of maximum degree at most $\Delta$. Algorithm~\ref{algorithm_randomized_main_unrooted} has expected amortized recourse $
\Theta\!\left(\frac{1}{\Delta}\right)$.
\end{theorem}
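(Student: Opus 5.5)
The lower bound is immediate from the rooted case. Insert $(v,w)$ and then $(u,v)$. Whichever side Algorithm~\ref{algorithm_randomized_main_unrooted} reroots, the attached root has at least one child edge with probability bounded below, or we can force it by choosing degrees. The new edge then collides with that child edge with probability $1/\pal = \Omega(1/\Delta)$, exactly as in the lower bound of Theorem~\ref{theorem_rooted_insertion_only_randomized_tight}. Repeating this on disjoint gadgets gives the $\Omega(1/\Delta)$ bound.

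For the upper bound I follow the three-case split of the proof sketch of Theorem~\ref{theorem_unrooted_randomized_LB_UB_short_version}, using Lemma~\ref{lem:exact_edge_recoloring_prob} as the only probabilistic input. Fix an insertion that attaches a rerooted tree $T_v$ at $v$. Its expected recourse is
\[
\sum_{e\in E(T_v)} \frac{1}{\pal(\pal-1)^{d(e)-1}} \le \frac{\deg(v)}{\pal} + \frac{|E(T_v)|}{\pal(\pal-1)}.
\]
For a finer bound, note that at most $\prod$-many edges (at most $(\Delta-1)^{d-1}\deg(v)$) sit at depth $d$. Hence an edge set of size $N$ contributes at most $O(\log_{\Delta-1} N)$: fill the levels closest to the root first, and each full level contributes $O(1)$.

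\textbf{Small--small case} (both trees have at most $\Delta$ edges). We reroot the endpoint of smaller degree, so the cost is $\min(\deg u,\deg v)/\pal + O(1/\Delta)$. I need $\sum_t \min(\deg u_t,\deg v_t) = O(m)$ over such insertions, using the potential $\Phi=\sum_x \binom{\deg x}{2}$-type or simply $\Phi = \sum_x \deg(x)$ restricted appropriately. The cleanest choice is to charge $\min(\deg u,\deg v)$ to the edges of the smaller-degree endpoint's tree. This is at most the size of the smaller component, and that component's edges become part of a tree of size at least double; but that only gives a $\log\Delta$ factor. Instead I use the potential $\Phi=\sum_x \deg(x)^2$ per vertex: an insertion increases $\Phi$ by $2\deg u+2\deg v+2 \ge 2\min(\deg u,\deg v)$, and the total $\Phi\le \Delta\cdot\sum_x\deg(x) = O(\Delta m)$. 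Dividing by $\pal$ gives total $O(m)$ over small--small cases times $1/\pal\cdot\Delta$… which is $O(m)$, not $O(m/\Delta)$. So I need something sharper, and this is the main obstacle.

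My fix is to note that small trees have at most $\Delta$ edges. A vertex $x$ serves as the smaller-degree endpoint with degree $k$ only while its tree is small. So the contribution of $x$ is $\sum_k k/\pal$ over insertions at $x$ where $x$ is the min-degree side. Each such insertion adds to the other endpoint $y$ an edge with $\deg y\ge k$, so I charge $k/\pal$ to $y$'s new degree. I then bound $\sum$ over increments of $y$'s degree, but each increment $j$ of $y$ receives at most $j/\pal$. That gives $\sum_y \deg(y)^2/\pal$, which is bounded by $\sum_{\text{small trees}} \Delta\cdot|E|/\pal$.

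I then try the sharper observation that within a small tree of $s$ edges, $\sum_x \deg(x)^2 \le s\cdot \max\deg$, combined with the fact that the final small tree absorbs each such charge once. If this still loses a factor, I fall back on charging $\min(\deg u,\deg v) \le$ the number of edges of the smaller component, and apply the doubling argument only within small trees; since small trees have $\le\Delta$ edges, depth $O(\log\Delta)$, the total is $O(m\log\Delta/\Delta)$. I then refine by noting that $\min$-degree is usually much smaller than component size. I expect this step to require the most care.

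\textbf{Large--large case.} Reroot the smaller-by-size side, with size in $[2^j,2^{j+1})$ and $j\ge\lfloor\lg\Delta\rfloor$. By doubling, there are at most $m/2^j$ such insertions, each with expected recourse $O(\log_{\Delta-1}2^{j+1}) = O(j/\lg\Delta)$. Summing gives $\sum_{j\ge\lg\Delta} m\,j/(2^j\lg\Delta) = O(m/\Delta)$.

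\textbf{Small--large case.} The bound above gives cost $O(\deg(v)/\pal + |E(T_v)|/\Delta^2) = O(|E(T_v)|/\Delta)$, charged once to edges that then live in a large tree forever. This yields $O(m/\Delta)$. Summing all three cases gives the upper bound.
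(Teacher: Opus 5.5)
\textbf{Gap in the upper bound (small--small case).} Your small--large and large--large cases are correct and match the paper's argument. Your reduction of the small--small case is also right: edges at depth at least $2$ cost $O(1/\Delta)$ per insertion, and the depth-$1$ edges cost exactly $\min\{\deg u,\deg v\}/\pal$. What is missing is the one bound this case needs, $\sum_t \min\{\deg u_t,\deg v_t\} = O(m)$ over the $m$ insertions. Your attempts fall short:
\begin{itemize}
\item The $\sum_x \deg(x)^2$ potential and the charging to the larger-degree endpoint only give $O(\Delta m)$, i.e.\ $O(1)$ amortized recourse.
\item Your fallback (bound $\min\{\deg u,\deg v\}$ by the size of the smaller component and use doubling while trees are small) is valid but gives $O(m\lg\Delta)$, i.e.\ $O(\lg\Delta/\Delta)$ amortized, a $\lg\Delta$ factor short.
\item The final ``refinement'' is not an argument.
\end{itemize}
So as written the upper bound is not proved. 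The missing idea is a potential on components built from the \emph{maximum} degree. Let $\delta(T)$ be the maximum degree in $T$ and set $\Phi=\sum_T\bigl(|E(T)|-\delta(T)\bigr)$. Then $0\le\Phi\le m$ at all times. Linking $T_u$ and $T_v$ changes $\Phi$ by $1+\delta(T_u)+\delta(T_v)-\delta(T_e)\ge\min\{\delta(T_u),\delta(T_v)\}\ge\min\{\deg u,\deg v\}$, because the new edge raises the maximum degree by at most $1$. Telescoping gives $\sum_t\min\{\deg u_t,\deg v_t\}\le m$, so the depth-$1$ terms total at most $m/\pal=O(m/\Delta)$.

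\textbf{Error in the lower-bound gadget.} Your gadget does not force recourse on this algorithm. After inserting $(v,w)$, the insertion of $(u,v)$ links two small trees with $\deg u=0<\deg v=1$. So \distalg{} deterministically reroots at $u$, which has no child edges, and the recourse is exactly $0$. The rerooting rule is not random, so there is no ``probability bounded below'' to appeal to; avoiding this collision is precisely why the rule reroots the smaller-degree side. You need both endpoints to already have degree at least $1$: for example, insert $(u,v)$, $(w,z)$, then $(v,w)$. Whichever side is rerooted has exactly one child edge. Its color is uniform over $[\pal]$ and independent of the new edge's color, so they coincide with probability exactly $1/\pal$. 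This gives $\Omega(1/\Delta)$ expected recourse per three insertions, and repeating on disjoint gadgets completes the lower bound.
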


We begin with a simple bound that will later be used in the proof of the theorem.

\begin{proposition}\label{prop:sum_min_deg_potential}
    Let $\{(u_i,v_i)\}^{m}_{i=1}$ be a sequence of insertions and denote by $\deg_{i-1}(\cdot)$ the degree just before the $i$th insertion. Then $\sum_{i=1}^{m} \min\{\deg_{i-1}(u_i),\deg_{i-1}(v_i)\} \leq m$.
\end{proposition}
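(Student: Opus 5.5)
Since the statement holds for any insertion sequence in a forest, the plan is a potential argument, or equivalently a direct charging argument. The key fact is that in a forest every insertion connects two distinct components, and the final forest has at most $n-1$ edges. I will charge the quantity $\min\{\deg_{i-1}(u_i),\deg_{i-1}(v_i)\}$ to edges that already exist, and show that each edge is charged at most once over the whole sequence.

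Concretely, at insertion $i$, let $x_i\in\{u_i,v_i\}$ be the endpoint of smaller degree, with ties broken arbitrarily, and let $k=\deg_{i-1}(x_i)$. I charge one unit to each of the $k$ edges currently incident to $x_i$. The sum in the statement then equals the total number of charges. An edge $f=(a,b)$ can be charged only through its endpoints. The goal is to show that, after a suitable assignment, $f$ receives at most one charge. If that fails, I will aim for at most a constant number of charges and then adjust. The naive count fails, though: a vertex $x$ can be the smaller endpoint many times, and each time all of its edges get charged. For example, building a path by repeatedly attaching a new vertex to an endpoint gives $\min=0$ each time. A more dangerous case is linking the centers of two stars of degree $k$. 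There $\min=k$, yet it happens only once per pair of centers.

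So I will switch to the potential $\Phi=\sum_v \phi(\deg(v))$ and look for a function $\phi$ with $\phi(0)=0$ such that every insertion satisfies $\min\{d_u,d_v\}\le 1+\Delta\Phi$. Telescoping then gives $\sum_i \min \le m + \Phi_{\text{final}}-\Phi_0$, which is too weak unless $\Phi$ is bounded by $0$. The bound $m$ in the statement therefore has to come from a global counting argument instead. Let $m_i=\min$ at step $i$, and note that $x_i$ gains an edge at step $i$. The cleaner idea is to charge $m_i$ to the \emph{new} edges of $x_i$ as follows. Order the edges incident to each vertex $v$ by insertion time. If $(u_i,v_i)$ is the $(d_u+1)$-st edge of $u_i$ and the $(d_v+1)$-st edge of $v_i$, then $m_i\le \min(d_u,d_v)$. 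I then sum over vertices: $\sum_i m_i \le \sum_i \tfrac12(d_u+d_v)$, which is roughly $\sum_v \binom{\deg v}{2}$. That is too large, so the minimum has to be exploited genuinely.

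The step I expect to be the main obstacle, and the one I will try first, uses the forest structure directly. Consider the final forest, root each tree arbitrarily, and relate $m_i$ to the number of edges that existed before step $i$ in the smaller of the two components being merged. Since $\min\{d_u,d_v\}\le |E(T_{x})|$ for the side $x$ with fewer edges, I need $\sum_i \min\{|E(T_{u_i})|,|E(T_{v_i})|\}\le m$. That is false in general, because small-to-large merging only gives an $m\log m$ bound. So the degree-based bound must be used, not the component-size bound. The final attempt is to orient each inserted edge from the smaller-degree endpoint $x_i$ to the other endpoint, so that $m_i\le d_{x_i}$, and then show that each earlier edge incident to $x_i$ is charged only when $x_i$ is the smaller side. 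The aim is to prove by an exchange argument that the number of times a vertex of degree $d$ is the smaller side, weighted by its current degree, totals at most its number of incident edges. This is the crux, and if it fails I would expect the statement to need a constant factor weaker than $m$.
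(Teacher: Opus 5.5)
Your proposal does not yet contain a proof. Every route is abandoned, and the claim you single out as the crux is false. You want the per-vertex bound that $\sum_{i:\,x_i=x}\deg_{i-1}(x)$ is at most the number of edges incident to $x$. This already fails in the following example. Take a vertex $x$ of degree $1$, and for $j=1,\dots,D-1$ link it to the center of a fresh star with $j+1$ leaves. Each time, $x$ is the strictly smaller endpoint, so it accumulates $1+2+\dots+(D-1)=D(D-1)/2$, while its final degree is only $D$. The total over the sequence stays at most $m$ only because the stars bring many new edges. So any valid accounting must move charge between components rather than keep it at $x$.

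The same phenomenon rules out every potential of the form $\sum_v\phi(\deg v)$. Linking two vertices of degree $d$ forces $\phi(d+1)-\phi(d)\ge (d-1)/2$. Hence $\phi$ becomes positive (quadratically), and the final potential cannot be bounded by $0$, which your telescoping requires. Your framing was the right one; you were looking in the wrong class of potentials.

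The missing idea is a potential defined per component through its maximum degree. Let $\delta(T)$ be the maximum degree in component $T$, and set $\Psi=-\sum_T\delta(T)$. Then $\Psi=0$ initially and $\Psi\le 0$ always, exactly the target you identified. Suppose $(u_i,v_i)$ merges $T_u$ and $T_v$ into $T_e$. Only $u_i$ and $v_i$ change degree, each by one, so $\delta(T_e)\le\max\{\delta(T_u),\delta(T_v)\}+1$. Therefore
\[
\delta(T_u)+\delta(T_v)-\delta(T_e)\;\ge\;\min\{\delta(T_u),\delta(T_v)\}-1\;\ge\;\min\{\deg_{i-1}(u_i),\deg_{i-1}(v_i)\}-1 .
\]
That is, $\min\{\deg_{i-1}(u_i),\deg_{i-1}(v_i)\}\le 1+(\text{increase of }\Psi)$, and telescoping gives the bound $m$. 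This is the paper's argument. The paper writes it as $\Phi(F)=\sum_T(|E(T)|-\delta(T))$, where the $|E(T)|$ term absorbs the $+1$.

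Compare this with the component-size bound you tried. Like size, $\delta(T)$ dominates the endpoint degree. But under merging it combines by a maximum (plus one) rather than a sum, and this is what avoids the $\log m$ loss of small-to-large charging.
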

\begin{proof}
    For a tree $T$, let $\delta(T) \equiv \max_{w\in T} \deg_T(w)$. We define $\Phi(T) \equiv |E(T)| - \delta(T) $, and for an entire forest $F$, $\Phi(F) \equiv \sum_{T \in F}{\Phi(T)}$.

    Let $F_{i-1}$ be the forest just before the $i$th insertion.  Consider the insertion of $e_i=(u_i,v_i)$. Let $T_u$ and $T_v$ be the two trees that are merged (we omit the $i$ for readability), and let $T_e = T_u \cup T_v \cup \{e_i\}$ be the merged tree. The change in potential due to this insertion is
    \begin{align*}
        \Phi(F_i) - \Phi(F_{i-1})  &= \Phi(T_e) - \left( \Phi(T_u) + \Phi(T_v) \right) \\
        &= |E(T_e)| - \delta(T_e) - \left(|E(T_u)| - \delta(T_u)  + |E(T_v)| - \delta(T_v) \right) \\
        &= \delta(T_u) + \delta(T_v) - \delta(T_e) + 1 \\
        &\geq \delta(T_u) + \delta(T_v) - (\max\{\delta(T_u),\delta(T_v)\} + 1) + 1\\
        &= \min\{\delta(T_u),\delta(T_v)\} \geq \min\{\deg_{i-1}(u_i), \deg_{i-1}(v_i)\},
    \end{align*}
    where the first inequality is due to the fact that adding one edge can increase the maximum degree by at most $1$. Therefore, we get that $\Phi(F_m) - \Phi(F_0) = \sum_{i=1}^{m} (\Phi(F_{i}) - \Phi(F_{i-1})) \geq \sum_{i=1}^{m}\min\{\deg_{i-1}(u_i),\deg_{i-1}(v_i)\} $. Also, $\Phi(F_0) = 0$, and $\Phi(F_m) =  \sum_{T \in F_m} (|E(T)| - \delta(T)) \leq \sum_{T \in F_m} |E(T)| = m$. Combining the last two inequalities concludes the proof.
\end{proof}

We are now ready to prove Theorem~\ref{theorem_unrooted_insertion_only_randomized_tight}.

\begin{proof}[Proof of Theorem~\ref{theorem_unrooted_insertion_only_randomized_tight}]
\textbf{We first prove the lower bound.} Proving the lower bound is almost as in Theorem~\ref{theorem_rooted_insertion_only_randomized_tight}. The only difference is that we insert $3$ edges, instead of $2$, in the following order: $(u,v)$, $(w,z)$ and then $(v,w)$. Now, regardless of how Algorithm~\ref{algorithm_randomized_main_unrooted} breaks symmetry and chooses its rooting, the total expected recourse is $\frac{1}{\pal} = \Theta(\frac{1}{\Delta})$, averaged over $3$ insertions. (This argument shows that whenever an edge is inserted between two existing edges, it contributes expected recourse of at least $\frac{1}{\pal}$.)

\medskip
\textbf{We now prove the upper bound.} In what follows, we refer to a component $T$ as \emph{small} if it contains at most $\Delta$ edges, i.e., $|E(T)| \leq \Delta$, and \emph{large} otherwise.
We partition the recourse-causing insertions of edges $e = (u,v)$ into three types according to the sizes of the two components being merged.
Let $T_u$ and $T_v$ be the components containing $u$ and $v$, respectively.

\begin{itemize}
    \item \textbf{Type 1:} both components are small.
    \item \textbf{Type 2:} one component is small and the other is large.
    \item \textbf{Type 3:} both components are large.
\end{itemize}

\textbf{Type 1 analysis:} 
Consider a Type 1 insertion $(u,v)$. The algorithm reroots the endpoint of smaller degree. Assume, without loss of generality that $\deg(v) \leq \deg(u)$. So the algorithm reroots $T_v$ at $v$, and then calls \funcRootToChild on $v$.

We first bound the expected recourse due to the edges at depth at least $2$ from $v$. By Lemma~\ref{lem:exact_edge_recoloring_prob}, each such edge is recolored with probability at most $\frac{1}{\pal(\pal-1)}$. There are at most $\Delta$ edges in $T_v$, so the total contribution of depth $\geq 2$ edges is at most $\frac{\Delta}{\pal(\pal-1)} = O(1/\Delta)$. We therefore charge this part directly to the insertion of $(u,v)$.

It remains to bound the contribution of the edges at depth $1$. Again, by Lemma~\ref{lem:exact_edge_recoloring_prob}, their expected recourse is exactly $\deg(v)/\pal = \min\{\deg(v), \deg(u)\}/ \pal$ . We amortize this quantity over all insertions. By Proposition~\ref{prop:sum_min_deg_potential}, the sum of this quantity over all $m$ insertions is upper bounded by $m/\pal$. Thus, the expected amortized Type 1 recourse is $O(1/\Delta)$ per insertion.

\textbf{Type 2 analysis:}
Consider a Type 2 insertion $(u,v)$, where $v$ belongs to the small component $T_v$. Thus, only edges of $T_v$ can be recolored. By Lemma~\ref{lem:exact_edge_recoloring_prob}, each edge of $T_v$ is recolored with probability at most $1/\pal$. We charge the recourse of a recolored edge directly to that edge. So, the expected charge of $e \in E(T_v)$ due to this insertion is at most $1/\pal$. 

Now, each edge can be charged due to Type 2 insertion at most once. Consider the first time such a charge is made to an edge $e$. One of the merged components is large, so after this insertion, $e$ is part of a large component and since we only perform insertions, it can never again lie in the small side of a Type 2 insertion. Thus, the total expected Type 2 recourse per edge is $O(1/\Delta)$.

\textbf{Type 3 analysis:} When both components are large, the algorithm reroots the smaller component. We define buckets according to the size of the smaller component.
For an integer $j$, bucket $j$ consists of those insertions for which the smaller component has between $2^j$ and $2^{j+1}-1$ edges. Note that since both components are large in this case, the first $\floor{\lg {\Delta}} - 1$ buckets are empty.

Fix a bucket $j$.
If an insertion occurs in bucket $j$, then the smaller component has at least $2^j$ edges. After the insertion, the merged component has at least $2^{j+1}$ edges, and thus leaves bucket $j$. Hence bucket $j$ contains at most ${m}/{2^j}$ recourse-causing insertions.

It remains to bound the expected recourse of one insertion in bucket $j$.
Let $T_v$ be the smaller component, rooted at the attachment vertex, and for $d\ge 1$ let $n_d$ be the number of edges in $E(T_v)$ at depth $d$ from the root. By Lemma~\ref{lem:exact_edge_recoloring_prob}, an edge at depth $d$ is recolored with
probability $\frac{1}{\pal(\pal-1)^{d-1}}$. Hence,
\begin{align}\label{eq:exp_rand}
    \mathbb{E}[\text{recourse}] = \sum_{e\in T_v} \Pr[e \text{ is recolored}] =\frac{1}{\pal}\sum_{d\ge 1}\frac{n_d}{(\pal-1)^{d-1}}.
\end{align}

Every non-root vertex has at most $\Delta-1$ children, so $n_d \le (\Delta-1)^d$ for every $d$.
For fixed $|E(T_v)|$,~\eqref{eq:exp_rand} is maximized by placing as many edges as possible as close to the root as possible.
Thus, if $h=\left\lceil \log_{(\Delta-1)}(|E(T_v)|+1)\right\rceil$,
then:
\[
\mathbb{E}[\text{recourse}] = \frac{1}{\pal}\sum_{d\ge1}\frac{n_d}{(\pal-1)^{d-1}}
\leq
\frac{1}{\pal}\sum_{d=1}^{h}\frac{(\Delta-1)^d}{(\pal-1)^{d-1}}
=
\frac{\Delta-1}{\pal}\sum_{d=0}^{h-1}\left(\frac{\Delta-1}{\pal-1}\right)^d.
\]
Now since $\Delta-1 \le \pal-1$, each term in the last sum is at most $1$. Therefore, 
$\mathbb{E}[\text{recourse}] \leq \frac{\Delta-1}{\pal}\cdot h < h$.

Since $|E(T_v)|<2^{j+1}$, we get $ \mathbb{E}[\text{recourse}] \le \log_{(\Delta-1)}(2^{j+1}) = \frac{j+1}{\lg (\Delta-1)}$. Therefore, every insertion in bucket $j$ has expected recourse at most $\frac{j+1}{\lg(\Delta-1)}$. Summing over all the buckets, and recalling that the first $\floor{\lg \Delta} - 1$ are empty, we get a total contribution of at most:
\[
\sum_{j \ge \floor{\lg \Delta}}
\frac{m}{2^j} \cdot \frac{j+1}{\lg(\Delta-1)}
=
\frac{m}{\lg (\Delta-1)} \cdot \sum_{j \ge \floor{\lg \Delta}}
\frac{j+1}{2^j}
=
\frac{m}{\lg (\Delta-1)} \cdot O\!\left(\frac{\lg \Delta}{\Delta}\right)
=
O\!\left(\frac{m}{\Delta}\right).
\]
In other words, we showed $O({1}/{\Delta})$ Type 3 recourse per edge.

The theorem follows by combining the three types, each contributes $O({1}/{\Delta})$ expected amortized recourse.
\end{proof}
\fi

\smallskip
\noindent
\textbf{A lower bound for constant $c$.}
\iffullversion
Theorem~\ref{theorem_rooted_insertion_only_randomized_tight} and Theorem~\ref{theorem_unrooted_insertion_only_randomized_tight} present tight analyses of Algorithm~\ref{algorithm_randomized_main_rooted} and Algorithm~\ref{algorithm_randomized_main_unrooted}. The following lower bound shows that they are, in fact, optimal for $\extra=O(1)$.

\thmRandomizedIncrementalLB*
\else
To conclude the incremental setting, we show that our algorithm is, in fact, optimal for constant values of $\extra \geq 0$, by proving a matching $\Omega(1/\Delta)$ lower bound for \emph{any} algorithm, even on rooted forests.

\begin{restatable}[]{theorem}{thmRandomizedIncrementalLB}\label{thm:incremental_randomized_lb}
For every constant $\extra \geq 0$ and every $\Delta \geq 3$ such that
$\extra \leq \Delta-2$, any randomized algorithm that maintains a proper $\pal$-edge coloring of an incremental forests of maximum degree $\Delta$ has expected amortized recourse $\Omega(1/\Delta)$ against an oblivious adversary, where the hidden constant depends only on $\extra$.
\end{restatable}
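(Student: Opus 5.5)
We plan to use Yao's principle: construct a random insertion sequence (a distribution over oblivious adversaries) on which every deterministic algorithm has expected amortized recourse $\Omega(1/\Delta)$, with the constant depending only on $\extra$. The construction blocks colors with stars, in the spirit of the proof of Theorem~\ref{theorem_amortized_DplusC_LB} and of Lemma~\ref{lemma_construct_initial_layered_tree}. We form many disjoint stars of degree $\Delta-1$ and let the algorithm color them; any coloring of such a star uses a sub-palette of size $\Delta-1$ out of $\pal=\Delta+\extra$ colors. Centers of such stars have only $\extra+1$ free colors. The rest of the sequence uses only star centers and fresh vertices.

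The core gadget is the following. Take two stars with centers $x,y$, each of degree $\Delta-1$, and insert the edge $(x,y)$. The insertion needs no recourse only if $A(x)\cap A(y)\neq\emptyset$, where $|A(x)|=|A(y)|=\extra+1$. We cannot control the algorithm's colorings, so we exploit the randomness of the adversary. We create $\Theta(\binom{\pal}{\extra+1})$ stars, and the adversary picks the pair $(x,y)$ to link uniformly at random among the stars. Since the algorithm is deterministic, it colors the stars before knowing the pair. Its free sets $A(\cdot)$ are then fixed subsets of size $\extra+1$, so we need a lower bound on the probability that two random stars have disjoint free sets. By Ramsey/pigeonhole reasoning this is not automatic: the algorithm could give every star the same free set. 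To prevent this, we have the adversary first link stars only in situations that force distinct free sets. Concretely, we attach each star center $x$ as a leaf neighbor of a hub vertex whose other edges are already colored. Then the algorithm must give $(hub,x)$ a color in $A(x)$, and recoloring is charged. A cleaner alternative is to insert the edges of each star interleaved with random links, so that a random pair among $O_\extra(1)$ candidate configurations is incompatible with constant probability $p_\extra>0$.

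Each gadget uses $O(\Delta)$ insertions and causes recourse at least $1$ with probability $p_\extra$ depending only on $\extra$. Hence the expected amortized recourse is $\Omega(p_\extra/\Delta)$. We would make the probability bound rigorous as follows. Fix a small constant $k=k(\extra)$ of star centers $x_1,\dots,x_k$ with $k$ larger than the maximum number of pairwise-intersecting $(\extra+1)$-subsets that can be forced to avoid. Rather than an extremal-set argument, we would use a constructive route. We present $\extra+2$ stars, then insert a path $x_1x_2\cdots x_{\extra+2}$ in a uniformly random order, or choose uniformly which two of them to join. The key claim is that not all pairs among $\extra+2$ stars whose centers are joined in a pattern chosen uniformly from a finite family can be handled for free. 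So some pattern forces recourse, and a uniform choice hits it with probability at least $1/|\text{family}|=\Omega_\extra(1)$.

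The main obstacle is this combinatorial step. Free $(\extra+1)$-sets may all coincide, in which case any two centers can be linked for free. The fix is to chain linkings. After linking $x_1x_2$ with a color $\gamma$, the new free sets at $x_1$ and $x_2$ have size $\extra$ and exclude the common color. Repeating this on a path of $\extra+2$ centers, the last link must fail once the shared free colors are exhausted: each center on the path has degree $\Delta$ or $\Delta+1$, which we must keep $\le\Delta$ by using stars of degree $\Delta-2$ for internal centers. We expect most of the work to be in tuning these degrees and showing that a uniformly random linking order forces at least one recourse with probability $\Omega_\extra(1)$ per $O(\Delta)$ insertions.
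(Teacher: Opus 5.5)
\textbf{There is a genuine gap: the gadget on which your whole accounting rests is never produced, and the concrete candidates you describe do not force recourse.}

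Your bookkeeping would suffice if the gadget existed: $O(\Delta)$ insertions per gadget, and recourse $\ge 1$ with probability $p_c$. The trouble is that every pattern you actually describe is handled for free by one deterministic algorithm. That algorithm colors every star of degree $\Delta-1$ with the same sub-palette, so all such centers share a free set $F$ with $|F|=c+1$. It gives every star of degree $\Delta-2$ a free set containing $F$. Against this algorithm:
\begin{itemize}
\item \emph{Random pair.} Any linked pair is colored by any color of $F$.
\item \emph{Chained path on $c+2$ centers.} Your claim that ``the last link must fail once the shared free colors are exhausted'' is false. On a path each center receives at most two new edges, and only adjacent new edges constrain each other, so nothing accumulates along the path. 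When an edge arrives, each endpoint carries at most one previously colored new edge. For $c\ge 2$ some color of $F$ therefore avoids the at most two used ones. For $c=1$ the path has two edges and $|F|=2$. For $c=0$ it is a single edge colored by the common free color. This holds for every insertion order, so your ``key claim'' that some pattern in the family forces recourse fails.
\item \emph{Cost of the first variant.} A pool of $\Theta(\binom{\kappa}{c+1})$ stars with one linked pair costs $\Theta(\Delta\binom{\kappa}{c+1})$ insertions, not $O(\Delta)$.
\item \emph{Hub remark.} You mention a hub, but you never turn it into an argument; it cannot by itself ``prevent'' equal free sets.
\end{itemize}

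\textbf{Missing idea.} You need a single vertex at which many new edges meet. Their colors must then form a system of distinct representatives of the attached centers' free sets. The adversary must also randomize which centers are attached, so that no fixed choice of free sets is safe.

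The paper does this as follows. It joins a fresh vertex $u$ to a uniformly random set of $\Delta$ centers among $\ell\Delta$ stars of degree $\Delta-1$. With $M'$ the set of colors used at every chosen center, the recourse is at least $|M'|-c$. A counting argument shows:
\begin{itemize}
\item at least half the colors are ``sparse'', i.e.\ free at no more than $2N(c+1)/\kappa$ centers;
\item each sparse color lies in $M'$ with probability at least $e^{-8(c+1)}$, so $\mathbb{E}|M'|=\Omega(\Delta)$;
\item small $\Delta$ is handled separately by pigeonhole.
\end{itemize}
This yields $\Omega(\Delta)$ expected recourse per $O_c(\Delta^2)$ insertions.

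Your $O(\Delta)$-per-gadget accounting can also be rescued by making your hub precise:
\begin{itemize}
\item Build $u$ as a star of degree $\Delta-c-2$, so its free set $H$ has size $2c+2$.
\item Build $k=(c+2)2^{2c+2}$ stars of degree $\Delta-1$.
\item Join $u$ to a uniformly random $(c+2)$-subset of their centers.
\end{itemize}
The traces $A(x_i)\cap H$ take at most $2^{2c+2}$ values, each of size at most $c+1$. Hence some trace is shared by at least $c+2$ centers. With probability at least $1/\binom{k}{c+2}$, all chosen centers share that trace. In that case the $c+2$ new edges at $u$ cannot receive distinct colors without recoloring.

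As written, your proposal contains neither argument.
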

\fi

\iffullversion

\begin{proof}
    We construct an distribution over update sequences on which every deterministic algorithm has expected amortized recourse $\Omega(1/\Delta)$. The theorem then follows from Yao's principle.

    The construction is simple. Let $\ell \geq 2$ be a constant integer that depends only on $\extra$. We choose the value of $\ell$ later. We first construct $N = \ell \Delta$ vertex disjoint stars of degree $\Delta-1$, with centers $v_1,\dots,v_N$. We then draw a uniformly random subset of $\Delta$ star centers, and connect these centers to a fresh isolated vertex $u$. More accurately, we draw a uniformly random subset of indices $I \subseteq [N]$ of size $\Delta$, and insert the edges $(v_i,u)$ for $i\in I$. The total number of updates is $ N \cdot (\Delta-1) + \Delta = O(\Delta^2)$, so it remains to show that the expected total recourse is $\Omega(\Delta)$.

    Let $U_i$ be the set of colors \emph{not} used by the $(\Delta-1)$ leaf edges of $v_i$, and let $U = \cup_{i\in I} U_i$. We consider two particular times in the update sequence. The first is right after constructing the $N$ stars and before inserting the edges incident to $u$. We denote this time $t'$, and, for some set $X$, we use $X'$ to refer to its value at time $t'$. The second time, denoted $t''$, is at the end of the update sequence. Likewise, we write $X''$ to refer to the value of $X$ at time $t''$.
    
    Let $R$ be the total recourse incurred by the algorithm. We first argue that $R \geq \Delta-|U'|$. This is because, at the end of the update sequence, the $\Delta$ edges incident to $u$ must have distinct colors, and the color of the edge $(v_i,u)$ must belong to $U''_i$. Thus, $|U''| \geq \Delta$. Therefore, from time $t'$ to the end of the update sequence, the size of the set $U$ must increase from $|U'|$ to at least $\Delta$. The size of $U$ can increase only due to recolorings of star leaf edges incident to selected centers. Each such recoloring may increase $|U|$ by at most $1$, so at least $\Delta - |U'|$ recolorings must occur.

    Let $M' = [\pal] \setminus U'$. Thus, at time $t'$, each color in $M'$ appears on some leaf edge incident to every drawn center $\{v_i\}_{i \in I}$. Therefore, these colors cannot be used on the inserted edges $\{(v_i,u)\}_{i \in I}$ without incurring recourse. Note that $R  \geq \Delta-|U'| = \Delta - \pal + |M'| = |M'| - \extra$. So, it is enough to lower bound $\expect{|M'|} - \extra$.

   \medskip
   \noindent
    We split into cases according to the size of $\Delta$. When $\Delta$ is large, we show that there are many colors in $M'$, that is, $\expect{|M'|} = \Omega(\Delta)$. When $\Delta$ is small, and bounded as a function of $\extra$ (a mere constant), it is sufficient to show that at least one recoloring happens with constant probability.
    
    \textbf{Case 1: $\Delta$ is large.}
    In particular, $\Delta \ge 8(\extra+1)e^{8(\extra+1)}$. Let $d(\alpha) = | \{ i : \alpha \in U'_i \}| $ be the number of stars in which $\alpha$ is free at time $t'$. Since $|U'_i| = \extra+1$ for all $i$, we have $\sum_{\alpha =1}^{\pal} d(\alpha) = N(\extra+1)$. We call a color $\alpha$ \emph{sparse} if $d(\alpha) \leq 2N(\extra+1)/\pal$. At least half of the colors are sparse, since otherwise, the sum $\sum_{\alpha =1}^{\pal} d(\alpha)$ would be larger than $N(\extra+1)$.

    Next, we bound the probability that a sparse color $\alpha$ belongs to $M'$. We have
    \begin{align*}
        \Pr[\alpha \in M'] = \frac{\binom{N-d(\alpha)}{\Delta}}{\binom{N}{\Delta}} = 
        \frac{ (N-\Delta)!} { N! } \cdot \frac{(N-d(\alpha))!}{(N-d(\alpha) - \Delta)!} = \prod_{j=0}^{\Delta - 1} \frac{N-d(\alpha) - j}{N-j} = \prod_{j=0}^{\Delta - 1} \left(1 - \frac{d(\alpha)}{N-j}\right).
    \end{align*}
    Since $N=\ell\Delta$ and $\ell \geq 2$, we have $N-j > N- \Delta \ge N/2$ for every $0\leq j \leq \Delta -1$, and since $\alpha$ is sparse, we get that 
\[
    \frac{d(\alpha)}{N-j} < \frac{2N(\extra+1)}{\pal}\cdot \frac{2}{N} =
    \frac{4(\extra+1)}{\Delta+\extra} \le \frac{1}{2},
\] where the last inequality follows from the case assumption $\Delta \ge 8(\extra+1)e^{8(\extra+1)} \ge 8(\extra+1)$. Therefore, using the fact that $1-x\ge e^{-2x}$ for $x\in [0,1/2]$, we obtain 
\begin{align*}
    \Pr[\alpha \in M']
    &\geq \exp\left( -2\sum_{j=0}^{\Delta-1}\frac{d(\alpha)}{N-j} \right) \geq \exp\left( -2\Delta  \cdot\frac{d(\alpha)}{N-\Delta}  \right). 
\end{align*}

Now, substituting $N= \ell \Delta$, and using the fact that $\alpha$ is sparse, we have
\begin{align*}
    \frac{2\Delta d(\alpha)}{N-\Delta}
    &\le
    \frac{2\Delta}{N-\Delta}\cdot \frac{2N(\extra+1)}{\pal} =
    \frac{2\Delta}{(\ell-1)\Delta}\cdot \frac{2\ell\Delta(\extra+1)}{\pal} =
    \frac{4\ell}{\ell - 1}\cdot \frac{\Delta}{\pal} \cdot (\extra+1) \le
    8(\extra+1),
\end{align*}
where the last inequality uses $\ell \ge 2$. Thus, each sparse color is in $M'$ with probability at least $e^{-8(\extra+1)}$. Since at least half of the colors are sparse, we have $\expect{|M'|} = \sum_{\alpha=1}^{\pal}\Pr[\alpha \in M'] \geq \frac{\Delta+\extra}{2}\cdot e^{-8(\extra+1)}$. Additionally, since $\Delta\ge 8(\extra+1)e^{8(\extra+1)}$, we have $\extra < \Delta e^{-8(\extra+1)}/8$. Hence
\[
    \expect{R}
    \geq \expect{|M'|} - \extra
    \geq  \frac{\Delta+\extra}{2}\cdot e^{-8(\extra+1)}- \frac{\Delta}{8}\cdot e^{-8(\extra+1)}
    \geq \frac{3}{8}\Delta e^{-8(\extra+1)}.
\]
Thus, in this case, the expected total recourse is $\Omega(\Delta)$.

\textbf{Case 2: Small $\Delta$.} That is, $\Delta < 8(\extra+1)e^{8(\extra+1)} = O(1)$. Let $C = \lceil8(\extra+1)e^{8(\extra+1)}\rceil$. 
Recall that each star with center $v_i$ is associated with a subset $U'_i$ of $\extra+1$ free colors at time $t'$.
Since $\Delta < C$ in this case, the number of possible subsets $U'_i$ is at most $\binom{\Delta+\extra}{\extra+1} \leq \binom{C+\extra}{\extra+1}$.
We now choose $\ell = 2\binom{C+\extra}{\extra+1}$. So $\ell \Delta / \binom{\Delta+\extra}{\extra+1} \geq 2\Delta$ for every $\Delta < 8(\extra+1)e^{8(\extra+1)}$. Thus, by the pigeonhole principle, at least one subset appears at least $2\Delta$ times. Let $J'$ be a set of star center indices of size $|J'| \geq 2\Delta$, such that $U'_i = U'_j$ for all $i,j \in J'$.

The probability that $I \subseteq J'$ is at least ${\binom{2\Delta}{\Delta}}/{\binom{\ell \Delta}{\Delta}}$, which is, again, a positive constant that depends only on $\extra$ (as both $\ell$ and $\Delta$ are bounded by a function of $\extra$ in this case). When this event $\{I \subseteq J'\}$ occurs, the recourse is at least $\Delta - (\extra + 1) \geq 1$ (since $\extra \leq \Delta - 2$). Hence, in this case as well, the total expected recourse is $\Omega(1) = \Omega(\Delta)$.

\end{proof}

\else

\begin{proof}[Proof Sketch.]
For simplicity, we give a sketch for $\extra=1$ and large enough $\Delta$. The remaining case, where $\Delta$ is bounded by a constant, is handled in the full proof by a simple separate argument.

We construct a distribution over incremental update sequences on which every deterministic algorithm has expected amortized recourse $\Omega(1/\Delta)$, the theorem then follows from Yao's principle~\cite{yao1977probabilistic}.
Create $N= 2 \Delta$ vertex-disjoint stars of degree $\Delta - 1$, and keep one isolated vertex $u$. Then, select $\Delta$ star centers uniformly at random, and insert $\Delta$ edges connecting each of the selected star centers to $u$. In total, we have $N(\Delta-1) + \Delta = O(\Delta^2)$ updates. Thus, it suffices to show that any algorithm incurs $\Omega(\Delta)$ expected total recourse on this update sequence.

Consider the time $t'$ just after creating the $N$ stars and before inserting the edges incident to $u$. Let $M'$ be the set of colors that are \emph{not} free at any of the $\Delta$ randomly selected centers at time $t'$. Equivalently, each color in $M'$ appears on some leaf edge incident to every selected star center. 

The $\Delta$ edges incident to $u$ require $\Delta$ distinct colors. At time $t'$, only $\pal - |M'|$ colors are available to be used without recoloring. Each recoloring may increase this number by at most one by freeing a color of $M'$ at some star's center. Therefore, the recourse is at least $\Delta - (\pal-|M'|) = \Delta - (\Delta +1 -|M'|) = |M'| - 1$. It remains to show that $\expect{|M'|} = \Omega(\Delta)$.

At time $t'$, each star center has exactly two free colors. Let $d(\alpha)$ be the number of stars in which $\alpha$ is free. So, $\sum_{\alpha = 1}^{\pal} d(\alpha) = 2N$, and therefore, at least half of the colors $\alpha \in [\pal]$ have $d(\alpha) \leq 4N/\pal$. We call such a color \emph{sparse}. We argue that a sparse color $\alpha$ is in $M'$ with constant probability. There are at most $4N/\pal = 8\Delta/(\Delta+1) \leq 8$ stars in which $\alpha$ is free. Therefore, the probability of avoiding all of them is at least 
$\Pr[\alpha \in M'] \geq \prod_{j=0}^{\Delta-1}\left(1 - \frac{8}{N-j}\right) \geq \left(1 - \frac{8}{N-\Delta}\right)^\Delta = \left(1 - \frac{8}{\Delta}\right)^\Delta$.
For large enough $\Delta$, this probability is lower bounded by a constant, and together with the fact that at least half of the colors are sparse, we obtain $\expect{|M'|}  = \Omega(\Delta)$. 
\end{proof}

\fi

\iffullversion
\subsubsection{Fully Dynamic Forests (Insertions and Deletions)}\label{subsec:randomized_fully}
\else
\vspace{-0.2cm}
\paragraph{Fully Dynamic Forests (Insertions and Deletions)}
\fi

\iffullversion

\else

For the fully dynamic setting, we do not use amortization. Instead, we bound the expected recourse of each update separately. Here we have the same analysis for both rooted and unrooted forests.

\begin{theorem}[Expected recourse per update]
\label{theorem_recourse_randomized_short_version}
Let $\Delta \geq 3$ and $0 \le c \le \Delta-2$. Against an oblivious adversary, the expected recourse per update of \distalg{} is $\Theta\!\left(\min\left\{{\Delta}/{\extra},\,\log_{\pal-1} n\right\}\right)$.
\end{theorem}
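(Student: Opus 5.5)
We bound the expected recourse of a single update using the distribution invariant (Theorem~\ref{theorem_distribution_invariant_high_level}) together with the exact per-edge recoloring probabilities. Every update in \distalg{} is handled by rerooting (which is free), followed by one call to either RootToChild or ChildToRoot at a vertex $r$. Each of these calls recolors at most one child edge of $r$ and then runs FixForbidden down a single path. The recourse is therefore the length of a random downward walk from $r$.

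\textbf{Per-edge probabilities.} For insertion, Lemma~\ref{lem:exact_edge_recoloring_prob} shows that an edge at depth $d$ is recolored with probability $1/(\pal(\pal-1)^{d-1})$. For deletion, the analogous computation holds: ChildToRoot touches a given child edge with probability $\frac{\ell}{\pal}\cdot\frac1\ell=\frac1\pal$, and Lemma~\ref{lem:fixforbidden_edge_recolor_prob} then gives the same $1/(\pal(\pal-1)^{d-1})$. At each vertex the walk continues through exactly one child with the specific forbidden color. Summing over depth-$d$ edges, of which there are at most $(\Delta-1)^{d-1}\Delta$, the probability that the walk reaches depth $d$ is at most
\[
\frac{\Delta(\Delta-1)^{d-1}}{\pal(\pal-1)^{d-1}}\le\Big(\frac{\Delta-1}{\pal-1}\Big)^{d-1}.
\]
The expected recourse is the sum of these probabilities over $d$.

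\textbf{Upper bound.} Two bounds apply. The walk cannot exceed the tree height, but trees need not be balanced, so the height can be $\Theta(n)$ and the naive height bound is too weak. Instead, I would truncate the sum at $d\le h$ with $h=\lceil\log_{\pal-1} n\rceil$. The tail mass beyond depth $h$ is at most $n\cdot\frac{1}{\pal(\pal-1)^{h-1}}=O(1)$, since there are at most $n$ edges, each with that probability. This gives $O(\log_{\pal-1}n)$. Separately, the geometric sum $\sum_d(\frac{\Delta-1}{\pal-1})^{d-1}=\frac{\pal-1}{\extra}=O(\Delta/\extra)$ for $\extra\ge1$. Taking the minimum yields the claimed upper bound.

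\textbf{Lower bound.} I would use a tree in which the walk length is large. Take a complete tree whose every internal vertex has $\Delta-1$ children (the root has $\Delta$), of depth $D=\Theta(\log_{\Delta-1}n)$. Delete and reinsert the edge at its root, or simply insert an edge attaching the root under a fresh vertex. By the lemmas, each level $d\le D$ contributes $(\frac{\Delta-1}{\pal-1})^{d-1}\cdot\Theta(1)$. With $\rho=\frac{\Delta-1}{\pal-1}=1-\frac{\extra}{\pal-1}$, the sum is $\Theta(\min\{D,1/(1-\rho)\})=\Theta(\min\{\log_{\Delta-1}n,\Delta/\extra\})$. The only care needed is that $\log_{\Delta-1}n$ and $\log_{\pal-1}n$ agree up to constants, which holds since $\pal\le2\Delta$. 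For constant $\extra$ the first term dominates only while $D\lesssim\Delta/\extra$. Repeating the delete/insert pair yields an entire sequence where every update has this expectation.

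The main obstacle is the upper bound's truncation for deep, unbalanced trees. The per-level counting $(\Delta-1)^{d-1}$ bounds level sizes, but the $O(\log_{\pal-1}n)$ bound must use that the total number of edges is at most $n$. That requires the careful split into shallow levels, each contributing at most $1$, and deep levels, which contribute at most $n$ times a tiny probability. A secondary check is that the deletion case, with its $\frac{\ell}{\pal}$ coin, gives exactly the same per-edge probabilities.
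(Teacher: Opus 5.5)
Your proof is correct and follows the paper's argument. The $O(\log_{\pal-1} n)$ bound is the same union bound over deep edges; the paper writes it as $\Pr[L\ge i\mid E]\le\min\{1,n(\pal-1)^{-i}\}$. Your level-count bound $n_d\le\Delta(\Delta-1)^{d-1}$, applied to the exact per-edge probabilities, is an unrolled form of the paper's domination of the cascade by a geometric variable with continuation probability $(\Delta-1)/(\pal-1)$. Your lower bound is the paper's construction: complete $(\Delta-1)$-ary trees joined by a root edge that is repeatedly deleted and reinserted.

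One correction: discard your alternative of attaching the root under a fresh vertex. With root degree $\Delta$ it violates the degree bound. In any case \distalg{} reroots the smaller component, so the isolated vertex becomes the child and nothing is recolored. Only the delete/reinsert version works, because there the smaller side $T_w$ is the one rerooted and \texttt{RootToChild} runs at $w$.
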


\begin{proof}[Proof Sketch of the Upper Bound.]
For simplicity, consider an insertion of an edge $(p,r)$ (deletions are similar). We prove two upper bounds.

\looseness=-1
For the first bound, observe that the repair process begins only if the new color $\alpha$ chosen for $(p,r)$ appears on one of the child edges of $r$. Since the child edges of $r$ are uniformly colored from all $\pal$ colors, and there are at most $\Delta-1$ child edges, this happens with probability at most $(\Delta-1)/\pal$. If this happens, and $(r,u)$ is recolored from $\col{1}$ to $\col{2}$, the repair then continues one level deeper if $\col{2}$ appears on a child edge of $u$, which happens with probability at most $q = \left(\Delta-1\right)/\left(\pal-1\right)$ (the denominator is now $\pal-1$ because $u$ had a forbidden color $\alpha$ before, so its child-edges are colored uniformly from $\pal -1$ colors). Thus, after the first conflict, the repair continues with probability at most $q$ at each step. Thus, if the repair starts, its length is stochastically dominated by a geometric random variable with success probability $1-q$. Therefore, the expected repair length, conditioned on starting, is at most $1/(1-q)$. Multiplying this by the probability of at most $(\Delta-1)/\pal$ that the repair starts, and substituting the value of $q$, gives an $O(\Delta/ \extra)$ bound on the expected recourse for $\extra \geq 1$.

The second bound, which applies to any $\extra \geq 0$, comes from depth considerations. By the exact recoloring probabilities discussed in the beginning of this section, the probability that the repair reaches a fixed edge $e$ at depth $d$ is at most $(\pal-1)^{-d}$. A union bound over all edges gives an $O(\log_{\pal-1} n )$ bound on the expected recourse. 
\end{proof}
\fi

\iffullversion
Next, we bound the expected number of recolorings performed by one update in the fully dynamic setting.

\begin{theorem}[Expected recourse per update]\label{theorem_recourse_randomized}
Let $\Delta \geq 3$ and $0 \le c \le \Delta-2$. 
Against an oblivious adversary, the expected recourse of Algorithm~\ref{algorithm_randomized_main_rooted} and Algorithm~\ref{algorithm_randomized_main_unrooted} per update is at most
\[
\frac{\Delta}{\pal}\cdot \min\left\{\frac{\pal-1}{\extra},\; 2+\left\lceil \log_{\pal - 1} n\right\rceil\right\}
\qquad\text{if \ } \extra\ge 1,
\]
and at most
\[
2+\left\lceil \log_{\Delta-1} n\right\rceil
\qquad\text{if \ } \extra=0.
\]
\end{theorem}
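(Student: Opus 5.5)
By Theorem~\ref{theorem_distribution_invariant_unrooted} and Lemma~\ref{lem:D-uniform}, before each update the coloring is uniform. Hence, after the (conceptual, zero-recourse) rerooting the algorithm performs, the coloring of every component is distributed as $\mathcal D$ with respect to the current rooting. This holds for both Algorithm~\ref{algorithm_randomized_main_rooted} and Algorithm~\ref{algorithm_randomized_main_unrooted}, so it suffices to analyze one rooted update. An insertion triggers \funcRootToChild{$r,\col{2}$}; a deletion triggers \funcChildToRoot{$r,\col{1}$}. In both cases the recourse is the length of a single downward path started at $r$. I will write the expected recourse as $\sum_{e\in T_r}\Pr[e\text{ recolored}]$ and bound it in two ways, then take the minimum.

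\textbf{Depth bound.} For an insertion, Lemma~\ref{lem:exact_edge_recoloring_prob} gives that an edge at depth $d$ is recolored with probability exactly $\frac{1}{\pal(\pal-1)^{d-1}}$. For a deletion, the first recoloring happens with probability $\nrchild/\pal\le\Delta/\pal$, and it hits a fixed child edge with probability $1/\pal$. After that, Lemma~\ref{lem:fixforbidden_edge_recolor_prob} applies to the subtree below, so the same formula holds. I will split the sum by depth. Levels $d\le h$, where $h=\lceil\log_{\pal-1}n\rceil$, contribute at most $\frac{n_d}{\pal(\pal-1)^{d-1}}\le \frac{\Delta}{\pal}\left(\frac{\Delta-1}{\pal-1}\right)^{d-1}$ each, using $n_d\le\Delta(\Delta-1)^{d-1}$. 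Since $\Delta-1\le\pal-1$, this is at most $\Delta/\pal$ per level, for a total of at most $h\cdot\Delta/\pal$. Levels $d>h$ contain at most $n$ edges, each recolored with probability at most $\frac{1}{\pal(\pal-1)^{h}}\le\frac{1}{\pal n}$. The whole tail therefore contributes at most $1/\pal$, which is comfortably within the additive $2$ in the stated bound. This gives $\frac{\Delta}{\pal}(2+\lceil\log_{\pal-1}n\rceil)$. When $\extra=0$ the factor $\Delta/\pal$ equals $1$ and $\pal-1=\Delta-1$, which yields the second displayed bound.

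\textbf{Geometric bound for $\extra\ge1$.} Here I track the path length directly. The path starts with probability at most $\Delta/\pal$: at most $(\Delta-1)/\pal$ by \eqref{eq:root-hit} for an insertion, and $\nrchild/\pal\le\Delta/\pal$ for a deletion. Suppose the process sits at vertex $w$ with a call \funcFixForbidden{$w,\col{2},\col{1}$}. Lemma~\ref{lem:fixforbidden-dist} (equivalently, the conditioning arguments in the proofs of Lemmas~\ref{lem:roottochild-dist} and~\ref{lem:childtoroot-dist}) gives that $T_w$ is distributed as $\mathcal D_{\col{2}}(T_w)$ given the history. By \eqref{eq:local-hit}, the process therefore continues with probability at most $q=\frac{\Delta-1}{\pal-1}$, independently of earlier steps. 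The length after starting is then stochastically dominated by a geometric variable with mean $\frac{1}{1-q}=\frac{\pal-1}{\extra}$. Multiplying by the start probability gives $\frac{\Delta}{\pal}\cdot\frac{\pal-1}{\extra}$. Since each expectation bound holds separately, the minimum follows.

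\textbf{Main obstacle.} The step needing care is justifying the conditional continuation probability for deletions and across recursion levels. One must confirm that conditioning on the path reaching $w$ leaves the child-edge colors of $w$ uniform on $[\pal]\setminus\{\col{2}\}$. I would get this from the recursive independence in Definition~\ref{def:DF}: conditioned on the colors of the child edges of the current vertex, each subtree is independent with distribution $\mathcal D_{\colorf(\cdot)}$. The event that the path enters $w$ depends only on colors at and above $w$'s parent edge, so it does not disturb the distribution of $T_w$.
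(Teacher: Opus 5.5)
Your proof is correct and follows the paper's argument. In both, the expected recourse is the probability $\ell/\pal\le\Delta/\pal$ that a recoloring starts at $r$, times a conditional path length bounded either by geometric domination with continuation probability $q=\frac{\Delta-1}{\pal-1}$, or by the $(\pal-1)^{-d}$ decay of per-edge recoloring probabilities with depth. The only cosmetic difference is in the depth bound: you sum exact per-edge probabilities level by level, using $n_d\le\Delta(\Delta-1)^{d-1}$ up to depth $h$ and the total count $n$ beyond it, whereas the paper sums $\Pr[L\ge i\mid E]\le\min\{1,n(\pal-1)^{-i}\}$. Both yield the stated bound.
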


\begin{proof}
The proof analyzes the recourse given a uniformly distributed proper coloring. Since both Algorithm~\ref{algorithm_randomized_main_rooted} and Algorithm~\ref{algorithm_randomized_main_unrooted} maintain it (by Theorem~\ref{theorem_distribution_invariant_rooted} and Theorem~\ref{theorem_distribution_invariant_unrooted} respectively), the proof applies to both. Also note that instead of separating the statement for $\extra=0$ and $\extra \ge 1$, we can regard $\frac{\pal-1}{\extra} = \infty$ and then the statement for $\extra \ge 1$ subsumes $\extra = 0$.

Fix a time $t$ and condition on the current forest $F_t$.
By Theorem~\ref{theorem_distribution_invariant_rooted} (or Theorem~\ref{theorem_distribution_invariant_unrooted}), the current coloring is distributed as $\mathcal D(F_t)$. Let $E$ be the event that the update recolors an existing edge at the vertex $r$ where \funcRootToChild (in an insertion) or \funcChildToRoot (in a deletion) is invoked.
Let $L$ be the number of edges recolored by the ensuing \funcFixForbidden cascade (if any).
Then:
\begin{equation}\label{eq:recourse-split}
\expect{\text{recourse}}
=\Pr[E]\cdot \expect{\,1+ L \mid E\,} = \Pr[E]\cdot \left(1+ \expect{\,L \mid E\,}\right).
\end{equation}
We first bound $\Pr[E]$.
Let $\ell=|\ch(r)|$ at the moment the procedure at $r$ is invoked.
For an insertion, $r$ is a root under $\mathcal D(F_t)$ and $E$ is the event that the parent-edge color $\col{2}$ appears on some child edge of $r$, which by~\eqref{eq:root-hit} has probability exactly $\frac{\ell}{\pal}$.
For a deletion, \funcChildToRoot recolors at $r$ with probability exactly $\frac{\ell}{\pal}$ by its own coin flip.
In both cases, $\Pr[E]=\frac{\ell}{\pal} \le \frac{\Delta}{\pal}$.

We now upper bound $\expect{L\mid E}$ in two ways.

\smallskip
\noindent
\textbf{A geometric bound (for $\extra\ge 1$).}
At a visited vertex $v$, the cascade continues past $v$ if and only if the new forbidden color appears on some child edge of $v$.
By~\eqref{eq:local-hit}, this happens with probability at most $q \equiv \frac{\Delta-1}{\pal-1}$.
Thus $L$ is stochastically dominated by a geometric random variable, with success probability $1-q = \frac{\extra}{\pal-1}$,  that counts the number of failures before the first success. Thus, $\expect{L\mid E}\le
\frac{q}{1-q}
=\frac{\Delta-1}{\extra}$. And so, $1+ \expect{L\mid E} \leq \frac{\pal-1}{\extra}$.

\smallskip
\noindent
\textbf{A depth bound (for all $\extra\ge 0$).}
Conditioned on $E$, the remaining recoloring process is a \funcFixForbidden cascade. Thus, by Lemma~\ref{lem:fixforbidden_edge_recolor_prob}, for any fixed edge at depth $i$ below the start vertex, the probability that it is recolored is at most $(\pal-1)^{-i}$. By a union bound over all vertices, we have $
\Pr[L\ge i\mid E]\le \min\{1,\; n\cdot (\pal-1)^{-i}\}$.
Using $\expect{L\mid E}=\sum_{i\ge 1}\Pr[L\ge i\mid E]$, and since $\pal-1 \ge \Delta - 1 \ge 2$, we get:
\[
\expect{L\mid E}
\le \sum_{i=1}^{\lceil \log_{\pal-1} n\rceil} 1 + \sum_{i>\lceil \log_{\pal-1} n\rceil} n\cdot (\pal-1)^{-i}
\le \lceil \log_{\pal-1} n\rceil + \sum_{j\ge 1} (\pal-1)^{-j}
\le 1+\left\lceil \log_{\pal-1} n\right\rceil.
\]
To conclude, combine the two bounds on $1+\expect{L\mid E}$, and substitute $\Pr[E] \le \frac{\Delta}{\pal}$ in~\eqref{eq:recourse-split}.
\end{proof}

We now show that this analysis is tight.

\begin{theorem}
\label{theorem_randomized_dynamic_tight}
Theorem~\ref{theorem_recourse_randomized} is tight up to constant factors for both Algorithms~\ref{algorithm_randomized_main_rooted} and Algorithm~\ref{algorithm_randomized_main_unrooted}.
\end{theorem}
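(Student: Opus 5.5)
We need to exhibit, for each regime, an oblivious update sequence on which the expected recourse per update (or on average) matches $\Theta(\frac{\Delta}{\pal}\min\{\frac{\pal-1}{\extra},\log_{\pal-1}n\})$. Since $\frac{\Delta}{\pal}=\Theta(1)$, the target is $\Omega(\min\{\Delta/\extra,\log_{\pal-1}n\})$ for $\extra\ge1$ and $\Omega(\log_{\Delta-1}n)$ for $\extra=0$. The plan is to build, by insertions only, a complete tree in which every non-root vertex has exactly $\Delta-1$ children, of depth $h=\Theta(\log_{\Delta-1}n)$. Then we repeatedly delete and re-insert a single edge $(p,r)$ whose endpoint $r$ is the root of such a complete subtree. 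Because the coloring is exactly uniform (Theorem~\ref{theorem_distribution_invariant_unrooted}, or Theorem~\ref{theorem_distribution_invariant_rooted} in the rooted case), every repetition starts from the same distribution, so each such update has the same expected recourse. Averaged over the long cycle, this gives the amortized (and per-update) lower bound; the construction phase is negligible.

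For the unrooted algorithm, we must make sure the rerooting rule attaches the intended side. The update is a large-large insertion, so we make the subtree of $r$ the smaller component, e.g.\ by hanging it from a bigger tree or a long path at $p$. The lower bound is then an exact computation. By Lemma~\ref{lem:exact_edge_recoloring_prob}, an edge at depth $d$ below $r$ is recolored with probability exactly $\frac{1}{\pal(\pal-1)^{d-1}}$. There are $\Delta\cdot(\Delta-1)^{d-1}$ or $(\Delta-1)^d$ such edges (depending on whether $r$ has $\Delta$ or $\Delta-1$ children). Hence
\[
\mathbb{E}[\text{recourse}]\ \ge\ \frac{\Delta-1}{\pal}\sum_{d=1}^{h}\Big(\frac{\Delta-1}{\pal-1}\Big)^{d-1}.
\]
This is a geometric sum with ratio $q=1-\frac{\extra}{\pal-1}$, so it equals $\frac{\Delta-1}{\pal}\cdot\frac{1-q^h}{1-q}$. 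If $\extra=0$ then $q=1$ and the sum is $h=\Theta(\log_{\Delta-1}n)$. If $\extra\ge1$, the quantity $\frac{1-q^h}{1-q}$ is $\Theta(\min\{h,\frac{1}{1-q}\})=\Theta(\min\{h,\frac{\pal-1}{\extra}\})$, using $1-(1-x)^h\ge\min\{hx,1\}/e$-type estimates.

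The deletion phase is symmetric. For deletion of $(p,r)$, \funcChildToRoot fires with probability $\ell/\pal$. The cascade is then a \funcFixForbidden run, and Lemma~\ref{lem:fixforbidden_edge_recolor_prob} gives the same per-edge probabilities. So deletions alone already yield the bound, and the delete/insert cycle is legitimate.

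The remaining mismatch is that $h$ is measured in base $\Delta-1$ while the bound uses $\log_{\pal-1}n$. When $\extra\le\Delta-2$, we have $\pal-1\le 2\Delta-3<(\Delta-1)^2$, so $\log_{\Delta-1}n=\Theta(\log_{\pal-1}n)$. This shows the two bases agree up to constants.

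The main obstacle I expect is the construction bookkeeping for the unrooted algorithm. The tree must be built so that the cyclic updates keep the right rerooting orientation, namely the small side is the complete subtree hanging at $r$, and no other vertex's degree exceeds $\Delta$. The construction must also fit in $n$ vertices with depth $\Theta(\log_{\Delta-1}n)$. I would handle this by attaching $r$'s complete subtree (about $n/3$ vertices) to $p$, the root of a larger component (about $2n/3$ vertices). This forces the algorithm to reroot at $r$ on every reinsertion.
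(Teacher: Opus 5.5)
Your proposal is correct and takes essentially the same route as the paper. You repeatedly link and unlink a complete $(\Delta-1)$-ary tree of height $h=\Theta(\log_{\Delta-1} n)$ at its root, which gives expected recourse $\frac{\Delta-1}{\pal}\sum_{i=0}^{h-1}\bigl(\frac{\Delta-1}{\pal-1}\bigr)^i=\Theta(\min\{\Delta/\extra,h\})$ per link. The paper does the same, except that it links two identical copies of the tree at their roots, so the rerooting tie-break is irrelevant and your asymmetric size bookkeeping is unnecessary. Also, $r$ can never have $\Delta$ children while it has the parent edge $(p,r)$, so only your $(\Delta-1)^d$ edge count applies.
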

\begin{proof}
Let $T_h$ be a complete rooted $(\Delta-1)$-ary tree of height $h$.
Consider an adversary which first construct two copies of $T_h$ with roots $r_1$ and $r_2$. Then, it repeatedly inserts and deletes $(r_1,r_2)$.

To get the lower bound, it suffices to consider the insertions of $(r_1,r_2)$. Consider one insertion, and let $r$ be the root that plays the role of the child (where a repair may start). As discussed in the proof of the upper bound (Theorem~\ref{theorem_recourse_randomized}), the repair starts if and only if the color $\alpha$ chosen for $(r_1,r_2)$ appears on one of the $\Delta-1$ child-edges of $r$, which happens with probability $(\Delta-1)/\pal$.  Then, conditioned on the repair starting, at each subsequent level the  \funcFixForbidden cascade continues with probability $p\equiv(\Delta-1)/(\pal-1)$, since the tree is complete.

Since $T_h$ has height $h$, the repair path has expected length $\frac{\Delta-1}{\pal}\sum_{i=0}^{h-1} p^i$, and this is the expected recourse of each $(r_1,r_2)$ insertion. If $\extra \ge 1$, then
$\sum_{i=0}^{h-1} p^i =  \Theta\!\left(
\min\left\{\frac{\Delta}{\extra}, h\right\} \right)$. If $\extra=0$, then $p=1$, so the sum is $\Theta(h)$. Now, since $T_h$ has $\Theta((\Delta-1)^h)$ vertices, we can have $h=\Theta(\log_{\Delta-1} n)=\Theta(\log_{\pal-1} n)$. Thus, for a long enough sequence of updates, the expected amortized recourse per update is $\Omega\!\left(\min\left\{\frac{\Delta}{\extra},\,\log_{\pal-1} n\right\}\right)$.
\end{proof}

\fi

\smallskip
\noindent
\textbf{Lower bounds for constant  $\extra$.}
In the fully dynamic setting, we prove a matching lower bound for $\extra=0$. We also prove an $\Omega(1)$ lower bound for every constant $c$. Both
lower bounds apply to randomized algorithms, even on rooted forests and against
an oblivious adversary. We begin with the case $\extra=0$.

\iffullversion
    \theoremLowerBoundTreesDeltaColors*
\else
    \begin{restatable}[]{theorem}{theoremLowerBoundTreesDeltaColors}
\label{theorem_trees_delta_colors_LB}
    For every $\Delta \geq 3$, any algorithm that maintains a proper $\Delta$-edge-coloring of a fully dynamic forest, has amortized recourse $\Omega\!\left(\log_{\Delta-1} n\right)$ against an oblivious adversary.
    \end{restatable}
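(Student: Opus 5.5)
We will use Yao's principle: we fix a distribution over update sequences and show that every deterministic algorithm has expected amortized recourse $\Omega(h)$ on it, where $h=\Theta(\log_{\Delta-1} n)$. The gadget is the complete $(\Delta-1)$-ary tree $T_h$ of height $h$, as in the proof of Theorem~\ref{theorem_randomized_dynamic_tight}, with $\Delta$ colors. The key structural fact is a parity rigidity of $\Delta$-edge-colorings of such trees. Every internal non-root vertex has degree exactly $\Delta$, so all $\Delta$ colors appear at it. Hence, for any two colors $\alpha\ne\beta$, the $(\alpha,\beta)$-bicolored path from the root either stops at the root or goes all the way down to a leaf. Equivalently, the set of colors missing at the root of $T_h$ determines the color at each vertex along the path, which is the bottom-up reconstruction already used in Lemma~\ref{lemma_must_recolor_path}.

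\textbf{Step 1 (forced recoloring).} I would prove the following analogue of Lemma~\ref{lemma_must_recolor_path} for $\extra=0$. Let $r_1,r_2$ be the roots of two copies of $T_h$, each of degree $\Delta-1$, so each misses exactly one color, $\alpha_1$ and $\alpha_2$ respectively. Inserting $(r_1,r_2)$ with $\alpha_1\ne\alpha_2$ forces recourse $\Omega(h)$. To show this, run the bottom-up reconstruction. A set of recolored edges that avoids some root-to-leaf path in both trees leaves each root missing the same color as before, because a full vertex's missing parent color is determined by its children's colors. So every valid recoloring must touch $\Omega(h)$ edges on some path, in fact at least $h$. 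Here I would reuse the argument of Lemma~\ref{lemma_must_recolor_path} essentially verbatim.

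\textbf{Step 2 (randomizing the adversary).} The adversary builds two copies of $T_h$ bottom-up and then repeats the following. It inserts $(r_1,r_2)$, then deletes it. Before the next round it randomizes which root is ``relabeled'' by detaching and reattaching subtrees, or more simply it alternates between two different pairs of the $\Theta(n/(\Delta-1)^h)$ available copies. The goal is that in each round $\Pr[\alpha_1\ne\alpha_2]$ is bounded below by a constant, whatever the algorithm's current state. The cleanest construction keeps $k\ge 3$ copies of $T_h$ and links a uniformly random pair $(r_i,r_j)$ in each round. After a round the merged edge is deleted, and the two roots again miss some colors, possibly changed by the algorithm. Among any three roots, at least two miss different colors unless all miss the same one. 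Linking two roots that miss the same color costs nothing. However, if all $k$ roots miss the same color $\alpha$, this is a $\Delta$-coloring in which every root misses $\alpha$, so I would add the extra step of sometimes inserting a path of two edges $r_i - x - r_j$. This forces the two sides' missing colors to differ at $x$ unless recoloring occurs, because $x$ would need the color $\alpha$ on both edges. Via the same rigidity, a random choice between a direct link and a length-$2$ link then costs $\Omega(h)$ with probability at least $1/2$ whatever the missing colors are. This is the parity trick from the proof of Theorem~\ref{theorem_Delta_is_2}, lifted to $(\Delta-1)$-ary trees.

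\textbf{Step 3 (accounting).} Each round has $O(1)$ updates and expected recourse $\Omega(h)$ by Steps 1--2. The initial construction has $O(n)$ updates, so after $\Omega(n)$ rounds the amortized recourse is $\Omega(h)=\Omega(\log_{\Delta-1}n)$. The main obstacle is Step 2. After each round the algorithm may have recolored deep inside the trees, so the trees are no longer in the pristine configuration. I must argue that rigidity holds for \emph{every} proper $\Delta$-coloring of $T_h$, which it does, since it uses only degrees. I must also check that deleting the link edge restores the topology exactly, so that only the root's missing color matters. Making the direct-versus-length-$2$ choice independent of the algorithm's current missing colors, as an oblivious adversary requires, is the delicate point. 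I would handle it by noting that for any pair of missing colors, exactly one of the two link types is free, so the uniform choice costs $\Omega(h)$ with probability $1/2$.
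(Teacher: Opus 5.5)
Your final construction is essentially the paper's proof. A direct link $(r_1,r_2)$ forces $\alpha_1=\alpha_2$, linking both roots through a common vertex forces $\alpha_1\neq\alpha_2$, and changing a root's missing color forces recoloring an entire root-to-leaf path, because every internal vertex is full. One wording fix for Step~1: the correct claim is that the recolored set must \emph{contain} a full root-to-leaf path, not that it fails to ``avoid some'' path.

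The paper simply alternates the two link types deterministically on two fixed trees. Then every switch costs at least $h$ in every run of any algorithm, randomized or not. That makes Yao's principle, the random link choice, and the $k\ge 3$ random-pair detour unnecessary.
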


\fi

\begin{proof}
    We start by constructing two perfect $(\Delta-1)$-ary trees of depth $h$, with roots $r_1,r_2$. Then, we alternate between linking $r_1$ and $r_2$ directly, and linking them both to the same isolated vertex $u$. Let $\freecol_i$ for $i \in \{1,2\}$ be the unique  color not used by the child edges of $r_i$. When $r_1$ and $r_2$ are linked directly, coloring the edge $(r_1,r_2)$ requires the free colors at the roots to be equal, i.e., $\freecol_1 = \freecol_2$. On the other hand, when $r_1$ and $r_2$ are linked to $u$, coloring the edges $(r_1,u)$ and $(r_2,u)$ requires two distinct colors, and therefore requires $\freecol_1 \neq \freecol_2$. Thus, every switch between the two configurations forces the free color at (at least) one of the roots to change.
    Changing the free color at a root requires recoloring a full root-to-leaf path of length $h$. This is because each internal vertex has degree $\Delta$ and thus, changing the color of its parent edge requires changing the color of (at least) one of its child edges, and this propagates down to a leaf. Therefore, each switch causes $\Omega(h)$ recourse with $O(1)$ updates. Repeating this switch sufficiently many times gives $\Omega(h)=\Omega(\log_{\Delta-1} n)$ amortized recourse per update.
\end{proof}

\iffullversion
\thmFullyRandomizedLB*

\begin{proof}
    We use the same construction from the proof of Theorem~\ref{thm:incremental_randomized_lb}, with the same choice of a constant $\ell$, and $N= \ell \Delta$. First construct $N$ vertex-disjoint stars of degree $\Delta - 1$ with centers $v_1,\dots,v_N$, and keep an isolated vertex $u$.

    We now repeat the following phase $\Delta$ times:  Draw a uniformly random subset $I \subseteq [N]$ of size $\Delta$, and insert the edges $\{(v_i,u)\}_{i \in I}$. Then delete these same edges.

    The proof of Theorem~\ref{thm:incremental_randomized_lb} shows that the expected recourse caused by inserting the edges $\{(v_i,u)\}_{i \in I}$ is $\Omega(\Delta)$. Thus, each phase contributes $\Omega(\Delta)$ expected recourse. Since we repeat this for $\Delta$ phases, the expected total recourse is $\Omega(\Delta^2)$. 

    The total number of updates is $O(\Delta^2)$ as the initial construction of the stars requires $O(\Delta^2)$ insertions, and each of the $\Delta$ phases requires $2\Delta$ updates. Hence the expected amortized recourse per update is $\Omega(1)$. 
\end{proof}

\else

We conclude with a $\Omega(1)$ lower bound for every constant $\extra$. 
\begin{restatable}{theorem}{thmFullyRandomizedLB}\label{thm:fully_randomized_lb}
        For every constant $\extra \geq 0$ and every $\Delta \geq 3$ with
$\extra \leq \Delta-2$, any randomized algorithm that maintains a proper $\pal$-edge coloring of a fully dynamic forests of maximum degree $\Delta$ has expected amortized recourse $\Omega(1)$ against an oblivious adversary, where the hidden constant depends only on $\extra$.
\end{restatable}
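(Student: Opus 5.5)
The plan is to reuse the random gadget from the proof of Theorem~\ref{thm:incremental_randomized_lb} and repeat it many times. In the incremental model that gadget yields only $O(\Delta)$ recourse-incurring events across $O(\Delta^2)$ updates. Deletions let us rebuild the gadget cheaply, so we can pay the $O(\Delta^2)$ star-construction cost once and then run the expensive phase repeatedly. By Yao's principle it suffices to exhibit a distribution over update sequences on which every deterministic algorithm has expected amortized recourse $\Omega(1)$.

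\textbf{Construction.} Fix $\ell$ and $N=\ell\Delta$ exactly as in the proof of Theorem~\ref{thm:incremental_randomized_lb}, where $\ell$ depends only on $\extra$. Build $N$ disjoint stars of degree $\Delta-1$ with centers $v_1,\dots,v_N$, and keep one isolated vertex $u$. Then run $k=\Delta$ phases. In each phase, draw a fresh uniformly random $I\subseteq[N]$ of size $\Delta$, independent of everything before, insert the edges $(v_i,u)$ for $i\in I$, and then delete them. The total number of updates is $O(\Delta^2)+2\Delta k=O(\Delta^2)$. The target is therefore $\Omega(\Delta)$ expected recourse per phase.

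\textbf{Per-phase bound.} The key observation is that the argument of Theorem~\ref{thm:incremental_randomized_lb} never used anything about the state at time $t'$ beyond it being a proper coloring of the stars that is independent of $I$. Let $t'$ be the start of a phase, after the previous deletions. All edges at $u$ are gone, and each star center has $\Delta-1$ leaf edges, hence exactly $\extra+1$ free colors $U'_i$. The algorithm is deterministic, and $I$ is drawn fresh and independently of the previous phases, so the configuration $\{U'_i\}$ is independent of $I$. Conditioning on the history up to $t'$, the counting argument gives recourse at least $|M'|-\extra$ during the insertions. The sparse-color estimate (Case~1) or the pigeonhole estimate (Case~2) then gives $\expect{|M'|}-\extra=\Omega(\Delta)$, with constants depending only on $\extra$. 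Deletions never force recourse, but any recourse the algorithm performs there only helps the bound. Summing over the $\Delta$ phases gives $\Omega(\Delta^2)$ expected total recourse over $O(\Delta^2)$ updates, hence $\Omega(1)$.

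\textbf{Main obstacle.} The subtle step is the independence claim. The algorithm may recolor star leaves in earlier phases, and it might try to adapt the stars' free-color sets so as to make future phases cheap. It cannot anticipate the fresh $I$, however, and the lower bound from Theorem~\ref{thm:incremental_randomized_lb} holds for an \emph{arbitrary} fixed assignment of the sets $U'_i$. This is exactly what the sparse-color and pigeonhole arguments show, since both work for every configuration. I would state this uniformity explicitly, then apply the tower rule over the phases.
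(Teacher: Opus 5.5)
Your proposal is correct and follows the paper's proof: the same $N=\ell\Delta$ star gadget, $\Delta$ insert-then-delete phases each with a fresh uniformly random $I$, $\Omega(\Delta)$ expected recourse per phase via the incremental argument, and $O(\Delta^2)$ updates in total. Your explicit observation is also sound, and the paper leaves it implicit. The incremental bound holds for every fixed configuration of the $U'_i$ (each of size $\extra+1$ once the previous phase's edges are deleted), which justifies conditioning on the history and applying the tower rule.
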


The proof follows by reusing the incremental hard distribution from the proof of Theorem~\ref{thm:incremental_randomized_lb}. Deletions allow us to reuse the constructed stars, and hence obtain larger recourse per update. Recall that the incremental lower bound shows that connecting a uniformly random subset of $\Delta$ star centers to an isolated vertex $u$ causes $\Omega(\Delta)$ expected recourse. We then delete these $\Delta$ edges, draw a new random subset of $\Delta$ star centers, connect them to $u$, and repeat. Each such phase requires $2\Delta$ updates and causes expected recourse $\Omega(\Delta)$. Repeating this sufficiently many times gives expected amortized recourse of $\Omega(1)$.
\fi

\fullversiontrue

\newpage

\section{Concluding Remarks}
\label{section_conclusions}
In this paper, we investigated the recourse required to maintain a proper edge-coloring of forests, in both the incremental and fully dynamic models. Although forests are much simpler than general graphs, this problem still poses intricate algorithmic and analytical challenges, and exhibits a rich landscape of recourse bounds. We provided tight analyses for natural algorithms, including \greedy{} up to tie-breaking and the randomized distribution-maintenance algorithm \distalg{}. We also proved several lower bounds for any algorithm, showing separations between the incremental and fully dynamic models in both the deterministic and randomized settings, and establishing the optimality of \distalg{} in several regimes.  Nevertheless, several intriguing questions remain open for future work. The main ones are as follows:

\begin{enumerate}
    \item Our analysis of greedy{} in the incremental model is tight under a particular tie-breaking rule (Theorem~\ref{theorem_insert_only_LB}).  Could the $O\!\left(\invExtraPlusSqrt \right)$ upper bound of Theorem~\ref{theorem_amortized_insertion_only_improved_UB} be improved by a better tie-breaking rule, or by a different deterministic algorithm?

    \item What is the optimal recourse that  fully dynamic deterministic algorithms can achieve?  Theorem~\ref{theorem_amortized_DplusC_LB} gives an $\Omega(1)$ lower bound, and we have a matching upper bound only for rooted forests with $2\Delta-2$ colors (Theorem~\ref{theorem_rooted_trees_recourse_constant_UB}). Can we get $O(1)$ amortized recourse for unrooted forests, or for a wider range of palettes? 

    \item Can one prove a separation between deterministic and randomized algorithms? For example, in the incremental model, \distalg{} achieves an optimal $\Theta(1/\Delta)$ expected amortized recourse for every constant $\extra$, while our best deterministic upper bound is $O(1/\sqrt{\Delta})$. Is randomization necessary to achieve $O(1/\Delta)$ recourse? 
    
    \item What is the optimal recourse when $\extra$ grows with $\Delta$? In particular, can we prove nontrivial lower bounds for $\extra = \Omega(\Delta)$? In the fully dynamic model, can the $\Omega(1)$ lower bound of Theorem~\ref{thm:fully_randomized_lb} be improved? Our results prove optimality of \distalg{} only for constant $\extra$ in the incremental model, and only for $\extra=0$ in the fully dynamic model. Resolving these questions is necessary for a complete understanding of the optimality of \distalg{}.
\end{enumerate}


\appendix
\section{Running Times}
\label{appendix_running_times_analysis}

As emphasized in the introduction, the focus of this paper is on recourse. In this section we complement our algorithms with running time analysis. Recall that not every update incurs recourse, but every update takes $\Omega(1)$ time to process. Therefore, a trivial lower bound on the update time is $\Omega(1+R)$ where $R$ is the recourse incurred by the update.

We show that our main technical result, \distalg{} can be implemented optimally up to a constant factor in the fully dynamic model.

We then show an implementation of a variant of \colorfulpath{} with $O(\Delta)$ amortized running time. In the realm of \greedy{} and its variants, one should not naturally expect highly efficient algorithms, because greediness entails a global perspective which usually requires processing the whole instance of a problem, the whole forest in our case. We still derive polynomial algorithms for this family.

Throughout, we assume that standard pointer and arithmetic operations, as well as sampling a uniformly random integer from a range of size at most $\pal$, take $O(1)$ time.

\subsection{\distalg{} Running Time}
In this section we consider implementations for \distalg{}.

\begin{theorem}
\label{theorem_runtime_distalg_arrays}
\distalg{} can be implemented in the fully dynamic model so that every update that incurs recourse $R$ is processed in $O(1+R)$ time.
\end{theorem}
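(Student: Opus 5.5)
We will implement \distalg{} directly with a few local data structures and check that every primitive in Algorithms~\ref{algorithm_randomized_main_unrooted} and~\ref{algorithm_randomized_helpers} costs $O(1)$ time, apart from one unit of work per recolored edge.

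\textbf{Per-vertex arrays.} For each vertex $v$ we keep an array $\mathrm{col}_v[1..\pal]$ mapping a color to the incident edge of that color (or null). Next to it we keep a compact array $A_v$ listing the free colors at $v$, together with a position index $\mathrm{pos}_v[\cdot]$, so that insertion, deletion and uniform sampling from $A(v)$ take $O(1)$ time by the swap-with-last trick. We also store the child edges of $v$ in a compact array, which lets us sample a uniformly random child in $O(1)$. The number of children $\nrchild$ is simply $\deg(v)$ or $\deg(v)-1$. With these structures, every line of \funcRootToChild, \funcChildToRoot and \funcFixForbidden takes $O(1)$: the lookups ``the unique child with color $\col{2}$'' and ``sample $\col{1}\in A(r)$'' are direct, and each recoloring updates the arrays at both endpoints in $O(1)$. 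Hence a repair cascade that recolors $R$ edges runs in $O(1+R)$ time.

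\textbf{Implicit rooting.} Rerooting is the main obstacle. The algorithm needs parent/child relations, but explicitly rerooting $T_v$ could cost $\Theta(|T_v|)$. We avoid storing orientations altogether. By Lemma~\ref{lem:D-uniform}, the coloring is uniform over proper colorings regardless of the root, so orientation is only a conceptual device. The only place the algorithm uses the parent is to define ``child edges of $r$'' and the forbidden color.

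In \funcFixForbidden{$w,\col{2},\col{1}$}, the cascade enters $w$ through a known edge $e$. The child edge of color $\col{1}$ is then just the incident edge of color $\col{1}$ other than $e$. The new color of $e$ is $\col{1}$, and the walk moves away from where it came from, so $\mathrm{col}_w[\col{1}]$ returns the correct edge.

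\textbf{Insertion and deletion.} At an insertion, $v$ becomes a root whose children are all of its incident edges, so \funcRootToChild samples from $A(v)$ with no orientation needed. For \funcChildToRoot after deleting $(u,v)$, we must decide which endpoint is the ``child''. Since the distribution is root-invariant, we may declare either endpoint, say $v$, to be the child, and treat all remaining incident edges of $v$ as its children. This is exactly the rooted view in which $u$'s side contains the root, which is legitimate by Lemma~\ref{lem:D-uniform}.

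The remaining obstacle is the choice rule at insertion. It compares $|E(T_u)|$ with $|E(T_v)|$, and this requires component sizes under links and cuts. I would first check whether the choice is needed only for the incremental amortized bound (Theorem~\ref{theorem_unrooted_insertion_only_randomized_tight}). In the fully dynamic model, Theorem~\ref{theorem_recourse_randomized} holds for any rerooting choice, so we can pick, say, the endpoint with the smaller degree (available in $O(1)$). If sizes are required in the incremental model, we maintain them with union-find; but the stated $O(1+R)$ bound concerns the fully dynamic model, so the degree rule suffices there.
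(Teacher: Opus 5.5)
Your proposal is correct and is essentially the paper's proof. Per-vertex color-indexed arrays plus a swap-with-last permutation of the colors give $O(1)$ sampling and $O(1)$ local recolorings, and in the fully dynamic model the rerooting rule can be arbitrary because Theorem~\ref{theorem_recourse_randomized} does not depend on it. Your explicit treatment of implicit rooting (the cascade moves away from its entry edge, and at a deletion either endpoint may be declared the child) spells out a point the paper leaves tacit. Two small fixes: your sampling array should hold incident edges rather than child edges, and the paper initializes the size-$\pal$ arrays lazily to avoid an $O(n\pal)$ setup cost, which you do not address.
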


\begin{proof}
In the fully dynamic model, the proof of Theorem~\ref{theorem_recourse_randomized} does not rely on which tree is re-rooted and linked as the child of the other when an edge is inserted. Therefore, for the purpose of efficient implementation, we can use a variant of \distalg{} that chooses this orientation arbitrarily. It remains to describe how to implement the edge insertion and deletion steps of Algorithm~\ref{algorithm_randomized_main_rooted}, together with the helper functions in Algorithm~\ref{algorithm_randomized_helpers}. We show that there
is a data structure that supports all required local operations at a vertex in $O(1)$ time. The required operations are: sampling a uniformly available color, sampling a uniformly used color, inserting or deleting a colored edge, recoloring an existing edge to an unused color, and swapping the colors of two existing incident edges. Since each local fix at a vertex performs $O(1)$ such operations and $O(1)$ additional work, this gives $O(1+R)$ time for an update that incurs recourse $R$.

For every vertex $v$, we maintain two arrays of length $\pal$, denoted by $H_v$ and $L_v$. We also maintain the degree $d_v \equiv \deg(v)$. The array $L_v$ stores a permutation of the colors $[\pal]$, maintained so that the first $d_v$ entries are exactly the colors currently used on edges incident to $v$. 
The array $H_v$ is indexed by colors. For a color $\col{1}$, the entry $H_v[\col{1}]$ stores two fields. The first is a pointer to the neighbor of $v$ over an edge of color $\col{1}$, or null if no such edge exists. The second is the location (index) of $\col{1}$ in $L_v$. These arrays are initialized lazily. Initially, $L_v$ is interpreted as the identity permutation, so every color $\col{1}$ is interpreted as being located at position $\col{1}$ in $L_v$, and every neighbor pointer in $H_v$ is interpreted as null. Entries are initialized only when they are first changed.\footnote{This implementation requires $O(\Delta)$ space per vertex for a total of $O(\Delta \cdot n)$ space, although there can be at most $O(n)$ edges at any given time. 
The explicit arrays at $v$ can be replaced by implicit representations using hash tables of size $O(d_v)$, with an additional tweak to the invariant such that the permutation $L_v$ is maintained as a permutation of order at most $2$ (keeping explicitly only the swapped pairs). This gives an optimal $O(n)$ space, but in exchange the update time is expected and amortized $O(1+R)$.}

The invariant on $L_v$ lets us sample colors in $O(1)$ time. To sample a uniformly available color at $v$, we sample $x \sim_U [d_v+1,\pal]$ and return $L_v[x]$. To sample a uniformly used color at $v$, we sample $x \sim_U [d_v]$ and return $L_v[x]$.

We now describe how local updates are maintained at $v$. Suppose that an edge from $v$ to $u$ is inserted with color $\col{1}$. We first increment $d_v$. Then, using $H_v[\col{1}]$, we find the current location of $\col{1}$ in $L_v$.  Since $\col{1}$ was previously unused at $v$, this location is at least $d_v$ (by the invariant). We then swap $\col{1}$ with the color currently stored at $L_v[d_v]$. This makes $\col{1}$ one of the first $d_v$ entries of $L_v$, preserving the invariant that these entries are exactly the used colors. We also update the affected color locations in $H_v$, and set the neighbor pointer in $H_v[\col{1}]$ to $u$. This preserves the invariant and takes $O(1)$ time.

When an edge of color $\col{1}$ from $v$ to $u$ is deleted, we swap $\col{1}$ with the color currently stored at $L_v[d_v]$, update the affected color locations in $H_v$, set the neighbor pointer in $H_v[\col{1}]$ to null, and then decrement $d_v$. This also preserves the invariant and takes $O(1)$ time.

A recoloring of an edge $(v,u)$ from color $\col{1}$ to a color $\col{2}$ that is currently unused at $v$ is implemented by swapping the locations of $\col{1}$ and $\col{2}$ in $L_v$, updating their locations in $H_v$, setting the neighbor pointer of $H_v[\col{2}]$ to $u$ and the pointer of $H_v[\col{1}]$ to null. Finally, if two existing incident edges of $v$ swap their colors $\col{1}$ and $\col{2}$, then both colors remain used at $v$, so we keep $L_v$ unchanged. We only swap the corresponding neighbor pointers in $H_v[\col{1}]$ and $H_v[\col{2}]$. This also preserves the invariant and takes $O(1)$ time.

The edge insertion and deletion steps of Algorithm~\ref{algorithm_randomized_main_rooted} use only the operations described above: they sample a color, insert or delete the updated edge at its endpoints, and then call one of the helper functions in Algorithm~\ref{algorithm_randomized_helpers}. Similarly, each non-recursive step of the helper functions performs only $O(1)$ such local operations at a vertex. In particular, a recursive step of $\funcFixForbidden{v,\col{1},\col{2}}$ that moves the conflict from $v$ to its child (rather than finish the recoloring) is implemented by swapping at $v$ the color of the parent edge $\col{1}$ with the child edge of color $\col{2}$. Each recursive step corresponds to one unit of recourse: it recolors exactly one edge and continues with the same repair at the child vertex. Therefore, an update that incurs recourse $R$ performs $O(R)$ recoloring steps and $O(1)$ additional work, so the total update time is $O(1+R)$.
\end{proof}

\begin{theorem}
\label{theorem_runtime_distalg_arrays_bsts_variant}
\distalg{} can be implemented in the incremental model so that its expected amortized running time is $O(\alpha(n) + R)$ where $R$ is its expected amortized recourse and $\alpha(n)$ is the inverse Ackermann function.
\end{theorem}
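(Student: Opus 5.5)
The plan is to reuse the $O(1)$-per-local-operation data structure from the proof of Theorem~\ref{theorem_runtime_distalg_arrays}. All coloring work (sampling an available or used color, inserting, deleting and recoloring incident edges, and swapping colors along the \funcFixForbidden cascade) then costs $O(1)$ per recolored edge plus $O(1)$ per update. What remains is the structural work the unrooted incremental algorithm (Algorithm~\ref{algorithm_randomized_main_unrooted}) adds on top of the fully dynamic implementation:
\begin{enumerate}
\item finding the components $T_u,T_v$ and comparing their sizes $|E(T_u)|,|E(T_v)|$;
\item comparing $\deg(u),\deg(v)$ when both trees are small;
\item rerooting $T_v$ at $v$, so that parent/child relations are correct for the cascade.
\end{enumerate}
For (1), maintain a union--find structure over vertices with union by size and path compression, storing the edge count at each set representative. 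Every insertion merges two components, so we perform one \textsc{find} per endpoint and one \textsc{union}, in amortized $O(\alpha(n))$ time. Degrees are already stored, which handles (2).

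The main obstacle is (3): rerooting physically costs time proportional to the path from $v$ to the old root, which can be large. My first attempt is to avoid explicit parent pointers. The cascade only needs, at a visited vertex $w$, the color of the edge through which the cascade entered and the child edge carrying the new forbidden color. \funcFixForbidden enters $w$ along a known edge and looks up $H_w[\col{2}]$, so the ``parent'' is simply the edge the walk arrived from. Rooting is therefore implicit: the walk moves away from the attachment vertex, and any edge other than the entry edge counts as a child. In \funcRootToChild the entry vertex $v$ has no parent edge in $T_v$, so all its incident edges are children. Checking Algorithm~\ref{algorithm_randomized_helpers} line by line, each step only queries ``is there an incident edge of color $\col{2}$ other than the entry edge'' and ``sample a color available at $r$'', both answerable from $H$ and $L$ in $O(1)$ time. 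So no rerooting work is needed, since the analysis (Lemma~\ref{lem:D-uniform}) already permits any root.

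If some step does need explicit parents, the fallback is to reroot the smaller component (by edge count) eagerly in time $O(|E(T_v)|)$. That cost should then be charged by the doubling argument, with the Type~1 case costing $O(\Delta)$ per insertion. However, that would only give $O(\log n)$ amortized time, so I would rely on the implicit-rooting argument.

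Combining everything, an insertion costs $O(\alpha(n))$ for union--find plus $O(1+R_t)$ for coloring, where $R_t$ is the recourse of that insertion. Taking expectation and averaging over the sequence gives expected amortized time $O(\alpha(n)+R)$, with $R=\Theta(1/\Delta)$ by Theorem~\ref{theorem_unrooted_insertion_only_randomized_tight}.
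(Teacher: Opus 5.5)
Your proposal is correct and follows the paper's proof. Both reuse the $O(1+R)$ array-based implementation, then add a union--find structure (plus the stored degrees) to decide which side to reroot, at $O(\alpha(n))$ amortized cost per insertion. Your explicit argument that rerooting needs no work, because the cascade only ever treats the entry edge as the ``parent'', is a point the paper leaves implicit by treating rerooting as purely conceptual; it is a useful clarification, not a different route.
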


\begin{proof}
We can repeat the proof of Theorem~\ref{theorem_runtime_distalg_arrays}, which gives $O(1 + R)$ running time. The one missing argument regards the re-rooting of the trees done by Algorithm~\ref{algorithm_randomized_main_unrooted}, which could be ignored when discussing the fully dynamic model.

To know the size of the trees being linked, we maintain an additional union-find data structure, initially each vertex is a singleton. When we link two trees, we \emph{unite} them and sum their size, and when we need to check a size we \emph{find} it from the vertex in question. If we determined that both trees are small, checking a degree takes $O(1)$ since we already maintain $\deg(v)$ for $v$. The added amortized running time due to the union-find data structure is $O(\alpha(n))$, so the total expected amortized bound is $O(\alpha(n) + R)$.
\end{proof}

\subsection{\colorfulpath{} Running Time}

\begin{theorem}
Consider a variant of \colorfulpath{} as detailed in the proof of this statement. We can implement it with $O(\Delta)$ amortized update time.
\end{theorem}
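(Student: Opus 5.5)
We implement \colorfulpath{} (Algorithm~\ref{algorithm_rooted_forest}) step by step. Since the number of steps of an update is exactly $1+R$, where $R$ is its recourse, it suffices to process each step in $O(\Delta)$ time. The total time is then $O(\Delta)$ times the number of steps. By Theorem~\ref{theorem_rooted_trees_recourse_constant_UB}, the amortized number of steps is $O(1)$, which gives $O(\Delta)$ amortized update time. Deletions take $O(1)$ time, since they only remove an edge and keep the coloring.

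For the data structure, every vertex $v$ stores the following: its parent pointer, the color of its parent edge, its degree, and an array $H_v$ of length $\pal=2\Delta-2$. The array is indexed by color and gives the child, or the parent, reached over the edge of that color, or null if the color is unused at $v$. With these, the test $\alpha\in A(v)$ takes $O(1)$, and enumerating $A(v)$ or the children of $v$ takes $O(\Delta)$. A step on an uncolored edge $(u,v)$ entered from $u$ then does the following.
\begin{enumerate}
\item It checks whether $A(u)\cap A(v)\neq\emptyset$ by scanning all $\pal$ colors against $H_u$ and $H_v$, in $O(\Delta)$ time.
\item Otherwise it searches for a type~2 move. For each child $w$ of $v$ (at most $\Delta-1$ of them), it must decide whether $\colorf(v,w)\in A(u)$, which is $O(1)$, and whether $A(w)\cap A(v)\neq\emptyset$. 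The latter is the expensive part, since a naive scan costs $O(\Delta)$ per child, $O(\Delta^2)$ in total.
\item Otherwise it makes a type~3 choice: it reads $\colorf(\grandparent(u),\parent(u))$ via the parent pointers and picks any child of $v$ whose edge has a different color, in $O(\Delta)$ time.
\end{enumerate}
Recoloring an edge updates two entries in each of two arrays, in $O(1)$ time.

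The main obstacle is the type~2 test, which a naive implementation makes $O(\Delta^2)$ per step. This is why we consider a variant of the algorithm. Observation~\ref{observation_topology_and_color_independent} says that a step on $(v,w)$ entered from $v$ fails to stop only if $\deg(v)=\deg(w)=\Delta$ and $A(v)=\overline{A}(w)$. So, at the moment $(u,v)$ is uncolored and $A(u)\cap A(v)=\emptyset$, an edge $(v,w)$ has an available color if and only if $\deg(w)<\Delta$ or the used-color set at $w$ differs from $A(v)$. To make this fast, each vertex $w$ maintains a fingerprint of its child-edge color set, namely a $\pal$-bit mask packed into $O(1)$ words or a hash of the set, updated in $O(1)$ per recoloring. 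We compare it with the fingerprint of $A(v)$ adjusted by the parent color. This lets us check all children of $v$ in $O(\Delta)$ total.

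If word-level operations on $\Theta(\Delta)$ bits are not allowed, the alternative variant is to test only a bounded number of children. For example, we could keep the single candidate that the potential argument of Theorem~\ref{theorem_rooted_trees_recourse_constant_UB} cares about: the unique ready child edge of $v$. We would record it in a pointer at $v$ when it becomes ready and clear it when it becomes unready. The remaining work is to re-verify the potential argument for this variant. The proof only used type~2 steps to stop on ready edges, and all other steps may be treated as type~3, so the potential argument should go through unchanged. This check is the step I expect to need the most care.
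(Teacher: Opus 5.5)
Your fallback is essentially the paper's proof. The paper's variant also tests, at $v$, only the unique ready child edge, stored as a pointer together with its ready color. It observes that each insertion, deletion or recoloring changes the readiness of only $O(1)$ edges. It makes the type~3 choice in $O(1)$ by inspecting the first three neighbors, and pays $O(\Delta)$ only for the availability test, using a temporary array of size $\pal$. Your coarser $O(\Delta)$ per step gives the same bound, because the amortized number of steps is $O(1)$.

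The check you flag is harmless. Readiness of $e_1$ depends only on the \textsc{ChooseColor} rule applied at $e_2$, and the variant keeps that rule. The only place the potential argument uses type~2 steps is the claim that if $(v_i,y)$ were ready, the algorithm would have stopped there. That claim is exactly what the restricted test guarantees.

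Your primary fingerprint route, however, does not hold as written.
\begin{itemize}
\item A $\pal$-bit mask fits in $O(1)$ words only if the word size is $\Omega(\Delta)$. The paper's model (standard pointer and arithmetic operations) does not provide this. With $\Theta(\log n)$-bit words, each comparison costs $\Theta(\Delta/\log n)$, so the step costs $\Theta(\Delta^2/\log n)$.
\item A hash fingerprint makes the deterministic \colorfulpath{} randomized with one-sided error. A collision can make every stoppable child of $v$, including a ready one, look unstoppable. The algorithm then takes a type~3 step past a ready edge, and the potential argument's claim that no ready edge $(v_i,y)$ is skipped fails for that update. You would need an oblivious adversary and could only claim an expected bound.
\end{itemize}

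If you want to implement the unmodified algorithm deterministically, keep for every non-root $w$ the counter $\mathrm{ov}(w)=|C(\parent(w))\cap C(w)|$, where $C(x)$ is the set of child-edge colors of $x$. When $(u,v)$ has no available color, we have $\deg(v)=\Delta$. Then the child edge $(v,w)$ can absorb the shift if and only if $\deg(w)<\Delta$ or $\mathrm{ov}(w)>0$, so the type~2 search takes $O(\Delta)$. A change of one child-edge color at $v$ updates $\mathrm{ov}$ for the children of $v$ and for $v$ itself in $O(\Delta)$ time using your arrays. This is affordable, since there are only $O(1)$ amortized recolorings.
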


\begin{proof}
The variant of \colorfulpath{} differs as follows:
\begin{itemize}
    \item The proof of Theorem~\ref{theorem_rooted_trees_recourse_constant_UB} defined the concept of a \emph{ready} edge and used it for the analysis. A ready edge was guaranteed to have an available color if recoloring reaches it. It was also argued (mostly by definition) why every vertex can have at most one ready edge towards a child. Therefore, in Algorithm~\ref{algorithm_rooted_forest}, when we cannot finish recoloring on the current edge $(u,v)$, we don't consider every possible child $w$ of $v$ to see if $A(w) \cap A(v) \ne \emptyset$, we only consider its \emph{ready} child, if such exists, and otherwise we proceed as if no child $w$ is viable.
\end{itemize}

To implement the algorithm efficiently, every vertex maintains its edges in a list. For convenience each edge also points to its endpoints and its place in their lists, and holds its color. Insertion and deletion take $O(1)$ time. Every edge also maintains a pointer to its \emph{ready} child-edge, if such exists, with its ready to use color. Every vertex also marks its parent among its neighbours (recall that the forest is \emph{rooted}).

Revisiting the proof of Theorem~\ref{theorem_rooted_trees_recourse_constant_UB}, we see that every insertion, deletion or recoloring of an edge affects the readiness of $O(1)$ edges in the vicinity of the update. For example, when an edge $e$ is recolored it makes its grandparent edge $e' =(p,v)$ ready, if $e'$ was recolored during the same update. We can use the parenting information to update the readiness of $e'$ for $p$. More generally, since readiness of an edge is only affected by changes in its neighbour edges or a grandchild edge, we can use the parenting information to update readiness in $O(1)$.

The main difficulty is to pick a color for an edge, and we differentiate a few cases:
\begin{enumerate}
    \item In the final \emph{else-case} of the algorithm, in the call to function \funcChooseColor{}, we must choose a used color of a child-edge of $v$ that is different than a specific color. We can simply consider the first three neighbours in the list of neighbours: at most one is the parent (and the edge $(\parent(v),v)$ is currently uncolored), and at most another uses the forbidden color. Then it takes $O(1)$ time to determine the color and neighbour over which to execute \funcRecursiveStep{}. Note that $v$ is guaranteed to have three neighbours: $\Delta \ge 3$ and $\pal = 2\Delta-2$, so if $\deg(v) < 3$ then the edge $(\parent(v),v)$ must have had an available color, which contradicts the need to shift a color from a child of $v$.

    \item When recoloring reaches a ready edge $e$, it simply colors it with the associated ready color that was known and was stored when the edge became ready. By design, this is the color of the parent of $e$ at the moment that $e$ became ready.

    \item The bottleneck of the running time is checking if a newly inserted edge $(u,v)$ has an available color, $\col{1} \in A(u) \cap A(v)$. We can use a temporary array of size $\pal$ to mark colors used by $u$ and $v$ and then re-scan it and see if any color remains unmarked. \qedhere
    \end{enumerate}
\end{proof}

\subsection{Greedy Family Running Times}
In this section we analyze the running times of the greedy variants. The greedy variants are simple to compute on trees in the polynomial sense. We emphasize that the greedy variants are natural to consider for recourse but are not necessarily natural for running time considerations because of their global nature that could be affected by non-local changes in the coloring of a given graph.

\begin{theorem}
\label{theorem_runtime_greedy_variants}
\greedy{}, \greedyshift{} and \greedypath{} can be implemented in $O(\Delta^4 \cdot n)$,  $O(\Delta \cdot n)$,  $O(n)$ update times, respectively.
\end{theorem}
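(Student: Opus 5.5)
We need algorithms for the three greedy variants with update times $O(\Delta^4 n)$, $O(\Delta n)$ and $O(n)$. All three use one framework: dynamic programming over the tree containing the new edge. When $e=(u,v)$ is inserted, we root the merged tree at $e$, so the two sides $T_u$ and $T_v$ are rooted at $u$ and $v$. We then compute bottom-up the cheapest way to make a prescribed color available at the top of each subtree. Deletions do no work beyond $O(1)$ list updates, so we only need to bound insertions. Because the recolored set $H\cup\{e\}$ is connected, every choice of the variants can be phrased as a recursion that moves from a vertex down to its children.

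\textbf{\greedypath{} in $O(n)$.} Let $x$ be a vertex with parent edge $(p,x)$, currently uncolored and about to receive color $\gamma$. Define $P(x,\gamma)$ as the minimum number of recolorings inside $T_x$ that make $\gamma$ free at $x$, where the recoloring is restricted to a path. If $\gamma$ is already free at $x$, then $P(x,\gamma)=0$. Otherwise the unique child edge $(x,w)$ colored $\gamma$ must take some color $\delta\in A(x)\setminus\{\gamma\}$, and then $P(x,\gamma)=1+\min_{\delta}P(w,\delta)$. For each vertex we only need one value of $\gamma$ at a time, namely the color dictated by its parent's choice. The useful quantity is $\min_\delta P(w,\delta)$ with $\delta$ ranging over $A(x)$. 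I would compute the per-child value $m(w)=\min_{\delta} P(w,\delta)$ by observing that $P(w,\delta)=0$ whenever $\delta\in A(w)$. So $m(w)=0$ if $A(x)\cap A(w)\neq\emptyset$, and otherwise $m(w)=1+\min_{z}m(z)$ over the children $z$ of $w$. The intersection test for $A(x)\cap A(w)\neq\emptyset$ costs $O(\Delta)$, but summed over edges it is dominated by degree counts. If I only compare $\deg(x)+\deg(w)$ against $\pal$ it is $O(1)$, because in a path-shift step the intersection is empty exactly when $\deg(x)+\deg(w)-2 = \pal-1$ under the uncolored-edge accounting. One bottom-up pass then gives $O(n)$.

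\textbf{\greedyshift{} in $O(\Delta n)$.} Fans are now allowed: an uncolored edge $(x,w)$ may take a color free at $x$ but used at $w$, or free at $w$ but used at $x$. The uncolored edge can therefore move to a sibling edge at the same vertex. I would keep, for each vertex $x$ and each of its $O(\Delta)$ incident edges, the cheapest cost of resolving the shift when it enters $x$ through that edge. These values come from a min over the neighbours' values, and prefix/suffix minima over neighbours avoid the quadratic blow-up. Each vertex then contributes $O(\Delta)$ work, for $O(\Delta n)$ in total.

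\textbf{\greedy{} in $O(\Delta^4 n)$.} Here $H$ is an arbitrary connected recolored set. The natural DP state is (vertex $x$, new color $\gamma$ of the parent edge, old color of the parent edge), which gives $O(\pal^2)=O(\Delta^2)$ states per vertex. Let $C(x,\gamma)$ be the minimum number of recolored edges in $T_x$ such that, with the parent edge colored $\gamma$, the coloring of $T_x$ is proper. At $x$ we must choose new colors for a subset of its child edges so that the child colors together with $\gamma$ are distinct. The cost is the number of changed child edges plus $\sum_w C(w,\text{new color of }(x,w))$, where an unchanged child contributes $0$ if the parent color of $w$ did not change. This is a min-cost bipartite assignment between the children of $x$ and the colors, of size $O(\Delta)\times O(\Delta)$. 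Each child changes its color at most once, so the assignment matrix has entries $C(w,\delta)$. For all $\gamma$ simultaneously, solving by Hungarian-style augmentations would cost $O(\Delta^3)$ per $\gamma$, and summing over the $\Delta$ values of $\gamma$ per vertex gives $O(\Delta^4)$. Summed over $n$ vertices this is $O(\Delta^4 n)$.

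\textbf{Main obstacle.} The step I expect to be hardest is justifying that the assignment at $x$ decomposes exactly. This needs two facts. First, the recolored set restricted to the subtree of each child is independent given that child's new parent color, which follows from the tree structure as in Definition~\ref{def:DF}. Second, the $\pal-1$ colors available to the children define a clean assignment instance whose size is $O(\Delta)$ rather than $O(\pal)$ times the number of children, which holds since $\pal\le 2\Delta-2$. Once these are established, the cheapest set $H$ is recovered by backtracking the DP in the same time bounds.
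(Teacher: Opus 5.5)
Your \greedy{} part is the paper's argument: a DP over (vertex, new parent-edge colour), where each vertex solves an $O(\Delta)\times O(\Delta)$ assignment problem by the Hungarian method in $O(\Delta^3)$, over $O(\Delta n)$ subproblems. The old parent colour is fixed by the current colouring, so it is not a state coordinate; there are only $O(\pal)$ states per vertex, which is what your count actually uses.

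The \greedypath{} part has two concrete errors.
\begin{itemize}
\item Your recursion $m(w)=1+\min_z m(z)$ over \emph{all} children of $w$ is wrong. Once $(x,w)$ takes $\delta$, only the child edge coloured $\delta$ is uncoloured. So the minimum must range over children $z$ with $\colorf(w,z)$ free at $x$. During the shift, the free set at a non-root $x$ is $A(x)\cup\{\colorf(\parent(x),x)\}$, since the old parent colour has been vacated. For example, if $\pal=\Delta$ and $\deg(x)=\deg(w)=\Delta$, exactly one colour is free at $x$, so exactly one of the $\Delta-1$ children of $w$ is reachable.
\item Whether the free sets at $x$ and $w$ intersect is not determined by degrees; degrees give only a necessary condition. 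Take $\pal=4$, $\Delta=3$, $\deg(x)=\deg(w)=3$, and colours $\{2,3\}$ on the other edges at $x$, so $\{1,4\}$ is free there. If $w$'s child edges are coloured $\{1,4\}$, the free sets are disjoint. If they are coloured $\{1,2\}$, both sets contain colour $4$.
\end{itemize}
The fix is to count, using colour-indexed arrays, the child edges of $w$ whose colour is free at $x$. The free sets are disjoint iff this count equals the number of colours free at $x$. This costs $O(\deg(w))$ per edge and $O(n)$ overall. It is the same charging that makes the paper's top-down BFS linear: the check at $(x,w)$ is paid for by extending into the child edges of $w$.

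Your \greedyshift{} sketch does not handle fans. A fan at $x$ moves the uncoloured edge from $(x,w)$ to a sibling $(x,w')$, and this is admissible iff $\colorf(x,w')\in A(w)$. So the values of the child edges of $x$ depend on one another cyclically. They are shortest-path distances in a colour-dependent digraph on the children of $x$, not a ``min over neighbours'' that can be filled in bottom-up. Prefix/suffix minima give a minimum over all but one entry, which cannot capture arcs that exist only for particular colours.

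The bound still holds, because the free set at $x$ does not change along a fan:
\begin{itemize}
\item For each child edge $(x,w)$, first compute its stop/descend cost, descending only into children of $w$ whose colour is free at $x$.
\item Use these as initial distances in a multi-source shortest-path computation on the sibling digraph, e.g.\ array-based Dijkstra in $O(\deg(x)^2)$ time.
\item Since $\sum_x\deg(x)^2\le\Delta\sum_x\deg(x)=O(\Delta n)$, the total is $O(\Delta n)$.
\end{itemize}
The paper instead runs a BFS in the line graph from the new edge. This is valid because in a forest the colour vacated at a vertex reached from its parent is always the parent edge's old colour. Hence each move's admissibility is history-independent, and the line graph has only $O(\Delta n)$ arcs.
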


\begin{proof}
We argue in increasing order of complexity, starting from \greedypath{}. When an edge $(u_1,u_2)$ is inserted between trees $u_1 \in T_1$ and $u_2 \in T_2$, assume that $T_i$ is rooted at $u_i$, \greedypath{} simply checks all the paths from $u_i$ to $w \in T_i$ to see if shifting the colors on this path ends with $(\parent(w),w)$ having an available color. Every leaf of $T_i$ satisfies that a color is available, so \greedypath{} just has to find the closest such $w$. It is possible to scan and compare all the shiftable paths in $O(n)$ time. Observe that we don't have to check paths separately but can use the overlap by doing a standard BFS scan. When we check if an edge $(x,y)$ has an available color we might pay $O(\Delta)$ time, but this time is amortized on extending the shiftable path to the other edges of $y$ (assuming we reached $(x,y)$ from $x$). By scanning using BFS, we are guaranteed to stop on the shortest shiftable path, in $O(n + \Delta) = O(n)$ time.

For \greedyshift{} we also consider the path to every $w$. The difference is that \greedypath{} immediately rules out paths with consecutive edges $e_1 = (x,y)$ and $e_2 = (y,z)$ if the color of $e_2$ cannot be shifted to $e_1$, while \greedyshift{} checks if this transition is possible when we perform a \emph{fan} around $y$. Computing the smallest fan that lets the chain proceed from $e_1$ to $e_2$ is taken into account when considering the total recourse of the path, and \greedyshift{} picks the shortest chain. An alternative perspective is this: We compute the shortest path in the line-graph, subject to being able to shift colors. The line graph has $O(\sum_{v \in V}{\deg(v)^2}) = O(\Delta \cdot n)$ edges (maximized when $\approx \frac{n}{\Delta}$ vertices have degree $\Delta$), so this would be the running time. While we can't simply do BFS on the line graph of a general edge colored graph, in a forest we can. To see why, note that because this is a forest, we can reach every vertex $y$ from a fixed specific neighbour (parent) $x$, so the history of the path to $y$ does not affect which color becomes free at $y$ due to shifts, it is always the color of $(x,y)$ that is freed at $y$ when we reach it.

Finally, \greedy{} is more complicated, but is still polynomially computable on trees, using dynamic-programming (DP) as follows. Given a newly inserted edge $e=(u_1,u_2)$, define the natural rooting on the tree of $u_i$ with root $u_i$ ($i \in \{1,2\}$). To clarify, in the tree of $u_i$, an edge $(x,y)$ is defined such that $x=\parent(y)$ if $x$ is closer than $y$ to $u_i$. For every vertex $v$ denote $\col{1}(v) \equiv \colorf(\parent(v),v)$. We define subproblems $P(v,\col{2})$ as minimizing the recoloring within the subtree rooted at $v$ such that the color of its edge toward its parent is colored by $\col{2}$. We abuse notation and use $P(v,\col{2})$ to refer to both the recourse and the actual recoloring. If $v$ is a leaf, then $P(v,\col{2}) = (1-\delta_{\col{2},\col{1}(v)})$ where $\delta$ is the Kronecker delta function. For a non-leaf, $P(v,\col{2}) = \min_{\col{1}'} \Big \{ (1-\delta_{\col{2},\col{1}(v)}) + \sum_{w \in \ch(v)} {P(w,\col{1}'(w))} \Big \}$ where $\col{1}'(w)$ must be unique for each child $v$, and not equal to $\col{2}$. Finding an assignment $\col{1}'$ that minimizes the expression for $P(v,\col{2})$ is the \emph{Assignment Problem}, which can be solved, for example, by the \emph{Hungarian algorithm}. There are at most $\Delta-1$ children, and $\Delta+\extra-1$ colors that are not $\col{2}$, so we can compute $P(v,\col{2})$ in $O(\Delta^3)$ time. To pick the smallest recourse, we should color $e$ by $\col{2}$ such that $P(u_1,\col{2}) + P(u_2,\col{2})$ is minimized. There are $O(\Delta \cdot n)$ problems $P(\cdot,\cdot)$ (per vertex, per color), so the total running time is $O(\Delta^4 \cdot n)$.
\end{proof}

We note that Theorem~\ref{theorem_amortized_insertion_only_improved_UB} is proven using Theorem~\ref{theorem_logCp1_recourse_shiftable_UB} and Theorem~\ref{theorem_trees_delta_colors_sublinear_WC_UB} which both don't require us to compute the absolute greedy recourse, so the running times can be improved while maintaining the same proven bounds. For example, in the incremental model, if we restrict the computation to be greedy but only among recolorings of the small tree being linked, this semi-greedy approach means that every vertex participates in $O(\Delta \cdot \log n)$ DP problems, and the amortized running time improves to $O(\Delta^4 \cdot \log n)$.

\section{Additional Technical Analyses}
\label{section_technical_analysis_appendix}

This section provides the remaining technical details that were deferred to maintain the flow of the core amortized recourse analysis.

\subsection{Proving Lemma~\ref{lemma_calculus_summation}}
\label{section_subsection_calculus_proof}
We restate Lemma~\ref{lemma_calculus_summation} below for convenience.
    \lemmaSummations*
    
    \begin{proof}
    Consider the function $f(x)$ in the domain $x \ge 1$. Since $f'(x) = \frac{1}{\ln 2} \cdot \frac{1 - \ln x}{x^2}$ the function is decreasing for $x \ge 3$. Let $j$ be smallest index such that $x_j \ge 3$, then $f(x)$ is decreasing from $x_j$. $j \le 3$ since $x_3 \ge 4 \cdot x_1 \ge 4$. Now:
    \begin{gather*}
    \sigma_j \equiv \sum_{i \ge j}{f(x_i)}
    \le 
    \sum_{i \ge j}{f(2^{i-j} \cdot x_j)}
    =
    \sum_{i \ge j}{\frac{i-j + \lg x_j}{2^{i-j} \cdot x_j}}
    = \frac{2 + 2 \lg x_j}{x_j}
    \end{gather*}
    
    Since $x_j \ge 3$, we have that $\sigma_j = O(\frac{\lg x_j}{x_j}) = O(f(x_j))$. To conclude, we divide to cases:
    \begin{enumerate}
        \item If $\Delta \ge 6$: Then $x_1 \ge 3$ which implies $f(x_1) \le f(\frac{\Delta}{2}) = O(\frac{\lg \Delta}{\Delta})$, and $j=1$. Then $\sigma_1 = O(f(x_1)) = O(\frac{\lg \Delta}{\Delta})$.
    
        \item If $\Delta < 6$: Then $\frac{\lg \Delta}{\Delta} = \Theta(1)$. For every specific $x_i$ we have that $f(x_i) = O(1)$. Since $j \le 3$, we have that $\sigma_1 \le f(x_1) + f(x_2) + O(f(x_j)) = O(1) = O(\frac{\lg \Delta}{\Delta})$.
    \end{enumerate}

    Next, consider $g(x)$ in the domain $x \ge 1$. $g(x)$ is decreasing for $x \ge 3.08$. To see that, denote $t(x) \equiv \sqrt{2 \lg x}$ so $g(x) = \frac{t \cdot 2^t}{2^{t^2/2}}$. Then $g'(x) = 2^{t-t^2/2} \cdot (1 + \ln 2 \cdot t(1-t)) \cdot t'(x)$ and since $t(x)$ is monotone we get a maximum of $g(x)$ at $t^*$ such that $1 + \ln 2 \cdot t^*(1-t^*)) = 0$. This gives $t^* \approx 1.80$, or $x \approx 3.08$. Let $j$ be the smallest index such that $x_j \ge 4$, then $g(x)$ is decreasing from $x_j$. We have $j \le 3$ since $x_3 \ge 4 \cdot x_1 \ge 4$. Now:
    \begin{gather*}
    \sigma_{j+1} \equiv \sum_{i \ge j+1}{g(x_i)}
    \le 
    \sum_{i \ge j+1}{g(2^{i-j} \cdot x_j)}
    \le
    \sum_{k \ge 1}{\frac{\sqrt{2(k + \lg x_j)} \cdot 2^{\sqrt{2(k + \lg x_j)}}}{2^k \cdot x_j}}
    \end{gather*}
    Using the identities for $a,b \ge 2$: $\sqrt{a+b} \le \sqrt{a} \cdot \sqrt{b}$ and $\sqrt{a+b} \le \sqrt{a} + \sqrt{b}$ where $a=2k$ and $b = 2 \lg x_j$, we can proceed:
    \begin{gather*}
    \le
    \sum_{k \ge 1}{\frac{\sqrt{2k} \cdot \sqrt{2 \lg x_j} \cdot 2^{\sqrt{2k}} \cdot 2^{\sqrt{2 \lg x_j}}}{2^k \cdot x_j}}
    =
    g(x_j) \cdot \sum_{k \ge 1}{\frac{\sqrt{2k} \cdot 2^{\sqrt{2k}}}{2^k}}
    = O(g(x_j))
    \end{gather*}
    Where the last step is because for $k \ge 8$ we have $\sqrt{2k} \le \frac{k}{2}$ and $\lg \sqrt{2k} \le \frac{k}{4}$ so in total for $k \ge 8$ the terms are each bounded by $\frac{1}{2^{k/4}}$. To conclude, we divide to cases:
    
    \begin{enumerate}
        \item If $\Delta \ge 8$: Then $x_1 \ge 4$, $g(x_1) \le g(\frac{\Delta}{2}) = \Theta(g(\Delta))$ and $j=1$. Then $\sigma_1 = g(x_1) + \sigma_2 = O(g(x_1)) = O(g(\Delta))$.
    
        \item If $\Delta < 8$: Then $g(\Delta) = \Theta(1)$. For every specific $x_i$, we have $g(x_i) = O(1)$. 
        Since $j \le 3$, we have $\sigma_1 \le g(x_1) + g(x_2) + g(x_3) + \sigma_{j+1} = O(1)$ thus $\sigma_1 = O(g(\Delta))$. \qedhere
    \end{enumerate}
\end{proof}

\subsection{Sublinear Worst-Case Recourse for \texorpdfstring{$\Delta$}{Delta} Edge Coloring (used by Theorem~\ref{theorem_amortized_insertion_only_improved_UB})}
\label{section_subsection_sunlinear_wc_recourse_Czero}

Theorem~2.11 of \cite{sadeh2026vizingchainsimprovedrecourse} shows a linear worst-case recourse lower bound of $\Omega(n/\Delta)$ for shift-based algorithms when $\extra = 0$. A similar lower bound can also be seen in the proof of Theorem~\ref{theorem_insertion_only_Czero_high_LB} (Section~\ref{section_appendix_greedy_variants_differences}) where most of the amortized recourse is due to few expensive insertions. In this section we analyze an algorithm that circumvents these bounds and achieves a sublinear worst-case recourse, by doing more than shifting colors.

Loosely speaking, the algorithm implied by our analysis has a few decision points in which it is allowed to color an edge such that it conflicts with \emph{two} of its neighbours, in order to reduce the overall recourse. Note that the analysis is tight if the adversary is allowed to present a graph with a given coloring (then this would be the worst-case being analyzed). We restate the claim for convenience, but first a lemma we will require.

\begin{lemma}
\label{lemma_overlap_at_most_two}
\looseness=-1
Let $T$ be a tree of degree at most $\Delta$ that is $\Delta$-edge-colored, and assume that we shift colors along some \emph{incomplete} $(\col{1},\col{2})$-bicolored path of $k \ge 1$ edges starting from the root, then color the edge uncolored by the shift, by $\col{3}$, uncolor its $\col{3}$ neighbours if such exist and proceed to fix them by shifting either $(\col{3},\col{2})$ or $(\col{3},\col{1})$ bicolored paths of length $k' \ge 1$. Then the two paths reach disjoint subtrees, and each reachable subtree can be reached at most twice for different choices of $(k,\col{3})$.
\end{lemma}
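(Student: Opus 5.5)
The plan is to exploit the rigidity of the tree: rooted at its root $r$, every vertex has at most one incident edge of each color, so a bicolored path that starts at the root and moves downward is uniquely determined by its pair of colors. Fix $(\col{1},\col{2})$ and let $P = (r=v_0,v_1,v_2,\ldots)$ be the unique maximal $(\col{1},\col{2})$-bicolored path descending from $r$, with edges $e_j=(v_{j-1},v_j)$. Shifting an incomplete prefix of length $k$ means that each $e_j$ with $j<k$ takes the color of $e_{j+1}$, the new edge at the root takes the color of $e_1$, and $e_k=(v_{k-1},v_k)$ is left uncolored. I will write $x=v_{k-1}$ and $y=v_k$. The first step is to record exactly which colors are used at $x$ and $y$ after the shift. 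Only edges of $P$ change, so every edge hanging off $P$ keeps its original color. At $y$, the only changed incident edge is $e_k$ itself. At $x$, the edge $e_{k-1}$ now carries $\colorf(e_k)$, which lies in $\{\col{1},\col{2}\}$.

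Next I will locate the conflicts caused by coloring $e_k$ by $\col{3}$. A $\col{3}$-colored neighbor of $e_k$ at $y$ is a child edge $(y,w)$ of $y$. At $x$, such a neighbor is either an off-path child edge $(x,w')$ with $w'\neq y$, or it is $e_{k-1}$. The latter happens exactly when $\col{3}=\colorf(e_k)$, and this is presumably excluded, since it would just undo the shift. I will also handle $x=r$, using the new root edge in place of $e_{k-1}$. Each conflict is then fixed by shifting a $(\col{3},\col{2})$- or $(\col{3},\col{1})$-bicolored path of length $k'\ge 1$ starting at that conflicting edge. Because we are in a tree and each such path starts by going to a child, it stays inside the subtree hanging below its first edge. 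One path lives in the subtree of $y$ and the other in the subtree of a child $w'\neq y$ of $x$. These subtrees are disjoint, which gives the first claim.

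For the counting claim, I will identify the reachable subtree with its entry edge $(z,w)$, where $z$ lies on $P$, $w$ is off $P$, and $\colorf(z,w)=\col{3}$ is the original, unshifted color. The key observation is that $z$ must equal $x=v_{k-1}$ or $y=v_k$. A fixed vertex $v_j$ of $P$ plays the role of $y$ only when $k=j$ and the role of $x$ only when $k=j+1$. Moreover, $\col{3}$ is forced to be $\colorf(z,w)$, because colors at $z$ are distinct and off-path colors are unaffected by the shift. Hence at most two pairs $(k,\col{3})$ lead into any given subtree.

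The main obstacle I expect is the bookkeeping at the vertex $x$. This includes the boundary cases $\col{3}\in\{\col{1},\col{2}\}$, $k=1$ where $x$ is the root, and the possibility that the second stage reuses the path $P$ itself. There I would verify directly that a conflict with $e_{k-1}$ either cannot occur or produces no new subtree, so that the count of two is not exceeded.
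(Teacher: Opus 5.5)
Your proposal is correct and follows essentially the same argument as the paper. The two conflicts sit at the endpoints of the truncated edge and descend into the subtrees of the $\gamma$-children of those endpoints, which are disjoint in a tree. A subtree branching off the path at $v_j$ through an edge of color $c$ can only be reached when $\gamma=c$ and the truncated edge contains $v_j$, which leaves at most two choices of $k$.

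Your boundary worries go away once you note that $\gamma\notin\{\alpha,\beta\}$ is implicit: the theorem picks $\gamma$ among the $\Delta-2$ colors not on the bicolored path. Hence no conflict with $e_{k-1}$ can occur, and neither fixing path can run along $P$.
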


\begin{figure*}[!t]
	\centering

    \begin{subfigure}{0.3\textwidth}
    \includegraphics[width=0.9\textwidth]{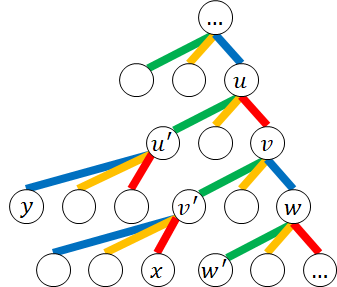}
    \subcaption{}\label{figure_conflict_doubling_and_overlap1}
    \end{subfigure}
    \begin{subfigure}{0.3\textwidth}
    \includegraphics[width=0.9\textwidth]{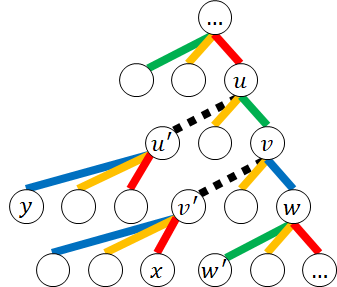}
    \subcaption{}\label{figure_conflict_doubling_and_overlap2}
    \end{subfigure}
    \begin{subfigure}{0.3\textwidth}
    \includegraphics[width=0.9\textwidth]{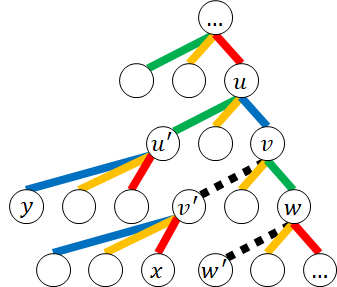}
    \subcaption{}\label{figure_conflict_doubling_and_overlap3}
    \end{subfigure}

	\caption{\small{An example for Lemma~\ref{lemma_overlap_at_most_two} where we recolor an edge by choosing a color unavailable at both ends. This choice converts a single problem into two, but may be useful in the long-run. In this example, let $\Delta=4$ and use only $4$ colors. Assume that we apply a bicolored flip on the red-blue right spine of the initial coloring (a), and that we choose to truncate the flip using green. We can choose to stop the flip on $(u,v)$ (b) or on $(v,w)$ (c). In each case, we get two new conflicts, $(v',v)$, and either $(u,u')$ or $(w,w')$, respectively. Both choices overlap in the problem over $(v,v')$ which can no longer be green, and in both cases the new bicolored path proceeds on a red-green path towards $x$.}}
	\label{figure_conflict_doubling_and_overlap}
\end{figure*}

\begin{proof}
Fix $k \ge 1$, and denote the edge over which we truncated the $(\col{1},\col{2})$-bicolored path by $e_1=(v_1,v_2)$ and its parent $e_0=(v_0,v_1)$ (there is a parent because $k \ge 1$). Without loss of generality assume that the color of $e_1$ was originally $\col{1}$, so its parent was colored by $\col{2}$. After shifting colors, $\col{2} \in A(v_1)$, and therefore coloring $e_1$ by $\col{3}$ implies that if there was a color conflict at $v_1$ due to $\col{3}$, we proceed to shift colors into a subtree of a sibling of $v_2$, over a $(\col{3},\col{2})$-bicolored path. Since $k' \ge 1$, the subtree which we reach is that of the ``$\col{3}$-sibling`` of $v_2$, or a deeper subtree within it. Similarly, after shifting colors on the first path, $\col{1} \in A(v_2)$ because even if the $(\col{1},\col{2})$-bicolored path proceeds and $e_1$ has a child-edge $(v_2,v_3)$ on this path, its color is $\col{2}$. By a similar argument, if $\col{3}$ was already used at $v_2$ then we fix the coloring over a $(\col{3},\col{1})$-bicolored path from $v_2$ to a sibling of $v_3$, and since $k' \ge 1$ we reach (at least) the subtree of the ``$\col{3}$-sibling`` of $v_3$. Because $T$ is a tree, indeed the subtrees are disjoint. See Figure~\ref{figure_conflict_doubling_and_overlap1} (initial) versus Figure~\ref{figure_conflict_doubling_and_overlap2} (recolored) for a visual clarification by example, where $(v_0,v_1,v_2,v_3)=(\parent(u),u,v,w)$ and $(\col{1},\col{2},\col{3}) = (\text{\color{red} red}, \text{\color{blue} blue}, \text{\color{OliveGreen} green})$. In this example one of the new paths to recolor is green-blue and is rooted at $u'$ (proceeds via $y$), and the other is green-red and is rooted at $v'$ (proceeds via $x$).

Next we argue that there are at most two ways to reach a specific subtree. Since the subtrees we reach are rooted at siblings of the $(\col{1},\col{2})$-bicolored path (or deeper), denote the subtree $T'$ and its lowest ancestor on the path by $v$. For a specific choice of $k$ we get at most two subtrees for the continuation, then conversely, there are at most two choices of $k$ to visit $T'$: we need $v \in e_1$. Revisit Figure~\ref{figure_conflict_doubling_and_overlap} for concreteness: either $k = k^*$ is such that $e_1 = (u,v)$ or $k = k^* + 1$ such that $e_1 = (v,w)$. The identity of the root of $T'$ also depends on the choice of $\col{3}$. So in total, there can be at most two choices of $(k,\col{3})$ that reach the same subtree.
\end{proof}

\theoremSubLinearRecourseTreesDeltaColors*

\begin{proof}
Because the graph is always a forest, when a new edge $e=(u,v)$ is inserted we can think of it as linking two trees $T_u$ and $T_v$ (one may be empty). If there is an available color shared by both $u$ and $v$, we use it without recourse at all. Otherwise, let $\col{1}_u$ and $\col{1}_v$ be available colors in $u$ and $v$ respectively, we can swap colors on the $(\col{1}_u,\col{1}_v)$-bicolored-path from $v$ to a leaf (or possibly a closer vertex), and color $(u,v)$ with the color $\col{1}_u$ which becomes available at $v$ after the color swap. This is guaranteed to work because the graph has no cycles, however the recourse might be large.

In order to reduce the recourse, we may decide to color-swap only up to $d_1$ edges, and truncate the process at some point, by uncoloring the $i$th edge on this bicolored path for $i \le d_1$, denote it $(u',v')$. This moves the problem down in $T_v$. Since we uncolored $(u',v')$ (temporarily) we may consider the available colors of $u'$ and $v'$ and repeat the process. However, since we only have $\Delta$ colors, if $u'$ and $v'$ are of degree $\Delta$ we might have no choice but to pick again the same bicolored path. So we must use a different approach... which would be to choose a color not available at either end, and now we need to fix two color conflicts, see Figure~\ref{figure_conflict_doubling_and_overlap} for a visual example. We increased the number of problems, but perhaps we can get for each of them a short bicolored path that reaches a leaf of $T_v$ such that the overall recoloring is smaller. Note that since the graph is a tree, each of the two new problems has a disjoint path of edges to leaves of $T_v$.\footnote{In fact, we can apply the choice of unavailable color already to the initial edge, and get that we have to fix the coloring in both $T_u$ and $T_v$ (one edge each). For simplicity we do not analyze this case, which is similar up to a constant factor, and depends on the sizes of both trees.}

We can generalize further the technique as follows: let each of the new bicolored paths be of length at most $d_2$. If we can finish it, be done. Otherwise, choose an index $i \le d_2$ where we truncate the color-swap, and then again we choose a color that results in two conflicts. Next, again, define a cap of at most $d_3$, etc. Overall, at level $\ell$ we may have up to $2^\ell$ unresolved color conflicts that we try to fix using bicolored paths of length at most $d_{\ell+1}$. Note that when we choose a color, there are $\Delta-2 \ge 1$ options ($2$ colors are actively part of the bicolored path).

Recall that $T_v$ is a tree, so whenever the color conflict diverges (when we color an edge $(x,y)$ with a color unavailable at both $x$ or $y$), we proceed to different parts of the tree. By Lemma~\ref{lemma_overlap_at_most_two}, each edge can be reached through at most two truncation choices, by virtue of reaching at most twice to the subtree that contains it. Therefore, we can choose a truncation out of the first $1 \le i \le d_i$ options, that proceed into two subtrees with total size that is at most $\frac{2}{d_i \cdot (\Delta-2)}$ fraction of the current tree. The term $d_i$ is due to our $d_i$ options for truncation, $(\Delta-2)$ is for the number of new colors to choose from, each leads us to different disjoint subtrees, and the $2$ is due to the possible overlap.\footnote{If we assume that every choice results in larger trees, and sum the number of edges over all the choices, we get more than double the count of edges. However every edge is reachable at most twice, which is a contradiction.} So, by choosing $i$ deterministically based on subtree sizes, we can ensure that at level $\ell$ of the process, each subtree with a color-conflict has size at most $S_\ell \equiv \frac{N \cdot 2^\ell}{(\Delta-2)^\ell \cdot d_1 \cdot \ldots \cdot d_\ell}$.

Once we have $S_\ell = O(1)$ we don't have to truncate, and this final sub-problem can be resolved with at most $S_\ell$ recourse. Then the total recourse is bounded by $R_\ell \equiv \sum_{i=1}^{\ell}{2^{i-1} \cdot d_i} + 2^{\ell} \cdot S_\ell$, because there are at most $2^{i-1}$ uncolored edges to fix at level $i$ of the process, and at this level we recolor (bicolor-swap) at most $d_i$ edges per fix. Before we optimize, let us re-write the expressions in a cleaner way. Define $D_i \equiv 2^{i-1} \cdot d_i$, then $R_\ell = \sum_{i=1}^{\ell}{D_i} + O(2^\ell)$
and $S_\ell = \frac{N \cdot 2^\ell \cdot 2^{\ell(\ell-1)/2}}{(\Delta-2)^\ell \cdot D_1\cdot \ldots \cdot D_{\ell}}$.
Finally, it remains to minimize $R_\ell$ by choosing $\ell$ and the values $D_1,\ldots,D_{\ell}$.

For a fixed value of $\ell$, observe that minimizing $R_\ell$ for a fixed $S_\ell$ is analogues to maximizing $D_1 \cdot \ldots \cdot D_{\ell}$ for a fixed sum $\sum_{i=1}^{\ell}{D_i}$, thus the best choice is a uniform value $\forall i: D_i \equiv D$. Then $S_\ell = \frac{N \cdot 2^{\ell(\ell+1)/2}}{(\Delta-2)^\ell \cdot D^{\ell}}$. Assuming that $S_\ell = \Theta(1)$ we get $D = \Theta \Big (\frac{N^{1/\ell} \cdot 2^{\ell/2}}{\Delta-2} \Big )$. We choose this value for $D$ even if $S_\ell = o(1)$ due to a too-large fixed value of $\ell$. Recall that $\forall i: 1 \le d_i \in \mathbb{N}$ and $D_i = 2^{i-1} \cdot d_i$, so given this desired value of $D$, we define $d_i \equiv \ceil{\frac{D}{2^{i-1}}} \le \frac{D}{2^{i-1}} + 1$. Then we get that:
$R_\ell = \sum_{i=1}^{\ell}{2^{i-1} \cdot d_i} + O(2^\ell) \le \sum_{i=1}^{\ell}{(D + 2^{i-1})} + O(2^\ell) = O \Big (\ell \cdot \frac{N^{1/\ell} \cdot 2^{\ell/2}}{\Delta-2} + 2^{\ell} \Big )
=
O \Big ( (\frac{\ell \cdot N^{1/\ell}}{\Delta-2} + 2^{\ell/2}) \cdot 2^{\ell/2} \Big )
$.

To balance $N^{1/\ell} = 2^{\ell/2}$ we need to choose $\ell = \sqrt{2 \lg N}$. Since $\ell$ is integer, we choose $\ell = \ceil{\sqrt{2 \lg N}}$, which makes $N^{1/\ell}$ slightly smaller than $2^{\ell/2} = O(2^{\sqrt{2 \lg N}})$. (If $N < 2$ choose $\ell=1$ so $N^{1/\ell}$ is well-defined.) Then we get: $R_\ell = O \Big ( (\frac{\ell }{\Delta-2} + 1) \cdot 2^{\ell} \Big ) = 
O \Big ( (\frac{\sqrt{2 \lg N}}{\Delta-2} + 1) \cdot 2^{\sqrt{2 \lg N}} \Big )$.

Finally, consider the running time. Our logic requires to know subtree sizes, and we cannot maintain those trivially. However, upon insertion of $(u,v)$ we get a natural rooting on the trees $T_u$, $T_v$ rooted at $u$ and $v$ respectively, and can compute subtree sizes for $T_i$ in $O(|T_i|)$ time, while also determining which tree is smaller. Once we have subtree sizes, the recoloring logic dictates that at each point we consider diverging the bicolored path, and when we do, we have to check $O(\Delta)$ colors for the subtree sizes. Overall, the running time is $O(N + \Delta \cdot R_\ell) = O(N + \Delta \cdot 2^{\sqrt{2 \lg N}})$.
\end{proof}

\begin{remark}
\label{remark_can_also_do_logn_on_balanced_tree}
It is important to note that Theorem~\ref{theorem_trees_delta_colors_sublinear_WC_UB} gives a guarantee for the worst-case tree. However, $\tilde{O}(2^{\sqrt{2\lg n}})$ might be far from being tight, for example consider a balanced tree, then we should expect only $O(\log n)$ recourse. This is indeed what we get: In terms of the proof of Theorem~\ref{theorem_trees_delta_colors_sublinear_WC_UB}, $d_1 = D_1 = D \approx \frac{2^{\sqrt{2 \lg n}}}{\Delta-2} = \omega(\log n)$ (assuming a fixed $\Delta$), therefore we will not have the ``opportunity'' to truncate the first bicolored path, instead we would just reach its end, and finish recoloring with only $O(\log n)$ recourse.
\end{remark}

\subsection{Theorems~\ref{theorem_amortized_DplusC_LB}, \ref{theorem_greedy_fails_LB}, \ref{theorem_greedy_fails_LB_shift_based}: The Relation Between \texorpdfstring{$n$}{n} to \texorpdfstring{$\Delta$}{Delta} and \texorpdfstring{$\extra$}{c} }
\label{section_dependence_of_n}

Some of our lower bounds require $n$ to be sufficiently large. In this section we explicitly analyze these quantities. We have the following debt:

\begin{theorem}
\label{theorem_exact_analysis_of_n_function_of_Delta}
In Theorem~\ref{theorem_amortized_DplusC_LB}, $n_0(\Delta,\extra) = \Theta((\extra+1) \cdot \Delta \cdot \binom{\Delta + \extra}{\extra+1})$. In Theorem~\ref{theorem_greedy_fails_LB} and Lemma~\ref{lemma_construct_initial_layered_tree}, $n_1(\Delta,\extra) = \Theta((\extra+1) \cdot \Delta^2 \cdot \binom{\Delta+\extra}{\extra+1} / \binom{\Delta}{\extra+1})$. In Theorem~\ref{theorem_greedy_fails_LB_shift_based}, $n_2(\Delta,\extra) = \max \{n'_2(\Delta,\extra) , n_1(\extra+2,\extra) \}$ where $n'_2(\Delta,\extra) = \Omega \Big ( \Delta \cdot x \cdot (\Delta+\extra-x) \cdot \binom{\Delta+\extra}{x} / \binom{\Delta}{x} \Big )$ for $x = \min \{ 2\extra+2, \Delta-(\extra+2) \}$.
\end{theorem}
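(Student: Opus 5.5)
I will prove each bound by re-examining the corresponding construction and counting its vertices. For each one I give an upper bound (a construction that works) and argue that the adversary genuinely needs that many vertices, up to constants.

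\textbf{Theorem~\ref{theorem_amortized_DplusC_LB}.} This case is direct. There are $|\mathcal P| = \binom{\Delta+\extra}{\Delta-1} = \binom{\Delta+\extra}{\extra+1}$ owners, and $N=(\extra+1)|\mathcal P|+1$ stars, each with $\Delta$ vertices. The total is therefore
\[
|\mathcal P| + \Delta\bigl((\extra+1)|\mathcal P|+1\bigr) = \Theta\Bigl((\extra+1)\,\Delta\binom{\Delta+\extra}{\extra+1}\Bigr).
\]
For the matching direction, I will note that the pigeonhole argument of the step (each owner holds at most $\extra+1$ stars) needs $N$ to exceed $(\extra+1)|\mathcal P|$. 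Otherwise the algorithm could keep every star connected forever.

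\textbf{Lemma~\ref{lemma_construct_initial_layered_tree} and Theorem~\ref{theorem_greedy_fails_LB}.} Here $n_1$ comes from bootstrapping. We need $|P|\cdot|\overline P|=(\extra+1)(\Delta-1)$ stars with a common palette $P$ such that $|P|=\extra+1$ and $P$ is the set of colors used on the leaf edges. I would build stars of degree $\extra+1$, whose palettes lie among $\binom{\Delta+\extra}{\extra+1}$ subsets, so pigeonhole needs about $(\extra+1)\Delta\binom{\Delta+\extra}{\extra+1}$ stars. The denominator $\binom{\Delta}{\extra+1}$ suggests a refinement. The adversary first builds vertices whose palettes it can partially control, for instance degree-$(\Delta-1)$ stars whose $\overline P$-type sub-palettes contain $\binom{\Delta}{\extra+1}$ usable $(\extra+1)$-subsets. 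The effective number of ``bad'' classes is then $\binom{\Delta+\extra}{\extra+1}/\binom{\Delta}{\extra+1}$, and each star costs $O(\Delta)$ vertices. Multiplying gives
\[
n_1 = \Theta\Bigl((\extra+1)\,\Delta^2\binom{\Delta+\extra}{\extra+1}\Big/\binom{\Delta}{\extra+1}\Bigr).
\]
The main obstacle is pinning down exactly this counting argument for why the ratio of binomials is the right pigeonhole denominator, and also showing it is necessary. I would first try to show that any $(\Delta-1)$-subset fixed at a center contains $\binom{\Delta-1}{\extra+1}\approx\binom{\Delta}{\extra+1}$ candidate $P$'s. An averaging argument over the palette classes then gives a popular $P$.

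\textbf{Theorem~\ref{theorem_greedy_fails_LB_shift_based}.} The reduction first needs $n'_2$ vertices to obtain enough degree-$(\Delta-\extra-2)$ stars sharing the same sub-palette. It then runs the $\Delta'=\extra+2$ construction on the centers, which requires $n_1(\extra+2,\extra)$; taking the max of the two gives $n_2$. For $n'_2$ I would repeat the same pigeonhole-with-replication calculation with $x=\min\{2\extra+2,\Delta-\extra-2\}$ playing the role of the controllable subset size. Since only an $\Omega$ bound is claimed there, it suffices to exhibit a valid construction of that size.
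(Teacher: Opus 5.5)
Your count for $n_0$ matches the paper's. The paper also considers stars of degree $\Delta-d$ and checks that $d=1$ minimizes the number $\binom{\Delta+c}{c+d}$ of palette classes, but that only justifies the choice of construction.

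\textbf{The gap is in the $n_1$ part.} You leave the key counting as ``the main obstacle'', and the concrete version you propose does not give the stated bound.
\begin{itemize}
\item Starting from stars of degree $\Delta-1$, each palette contains $\binom{\Delta-1}{c+1}$ candidate $(c+1)$-sets, not $\binom{\Delta}{c+1}$ as you first write.
\item Your claim $\binom{\Delta-1}{c+1}\approx\binom{\Delta}{c+1}$ is false in general: the ratio is $\Delta/(\Delta-c-1)$, which equals $\Delta$ at $c=\Delta-2$.
\item In that regime a $(\Delta-1)$-star's palette is itself a single $(c+1)$-set, so you gain nothing over the naive count. You end up a factor $\Delta$ above the claimed $\Theta\bigl((c+1)\Delta^2\binom{\Delta+c}{c+1}/\binom{\Delta}{c+1}\bigr)$. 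This is exactly the regime discussed in Remark~\ref{remark_n_Delta_relation_examples}.
\end{itemize}
The missing step is to build stars of full degree $\Delta$ (allowed by the degree bound; the surplus edges are deleted afterwards) and double count. Consider the $N'\binom{\Delta}{c+1}$ pairs (star, $(c+1)$-subset of its palette), spread over $B=\binom{\Delta+c}{c+1}$ possible sets. Set $M=(c+1)(\Delta-1)$. If $N'\binom{\Delta}{c+1}\ge (M-1)B+1$, then some $P$ lies in the palettes of $M$ stars. Deleting their other edges yields the $M$ $P$-stars needed to bootstrap, and $n_1=\Theta(N'\Delta)$ is the stated expression.

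Also, the paper's $\Theta$'s are the vertex counts of these constructions, not lower bounds against every adversary. The ``necessity'' you announce is therefore neither proved in your proposal nor needed.

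\textbf{For $n_2'$,} the sentence ``since only an $\Omega$ bound is claimed, it suffices to exhibit a valid construction of that size'' has the logic inverted. A construction shows that some number of vertices suffices, i.e.\ it bounds the threshold from above. The paper's $\Omega$ records the vertex requirement of its construction. It is obtained by the same computation as $n_1$ with palette size $z=x$: bootstrap count $x(\Delta+c-x)$ and effective classes $\binom{\Delta+c}{x}/\binom{\Delta}{x}$.

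You also do not say where the $\min$ in $x$ comes from. The target palette has $\Delta-c-2$ colors and its complement in $[\Delta+c]$ has $2c+2$. The replication gadget of Lemma~\ref{lemma_construct_initial_layered_tree} produces stars of a palette from stars of its complement, so one bootstraps whichever size is cheaper.
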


\begin{proof}
The common thread of all $n_i(\Delta,\extra)$ is that we need to construct $N$ stars, each of degree $\Delta-d$, until $M$ of them use the same sub-palette. The values of $N$, $M$ and $d$ differ according to the scenario. 

\textbf{Theorem~\ref{theorem_amortized_DplusC_LB}:} $n_0(\Delta,\extra) = \Theta((\extra+1) \cdot \Delta \cdot \binom{\Delta+\extra}{\extra+1})$.
Let $1 \le d \le \Delta-1 - \extra$. We need $M = \extra+d+1$, therefore $N = (M-1) \cdot B + 1$ so that we can apply the pigeonhole principle, where $B \equiv \binom{\Delta+\extra}{\Delta-d} = \binom{\Delta+\extra}{\extra+d}$ is the number of different possible sub-palettes. One can verify that $d=1$ minimizes $B$.\footnote{The binom function $B(d) = \binom{\Delta+\extra}{\extra + d}$ has a single maximum, thus minimizing $B$ for $d \in [d_1,d_2] \equiv [1,\Delta-1-\extra]$ is achieved for either $d_1$ or $d_2$. $d_2$ is the unique minimizer if and only if $(\Delta+\extra) - (\extra+d_2) < (\extra+d_1) - 0$, which is false.
} Even though $d=1$ maximizes $M$, because $B$ is binomial while $M$ is linear, it is still best to optimize for $B$. In total, including $B$ additional vertices that are not part of the stars (the \emph{owners} in the proof), we get the constraint that:
$
n \ge N \cdot (\Delta-d+1) + B
= \Theta(N \cdot \Delta + B)
= \Theta(M \cdot B \cdot \Delta)
= \Theta((\extra+1) \cdot \Delta \cdot \binom{\Delta+\extra}{\extra+1})
$.

\textbf{Theorem~\ref{theorem_greedy_fails_LB} and Lemma~\ref{lemma_construct_initial_layered_tree}:} $n_1(\Delta,\extra) = \Theta((\extra+1) \cdot \Delta^2 \cdot \binom{\Delta+\extra}{\extra+1} / \binom{\Delta}{\extra+1})$. We need $M = (\extra+1) \cdot (\Delta-1)$ for the number of $P$-stars in the proof to bootstrap. Also, $d=\extra+1$ because each $P$-star is a star of degree $\extra+1$, and there are $B = \binom{\Delta+\extra}{\extra+1}$ different sub-palettes of this size. For the pigeonhole principle to apply in this proof, we require $N \ge (M-1) \cdot B + 1$. Then naively we would get $n_1 = \Theta(M \cdot B \cdot \Delta)$ as with $n_0$ (note that $M$ has a different value here). However, we can optimize and actually construct stars of degree $\Delta$, from which we later delete edges to get stars of degree $\extra+1$. Given the freedom to delete the edges that we want, each $\Delta$-star represents $\binom{\Delta}{\extra+1}$ different $(\extra+1)$-stars, so we only need $N'$ $\Delta$-stars such that $N' \cdot \binom{\Delta}{\extra+1} = (M-1) \cdot B + 1$. In conclusion, $n_1(\Delta,\extra) = \Theta(N' \cdot \Delta) = \Theta((\extra+1) \cdot \Delta^2 \cdot \binom{\Delta+\extra}{\extra+1} / \binom{\Delta}{\extra+1})$.

\textbf{Theorem~\ref{theorem_greedy_fails_LB_shift_based}:} $n_2(\Delta,\extra)$ is best described by two constraints that must both be satisfied:
\begin{enumerate}
    \item The proof of this theorem first reduces the problem by grouping all the vertices to form stars of degree $\Delta-\extra-2$ that use the exact same sub-palette. This can be achieved by either forming and replicating stars of degree $\Delta-\extra-2$ (directly), or by forming and replicating stars of the complementing sub-palette of size $2\extra+2$, which can be complemented later. The process of replicating stars and their complement is explained in the proof of Lemma~\ref{lemma_construct_initial_layered_tree}. So denote $x \equiv \min (2\extra+2,\Delta-(\extra+2))$, the first condition is that $n_2(\Delta,\extra) = \Omega \Big ( \Delta \cdot x \cdot (\Delta+\extra-x) \cdot \binom{\Delta+\extra}{x} / \binom{\Delta}{x} \Big )$.\footnote{The expression is dominated by the division of binomial terms, and for them choosing $x$ as the smallest is best: Let $y \equiv \max (2\extra+2,\Delta-(\extra+2))$. Therefore $x + y = \Delta+\extra$, then $\binom{\Delta+\extra}{x} = \binom{\Delta+\extra}{y}$ and we need to show that $\binom{\Delta}{x} \ge \binom{\Delta}{y}$. It cannot be that $x \le y \le \frac{\Delta}{2}$. If $\frac{\Delta}{2} \le x \le y$ then $\binom{\Delta}{x} \ge \binom{\Delta}{y}$. Finally, if $x \le \frac{\Delta}{2} \le y$, then $(\frac{\Delta}{2} - x)-(y - \frac{\Delta}{2}) = -\extra < 0$ which shows that $x$ is closer to $\frac{\Delta}{2}$ than $y$, thus again, $\binom{\Delta}{x} \ge \binom{\Delta}{y}$.} (Note that this expression and $n_1$ are both derived from the common expression $f(\Delta,c,z) \equiv \Delta \cdot z \cdot (\Delta+\extra-z) \cdot \binom{\Delta+\extra}{z} / \binom{\Delta}{z}$ where we either substitute $z=\extra+1$ for $n_1$ or $z=x$ in this case.)

    \item Once the reduction to the scenario of Theorem~\ref{theorem_greedy_fails_LB} occurs, the second condition is simply $n_2(\Delta,\extra) \ge n_1(\extra+2,\extra) = \Theta((\extra+1)^2 \cdot \binom{2\extra+2}{\extra+1})$ if $\extra \ge 1$, and $n_2(\Delta,\extra) \ge 1$ if $\extra = 0$. \qedhere
\end{enumerate}
\end{proof}

\begin{remark}
\label{remark_bootstrap_with_randomization}
The conditions $n \ge n_i(\Delta,\extra)$ in Theorem~\ref{theorem_exact_analysis_of_n_function_of_Delta} for $i \in \{1,2\}$ are only necessary to guarantee adversarial success under any circumstances. However, if recoloring choices are less arbitrary, for example chosen by the algorithm at random when possible, the conditions might not be required. To clarify, in such randomized case, instead of attempting to collect simultaneously a large number of stars that use the same sub-palette, the adversary can create the stars one by one, recreating each of them until (eventually) the star will have the desired sub-palette.
\end{remark}

\begin{remark}
\label{remark_n_Delta_relation_examples}
To get an idea how $n_0(\Delta,\extra)$ and  $n_1(\Delta,\extra)$ in Theorem~\ref{theorem_exact_analysis_of_n_function_of_Delta} behave, consider the extreme values of $\extra$:
\begin{enumerate}
    \item $\extra=O(1)$: Then $n_0 = \Theta(\Delta^{\extra+2})$.
    Also, $\binom{\Delta+\extra}{\extra+1} / \binom{\Delta}{\extra+1} \approx 1$, so $n_1 = \Theta(\Delta^2)$. Conversely, the corresponding statements apply for $\Delta = O(n_0^{1/(\extra+2)})$ and $\Delta = O(n_1^{1/2})$.

    \item $\extra=\Delta-2$: Then $\binom{\Delta}{\extra+1} = \Delta$, and using Stirling's approximation, $\binom{\Delta+\extra}{\extra+1} = \binom{2(\Delta-1)}{\Delta-1} \approx \frac{4^{\Delta-1}}{\sqrt{\pi (\Delta-1)}}$. So both $n_0$ and $n_1$ are exponential in $\Delta$. Conversely, the corresponding statements apply for $\Delta = O(\log n_i)$ for both $i \in \{0,1 \}$.
\end{enumerate}
\end{remark}

\bibliographystyle{alpha} 
\bibliography{reference}

\end{document}